\definecolor{red}{rgb}{1,0,0}
\definecolor{green}{rgb}{0,1,0}
\definecolor{SeaGreen}{RGB}{46,139,87}
\definecolor{Maroon}{RGB}{128,0,0}
\newcommand{\N}{\mathbb{N}}
\newcommand{\Z}{\mathbb{Z}}
\newcommand{\C}{{\mathbb{C}}}
\newcommand{\R}{{\mathbb{R}}}
\newcommand{\A}{{\mathcal A}}
\newcommand{\B}{\mathcal B}
\def\Dg {{\mathcal D}}
\newcommand{\F}{\mathcal F}
\newcommand{\FF}{\EuScript F}
\def\Jg {{\mathcal J}}
\newcommand{\LL}{\mathcal L}
\def\Mg {{\mathcal M}}
\newcommand{\OO}{\mathcal O}
\def\PP{\mathcal P}
\def\Sg {{\mathcal S}}
\def\Vg {{\mathcal V}}
\def\hg {\mathfrak h}
\def\Div{\text{\rm div\,}}
\def\0{\mathbf  0}
\def\XXint#1#2#3{{\setbox0=\hbox{$#1{#2#3}{\int}$ }
\vcenter{\hbox{$#2#3$ }}\kern-.6\wd0}}
\newcommand{\Union}{\mathop{\bigcup}\limits}
\numberwithin{equation}{section}
\theoremstyle{plain}
\newtheorem{theorem}{Theorem}[section]
\newtheorem{lemma}[theorem]{Lemma}
\newtheorem{proposition}[theorem]{Proposition}
\newtheorem{assumption}[theorem]{Assumption}
\newtheorem{remark}[theorem]{Remark}
\newtheorem{corollary}[theorem]{Corollary}
\title{On the spectrum of some Bloch-Torrey vector operators}
  \author{ Y. Almog$^*$, Department of
  Mathematics, \\ Ort Braude College, \\ 
    Carmiel 2161002, Israel \\~\\
  and \\~\\
\noindent   B. Helffer, Laboratoire de Math\'ematiques Jean Leray, \\CNRS and Universit\'e de Nantes, \\
  2 rue de la Houssini\`ere, 44322 Nantes Cedex France}
\date{}
\begin{document}
\maketitle
\bibliographystyle{siam}
\begin{abstract}
  We consider the Bloch-Torrey operator in $L^2(I,\R^3)$ where
  $I\subseteq\R$.  In contrast with the $L^2(I,\R^2)$ (as well as the 
  $L^2(\R^k,\R^2)$) case considered in previous works. We obtain that
  $\R_+$ is in the continuous spectrum for $I=\R$ as well as discrete
  spectrum outside the real line. For a finite interval we find the
  left margin of the spectrum. In addition, we prove that the
  Bloch-Torrey operator must have an essential spectrum for a rather
  general setup in $\R^k$, and find an effective description for its
  domain. 
\end{abstract}
\section{Introduction}
\subsection{The Bloch-Torrey operator}
We consider a simplified version of the Bloch-Torrey equation
\cite[Eq.  (4)]{To}, that is commonly used to model
Diffusion-Weighted Magnetic Resonance Imaging (DW-MRI). For infinite
relaxation times and constant diffusivity it assumes the form
 \begin{equation}
\label{eq:2}
 \partial_t {\bf m}  =- \gamma {\bf b} \times {\bf m}  + D \Delta {\bf m}\,.
  \end{equation}
  This time-dependent equation describes the evolution  in time of a
  vector field ${\bf m}$ on $\mathbb R^3$, representing the
  magnetization vector
  under the action of an external magnetic field ${\bf b}$. \\

  To obtain any information on the semigroup associated with
  \eqref{eq:2}, we need to analyze the resolvent of a suitable
  realization of the differential operator $ - D \Delta + \gamma {\bf b} \times$.
  After dilation and a change of notation we write
  \begin{equation}
\label{eq:4}
  B_\epsilon(x,d_x) := -\epsilon^2 \Delta + {\bf b} \times\,.
  \end{equation}
  In the sequel we denote the magnetization ${\bf m}$ by ${\bf u}$.
  We begin by considering the general problem in $\mathbb R^3$ 
    providing first a precise definition of this spectral problem. 
    \begin{proposition} \label{prop1.1} Let ${\bf b}\in C^\infty(\mathbb
      R^3;\mathbb R^3)$. Then, the closure $\B_\epsilon$ of the operator
      $B_\epsilon(x,d_x)$ which is
      priori defined on $C_0^\infty(\mathbb R^3,\R^3)$ is maximally accretive as an unbounded operator in $L^2(\mathbb R^3,\mathbb R^3)$.\\
  \end{proposition}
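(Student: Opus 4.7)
\medskip
\noindent\emph{Proof proposal.} The plan is to verify the three standard ingredients of $m$-accretivity in turn: (i) accretivity on the dense core $C_0^\infty(\mathbb{R}^3,\mathbb{R}^3)$, (ii) closability, and (iii) density of $\operatorname{Range}(\B_\epsilon+\lambda)$ for some $\lambda>0$. Item (iii) is where the real work lies.

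The crucial algebraic fact is the pointwise skew-symmetry of the cross product, $({\bf b}\times u)\cdot u\equiv 0$. Hence, for $u\in C_0^\infty(\mathbb{R}^3,\mathbb{R}^3)$, integration by parts yields
\[
\operatorname{Re}\langle B_\epsilon u,u\rangle=\epsilon^2\|\nabla u\|_{L^2}^2\ge 0,
\]
so $B_\epsilon|_{C_0^\infty}$ is accretive, hence closable, and its closure $\B_\epsilon$ remains accretive. The formal adjoint $B_\epsilon^t=-\epsilon^2\Delta-{\bf b}\times$ is also densely defined on $C_0^\infty$ and satisfies the very same lower bound, which is reassuring for step (iii).

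By the Lumer--Phillips criterion, (iii) reduces to showing that every $v\in L^2(\mathbb{R}^3,\mathbb{R}^3)$ satisfying, in the distributional sense,
\[
-\epsilon^2\Delta v-{\bf b}\times v+\lambda v=0
\]
must vanish. Since ${\bf b}\in C^\infty$ implies ${\bf b}\times v\in L^2_{\loc}$, interior elliptic regularity gives $v\in H^2_{\loc}$. I would then test the equation against $\chi_R^2 v$, where $\chi_R\in C_0^\infty$ equals $1$ on $B_R$, is supported in $B_{2R}$, and satisfies $|\nabla\chi_R|\le C/R$, $|\Delta\chi_R|\le C/R^2$. The cross-product contribution vanishes identically, since $({\bf b}\times v)\cdot v\equiv 0$ pointwise; two integrations by parts on the Laplacian term, using $\sum_j v_j\nabla v_j=\tfrac12\nabla|v|^2$, produce
\[
\epsilon^2\int\chi_R^2|\nabla v|^2+\lambda\int\chi_R^2|v|^2=\epsilon^2\int\bigl(|\nabla\chi_R|^2+\chi_R\Delta\chi_R\bigr)|v|^2.
\]
The right-hand side is $O(R^{-2})\,\|v\|_{L^2}^2\to 0$, while the left-hand side increases monotonically to $\epsilon^2\|\nabla v\|^2+\lambda\|v\|^2$. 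Letting $R\to\infty$ forces $v\equiv 0$, as required.

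The main obstacle is precisely step (iii). Because no growth hypothesis is imposed on ${\bf b}$, the zero-order term ${\bf b}\times$ is not relatively bounded by $-\epsilon^2\Delta$, so classical Kato--Rellich-type perturbation arguments are unavailable. The cutoff-plus-commutator argument above circumvents this by invoking only the pointwise skew-symmetry of the cross product, so no size estimate on ${\bf b}$ ever enters the computation; smoothness of ${\bf b}$ is used only to secure the interior $H^2$ regularity of $v$ that legitimises the integrations by parts.
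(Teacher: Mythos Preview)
Your argument is correct and is essentially the standard one the paper invokes: the paper observes accretivity on $C_0^\infty$ and then simply refers to \cite{hebook13} (Exercise~13.7, Theorems~13.14 and~9.15) for maximal accretivity, which amounts to precisely the cutoff computation you spell out. The only difference is that the paper casts the result in the more general setting of $-\Delta\otimes I_d+M(x)$ on $L^2(\mathbb{R}^k,\mathbb{R}^d)$ with $M_s\ge0$ (Proposition~\ref{prop2.1}), of which Proposition~\ref{prop1.1} is the special case $k=d=3$, $M={\bf b}\times$.
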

    The proof  is given in Section \ref{s2}.
\begin{remark}
  Note that $\B_\epsilon$ can be extended as an unbounded operator  in  $L^2(\mathbb R^3,\mathbb C^3)$ which commutes with the complex conjugation.
  Hence its spectrum is invariant to complex conjugation.
  \end{remark}

  By the Hille-Yosida theorem (see \cite[Theorem 8.3.2]{da07}) there
  exists a continuous semi-group associated with $B_\epsilon$ and it is
  therefore natural to attempt to obtain some of its properties in the
  sequel.

  In Section 3 we focus on the case when $\bf b (x)$ depends only on one variable (say $x_1$). \\
  In this case, we apply a partial Fourier transform in the $x_2$ and
  $x_3$ direction which leads to the following family of $(1D)$
  operators depending on $ ((\xi_2,\xi_3)\in \mathbb R^2)$
               \begin{equation}
\label{eq:68}
                B_\epsilon\Big(x_1,\frac{d}{dx_1},\xi_2,\xi_3\Big)   := -\epsilon^2 \frac{d^2}{dx_1^2}\otimes I 
      + 
       \begin{pmatrix}
       \epsilon^2 (\xi_2^2+\xi_3^2)&-b_3 & b_2\\
       b_3 & \epsilon^2 (\xi_2^2+\xi_3^2)&-b_1\\
       -b_2&b_1& \epsilon^2 (\xi_2^2+\xi_3^2)
       \end{pmatrix}
     \,.
              \end{equation}
              The above operator (after
              reduction to the case $\xi_2=\xi_3=0$) is considered,
              assuming linearity of ${\bf b}$, i.e,
      \begin{displaymath}
             {\bf b} (x_1) = {\bf b_0} +  x_1  \, {\bf b_1} \,.
      \end{displaymath}
     where ${\bf b_0}\in \mathbb R^3$ and ${\bf b_1}\in \mathbb R^3\setminus \{
     0\}$.\\
 Using translation we can obtain ${\bf b_1}\perp{\bf b_0}$. Then, after 
 rotation and renormalization,  denoting  the canonical basis in
 $\mathbb R^3$ or $\mathbb C^3$  by $(\hat{\bf i}_1,\hat{\bf
   i}_2,\hat{\bf i}_3)$, it is sufficient to consider the case when 
\begin{displaymath}
  {\mathbf b}_0 = b_0 \, \hat{\bf i}_1\mbox{  and  } {\mathbf b}_1=\hat{\bf i}_3\,.
\end{displaymath}
 Note that ${\bf b}$ is divergence free, as is required from
  magnetic fields by Maxwell
  equations.  The operator $\B_\epsilon$ becomes
\begin{equation}
\label{eq:1}
  B_\epsilon\Big(x,\frac {d}{dx}\Big): =-\epsilon^2 \frac{d^2}{dx^2} -  [x{\mathbf b}_1+{\mathbf b}_0]\times \,,
\end{equation}
Note that the case $b_{0}=0$ reduces to the two-dimensional case. More
generally, if we suppose that ${\bf b}=b(x_1,x_2,x_3)\hat{\bf i}_3$
(though in the case of a divergence free field we get $ {\bf b}
=b(x_1,x_2) \,\hat{\bf i}_3\,.$) then the skew-symmetric matrix
associated with ${\bf b}\times$ is given by
\begin{equation}
\label{eq:92}
\Mg=
         \begin{pmatrix}
       0 &-b & 0\\
       b & 0 &0\\
       0 &0 & 0
       \end{pmatrix}\,.
\end{equation}
The eigenvectors associated with $\Mg$ are $\hat {\bf i}_3$, $v$, and
$\bar{v}$, where,
\begin{equation} \label{eq:defv}
  {\bf v}=\frac{1}{\sqrt{2}} (- i \, \hat{\bf i}_1 + \hat{\bf i}_2)\,.
\end{equation}
Since $\{{\bf v},\bar{\bf v}, \hat{\bf i}_3\}$ form an orthonormal basis for
$\C^3$ we may apply rotation to $-\epsilon^2\Delta +\Mg$ to obtain
in this new basis  the operator
\begin{equation}
\label{eq:93}
\widetilde{\mathcal B}_\epsilon:=
  \begin{pmatrix}
    -\epsilon^2\Delta +ib & 0 & 0 \\
    0 & -\epsilon^2\Delta - ib & 0 \\
    0 & 0 & -\epsilon^2\Delta
  \end{pmatrix}\,.
\end{equation}
Obviously, in this basis $-\epsilon^2\Delta +\Mg$ can be considered as three
separate scalar operators. The spectral properties of $-\epsilon^2\Delta +ib$
have been considered in \cite{AGH,al08,hen15}. Note that if we define
$-\epsilon^2\Delta +\Mg$  on $L^2(\R^3,\R^3)$ for $b=x_1$ we obtain that the
spectrum is $\R_+$ (which is precisely $\sigma(-\Delta)$ on $L^2(\R^3)$) given
that $\sigma (-\epsilon^2\Delta+ix)=\emptyset$ on $L^2(\R^3)$.

\subsection{Main statements}
We now present the main results of this work. In the case where $\B_\epsilon$ is defined in 
 $\R$ we obtain 
\begin{theorem}
\label{thm:unbounded}
  Let $\B_\epsilon$ be defined by (\ref{eq:1}), on the domain
  \begin{displaymath}
    D(\B_\epsilon)= \{{\bf u}\in H^2(\R,\C^3)\,|\,{\bf b}\times{\bf
      u}\in L^2(\R,\C^3)\,\}\,.
  \end{displaymath}
Then we have:
\begin{itemize}
\item $\Lambda \in \sigma(\B_\epsilon)\Leftrightarrow\overline{\Lambda} \in \sigma (\B_\epsilon)$.
\item  $\R_+\subset\sigma(\B_\epsilon)$.
\item Let for $n\in \mathbb N^*$ and $\epsilon >0$, $\kappa_n^{0} (\epsilon):= i+
  \frac{2n-1}{2}(1+i)\epsilon \,.$ Then for any $N\in\N^*$ there exist
  positive $\epsilon_0$ and $\hat C $, such that for all $0<\epsilon \leq \epsilon_0$ a
  sequence \break $\{\kappa_n (\epsilon)\}_{n=1}^N\subset\sigma(\B_\epsilon)$ can be found, satisfying
  \begin{equation}
 \label{eq:105}
\Big|\kappa_n(\epsilon) -\kappa_n^0(\epsilon) \Big| \leq   \hat C \epsilon^2\,,\, \mbox{ for } n=1 \dots,N \,.
  \end{equation}

\item  Let $\varrho>0$ and  $\hat R>0$.  Let further 
  \begin{equation}
    \label{eq:131}
N_\varrho=\Big[\frac{2\varrho+1}{2}\Big]\,,
  \end{equation}
where $[\cdot]$ denotes the integer part. Set now
\begin{equation}
 \label{eq:dg} 
  \Dg(\hat R,\varrho, \epsilon)=\{\Lambda\in\C\setminus\Union_{n=1}^{N_\varrho} \big(B(\kappa_n^0(\epsilon),\hat R\epsilon^2)\cup
  B(\overline{\kappa_n^0(\epsilon)},\hat R\epsilon^2)\big)\,\} \cap \{  \Re\Lambda\leq \varrho\epsilon\,\} \cap \{\Im \Lambda \neq 0\} \,.
\end{equation}
Then, there exist positive $C$ and $\hat R_0 >1$ such that for all $\hat
R_0<\hat R< [\sqrt{2}\epsilon]^{-1}$ and
$\Lambda\in\Dg(\hat R,\varrho,\epsilon)$ it holds that
\begin{equation}
\label{eq:107}
\|(\B_\epsilon-\Lambda)^{-1}\|\leq C\Big(1+  \frac{\epsilon^{2/3}}{|\Im \Lambda|^2}+\frac{1}{\hat R\epsilon^\frac 53}\Big)
 \,.
\end{equation}
\end{itemize}
\end{theorem}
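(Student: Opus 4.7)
The first assertion follows immediately from the Remark preceding the theorem: complex conjugation $K$ commutes with $\B_\epsilon$ on $L^2(\R,\C^3)$, so $\B_\epsilon-\bar\Lambda=K(\B_\epsilon-\Lambda)K^{-1}$ and the two operators are invertible together; hence $\sigma(\B_\epsilon)$ is stable under $\Lambda\mapsto\bar\Lambda$.

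For $\R_+\subset\sigma(\B_\epsilon)$, I would build a Weyl singular sequence supported near $x=+\infty$, where $\ker({\bf b}(x)\times)$ aligns more and more with $\hat{\bf i}_3$. Fix $\lambda\ge 0$, set $\xi_0:=\sqrt\lambda/\epsilon$, pick $\chi\in C_0^\infty(\R)$ with $\|\chi\|_{L^2}=1$, a sequence $x_n\to+\infty$, and widths $L_n$ with $1\ll L_n\ll x_n$ (for instance $L_n=\sqrt{x_n}$), and set $\hat{\bf e}_n:={\bf b}(x_n)/|{\bf b}(x_n)|$. The ansatz
\[
 {\bf u}_n(x):=L_n^{-1/2}\,\chi\!\bigl((x-x_n)/L_n\bigr)\,e^{i\xi_0 x}\,\hat{\bf e}_n
\]
satisfies ${\bf b}(x)\times\hat{\bf e}_n=(x-x_n)(\hat{\bf i}_3\times\hat{\bf e}_n)$ with $|\hat{\bf i}_3\times\hat{\bf e}_n|=|b_0|/|{\bf b}(x_n)|=O(1/x_n)$, so a direct computation gives $\|(\B_\epsilon-\lambda){\bf u}_n\|=O(\sqrt\lambda/L_n+L_n/x_n)\to 0$, while $\|{\bf u}_n\|\to 1$ and ${\bf u}_n\rightharpoonup 0$ because the supports escape to infinity. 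Weyl's criterion then yields $\lambda\in\sigma(\B_\epsilon)$.

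For the quasi-eigenvalues $\kappa_n^0(\epsilon)$ I would carry out a Born--Oppenheimer construction at the minimum $x=0$ of $|{\bf b}(x)|=\sqrt{x^2+b_0^2}$ (with $b_0=1$, as the form of $\kappa_n^0$ dictates). Since $|{\bf b}(x)|$ never vanishes, $-{\bf b}(x)\times$ admits a globally smooth orthonormal frame $(\hat{\bf e}_+(x),\hat{\bf e}_0(x),\hat{\bf e}_-(x))$ with eigenvalues $(i|{\bf b}|,0,-i|{\bf b}|)$. The ansatz $\psi=\phi(x)\hat{\bf e}_+(x)$ reduces $\B_\epsilon$ at leading order to the scalar complex Schr\"odinger operator $\LL_\epsilon:=-\epsilon^2\partial_x^2+i|{\bf b}(x)|$, with a first-order error $\epsilon^2 R(x,\partial_x)\phi$ built from derivatives of the frame. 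Rescaling $y=x/\sqrt\epsilon$ and writing $\Lambda=i+\epsilon\mu$ turns the principal part into $\epsilon[-\partial_y^2+\tfrac i2 y^2-\mu]+O(\epsilon^2\langle y\rangle^4)$; the complex harmonic oscillator $-\partial_y^2+\tfrac i2 y^2$ has eigenvalues $\mu_n=(2n-1)(1+i)/2$ and decaying Hermite-type eigenfunctions $\phi_n$ (decay guaranteed by $\Re\sqrt{i/2}>0$). The lifted quasi-mode $\psi_n(x):=\epsilon^{-1/4}\phi_n(x/\sqrt\epsilon)\hat{\bf e}_+(x)$ satisfies $\|(\B_\epsilon-\kappa_n^0)\psi_n\|\le\hat C\epsilon^2\|\psi_n\|$; the standard pseudo-mode inequality yields $\mathrm{dist}(\kappa_n^0(\epsilon),\sigma(\B_\epsilon))\le\hat C\epsilon^2$, and distinctness follows from the mutual gap $|\kappa_{n+1}^0-\kappa_n^0|\sim\epsilon\gg\epsilon^2$.

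The resolvent estimate on $\Dg(\hat R,\varrho,\epsilon)$ is the main obstacle. I would decompose $\Dg$ into two overlapping regions. Away from $\pm i$, say on $\{|\Lambda\mp i|\ge\delta_0\}\cap\Dg$, the smooth frame conjugates $\B_\epsilon$ into a block system whose diagonal entries are the three scalar operators $-\epsilon^2\partial_x^2+\varepsilon_j\, i|{\bf b}(x)|$, $\varepsilon_j\in\{+1,0,-1\}$, with off-diagonal couplings of order $\epsilon^2\langle\partial_x\rangle$ arising from the non-commutation of $\partial_x$ with the frame. Each diagonal block falls under the Airy/complex-WKB scheme of \cite{AGH,al08,hen15} and yields a resolvent bound of order $\epsilon^{2/3}/|\Im\Lambda|^2$; the off-diagonal coupling is then removed by a Schur-complement / Neumann-series argument, the smallness $\epsilon^2$ being absorbed into the main bound. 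Near $\pm i$ but outside the excluded disks, I would switch to the harmonic-oscillator scaling $y=x/\sqrt\epsilon$ and perform a Grushin/Feshbach reduction onto the relevant branch, the other two branches being uniformly resolvable because their spectra lie at distance $\ge 1$ from $\Lambda$; the complex oscillator resolvent at distance $\hat R\epsilon^2$ from its spectrum has norm of order $1/(\hat R\epsilon^2)$, which after restoring the scaling and absorbing the Born--Oppenheimer correction produces the advertised $1/(\hat R\epsilon^{5/3})$ factor. The genuine technical difficulty lies in the uniform control of these Born--Oppenheimer / Grushin errors and in patching the two regions; the window $\hat R_0<\hat R<[\sqrt 2\epsilon]^{-1}$ is exactly what is needed to keep the excluded disks disjoint and to provide enough overlap between the regions for the error absorption to close.
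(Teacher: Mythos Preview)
Your arguments for the first two bullets are fine and match the paper's approach.

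The third bullet has a genuine gap. You write that the quasi-mode bound $\|(\B_\epsilon-\kappa_n^0)\psi_n\|\le \hat C\epsilon^2\|\psi_n\|$ together with ``the standard pseudo-mode inequality'' yields $\mathrm{dist}(\kappa_n^0,\sigma(\B_\epsilon))\le \hat C\epsilon^2$. That inequality is only valid for normal operators. Since $\B_\epsilon$ is merely accretive, a quasi-mode gives $\|(\B_\epsilon-\kappa_n^0)^{-1}\|\ge (\hat C\epsilon^2)^{-1}$, i.e.\ membership in the $\epsilon^2$-pseudospectrum, but this does \emph{not} imply the existence of an eigenvalue in $B(\kappa_n^0,\hat C\epsilon^2)$; non-normal resolvents can be huge far from the spectrum. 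The paper circumvents this by (i) reducing, via Fourier transform and a substitution, to a scalar operator $\Mg_\lambda$ and then applying an \emph{analytic dilation} $\theta=i\pi/8$ so that the leading part becomes the \emph{real} harmonic oscillator; (ii) proving weighted resolvent bounds for the dilated operator on circles $\partial B(\mu_{k,0},r)$; and (iii) running a contour-integral/Rouch\'e argument: if the resolvent were holomorphic in the disk, a certain integral would vanish, but the quasi-mode forces it to be bounded away from zero. Without a resolvent bound on a surrounding contour (or an analytic-Fredholm argument), your quasi-mode construction cannot by itself produce actual eigenvalues.

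For the fourth bullet, your Born--Oppenheimer/Grushin sketch is plausible in spirit but the paper takes a rather different and more concrete route, which also feeds back into the third bullet. Exploiting that ${\bf b}$ is linear in $x$, a Fourier transform turns the $3\times3$ system into a single second-order ODE for $\hat u_s=\hat u_1+\hat u_2$ (equation~\eqref{eq:8}). The paper then performs direct energy estimates on this ODE: for $|\Im\lambda|$ small relative to $\varepsilon^{-1/2}$ one gets the $\epsilon^{2/3}/|\Im\Lambda|^2$ term via multiplier arguments (Proposition~\ref{prop:strip-estimate}), while near $\Lambda=\pm i$ the substitution $\hat u_s=(\omega^2-\lambda)^{1/2}v$ converts the problem into a perturbed complex anharmonic oscillator whose resolvent is controlled using Proposition~\ref{prop3.15}, producing the $1/(\hat R\epsilon^{5/3})$ contribution. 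The constraint $\hat R<[\sqrt 2\,\epsilon]^{-1}$ arises not from disk disjointness but from matching the exclusion set $\check\Omega(\epsilon,R)$ in the $z_0$-variable with the disks in the $\Lambda$-variable (Lemma~\ref{lem3.21}). Your frame-diagonalization approach would have to control the off-diagonal coupling uniformly in both regimes, which is exactly where the paper's scalar reduction pays off.
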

\begin{remark}
  We note that for $\Lambda_r<0$, it holds, since $\B_\epsilon$ is accretive, that
  \begin{displaymath}
    \|(\B_\epsilon-\Lambda)^{-1}\|\leq \frac{1}{|\Lambda_r|} \,.
  \end{displaymath}
\end{remark}

We now state our main result for the Dirichlet realization $\B_\epsilon^I$
of the operator $B_\epsilon(x,d_x)$ (see \eqref{eq:1}) in $I= (a,b)$. 
\begin{theorem}
\label{prop:spectrum-bounded}
The domain of $\B_\epsilon^I$ is given by
  \begin{displaymath}
  D(\B_\epsilon^I)=H^2((a,b),\C^3)\cap H^1_0((a,b),\C^3) \,,
\end{displaymath}
Furthermore, let
  \begin{displaymath}
\rho_0 = \inf_{w\in H^1_0(a,b)} \frac{I(w)}{\|[x^2+1]^{1/2}w\|_2^2}\,,
\end{displaymath}
where
\begin{displaymath}
  I(w)= \|(xw)^\prime\|_2^2+\|w^\prime\|_2^2 \,.
\end{displaymath}
Then 
  \begin{equation}
\label{eq:127}
\lim_{\epsilon\to0}\epsilon^{-2}\Re \sigma(\B_\epsilon^I)=\rho_0 \,.
  \end{equation}
\end{theorem}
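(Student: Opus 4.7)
The plan is to reduce the three-dimensional vector operator $\B_\epsilon^I$ to an effective scalar weighted Sturm-Liouville problem, whose smallest Dirichlet eigenvalue is $\rho_0$. The reduction reflects the fact that, in the semiclassical limit $\epsilon\to 0$, the eigenvalues with smallest real part come from the ``longitudinal'' mode (eigenvectors lying, to leading order, in the kernel of $\mathbf{b}(x)\times$). Since $\mathbf{b}(x)=b_0\hat{\mathbf{i}}_1+x\hat{\mathbf{i}}_3$ is bounded on $I=(a,b)$, the matrix multiplication $\mathbf{b}(x)\times$ is a bounded operator on $L^2(I,\C^3)$, so $\B_\epsilon^I$ is a bounded perturbation of $-\epsilon^2\partial_x^2\otimes I_3$ with Dirichlet boundary conditions, giving the claimed domain. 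Writing $\mathbf{u}=(f,g,h)$ and $\lambda=\epsilon^2\mu$, a formal expansion with $g=\epsilon^2\tilde g+O(\epsilon^4)$, $h=h_0+O(\epsilon^2)$, $f=f_0+O(\epsilon^2)$ yields $h_0=xf_0/b_0$ and, after elimination,
\begin{equation*}
-\bigl((b_0^2+x^2)f_0'\bigr)'=\mu\,(b_0^2+x^2)f_0\,,\qquad f_0(a)=f_0(b)=0.
\end{equation*}
After normalization $b_0=1$, the Dirichlet identity $\int x^2|f_0'|^2=\|(xf_0)'\|_2^2$ shows that the Rayleigh quotient for this problem equals $I(f_0)/\|(1+x^2)^{1/2}f_0\|_2^2$, whose minimum is $\rho_0$.

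For the upper bound $\inf\Re\sigma(\B_\epsilon^I)\leq\rho_0\epsilon^2+o(\epsilon^2)$, let $f_0$ be a normalized lowest SL eigenfunction and form the quasi-mode $\mathbf{u}_\epsilon=(f_0,\epsilon^2\tilde g_0\chi,xf_0)$, where $\tilde g_0$ is the explicit corrector from the elimination and $\chi$ is a cut-off restoring the Dirichlet condition. A direct check yields $\|(\B_\epsilon^I-\rho_0\epsilon^2)\mathbf{u}_\epsilon\|_2=O(\epsilon^4)\|\mathbf{u}_\epsilon\|_2$. A Grushin (Lyapunov--Schmidt) reduction around $\mathbf{u}_\epsilon$, using the compactness of the resolvent, produces a scalar effective Hamiltonian $E_{-+}(\lambda)=\lambda-\rho_0\epsilon^2+O(\epsilon^4)$ whose zero yields an eigenvalue $\lambda_\epsilon\in\sigma(\B_\epsilon^I)$ within $o(\epsilon^2)$ of $\rho_0\epsilon^2$.

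For the lower bound, consider any $\lambda\in\sigma(\B_\epsilon^I)$ with $\Re\lambda\leq M\epsilon^2$ for a fixed $M$. The crucial step is the structural concentration estimate
\begin{equation*}
\|g\|_2+\|h-xf/b_0\|_2=o(\|\mathbf{u}\|_2),
\end{equation*}
obtained by expressing $g$ and $h-xf/b_0$ through the last two components of the eigenvalue equation, combined with the identity $\Re\lambda\,\|\mathbf{u}\|_2^2=\epsilon^2\|\mathbf{u}'\|_2^2$ and a bootstrap using elliptic regularity. Once this estimate is established,
\begin{equation*}
\|\mathbf{u}\|_2^2=\|(1+x^2)^{1/2}f\|_2^2+o(\|\mathbf{u}\|_2^2),\quad \|\mathbf{u}'\|_2^2=I(f)+o(\|\mathbf{u}\|_2^2),
\end{equation*}
and the variational characterization of $\rho_0$ on $H^1_0$ gives $\Re\lambda=\epsilon^2\|\mathbf{u}'\|_2^2/\|\mathbf{u}\|_2^2\geq(\rho_0-o(1))\epsilon^2$. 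The main obstacle is this concentration estimate: a naive triangle-inequality bound on $g$ via the third equation yields only $\|g\|_2=O(1)$; upgrading to $o(1)$ requires exploiting the cancellations present in the full coupled system, together with careful boundary-layer/cut-off adjustments near $x=a,b$ that preserve the Dirichlet conditions without inflating derivative norms. Eigenvalues with $\Re\lambda>M\epsilon^2$ for $M>\rho_0$ are handled trivially, so transverse modes need not be analyzed separately.
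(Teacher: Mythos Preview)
Your domain argument and the formal SL reduction to $-\bigl((1+x^2)w'\bigr)'=\mu(1+x^2)w$ are correct, and the identification of $\rho_0$ with the lowest eigenvalue of that problem is exactly what drives the paper as well. The upper bound via a quasimode plus Grushin reduction is plausible, though heavier than needed: the paper exploits a structural feature you do not mention, namely that after passing to the basis $(\mathbf v,\bar{\mathbf v},\hat{\mathbf i}_3)$ and eliminating the first two components through the complex Airy resolvents $\LL_\pm=-\epsilon^2\partial_x^2\pm ix$, one is left with the \emph{self-adjoint} scalar operator
\[
\PP_\Lambda=-\epsilon^2\partial_x^2+\tfrac12\bigl[(\LL_--\Lambda)^{-1}+(\LL_+-\Lambda)^{-1}\bigr]
\]
on $H^2\cap H^1_0$ for real $\Lambda$. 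The upper bound then follows from a single test function $u=xw_0$ (with $w_0$ the SL ground state) in the Rayleigh quotient of $\PP_\Lambda-\Lambda$, together with the resolvent estimate $\|(\LL_\pm-\Lambda)^{-1}(xw_0)\mp iw_0\|_{H^1}=O(\epsilon^{2/3})$ (from Henry or Almog--Helffer), followed by an intermediate-value argument on the ground-state function $\nu(\Lambda)$.

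The lower bound, however, has a genuine gap which you yourself flag but do not close. Your scheme rests on the concentration estimate $\|g\|_2+\|h-xf\|_2=o(\|\mathbf u\|_2)$ for eigenvectors with $\Re\lambda\le M\epsilon^2$. From the equations one gets $g=-\epsilon^2h''-\lambda h$ and $h-xf=\epsilon^2g''+\lambda g$, but only $\Re\lambda$ is assumed small; $\Im\lambda$ is merely $O(1)$ (bounded by $\sup_I|\mathbf b|$), so $|\lambda|\|h\|$ and $|\lambda|\|g\|$ are $O(1)$, and no $\epsilon$-uniform bound on $\|h''\|$ or $\|g''\|$ is available (elliptic regularity for $-\epsilon^2\partial_x^2$ only gives $\epsilon^2\|\mathbf u''\|=O(1)$). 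The ``cancellations'' you invoke are precisely what the paper extracts \emph{only after} passing to $\LL_\pm$: there the key identity $\Re\langle u,(\LL_\pm-\Lambda)^{-1}u\rangle=\epsilon^2\|w_\pm'\|_2^2-\Lambda_r\|w_\pm\|_2^2$ depends on $\Lambda_r$ alone, which is what kills the contribution of $\Im\lambda$. This leads to the representation
\[
\Re\langle u,(\PP_\Lambda-\Lambda)u\rangle=\tfrac12\sum_\pm\bigl(\epsilon^2\Jg_{\Lambda_r}^\pm(w_\pm)-\Lambda_r\bigr)\|w_\pm\|_\pm^2,
\]
and the lower bound reduces to $\inf\Jg_{\Lambda_r}^\pm\ge\rho_0-O(\epsilon^2)$, proved by an a~priori $H^2$ bound on the minimizer derived from its sixth-order Euler--Lagrange equation. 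Without an equivalent mechanism decoupling $\Im\lambda$, your concentration estimate does not go through, and the lower bound is not established.
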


 The two-dimensional setting described by \eqref{eq:92} has
  received significant attention in the literature (cf. for instance
  \cite{AGH,Grebenkov07,Grebenkov17,callaghan1993principles})
  sometimes with time-dependent magnetic field. An example of a
  divergence-free magnetic field whose direction varies in space has
  been presented in the classical work of Torrey \cite{To}, where
\begin{displaymath}
  {\bf b}=(gx,gy,-2gz)
\end{displaymath}
for some $g\in\R$. 

In the mathematical literature \eqref{eq:4} with varying field
direction has been considered in the context of vector Schr\"odinger
operators \cite{kuetal20,kuetal19}. Thus, from the results in
\cite{kuetal20} we can conclude that a contraction semigroup is
associated with \eqref{eq:4} (an immediate conclusion of Proposition
\ref{prop1.1}). It should be noted that the maximal domain of
\eqref{eq:4} is not found in \cite{kuetal20}. In contrast, in
\cite{kuetal19} the maximal domain is found under the assumption that
$|\nabla{\bf b}|\,|{\bf b}|^{-\alpha}$ is bounded in $\R^d$ for some 
$0\leq\alpha<1/2$. We bring more general results (in our context) in Section
2. 

The rest of this contribution is arranged as follows. In the next
section we address the general operator \eqref{eq:4}, in the context
of  Schr\"odinger operators with matrix potential. In particular
we give conditions for the existence of an essential spectrum and
obtain the maximal domain for a rather general setting. In Section 3
we prove Theorem \ref{thm:unbounded}. Finally, in Section 4 we prove
Theorem \ref{prop:spectrum-bounded}.

    \section{Properties of Schr\"odinger operators with matrix-valued
      potentials}
\label{s2}
In this section we derive some basic properties of the operator
$\B_\epsilon$ given by \eqref{eq:4}, in settings significantly more general
than that of \eqref{eq:1}.  The analysis applies in particular to the
general differential operator (\ref{eq:4}) and hence also to the
one-dimensional operator (\ref{eq:1}).

\subsection{A more general operator}
One can generalize
\eqref{eq:4} even further by considering the operator
\begin{equation}
\label{eq:74}
   \PP(x,d_x):=  -\Delta \otimes I_d + M (x)\,,
\end{equation}
     where $I_d$ is the identity matrix acting on $\mathbb R^d$ and
     $M\in C^\infty(\R^k,M_d(\R))$, where $M_d(\R)$ denotes the set of all
     $d\times d$ matrices with real entries. We set $\epsilon=1$ as the value
     of $\epsilon$ does not have any effect on the properties which we consider in
     this section.\\
Set
\begin{displaymath}
    M_s = \frac 12 (M+ M^t)\,,\, M_{as} = \frac 12 (M-M^t)\,.
\end{displaymath}
We further assume that
    \begin{equation}
\label{eq:57}
    M_s \geq 0\,,
    \end{equation}
which is certainly the case in \eqref{eq:4}, where $M$ is
skew-symmetric. 
  
  \subsection{Accretivity}
  In this subsection, we extend maximal accretivity results that have
  been established for the selfadjoint operator $-\Delta +V$ (see
  \cite[Theorem 6.6.2 ]{hebook02}) and also for two interesting  non-selfadjoint operators: 
 the  Fokker-Planck operator \cite{HeNi} and the complex Schr\"odinger
  operator  $-\Delta + i V$ \cite{hebook13}.
  
    \begin{proposition}
\label{prop2.1}
      Let $\PP$, given by \eqref{eq:74}, be defined on
      $C_0^\infty(\R^k,\R^d)$  and satisfy \eqref{eq:57}. Then, its closure, under the graph norm,
      denoted by ${\bf P}$, is maximally accretive as an unbounded
      operator in $L^2(\mathbb R^k,\mathbb R^d)$. Moreover, 
\begin{displaymath}
  D({\bf P}) \subset H^1(\mathbb R^k,\mathbb R^d)\,.
\end{displaymath}
     \end{proposition}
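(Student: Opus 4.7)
The plan is to apply the Lumer--Phillips theorem: since $C_0^\infty(\R^k,\R^d)$ is dense in $L^2$, it suffices to show that the closure $\mathbf{P}$ is accretive and that $\mathrm{Range}(\mathbf{P}+I)$ is dense. Accretivity on $C_0^\infty$ is an immediate integration by parts,
\begin{equation*}
\Re\langle \PP u, u\rangle = \|\nabla u\|_2^2 + \langle M_s u, u\rangle \ge 0,
\end{equation*}
since the antisymmetric part $M_{as}$ drops out of the real part of the inner product and $M_s \ge 0$ by \eqref{eq:57}. This inequality extends to $\mathbf{P}$ by continuity. It also immediately yields $D(\mathbf{P})\subset H^1$: if $u\in D(\mathbf{P})$ and $u_n\in C_0^\infty$ converges to $u$ in the graph norm, the bound $\|\nabla u_n\|_2^2 \le \|\PP u_n\|_2\,\|u_n\|_2$ forces $(u_n)$ to be bounded in $H^1$, with weak limit $u$.

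The core of the argument is density of the range; the approach I would follow is the classical localization used for $-\Delta + V$ with $V\ge 0$ (and adapted for the Fokker--Planck and complex Schr\"odinger operators cited above). Let $v\in L^2$ be orthogonal to $\mathrm{Range}(\mathbf{P}+I)$. Testing against $C_0^\infty \subset D(\mathbf{P})$ produces the distributional equation
\begin{equation*}
(-\Delta + M^t + 1) v = 0.
\end{equation*}
Since $M\in C^\infty$, one has $M^t v\in L^2_{\loc}$, hence $\Delta v\in L^2_{\loc}$, and componentwise elliptic regularity yields $v\in H^2_{\loc}$. Pick $\chi\in C_0^\infty(\R^k)$ with $0\le\chi\le 1$ and $\chi\equiv 1$ near the origin, and set $\chi_R(x):=\chi(x/R)$. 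Pairing the equation with the compactly supported $H^1$ test function $\chi_R^2 v$, integrating by parts, and using both $\sum_i v_i\,\partial_j v_i = \tfrac12 \partial_j|v|^2$ and $\langle M_{as}v,v\rangle \equiv 0$, one arrives at
\begin{equation*}
\int \chi_R^2 \bigl(|\nabla v|^2 + \langle M_s v, v\rangle + |v|^2\bigr) = \int \bigl(|\nabla \chi_R|^2 + \chi_R \Delta \chi_R\bigr)|v|^2.
\end{equation*}
The right-hand side is $O(R^{-2})\|v\|_2^2\to 0$ as $R\to\infty$, while every term on the left is non-negative, so $v\equiv 0$ and the range is dense.

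The main difficulty I anticipate is that no growth hypothesis has been placed on $M$ at infinity, so the computation has to be performed only locally. The combined role of $M\in C^\infty$ (which furnishes enough local regularity for $v$ to justify integrating by parts against $\chi_R^2 v$) and of the cutoff (whose derivatives contribute a remainder bounded independently of $M$) is exactly what makes this strategy succeed. The adjoint-side computation carries through without change because $(M^t)_s = M_s$ inherits the non-negativity hypothesis \eqref{eq:57} from $M_s$.
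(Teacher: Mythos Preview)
Your proof is correct and follows essentially the same approach the paper has in mind: the paper simply checks accretivity via $\langle \PP u,u\rangle\geq0$ and then defers the maximal accretivity (density of the range) to \cite[Exercise 13.7, Theorems 9.15 and 13.14]{hebook13}, which is precisely the cutoff/localization argument you have spelled out. Your derivation of the identity for $\int\chi_R^2(|\nabla v|^2+\langle M_s v,v\rangle+|v|^2)$ and the $H^1$ inclusion are accurate.
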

      \begin{proof}
    We first observe that $\PP (x,d_x)$ is accretive on $C_0^\infty (\mathbb
    R^k,\mathbb R^d)$. To this end it is sufficient to note that
\begin{displaymath}
  \langle  \PP {\bf u}\,,\, {\bf u} \rangle_{L^2(\mathbb  R^k,\mathbb R^d)} \geq 0\,,
\end{displaymath}
   which holds since $\PP$ is the sum of the non negative
   operator $(-\Delta) \otimes I_d + M_s (x) $ and the antisymmetric matrix $M_{as}(x)$\,. 
We can then follow the proof in \cite[exercise 13.7]{hebook13} (which
refers to Theorem 13.14 and the proof of Theorem 9.15). 
      \end{proof}
   \begin{remark}   
     Proposition \ref{prop1.1} follows as a particular case of Proposition
     \ref{prop2.1} for $k=d=3$ and $M_s=0$.
       \end{remark}
      \subsection{Essential spectrum}
      If ${\rm dim}\,{\rm ker}\,M(x)>0$ for all $x\in\R^k$ (or for a
      suitable sequence of balls with centers tending to $+\infty$) as in
      the case $M_s\equiv0$ we may attempt to exploit the fact that
      locally, in any of the directions spanning ${\rm ker}\,M$, $\PP$
      is expected to behave like $\Delta$ to show that its resolvent is
      not compact and even that $\R_+\subseteq\sigma(\PP)$. We begin with the
      following proposition, establishing non-compactness of
      $(\PP-\lambda)^{-1}$.
    \begin{proposition}
    Let  $\{a_n\}_{n=1}^\infty\subset\R^k$ satisfy $|a_n| \to +\infty$ and $|a_n-a_m|\geq 1$
for all $n\neq m$. Suppose that there exists $C>0$ and a 
     unit vector field ${\bf c} (x) $  such
        that,   for all $n\in\N$ and $ x\in B(a_n,\frac 12)$,
       \begin{subequations}
\label{eq:78}
         \begin{equation}
    M(x)  {\bf c} (x) =0\,,\
\end{equation} 
    and
\begin{equation}
     |d_x^\alpha c_j (x)| \leq C , \mbox{ for all }  \alpha \mbox{  s.t } 1\leq|\alpha| \leq 2, j=1,2,\dots,d\,.
\end{equation}
       \end{subequations}
     Then the resolvent of ${\bf P}$ is not compact.
     \end{proposition}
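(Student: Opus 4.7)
The plan is to construct an explicit Weyl-type singular sequence supported near the points $a_n$ and to contradict the compactness of a resolvent $(\mathbf{P}-\lambda)^{-1}$ for any fixed $\lambda\in\rho(\mathbf{P})$ (such a $\lambda$ exists by Proposition \ref{prop2.1}; for instance $\lambda=-1$ since $\mathbf{P}$ is maximally accretive). The idea is that the kernel direction $\mathbf{c}(x)$ makes $\mathbf{P}$ look locally like a pure Laplacian on the balls $B(a_n,\tfrac12)$, so a bump of the form $\chi(x-a_n)\mathbf{c}(x)$ is nearly in the kernel of $\mathbf{P}$ up to lower-order derivative terms of $\chi$ and $\mathbf{c}$.

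Concretely, I fix $\chi\in C_0^\infty(B(0,\tfrac12))$ with $\chi\not\equiv 0$, and define
\begin{displaymath}
\mathbf{u}_n(x) := \chi(x-a_n)\,\mathbf{c}(x).
\end{displaymath}
Since $|a_n-a_m|\geq 1$, the supports $B(a_n,\tfrac12)$ are pairwise disjoint, and since $|\mathbf{c}|\equiv 1$ we have $\|\mathbf{u}_n\|_{L^2}=\|\chi\|_{L^2}>0$ independent of $n$. The vectors $\mathbf{u}_n$ are smooth and compactly supported, hence lie in $D(\mathbf{P})$, and they are mutually orthogonal in $L^2(\R^k,\R^d)$. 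Moreover, for any fixed $\phi\in L^2$, Cauchy--Schwarz gives $|\langle \mathbf{u}_n,\phi\rangle|\leq \|\chi\|_\infty\|\mathbf{1}_{B(a_n,1/2)}\phi\|_{L^2}\to 0$ as $|a_n|\to\infty$, so $\mathbf{u}_n\rightharpoonup 0$ weakly.

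The key computation is for $\mathbf{P}\mathbf{u}_n$. On $B(a_n,\tfrac12)$ we have $M(x)\mathbf{c}(x)=0$ by hypothesis \eqref{eq:78}(a), so the matrix term drops out and
\begin{displaymath}
\mathbf{P}\mathbf{u}_n = -\big(\Delta\chi\big)(\,\cdot-a_n)\,\mathbf{c} - 2\,\nabla\chi(\,\cdot-a_n)\!\cdot\!\nabla\mathbf{c} - \chi(\,\cdot-a_n)\,\Delta\mathbf{c}.
\end{displaymath}
Each term is supported in $B(a_n,\tfrac12)$, and the uniform bound \eqref{eq:78}(b) on first and second derivatives of the components of $\mathbf{c}$, together with $\chi\in C_0^\infty$, yields a constant $K>0$ (independent of $n$) such that $\|\mathbf{P}\mathbf{u}_n\|_{L^2}\leq K$. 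Consequently $\mathbf{v}_n:=(\mathbf{P}-\lambda)\mathbf{u}_n$ is uniformly bounded in $L^2$, and by the same tail argument as above, $\mathbf{v}_n\rightharpoonup 0$ weakly.

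Now I conclude by contradiction. If $(\mathbf{P}-\lambda)^{-1}$ were compact, it would map the weakly null bounded sequence $\{\mathbf{v}_n\}$ to a strongly convergent one, forcing $\mathbf{u}_n=(\mathbf{P}-\lambda)^{-1}\mathbf{v}_n\to 0$ in $L^2$. This contradicts $\|\mathbf{u}_n\|_{L^2}=\|\chi\|_{L^2}>0$, and the resolvent must therefore fail to be compact. The only delicate point is ensuring that the cross terms produced by $\Delta(\chi\mathbf{c})$ are genuinely controlled uniformly in $n$, but this is exactly what the $C^2$-bound in \eqref{eq:78}(b) is designed to give; no further input beyond the kernel assumption and the separation $|a_n-a_m|\geq 1$ is needed.
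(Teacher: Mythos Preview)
Your proof is correct and follows essentially the same approach as the paper: construct the orthonormal bump sequence $\chi(x-a_n)\mathbf{c}(x)$ in the kernel direction, use $M\mathbf{c}=0$ and the $C^2$ bounds on $\mathbf{c}$ to get a uniform bound on $\mathbf{P}\mathbf{u}_n$, and conclude non-compactness of the resolvent. You supply more detail than the paper (which simply asserts the final implication), in particular the explicit weak-null/compactness contradiction; one minor remark is that $\mathbf{c}$ is only assumed to have bounded derivatives up to order~$2$, so $\mathbf{u}_n$ is $C^2_0$ rather than $C^\infty_0$, but this is harmless since such functions lie in $D(\mathbf{P})$ by mollification.
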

      \begin{proof}  
One  looks for an  infinite  orthonormal  family ($n\geq N$) in the form
\begin{displaymath}
  \Phi_n (x):= 2^d{\bf c} (x) \phi (2(x - a_n))\,,
\end{displaymath}
where $\phi\in C_0^\infty(B(a,b))$ is of unity norm (i.e., $\|\phi\|_{L^2} =1$).
Note that the above construction guarantees that $\|\Phi_n\|_2=1$. \\
  As $M\Phi_n\equiv0$, it can be easily verified that
      $\{{\bf P} \Phi_n\}_{n=1}^\infty$ is uniformly bounded. It follows that the
      resolvent of ${\bf P}$ cannot be compact.
 \end{proof}
 \begin{remark}
\label{rem:unit}
  If we do not assume that ${\bf c}(x)$ is  a unit vector. Then  \eqref{eq:78} can be replaced by  (assuming ${\bf c}(x)\neq0$ for $|x| \geq R$) the existence of $C>0$ such that:
   \begin{equation}
     \label{eq:79}
 |d_x^\alpha c_j (x)| \leq C\,  |{\bf c}(x)|, \mbox{ for all } \alpha \mbox{  s.t } 1\leq|\alpha| \leq 2, j=1,2,\dots,d\,,
   \end{equation}
which is normally easier to verify than (\ref{eq:78}b). 
 \end{remark}
     We note that for \eqref{eq:68} we have $k=1,d=3$, $M_s=0$ and ${\bf c}={\bf b}$.
     In this case we may conclude that 
     \begin{corollary}
   Suppose that for  $|x|\geq R$ it
         holds that ${\bf b}\neq0$ and that for some $C>0$ 
    \begin{equation}
    |d_x^\alpha b_j (x)| \leq C \, |{\bf b} (x)|, \forall \alpha \mbox{  s.t } 1\leq|\alpha| \leq 2\,, j=1,2,3\,.
  \end{equation}

  Then the resolvent of $\mathcal B_\epsilon$, given by \eqref{eq:78}, is
  not compact.
   \end{corollary}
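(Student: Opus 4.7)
The plan is to deduce this corollary directly from the preceding proposition by invoking Remark \ref{rem:unit} with the natural choice ${\bf c}(x) := {\bf b}(x)$. With $\mathcal B_\epsilon$ cast in the framework \eqref{eq:74} with $k=d=3$ and $M(x)$ the skew-symmetric matrix representing the cross product by ${\bf b}(x)$, one has $M_s\equiv 0$ and $M(x){\bf b}(x)={\bf b}(x)\times{\bf b}(x)=0$, so $M{\bf c}\equiv 0$. The pointwise bound on the derivatives of ${\bf b}$ hypothesized in the corollary is exactly the replacement condition \eqref{eq:79} of the remark, with $j=1,2,3$ and multi-indices of order at most $2$.

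Next, I would fix a sequence $\{a_n\}_{n\geq 1}\subset\R^3$ with $|a_n|\to\infty$, $|a_n-a_m|\geq 1$ for $n\neq m$, and $|a_n|\geq R+1$ for all $n$, so that each closed ball $\overline{B(a_n,\tfrac12)}$ lies in the region where ${\bf b}\neq 0$ and the derivative bounds apply. Fixing $\phi\in C_0^\infty(B(0,1))$ of unit $L^2$-norm, I would form approximate eigenfunctions
\[
\Phi_n(x)=\alpha_n\,{\bf b}(x)\,\phi\bigl(2(x-a_n)\bigr),
\]
where $\alpha_n>0$ is chosen so that $\|\Phi_n\|_2=1$. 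Disjointness of the supports ensures orthonormality, and $M\Phi_n\equiv 0$ gives $\mathcal B_\epsilon\Phi_n=-\epsilon^2\Delta\Phi_n$.

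The main technical point, and the principal obstacle, is to show that $\|\Delta\Phi_n\|_2$ stays bounded as $n\to\infty$, even though $|{\bf b}|$ can grow unboundedly. Applying Leibniz one produces three types of term, each estimated pointwise by a constant (depending only on $\phi$ and $C$) times $|{\bf b}(x)|\mathbf 1_{B(a_n,1/2)}(x)$, thanks to the hypothesis $|d_x^\alpha b_j|\leq C|{\bf b}|$ for $1\leq|\alpha|\leq 2$. Thus the question reduces to comparing $\bigl(\int_{B(a_n,1/2)}|{\bf b}|^2\bigr)^{1/2}$ with $\alpha_n^{-1}=\bigl(\int|{\bf b}|^2\phi(2(x-a_n))^2\,dx\bigr)^{1/2}$. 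A Gronwall-type argument applied to $\log|{\bf b}|^2$, again using the derivative bound, shows that the oscillation of $|{\bf b}|$ on any unit ball is controlled by a uniform constant; consequently the ratio above is uniformly bounded, and so is $\|\mathcal B_\epsilon\Phi_n\|_2$.

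Finally, if $(\mathcal B_\epsilon-\lambda)^{-1}$ were compact for some (equivalently any) $\lambda$ in the resolvent set, the boundedness of $\mathcal B_\epsilon\Phi_n-\lambda\Phi_n$ would yield a norm-convergent subsequence of $\Phi_n=(\mathcal B_\epsilon-\lambda)^{-1}(\mathcal B_\epsilon\Phi_n-\lambda\Phi_n)$, contradicting orthonormality. Hence the resolvent of $\mathcal B_\epsilon$ is not compact.
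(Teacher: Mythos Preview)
Your proof is correct and follows essentially the same route as the paper: apply the preceding proposition via Remark~\ref{rem:unit} with the choice ${\bf c}={\bf b}$, using $M(x){\bf b}(x)={\bf b}(x)\times{\bf b}(x)=0$. You additionally spell out the Gronwall-type oscillation estimate for $\log|{\bf b}|^2$ that justifies the remark (the paper leaves this implicit), which is a welcome clarification.
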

 Making slightly stronger assumptions on ${\bf c}$ we now prove the
existence of an essential spectrum for ${\bf P}$.
    \begin{proposition}
     Let  $\{a_n\}_{n=1}^\infty\subset\R^k$ satisfy $|a_n| \to +\infty$ and
       \begin{displaymath}
      r_n=\inf_{m\neq n}|a_n-a_m|\xrightarrow[n\to\infty]{}\infty \,.
       \end{displaymath}

Suppose that there exists a  unit vector field  ${\bf
         c} $ and   $R>0$ such that,  for all $n\in\N$ and $x\in B(a_n,r_n)$,
       \begin{subequations}
\label{eq:87}
         \begin{equation}
    M(x)  {\bf c} (x) =0\,,
\end{equation}
and
\begin{equation}
    |d_x^\alpha c_j (x)|\xrightarrow[|x|\to\infty]{}0 ,\, \forall \alpha \mbox{  s.t }\, 1\leq|\alpha| \leq 2,\, j=1,2,\dots,d\,.
\end{equation}
       \end{subequations}
     Then, $\R_+\subseteq \sigma({\bf P})$, where $\mathbb R_+:=[0,+\infty)$.
     \end{proposition}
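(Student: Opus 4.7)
\medskip
\noindent\textbf{Proof plan.} The natural approach is to produce, for each $\lambda\in\R_+$, a Weyl singular sequence $\{\Phi_n\}\subset D(\mathbf{P})$ with $\|\Phi_n\|_2=1$ and $\|(\mathbf{P}-\lambda)\Phi_n\|_2\to 0$. Since $\mathbf{P}$ commutes with complex conjugation, it suffices to construct such sequences in $L^2(\R^k,\C^d)$ (taking real and imaginary parts recovers the statement in the real $L^2$ space). Writing $\lambda=|\xi|^2$ with $\xi\in\R^k$, the idea is to use truncated plane waves in the direction of the ``null vector field'' $\mathbf{c}$:
\begin{equation*}
\Phi_n(x)=\rho_n^{-k/2}\,\chi\!\left(\frac{x-a_n}{\rho_n}\right)e^{i\xi\cdot x}\,\mathbf{c}(x),
\end{equation*}
where $\chi\in C_0^\infty(B(0,1))$ with $\|\chi\|_2=1$, and $\rho_n$ is chosen so that $1\leq\rho_n\leq r_n/2$ and $\rho_n\to\infty$. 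Since $\rho_n\leq r_n/2$, the support of $\Phi_n$ lies in $B(a_n,r_n)$, so $M\mathbf{c}=0$ on $\text{supp}\,\Phi_n$. Because $|\mathbf{c}|=1$, a change of variables gives $\|\Phi_n\|_2=1$, and since $\chi$ is compactly supported, $\Phi_n\in C_c^\infty\subset D(\mathbf{P})$.

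The key computation is
\begin{equation*}
(\mathbf{P}-\lambda)\Phi_n=(-\Delta-\lambda)\Phi_n+M(x)\Phi_n.
\end{equation*}
The matrix term vanishes identically on $\text{supp}\,\Phi_n$. Expanding the Laplacian by Leibniz produces three kinds of terms: (i)~the main term $\chi(\cdots)e^{i\xi\cdot x}\mathbf{c}\cdot(-\Delta-\lambda)e^{i\xi\cdot x}/e^{i\xi\cdot x}=0$ which cancels exactly by the choice $\lambda=|\xi|^2$; (ii)~terms of the form $\chi(\cdots)e^{i\xi\cdot x}\bigl(-2i\xi\cdot\nabla\mathbf{c}-\Delta\mathbf{c}\bigr)$, whose $L^2$ norm on the scaled ball $B(a_n,\rho_n)$ is controlled by $\sup_{x\in B(a_n,\rho_n),\,1\leq|\alpha|\leq 2,\,j}|d^\alpha c_j(x)|$; and (iii)~cutoff-derivative terms obtained from $\nabla\chi((x-a_n)/\rho_n)$ and $\Delta\chi((x-a_n)/\rho_n)$, which yield $L^2$ contributions of size $O(\rho_n^{-1})$ from $\nabla\chi\cdot i\xi$ and $O(\rho_n^{-2})$ from $\Delta\chi$ (the normalization $\rho_n^{-k/2}$ cancels the volume growth exactly).

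Combining these, $\|(\mathbf{P}-\lambda)\Phi_n\|_2\leq C\rho_n^{-1}+C\sup_{x\in B(a_n,\rho_n),\,|\alpha|\leq 2}|d^\alpha c_j(x)|$. With $|a_n|\to\infty$ and a sufficiently slow choice of $\rho_n\to\infty$ (for instance $\rho_n=\min\bigl(r_n/2,\,\omega(|a_n|)^{-1/2}\bigr)$ where $\omega(R)\to 0$ bounds the derivatives of $\mathbf{c}$ on $\{|x|\geq R\}$ per hypothesis (\ref{eq:87}b)), both terms tend to zero. Since $\lambda\notin\sigma(\mathbf{P})$ would force $\|(\mathbf{P}-\lambda)\Phi_n\|_2\geq\|(\mathbf{P}-\lambda)^{-1}\|^{-1}\|\Phi_n\|_2$, we conclude $\lambda\in\sigma(\mathbf{P})$, giving $\R_+\subseteq\sigma(\mathbf{P})$.

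\medskip
\noindent\textbf{Main obstacle.} The construction is essentially forced; the only subtle point is the balance between the scale $\rho_n$ on which the cutoff is stretched (which must tend to $\infty$ to kill the $O(\rho_n^{-1})$ error from $\nabla\chi\cdot\xi$) and the region $B(a_n,\rho_n)$ on which the derivatives of $\mathbf{c}$ must remain small (forcing $\rho_n$ not to grow too fast relative to the decay rate in (\ref{eq:87}b)). Ensuring $\rho_n\leq r_n/2$ so that $\text{supp}\,\Phi_n\subset B(a_n,r_n)$ is what allows the algebraic cancellation $M\Phi_n=0$ throughout the computation.
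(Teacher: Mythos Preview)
Your proposal is correct and follows essentially the same approach as the paper: build a Weyl sequence by multiplying a plane wave $e^{i\xi\cdot x}$ (with $|\xi|^2=\lambda$) by a cutoff supported in $B(a_n,\rho_n)$ and by the null vector field $\mathbf{c}$, so that $M\Phi_n\equiv 0$ and only the Laplacian acting on the cutoff and on $\mathbf{c}$ survives. The paper simply takes $\rho_n=r_n/2$ rather than your more flexible scale, which already suffices since $r_n\leq |a_n-a_1|\leq |a_n|+|a_1|$ forces $|x|\geq(|a_n|-|a_1|)/2\to\infty$ on the support; your explicit error bookkeeping and the discussion of the real/complex $L^2$ issue are a bit more detailed than the paper's, but the argument is the same.
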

      \begin{proof}  
One  looks for an  infinite  orthonormal  family ($n\geq N$) in the form
\begin{displaymath}
  \Phi_n (x):= r_n^{-d} {\bf c} (x) \phi (2r_n^{-1}(x - a_n))\,,
\end{displaymath}
   where 
    $\phi\in C_0^\infty(B(0,1)$ is of unity norm (i.e., $\|\phi\|_{L^2} =1$).
    
      Let $\lambda\in\R_+$, and $\xi\in\R^k$ satisfy $|\xi|^2=\lambda$. Let further
      $\Psi_n=e^{i\xi\cdot x}\Phi_n$. As $M\Psi_n\equiv0$, it can be easily verified that
      \begin{displaymath}
        \|(\PP-\lambda)\Psi_n\|_2\to0\,.
      \end{displaymath}
       It follows that $\lambda\in\sigma({\bf P})$ and hence $\R_+\subseteq\sigma({\bf P})$.
 \end{proof}
 
 \begin{remark}
\label{rem:essential-unit}
  If  ${\bf c}(x)$ is not assumed to be a unit vector, then  (\ref{eq:87}c)
   should be replaced (assuming ${\bf c}\neq0$) by,
   \begin{equation}
\label{eq:88}
 |d_x^\alpha c_j (x)| \leq  \delta(x)|{\bf c}(x)|,\, \forall \alpha \mbox{  s.t }\, 1\leq |\alpha| \leq 2,\, j=1,2,\dots,d\,,
   \end{equation}
where $\delta(x)\xrightarrow[|x|\to\infty]{}0$. 
 \end{remark}

 Assuming a linear magnetic field (as in (\ref{eq:1})) we consider a
 field ${\bf b}$ satisfying
\begin{equation}
\label{eq:89}
  {\bf b}= A\, x + {\bf f} \,,
\end{equation}
 where $A\neq0$ is a $d \times k$ matrix and ${\bf f}\in\R^d$. Let ${\bf w}$
denote an eigenvector $of A^TA$ corresponding to a non-zero eigenvalue
(which clearly exists since $A\neq0$).  Choosing $a_n=s_n{\bf w}$, where
$r_n=s_n-s_{n-1}\uparrow\infty$, it can be easily verified that there exists
$C>0$ such $|A\, x |\geq C|x|$ in $B(a_n,r_n/2)$. Consequently,
${\bf b}$ satisfies \eqref{eq:88}, and hence $\R_+\subseteq\sigma(\mathcal B_\epsilon)$
whenever
${\bf b}$ satisfies \eqref{eq:89}. Note that if $A=0$, then, by (\ref{eq:93}),
\begin{displaymath}
\sigma({\bf P})=\R_+\cup \{ \R_++i |{\bf f}|\} \cup \{ \R_+-i |{\bf f}|\}\,.
\end{displaymath}

Of particular interest is the case
\eqref{eq:1}. Here $ A =\hat {\bf i}_3$ and hence $\R_+$ is in the essential
spectrum of $\mathcal B_\epsilon$.

    \subsection{Maximal estimates} 
    A natural question is the effective description of $D({\bf P})$,
    which is currently defined as the closure of $C_0^\infty(\R^k,\R^d)$
    under the graph norm. While far from having an optimal result we
    can still determine the domain in two different cases: The first
    of them concerns matrices with positive symmetric part and
    coefficients of bounded derivatives.  In the second one we assume
    a more general class of skew-symmetric matrices.
   \begin{proposition}
   Suppose that $M$ satisfies \eqref{eq:57} and suppose that there exist $C>0$ and $R>0$ such that,
    for  $|x|\geq R$ it holds that
\begin{equation}
\label{eq:90}
     |d_xM_{i,j}(x)| \leq C  \,,\,  \forall   \,i,j =1,2,\dots,d\,,
\end{equation}
Then,
\begin{equation}
\label{eq:94}
    D({\bf P})= \{{\bf u}\in H^2(\mathbb R^k,\mathbb R^d)\,,\, M {\bf u} \in L^2(\mathbb R^k,\mathbb R^d)\}\,.
\end{equation}
\end{proposition}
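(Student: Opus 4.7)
The plan is to prove \eqref{eq:94} as a two-sided inclusion, with the nontrivial direction coming from a maximal estimate and the reverse from a standard cutoff-and-mollify approximation.

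First I would establish, for every ${\bf u}\in C_0^\infty(\R^k,\R^d)$, the inequality
\[
\|\Delta {\bf u}\|_2^2 + \|M{\bf u}\|_2^2 \leq C\,\bigl(\|\PP {\bf u}\|_2^2 + \|{\bf u}\|_2^2\bigr),
\]
with $C$ depending only on $\sup_{x\in\R^k}\|\nabla M(x)\|$, which is finite thanks to \eqref{eq:90} on $\{|x|\geq R\}$ and smoothness of $M$ on the compact set $\{|x|\leq R\}$. The starting point is the expansion
\[
\|\PP {\bf u}\|_2^2 = \|\Delta {\bf u}\|_2^2 + \|M{\bf u}\|_2^2 + 2\langle -\Delta {\bf u},M{\bf u}\rangle.
\]
Integration by parts, together with the pointwise identity $\partial_j{\bf u}\cdot M\partial_j{\bf u}=\partial_j{\bf u}\cdot M_s\partial_j{\bf u}$, yields
\[
\langle -\Delta{\bf u},M{\bf u}\rangle = \sum_{j=1}^k \langle \partial_j{\bf u}, M_s\,\partial_j{\bf u}\rangle + \sum_{j=1}^k \langle \partial_j{\bf u},(\partial_j M){\bf u}\rangle.
\]
By \eqref{eq:57} the first sum is non-negative and can be dropped. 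The second is bounded by $\|\nabla M\|_\infty\|\nabla{\bf u}\|_2\|{\bf u}\|_2$, and the elementary interpolation $\|\nabla{\bf u}\|_2\leq \|\Delta{\bf u}\|_2^{1/2}\|{\bf u}\|_2^{1/2}$ together with Young's inequality $xy\leq \eta x^4+C_\eta y^{4/3}$ lets me absorb a small multiple of $\|\Delta{\bf u}\|_2^2$ into the left-hand side, leaving a remainder proportional to $\|{\bf u}\|_2^2$. This proves the maximal estimate, and extending it by density to ${\bf u}\in D({\bf P})$ gives the inclusion $D({\bf P})\subseteq\{{\bf u}\in H^2(\R^k,\R^d):\, M{\bf u}\in L^2(\R^k,\R^d)\}$.

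For the reverse inclusion, given ${\bf u}$ in the right-hand side of \eqref{eq:94}, I would truncate using $\chi_n(x)=\chi(x/n)$ with $\chi\in C_0^\infty(\R^k)$ identically $1$ near the origin, obtaining ${\bf u}_n=\chi_n{\bf u}\in H^2$ with compact support and $M{\bf u}_n\in L^2$. Dominated convergence gives ${\bf u}_n\to{\bf u}$ in $H^2$ and $M{\bf u}_n\to M{\bf u}$ in $L^2$, while
\[
\PP{\bf u}_n - \chi_n\PP{\bf u} = -2\nabla\chi_n\cdot\nabla{\bf u} - (\Delta\chi_n){\bf u}\xrightarrow[n\to\infty]{L^2}0,
\]
so ${\bf u}_n\to{\bf u}$ in the graph norm. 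A standard mollification $\rho_\delta\ast{\bf u}_n\in C_0^\infty(\R^k,\R^d)$ then approximates each ${\bf u}_n$ in the graph norm: boundedness of $M$ on the fixed compact support of ${\bf u}_n$ together with the Friedrichs commutator lemma handles the term $M(\rho_\delta\ast{\bf u}_n)-\rho_\delta\ast(M{\bf u}_n)$. Hence ${\bf u}\in D({\bf P})$.

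The main obstacle is the control of the cross term $2\langle -\Delta{\bf u},M{\bf u}\rangle$: absent the assumption \eqref{eq:57}, the quadratic form $\sum_j\langle\partial_j{\bf u},M_s\partial_j{\bf u}\rangle$ would not drop out, and there would be no way to reduce the remainder to a genuinely lower-order term controllable by \eqref{eq:90}. The positivity of $M_s$ is therefore essential, and the interpolation used to absorb $\|\nabla{\bf u}\|_2\|{\bf u}\|_2$ into $\|\Delta{\bf u}\|_2^2$ must be balanced carefully against $\|\nabla M\|_\infty$.
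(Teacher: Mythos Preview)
Your proof is correct and follows essentially the same approach as the paper: integrate the cross term $\langle -\Delta{\bf u},M{\bf u}\rangle$ by parts, drop the nonnegative contribution $\sum_j\langle\partial_j{\bf u},M_s\partial_j{\bf u}\rangle$ via \eqref{eq:57}, and control the remaining $\langle\partial_j{\bf u},(\partial_jM){\bf u}\rangle$ using the bound on $\nabla M$. The only organizational differences are that the paper computes $\|M{\bf u}\|_2^2=\langle M{\bf u},\PP{\bf u}\rangle+\langle M{\bf u},\Delta{\bf u}\rangle$ rather than expanding $\|\PP{\bf u}\|_2^2$, and bounds $\|\nabla{\bf u}\|_2$ directly from accretivity instead of interpolating against $\|\Delta{\bf u}\|_2$; your explicit cutoff-and-mollify argument for the reverse inclusion is a useful addition that the paper leaves implicit.
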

  \begin{proof}   
Let ${\bf u} \in C_0^\infty(\mathbb R^k,\mathbb R^d)$. Clearly,
\begin{equation*}
\|\nabla {\bf u}\|^2_2+\langle u,M_s{\bf u}\rangle  = \langle {\bf u},\PP {\bf u}\rangle\leq \frac 12  \left( \| \PP {\bf u}\|^2_2 + \|{\bf u}\|^2_2\right)\,.
\end{equation*}
By  \eqref{eq:57} we can conclude that
\begin{equation}\label{eq:h1}
\|\nabla {\bf u}\|^2_2 = \langle {\bf u},\PP {\bf u}\rangle\leq \frac 12  \left( \| \PP {\bf u}\|^2_2 + \|{\bf u}\|^2_2\right)\,.
\end{equation}
Next, we write
\begin{equation} \label{eq:h2}
\begin{array}{ll}
\| M {\bf u}\|^2_2 &=\langle M {\bf u},\PP {\bf u}\rangle+
\langle M {\bf u},\Delta {\bf u}\rangle \\ & \leq 
\| \PP {\bf u} \|_2\, \| M {\bf u} \|_2 - \langle M_s \nabla {\bf u},\nabla {\bf
  u}\rangle-\langle (\nabla M) {\bf u},\nabla {\bf u}\rangle\\
  &\leq  \| \PP {\bf u} \|_2\, \| M {\bf u} \|_2 - \langle (\nabla M) {\bf u},\nabla {\bf u}\rangle
  \,.
  \end{array}
 \end{equation}
 We can then conclude from \eqref{eq:90}, \eqref{eq:h1} and
 \eqref{eq:h2} the existence of $C$ such that
 \begin{equation}\label{eq:h3}
 \| M {\bf  u} \|^2 \leq C ( \|\mathcal P {\bf u}\|^2 + \|{\bf u}\|^2)\,,\, \forall {\bf u} \in C_0^\infty(\mathbb R^k,\mathbb R^d)\,.
 \end{equation}
 
 Let $({\bf u},\PP {\bf u}) \in[L^2(\R^k,\R^d)]^2$ and $\{{\bf
   u}_n\}_{n=1}^\infty \subset C_0^\infty(\mathbb R^k,\mathbb R^d)$ satisfy ${\bf
   u}_n\to{\bf u}$ in the graph norm.  Then, by \eqref{eq:h3}, $\{M
 {\bf u}_n\}_{n=1}^\infty$ is a Cauchy sequence in $L^2(\mathbb
 R^k,\mathbb R^d)$.  Since $M {\bf u}_n\to M{\bf u}$ in $L^2(\Omega,\R^d)$
 for any $\Omega\Subset \R^k$ we may conclude that the limit of $\{M {\bf
   u}_n\}_{n=1}^\infty$ in $L^2(\mathbb R^k,\mathbb R^d)$ is $M{\bf u}\in
 L^2(\R^k,\R^d)$.  By subtraction from $\PP {\bf u}$, we obtain
 \linebreak $\Delta {\bf u} \in L^2(\R^k,\R^d)$, thus $ {\bf u}\in
 H^2(\mathbb R^k,\R^d)$ and hence $ {\bf u}\in D({\bf P})$.
\end{proof}
We note that the above result is a particular case of \cite[Theorem
3.2]{kuetal19}.

We now obtain a stronger result for the case where $M$ is skew-symmetric.
  \begin{proposition}
\label{prop:maximal}
   Suppose that $ M=M_{as}\in C^2(\R^k,M_d(\R))$ is a skew symmetric matrix. Let
   \begin{displaymath}
     \Sg(x) = \inf_{\lambda_j \in \sigma(M(x))\setminus \{0\}} |\lambda_j |\,,
   \end{displaymath}
and suppose that there exist $C>0$ and $R>0$ such that,
    for  $|x|\geq R$ it holds that
\begin{equation}
\label{eq:90a}
  | d_x^\alpha  M_{i,j}(x)| \leq C\,  \Sg(x) \,,\,  \forall   \,i,j =1,2,\dots,d, 1 \leq |\alpha| \leq 2\,.
\end{equation}
and that  
   ${\rm dim\,}{\rm Ker\, } M (x)$ is constant for $|x|\geq R$. 
Then,
\begin{equation}
\label{eq:94a}
    D({\bf P})= \{{\bf u}\in H^2(\mathbb R^k,\mathbb R^d)\,,\, M {\bf u} \in L^2(\mathbb R^k,\mathbb R^d)\}\,.
\end{equation}
     \end{proposition}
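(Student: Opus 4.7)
The plan is to mimic the strategy of the previous proposition and establish the maximal a priori estimate
\[
\|\Delta{\bf u}\|_2^2+\|M{\bf u}\|_2^2\leq C\bigl(\|\PP{\bf u}\|_2^2+\|{\bf u}\|_2^2\bigr),\qquad \forall\,{\bf u}\in C_0^\infty(\R^k,\R^d)\,,
\]
from which the characterization of $D({\bf P})$ follows by the same closure argument used there, with this estimate replacing \eqref{eq:h3}; the reverse inclusion $\{{\bf u}\in H^2:M{\bf u}\in L^2\}\subseteq D({\bf P})$ is obtained by a standard truncate-and-mollify density argument. The point of relaxing \eqref{eq:90} to \eqref{eq:90a} is that $\nabla M$ is no longer uniformly bounded, so the brute-force estimate underlying \eqref{eq:h2} is not available.

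The new starting point is the identity that follows from integration by parts and from the pointwise cancellation $\langle\partial_j{\bf u},M\partial_j{\bf u}\rangle=0$ (a consequence of $M^t=-M$):
\[
\|\PP{\bf u}\|_2^2=\|\Delta{\bf u}\|_2^2+\|M{\bf u}\|_2^2+2\langle\nabla{\bf u},(\nabla M){\bf u}\rangle\,.
\]
The task is thus to absorb the cross term into $\tfrac12\|M{\bf u}\|_2^2$ plus quantities controlled by $\|\PP{\bf u}\|_2^2+\|{\bf u}\|_2^2$; note that \eqref{eq:h1} (which applies here since $M_s=0$) gives us for free $\|\nabla{\bf u}\|_2^2\leq \tfrac12(\|\PP{\bf u}\|_2^2+\|{\bf u}\|_2^2)$.

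Let $P_K(x)$ denote the orthogonal projection onto $\ker M(x)$, well defined and $C^2$ on $\{|x|\ge R\}$ by the constant-rank hypothesis; a $C_0^\infty$ cutoff reduces the analysis to this region (on $\{|x|<R\}$, $M$ is bounded and the previous proposition applies). Differentiating $MP_K\equiv 0$ yields the key identity
\[
(\partial_jM)P_K=-M(\partial_jP_K)\,.
\]
Since $M$ restricted to $(\ker M)^\perp$ has smallest singular value $\Sg$, this identity together with \eqref{eq:90a} provides $|\nabla P_K|+|\Delta P_K|\leq C$ on $|x|\ge R$ (the off-diagonal part of $\partial_jP_K$ is majorised by $|(\partial_jM)P_K|/\Sg\leq C$, and differentiating once more gives a similar control on $\partial_j^2P_K$). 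Now decompose $(\partial_jM){\bf u}=(\partial_jM)P_K^\perp{\bf u}+(\partial_jM)P_K{\bf u}$. The $P_K^\perp$ piece is harmless: using $|Mv|\ge\Sg|v|$ on $(\ker M)^\perp$ and \eqref{eq:90a} one obtains $\|(\partial_jM)P_K^\perp{\bf u}\|_2\leq C\|\Sg P_K^\perp{\bf u}\|_2\leq C\|M{\bf u}\|_2$. For the $P_K$ piece one substitutes $(\partial_jM)P_K=-M\partial_jP_K$, uses the skew of $M$ to move it across, writes $M\partial_j{\bf u}=\partial_j(M{\bf u})-(\partial_jM){\bf u}$, and integrates by parts in the $\partial_j(M{\bf u})$ factor. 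Every resulting term is then either of the form $\langle M{\bf u},X\rangle$ with $\|X\|_2\leq C(\|{\bf u}\|_2+\|\nabla{\bf u}\|_2)$, or else the sum $\sum_j\langle M(\partial_jP_K){\bf u},(\partial_jP_K){\bf u}\rangle$, which vanishes termwise by skew-symmetry of $M$. Putting this together yields
\[
|\langle\nabla{\bf u},(\nabla M){\bf u}\rangle|\leq C\|M{\bf u}\|_2\bigl(\|{\bf u}\|_2+\|\nabla{\bf u}\|_2\bigr)\,,
\]
and Young's inequality combined with the accretivity bound on $\|\nabla{\bf u}\|_2$ closes the argument.

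The main obstacle is the treatment of $(\nabla M)P_K{\bf u}$: the naive pointwise bound $|\nabla M||P_K{\bf u}|\leq C\Sg|P_K{\bf u}|$ is useless since $\|\Sg P_K{\bf u}\|_2$ is not controlled by $\|\PP{\bf u}\|_2+\|{\bf u}\|_2$ -- already in \eqref{eq:1}, taking ${\bf u}=(0,0,v)$ one computes $\|\Sg P_K{\bf u}\|_2=\|xv\|_2$, which is infinite for a generic $v\in L^2$. The resolution is precisely the identity $(\nabla M)P_K=-M\nabla P_K$, which re-exposes an $M$ factor, together with the skew-symmetric cancellation $\sum_j\langle M(\partial_jP_K){\bf u},(\partial_jP_K){\bf u}\rangle=0$; it is the interplay of these two structural facts that makes the weaker assumption \eqref{eq:90a} sufficient for the maximal estimate.
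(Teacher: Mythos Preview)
Your proof is correct and rests on the same two structural facts as the paper's: (i) the orthogonal projector onto $\ker M(x)$ has bounded first and second derivatives under \eqref{eq:90a}, and (ii) skew-symmetry kills the residual quadratic terms. The organization differs somewhat. The paper starts from $\|M{\bf u}\|_2^2=\langle M{\bf u},\PP{\bf u}\rangle+\langle M{\bf u},\Delta{\bf u}\rangle$, inserts a cutoff, and on the far region writes $\langle\tilde\chi M{\bf u},\Delta{\bf u}\rangle=\langle\tilde\chi M{\bf u},[\Pi,\Delta]{\bf u}\rangle+\langle\tilde\chi M\Pi{\bf u},\Delta\Pi{\bf u}\rangle$ with $\Pi=I-P_K$; the commutator term is harmless once $\nabla\Pi$, $\nabla^2\Pi$ are bounded, and the last term is controlled after one integration by parts using $M^t=-M$. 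You instead expand $\|\PP{\bf u}\|_2^2$ directly and attack the cross term $\langle\nabla{\bf u},(\nabla M){\bf u}\rangle$ via the decomposition ${\bf u}=P_K{\bf u}+P_K^\perp{\bf u}$ and the identity $(\partial_jM)P_K=-M\partial_jP_K$. The paper derives the projector-derivative bounds from the Riesz contour formula $\Pi_0=(2\pi i)^{-1}\oint_\gamma(z-M)^{-1}\,dz$, which is clean and generalizes easily; your route via the algebraic identity and inversion of $M|_{(\ker M)^\perp}$ gives the same bounds but needs a little more bookkeeping for $\nabla^2P_K$. One minor imprecision in your sketch: after writing $M\partial_j{\bf u}=\partial_j(M{\bf u})-(\partial_jM){\bf u}$ you must split the leftover $-\langle(\partial_jM){\bf u},(\partial_jP_K){\bf u}\rangle$ once more into $P_K$ and $P_K^\perp$ parts; the $P_K^\perp$ piece is not literally of the form $\langle M{\bf u},X\rangle$, but it is still $O(\|M{\bf u}\|_2\|{\bf u}\|_2)$ by the same estimate you used earlier, and the $P_K$ piece is exactly your vanishing term $\langle M(\partial_jP_K){\bf u},(\partial_jP_K){\bf u}\rangle=0$.
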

 \begin{proof}
   We note that under our assumptions, all eigenvalues in $\sigma(M(x))$
   are purely imaginary, $\{0\}\in\sigma(M(x))$ when $d$ is odd, and that
   $\Sg(x)$ is continuous. Note further that by the condition on ${\rm
     dim}\,{\rm Ker\, } M (x)$ we have either $M(x)=0$ or $\Sg(x)>0$ for
   all $|x|>R$. The treatment of the first case being evident, we
   treat the second case. We introduce for any $x$, $\Pi_0 (x)$ the
   projector on the kernel of $M(x)$ and \break $\Pi (x):=
   (I-\Pi_0(x))$.  When $|x|\geq R$, we note that $\Pi_0(x)$ (hence
   $\Pi(x)$) depends smoothly on $x$ and note that we have
\begin{displaymath}
\Pi_0 (x):= \frac{1}{2\pi i} \int_{\gamma} (z-M(x))^{-1} dz \,,
\end{displaymath}
for any positively oriented circle $\gamma$ of radius strictly smaller
than $\Sg (x)$.

We now estimate $\partial_{x_j} \Pi_0 (x)$ for $|x|>R$.  Let $x_0\in\R^d$. For
any $\gamma$ of radius smaller than $\Sg (x_0)$, we have
\begin{displaymath}
  (\partial_{x_j} \Pi_0 ) (x_0) = \frac{1}{ 2\pi i} \int_{\gamma} (z-M(x_0))^{-1} (\partial_{x_j} M)(x_0) (z-M(x_0))^{-1}dz 
\end{displaymath}
Since
\begin{displaymath}
   \|(z-M(x))^{-1}\| \leq \frac{1}{\min(|z|,\Sg(x)-|z|)} \,,
\end{displaymath}
we obtain, by choosing $\gamma$ to be of radius $\frac 12 \Sg(x_0)$, with
the aid of \eqref{eq:90a} and the fact that the length of $\gamma$ is $\pi \Sg(x_0)$, 
 the existence of $C>0$ such that, for all $|x| \geq R$,
\begin{equation}\label{comm1}
\| (\partial_{x_j} \Pi_0 ) (x) \|_{M_d(\mathbb R)} 
 \leq  C \,,\, \forall j=1,\dots,k\,.
\end{equation}

Similarly, for any $\gamma$ of radius $< \Sg (x_0)$, it holds that
\begin{displaymath}
  \begin{array}{l}
(\partial_{x_j} \partial_{x_\ell}\Pi_0 ) (x_0)\\
\quad  = \frac{1}{ 2\pi i} \int_{\gamma} (z-M(x_0))^{-1} (\partial_{x_j x_\ell } M)(x_0) (z-M(x_0))^{-1}dz \\ \qquad  + 
\frac{1}{2 \pi i} \int_{\gamma} (z-M(x_0))^{-1} (\partial_{x_j} M)(x_0) (z-M(x_0))^{-1}(\partial_{x_\ell} M)(x_0) (z-M(x_0))^{-1}dz\,.
\end{array}
\end{displaymath}
Using \eqref{comm1} we establish the existence of $C>0$ such
that, for $|x| \geq R$,
\begin{equation}\label{comm2}
\| (\partial_{x_j x_\ell} \Pi_0 ) (x) \|_{M_d(\mathbb R)} 
 \leq  C \,,\, \forall j,\ell =1,\dots,k\,.
\end{equation}

We now introduce $\chi\in C_0^\infty(\mathbb R^k)$ such that $\chi =1$ on
$B(0,R)$ and let $\tilde \chi=1-\chi$.  Next, we write for any ${\bf
  u}\in C_0^\infty(\R^k,\R^d)$
\begin{equation}
\label{eq:95}
\begin{array}{ll}
\| M {\bf u}\|^2 &  = \langle M{\bf u}, \mathcal Pu \rangle + \langle \chi M{\bf u}, \Delta {\bf u}\rangle + \langle \tilde \chi M {\bf u}, \Delta {\bf u}\rangle \\
&=  \langle M{\bf u}, \mathcal Pu \rangle  - \langle (\nabla\chi) M{\bf u}, \nabla {\bf u}\rangle  -
\langle \chi \nabla  M{\bf u}, \nabla {\bf u}\rangle + \langle \tilde \chi M{\bf u}, \Delta
{\bf u}\rangle 
\end{array}
\end{equation}
Since $M\in C^2(\R^k,M_d(\R))$ we can conclude from \eqref{eq:h1} that
\begin{equation}
\label{eq:103}
  - \langle (\nabla\chi) M{\bf u}, \nabla {\bf u}\rangle  -
\langle \chi \nabla  M{\bf u}, \nabla {\bf u}\rangle \leq C\|{\bf u}\|_2\|\nabla{\bf u}\|_2\leq
C\|{\bf u}\|_2(\| \PP {\bf u}\|_2 + \|{\bf
    u}\|_2)
\end{equation}
To bound the last term on the right-hand-side of \eqref{eq:95} we
first observe that
\begin{displaymath}
  M =M\Pi =\Pi M= \Pi M \Pi \,.
\end{displaymath}
Hence,
\begin{displaymath}
\langle \tilde \chi M{\bf u}, \Delta
{\bf u}\rangle = \langle \tilde \chi M \Pi{\bf u}, \Delta {\bf u}\rangle =  
 \langle \tilde \chi  M {\bf u},[ \Pi, \Delta ] {\bf u}\rangle
  +  \langle \tilde \chi  M \Pi {\bf u}, \Delta \Pi  {\bf u}\rangle
\end{displaymath}
By \eqref{comm1}, \eqref{comm2}, and \eqref{eq:h1} we have that
\begin{displaymath}
  \|[ \Pi, \Delta ] {\bf u}\|_2\leq C(\|{\bf u}\|_2+\|\nabla{\bf u}\|_2)\leq C(\| \PP {\bf u}\|_2 + \|{\bf
    u}\|_2)\,,
\end{displaymath}
and hence
\begin{equation}
\label{eq:104}
   |\langle \tilde \chi  M {\bf u},[ \Pi, \Delta ] {\bf u}\rangle|\leq C\, \|M{\bf u}\|_2\, (\| \PP {\bf u}\|_2 + \|{\bf
    u}\|_2)\,.
\end{equation}
Finally, we write
\begin{displaymath}
\begin{array}{ll}
\langle \tilde \chi  M \Pi {\bf u}, \Delta \Pi  {\bf u}\rangle & = - \langle [\nabla,  \tilde \chi  M ]\Pi {\bf u}, \nabla \Pi  {\bf u}\rangle \\
&= - \langle (\nabla \tilde \chi)   M  {\bf u}, \nabla \Pi  {\bf u}\rangle 
 - \langle\tilde \chi  (\nabla M) \Pi {\bf u}, \nabla \Pi  {\bf u}\rangle 
\end{array}
\end{displaymath}
 By \eqref{eq:90a} and \eqref{comm1} we obtain, for any $\eta \in (0,1)$
\begin{displaymath}
|\langle \tilde \chi  M \Pi {\bf u}, \Delta \Pi  {\bf u}\rangle| 
\leq \eta \, \| M{\bf u}\|^2_2 + C_\eta (\| \nabla {\bf u}\|^2_2 + \| {\bf u}\|^2_2) 
\end{displaymath}
Substituting the above (with sufficiently small $\eta$), together with
\eqref{eq:104} and \eqref{eq:103} into \eqref{eq:95} yields
\begin{equation}
\label{eq:91}
\| M {\bf u}\|_2 \leq C(\| \PP {\bf u}\|_2 + \|{\bf
    u}\|_2)\,,\, \forall {\bf u} \in C_0^\infty(\mathbb R^k,\mathbb R^d)\,.
\end{equation}
We complete the proof in the same manner as in the proof of the
previous proposition.
   \end{proof}  
As mentioned before, in \cite{kuetal19} the authors consider the
operator 
\begin{displaymath}
  \PP = -\Div Q\, \nabla +M \,,
\end{displaymath}
for the case where there exists $\beta \in \R$ such that
  \begin{equation}
\label{eq:133}
    {\mathbf \xi}\cdot M({\bf x}){\mathbf \xi}\geq \beta|\xi|^2 \,,
  \end{equation}
  for all ${\bf \xi}\in\R^d$ and ${\bf x}\in\R^k$.  In the case $Q=I$, it
  is shown in \cite{kuetal19} that when $\nabla M\circ M^{-\gamma}$ is bounded in
  $L^\infty(\R^d)$ for some $0\leq\gamma<1/2$, then 
  \begin{displaymath}
    D(\PP)=\{u\in
  H^2(\R^d)\,|\,Mu\in L^2(\R^d)\}\,.
  \end{displaymath}
(It should be mentioned that the
  results in \cite{kuetal19} are stated in $L^p$ for any \break $p\in(1,\infty)$
  whereas here we consider only the case $p=2$.)  We note that while
  \eqref{eq:133} clearly holds in the case where $M$ is
  skew-symmetric, \eqref{eq:90a} applies to cases where \break $\nabla M \circ
  M^{-\gamma}$ is unbounded in $L^\infty(\R^d)$ for all $0\leq\gamma<1/2$. Thus, for
  instance, in the case $d=2$ we may consider (see \cite[Example
  2.4]{kuetal19})
\begin{displaymath}
  M=
  \begin{bmatrix}
    0 & 1+|x|^r \\
    -(1+|x|^r) & 0
  \end{bmatrix}
\end{displaymath}
Since   for $\nabla M \circ  M^{-\gamma}$ to be bounded we must have $\gamma\geq1-r^{-1}$, one
can apply the results in  \cite{kuetal19} for $r<2$ only, whereas
\eqref{eq:90a} holds for $r\geq2$ as well.  
   \begin{corollary}
\label{cor:cross-product}
Let $d=k=3$ and $M{\bf u}={\bf b}\times{\bf u}$. Then, if there exist $C>0$
and $R>0$ such that for all $|x|\geq R$ it holds that ${\bf b(x)}\neq 0$
and
    \begin{equation}
      \label{eq:96}
|d_x^\alpha b_j (x)| \leq  C\, |{\bf b}(x)|, \forall \alpha \mbox{  s.t } 1\leq |\alpha| \leq 2, j=1,2,3\,,
    \end{equation}
then
\begin{displaymath}
    D({\bf P})= \{{\bf u}\in H^2(\mathbb R^3,\mathbb R^3)\,,\, {\bf b}\times{\bf u} \in L^2(\mathbb R^3,\mathbb R^3)\}\,.
\end{displaymath}
   \end{corollary}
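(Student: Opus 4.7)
The plan is to obtain the corollary as a direct application of Proposition \ref{prop:maximal}, so the whole task reduces to verifying its three hypotheses for the particular skew-symmetric matrix
\begin{displaymath}
  M(x)=\begin{pmatrix} 0 & -b_3 & b_2 \\ b_3 & 0 & -b_1 \\ -b_2 & b_1 & 0 \end{pmatrix}
\end{displaymath}
associated with ${\bf b}\times$. The assumption \eqref{eq:96} clearly forces ${\bf b}\in C^2(\mathbb R^3,\mathbb R^3)$, so $M\in C^2(\mathbb R^3,M_3(\mathbb R))$.

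First I would compute the spectrum of $M(x)$. A direct calculation (or the identity ${\bf b}\times({\bf b}\times{\bf u})=({\bf b}\cdot{\bf u}){\bf b}-|{\bf b}|^2{\bf u}$) shows that $M(x)$ annihilates ${\bf b}(x)$ and acts on ${\bf b}(x)^\perp$ with $M^2=-|{\bf b}|^2 I$. Hence $\sigma(M(x))=\{0,\pm i|{\bf b}(x)|\}$, and consequently
\begin{displaymath}
  \Sg(x)=|{\bf b}(x)|\,.
\end{displaymath}

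Next, since the entries of $M$ are, up to sign, the components $b_j$, the hypothesis \eqref{eq:96} becomes exactly \eqref{eq:90a} of Proposition \ref{prop:maximal}: for $|x|\geq R$ and $1\leq|\alpha|\leq 2$,
\begin{displaymath}
  |d_x^\alpha M_{ij}(x)|\leq C\,|{\bf b}(x)|=C\,\Sg(x)\,.
\end{displaymath}
Finally, for $|x|\geq R$ the hypothesis guarantees ${\bf b}(x)\neq 0$, so $\mathrm{Ker}\,M(x)=\mathbb R\,{\bf b}(x)$ is one-dimensional, and in particular $\mathrm{dim}\,\mathrm{Ker}\,M(x)$ is constant on $\{|x|\geq R\}$. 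The three assumptions of Proposition \ref{prop:maximal} being met, its conclusion gives the claimed characterization of $D({\bf P})$.

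I do not anticipate any serious obstacle: the only non-trivial step is identifying $\Sg(x)$ with $|{\bf b}(x)|$, which is a one-line spectral computation for the $3\times 3$ cross-product matrix. Everything else is a direct rewriting of \eqref{eq:96} into the hypotheses of the proposition.
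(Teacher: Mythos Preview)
Your proof is correct and follows exactly the paper's approach: the paper's own proof is the single line ``Since $\Sg(x)=|{\bf b}(x)|$ we can easily conclude \eqref{eq:90a} from \eqref{eq:96},'' and you have simply filled in the details (the spectral computation for the cross-product matrix and the constancy of $\dim\ker M(x)$) that the paper leaves implicit. One small remark: the assumption \eqref{eq:96} does not by itself force ${\bf b}\in C^2$ (it is a bound on derivatives, presupposing they exist), so the regularity of ${\bf b}$ should be taken as an ambient hypothesis rather than a consequence of \eqref{eq:96}.
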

   \begin{proof}
Since $\Sg(x)=|{\bf b}(x)|$ we can easily conclude \eqref{eq:90} from
\eqref{eq:96}. 
   \end{proof}
   
   \subsection{Bounded components}
   In this section we consider the case $d=3$, where $M$ is the matrix
   associated with a vector product with ${\bf b}$. Assuming that two
   components of ${\bf b}$ are bounded, we may obtain $D({\bf P})$
   even in cases where the third component does not satisfy
   \eqref{eq:96}.
   
   \subsubsection{Characterization of the domain}
   To this end we use the
   results in \cite{HeNo}, obtained for the scalar operator $-\Delta + i
   V(x)$.
        \begin{proposition}
\label{prop:bounded-caract-doma}
          Let $\B_1$ denote the closure under the graph norm of
          \eqref{eq:4} with $\epsilon=1$, where ${\bf b}=(b_1,b_2,b_3)$.
          Suppose that $b_1$ and $b_2$ belong to $L^\infty(\R^k)$. Suppose
          further for some $r\in\Z_+$ that $b_3\in C^{r+1}(\R^k)$
          satisfies
      \begin{equation}
\label{eq:98}
\max_{|\beta|=r+1} |D_x^\beta b_3(x)| \leq C_0\, \sqrt{ \sum _{|\alpha| \leq r}  |D_x^\alpha b_3(x)|^2 +1}\, .
\end{equation}
Then  
\begin{equation}
\label{eq:99}
      D(\B_1)= \{{\bf  u} \in H^2 (\mathbb R^k,\mathbb
      R^3)\,,\,{\bf b}\times {\bf u}        \in L^2(\mathbb R^k,\mathbb R
      ^3) \}\,. 
\end{equation}
      \end{proposition}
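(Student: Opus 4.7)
The strategy is to reduce the statement to the scalar result of \cite{HeNo} on the maximal domain of $-\Delta + iV$, exploiting the block-diagonal structure already displayed in \eqref{eq:92}--\eqref{eq:93}. Since $b_1, b_2 \in L^\infty(\R^k)$, the matrix-valued multiplication associated with $(b_1\hat{\bf i}_1 + b_2\hat{\bf i}_2)\times$ is a bounded operator on $L^2(\R^k, \R^3)$, and adding a bounded operator does not alter the domain of a closed operator. It therefore suffices to identify the domain of the reduced operator
\begin{displaymath}
  \B_1^{(3)} := -\Delta \otimes I_3 + b_3\, \hat{\bf i}_3 \times,
\end{displaymath}
whose matrix potential is precisely the $\Mg$ of \eqref{eq:92}.

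To treat $\B_1^{(3)}$ I would apply the unitary change of basis from $(\hat{\bf i}_1, \hat{\bf i}_2, \hat{\bf i}_3)$ to $({\bf v}, \bar{\bf v}, \hat{\bf i}_3)$ introduced in \eqref{eq:defv}. In the new basis the operator becomes the diagonal system $\widetilde{\mathcal B}_1$ of \eqref{eq:93}, namely $\mathrm{diag}(-\Delta + ib_3,\, -\Delta - ib_3,\, -\Delta)$. The main result of \cite{HeNo}, applied under hypothesis \eqref{eq:98}, asserts that the closure on $C_0^\infty(\R^k, \C)$ of $-\Delta \pm ib_3$ has domain $\{v \in H^2(\R^k, \C) : b_3 v \in L^2\}$. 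Hence the closure of the diagonal system has domain $\{(v_1, v_2, v_3) \in H^2(\R^k, \C^3) : b_3 v_1,\, b_3 v_2 \in L^2\}$. Unwinding the rotation and restricting to real-valued vector fields translates this into
\begin{displaymath}
  D(\B_1^{(3)}) = \{{\bf u} \in H^2(\R^k, \R^3) :\, b_3 u_1,\, b_3 u_2 \in L^2(\R^k)\}.
\end{displaymath}

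Since the bounded perturbation $(b_1\hat{\bf i}_1 + b_2\hat{\bf i}_2)\times$ preserves the domain, the same set equals $D(\B_1)$. Finally, a direct inspection of the components of ${\bf b}\times{\bf u}$---namely $b_2 u_3 - b_3 u_2$, $b_3 u_1 - b_1 u_3$ and $b_1 u_2 - b_2 u_1$---shows that, thanks to the $L^\infty$ bounds on $b_1$ and $b_2$, the summands carrying $b_1$ or $b_2$ automatically lie in $L^2$ whenever ${\bf u}\in L^2$. Consequently ${\bf b}\times{\bf u} \in L^2$ is equivalent to $b_3 u_1,\, b_3 u_2 \in L^2$, and \eqref{eq:99} follows.

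The main obstacle I anticipate is the bookkeeping in the change of basis: one has to check that the unitary passage between the real operator on $C_0^\infty(\R^k,\R^3)$ and the diagonal complex system on $C_0^\infty(\R^k,\C^3)$ is compatible with the graph-norm closures on both sides, i.e.\ that the reality constraint (which reads $w_2 = \overline{w_1}$ on the $({\bf v}, \bar{\bf v})$ coefficients) is preserved in the limit. This follows from the complex-conjugation invariance noted in the Remark after Proposition \ref{prop1.1}, but it must be written out carefully. Once this compatibility is in hand, the invocation of \cite{HeNo} is immediate, since the growth hypothesis \eqref{eq:98} is precisely the assumption needed there.
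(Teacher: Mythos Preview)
Your proposal is correct and follows essentially the same route as the paper: pass to the basis $({\bf v},\bar{\bf v},\hat{\bf i}_3)$, invoke the scalar result of \cite{HeNo} on $-\Delta\pm ib_3$ under hypothesis \eqref{eq:98}, and use boundedness of $b_1,b_2$ to handle the remaining terms. The only cosmetic difference is that you strip off the bounded perturbation $(b_1\hat{\bf i}_1+b_2\hat{\bf i}_2)\times$ before diagonalizing (yielding a genuinely diagonal system), whereas the paper diagonalizes first and then moves the resulting bounded off-diagonal entries of $\tilde M$ into the right-hand side; the content is the same.
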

      \begin{proof}
       Let ${\bf v}$ be defined by \eqref{eq:defv}.   In the basis $\{{\bf v},\bar{\bf v}, \hat{\bf i}_3\}$,
       the skew-symmetric matrix  $M$ assumes the form 
  \begin{displaymath}
   \tilde M := \left(
       \begin{array}{ccc}
       -i b_3&0 & -(b_1-i b_2)/\sqrt{2}\\
       0 & i b_3 &-(b_1+i b_2) /\sqrt{2}\\
       (b_1+ i b_2)/\sqrt{2} &(b_1- i b_2)/\sqrt{2}&0
       \end{array}
       \right)\,. 
  \end{displaymath}
and 
\begin{displaymath}
      \tilde{\B}_1:= -\Delta \otimes I_3 + \tilde M \,.
\end{displaymath}
Consider $\tilde {\bf u} \in L^2(\R^k,\R^d)$ satisfying $ {\bf \tilde
  \B}_1\tilde{\bf u}=\tilde {\bf f} \in L^2(\R^k,\R^3)$. Then, it holds that
\begin{equation}
\begin{array}{l}
-\Delta \tilde u_1 - i b_3 \tilde u_1 = \tilde g_1\\
- \Delta \tilde u_2 +  i b_3 \tilde u_2  = \tilde g_2\\
- \Delta \tilde u_3 =\tilde g_3\,,
\end{array}
\end{equation}
where
\begin{equation}
\begin{array}{ll}
\tilde g_1 &=\tilde f_1 + \frac{b_1-i b_2}{\sqrt 2} \tilde u_3, \\
\tilde g_2 & =\tilde f_2 + \frac{b_1+i b_2}{\sqrt 2} \tilde u_3 \\
\tilde g_3 & = \tilde f_3 - \frac{b_1+ i b_2}{\sqrt{2}} \tilde u_1  - \frac{b_1- i b_2}{\sqrt{2}}   \tilde u_2\,.
\end{array}
\end{equation}
Clearly, $\tilde g_i \in L^2(\mathbb R^k)$ for all $i\in\{1,2,3\}$. By
standard elliptic estimates we then conclude that $\tilde
u_3\in H^2(\R^k)$. We
can then apply on  the two first lines \cite[Theorem 5]{HeNo} to
conclude that
\begin{displaymath}
  \|b_3\tilde{u}_i\|_2 \leq  C (\|\tilde{g}_i\|_2+\|\tilde{u}_i\|_2)\,,
  \quad i=1,2 \,.
\end{displaymath}
Hence $\Delta\tilde{u}_i\in L^2(\R^k)$ for $i=1,2$, and by standard elliptic estimates $\tilde
u_i\in H^2(\R^k)$ for $i=1,2$. We can thus conclude that
 \begin{displaymath}
   D(\tilde{\B}_1)= \{ \tilde {\bf u} \in H^2(\mathbb R^k,\mathbb R ^3)\,,\, b_3 \tilde u_1 \in L^2(\mathbb R^k)\,,\, b_3 \tilde u_2\in L^2 (\mathbb R^k)\}\,.
 \end{displaymath}

An inverse transformation to the original basis establishes \eqref{eq:99}.
  \end{proof}
         \subsubsection{Resolvent estimates}
         We now obtain estimates for the resolvent of $\B_1$, using
         well known resolvent estimates obtained for $-\Delta + i
         b_3$. Better estimates are obtained in the next sections for
         the particular case where ${\bf b}$ is given by \eqref{eq:1}.\\
         
         The resolvent equation $(\tilde \B_1- \lambda) {\tilde u} =
         {\tilde f}$, takes the form (dropping the accent in the sequel):
         \begin{subequations}
\label{eq:2.15}
      \begin{alignat}{2}
     f_1 &=     (-\Delta -i b_3-\lambda) u_1 - \frac{b_1-i b_2}{\sqrt 2} u_3
     \\
    f_2&= (-\Delta + ib _3 -\lambda) u_2  - \frac{b_1+i b_2}{\sqrt 2} u_3\\
    f_3& = (-\Delta -\lambda) u_3 +  \frac{b_1+ i b_2}{\sqrt{2}} u_1  +  \frac{b_1- i b_2}{\sqrt{2}}   u_2\,.
    \end{alignat}
         \end{subequations}
    Assuming that $\lambda \notin \sigma ( -\Delta \pm i b_3)$ we write
\begin{displaymath}
  R_{\pm} (\lambda) =  ( -\Delta \mp i b_3 - \lambda)^{-1}  
\end{displaymath}
to obtain the following equation for $u_3$:
\begin{equation} 
\label{eq:101}
     \left( (-\Delta-\lambda)  - c  R_-(\lambda) \bar c -  \bar c  R_+ (\lambda)  c
     \right) u_3 = f_3 -  c  R_+(\lambda)  f_1 -  \bar c   R_-(\lambda)   f_2\,,
\end{equation}
where
\begin{displaymath}
 c:= (b_1 + i b_2)/\sqrt{2}\,. 
\end{displaymath}
Assuming further $\lambda \notin \mathbb R^+$ (note that ${\bf b}$ does not
necessarily meet the condition set in Remark \ref{rem:essential-unit})
we can attempt to estimate the norm of the well-defined bounded
operator
\begin{equation}
\label{eq:100}
K_\lambda:=  \frac 12   (-\Delta -\lambda)^{-1} \left( c  R_-(\lambda)  \bar c  + \bar c  R_+(\lambda)  c \right)\,.
\end{equation} 

To proceed further we need to introduce the following assumption on
the resolvent of $-\Delta \pm i b_3$
\begin{assumption}\label{ass2.11} For a given interval $I$, 
there exist $s <1$,  $D_1>0$ and $D_2 >0$ such that, if $\Re \lambda \in I $ and $|\Im \lambda| \geq D_1$, then 
\begin{displaymath}
    \| R_\pm (\lambda)\| \leq  D_2 \,  |\Im \lambda|^s  \,.
\end{displaymath}
\end{assumption}
\begin{remark}
  The above bound applies for $I=(-\infty,\tau)$ for every $\tau\in\R$, and
  $b_3(x) = x_1$ (in which case $s=0$ due to translation invariance, 
  see \cite[ Proposition 14.11]{hebook13}) or $b_3(x)=x_1^2$ (where
  $s=-\frac 13$, see \cite[ Proposition 14.13]{hebook13}).
   \end{remark}

Given the above assumption we can obtain the following resolvent estimate
  \begin{proposition} 
\label{prop:bounded}
    Let $b_3\in C^r(\R^k)$ satisfy assumption \ref{ass2.11}, for some
    given interval $I$. Then, there exist $C_1>0$ and $C_2 >0$ such
    that, for all $\lambda\in\C$ satisfying $\Re \lambda \in I $ and $|\Im \lambda| \geq
    C_1$, it holds that $\lambda \notin \sigma(\B_1)$ and
\begin{equation}
\label{eq:102}
 \|  (\B_1-\lambda)^{-1}\| \leq C_2\, |\Im \lambda|^s\,.
\end{equation}
\end{proposition}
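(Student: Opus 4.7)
\medskip

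The plan is to work in the rotated basis $\{\mathbf v,\bar{\mathbf v},\hat{\mathbf i}_3\}$ introduced in the proof of Proposition \ref{prop:bounded-caract-doma}, where the resolvent equation $(\B_1-\lambda)\mathbf u=\mathbf f$ decouples into the triangular system \eqref{eq:2.15}. The hypothesis $|\Im\lambda|\geq C_1>0$ ensures $\lambda\notin[0,\infty)$, so $(-\Delta-\lambda)^{-1}$ is well defined on $L^2(\R^k)$, and Assumption \ref{ass2.11} gives the two shifted resolvents $R_\pm(\lambda)$. Solving (2.15a) and (2.15b) formally for $u_1=R_+(\lambda)(f_1+\bar c\,u_3)$ and $u_2=R_-(\lambda)(f_2+c\,u_3)$, substituting into (2.15c), and applying $(-\Delta-\lambda)^{-1}$ on the left yields the scalar equation
\[
(I-2K_\lambda)\,u_3=(-\Delta-\lambda)^{-1}\bigl(f_3-cR_+(\lambda)f_1-\bar cR_-(\lambda)f_2\bigr),
\]
with $K_\lambda$ as in \eqref{eq:100}.

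The core step is to show that $I-2K_\lambda$ is invertible on $L^2(\R^k)$ by a Neumann series. Because $b_1,b_2\in L^\infty$, the multiplier $c=(b_1+ib_2)/\sqrt 2$ is bounded. Combining the self-adjoint bound $\|(-\Delta-\lambda)^{-1}\|\leq 1/|\Im\lambda|$ with Assumption \ref{ass2.11} gives
\[
\|K_\lambda\|\leq \|c\|_\infty^2\,D_2\,|\Im\lambda|^{s-1}.
\]
Since $s<1$, this tends to $0$ as $|\Im\lambda|\to\infty$, uniformly in $\Re\lambda\in I$, so we may choose $C_1$ large enough that $\|2K_\lambda\|\leq\tfrac12$ on the indicated region, yielding $\|(I-2K_\lambda)^{-1}\|\leq 2$. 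Then $u_3$ is uniquely determined, and the triangular structure shows the full solution $\mathbf u$ exists and lies in $D(\B_1)$ (by Proposition \ref{prop:bounded-caract-doma} and standard elliptic regularity applied to the three scalar equations), so $\lambda\notin\sigma(\B_1)$.

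It remains to track the norms. From the formula above and Assumption \ref{ass2.11},
\[
\|u_3\|_2\leq 2\,|\Im\lambda|^{-1}\bigl(\|f_3\|_2+\|c\|_\infty D_2|\Im\lambda|^s(\|f_1\|_2+\|f_2\|_2)\bigr)\leq C\,|\Im\lambda|^{s-1}\|\mathbf f\|_2,
\]
and then
\[
\|u_j\|_2\leq \|R_\pm(\lambda)\|\bigl(\|f_j\|_2+\|c\|_\infty\|u_3\|_2\bigr)\leq D_2|\Im\lambda|^s\|\mathbf f\|_2+C'|\Im\lambda|^{2s-1}\|\mathbf f\|_2,\quad j=1,2.
\]
Since $2s-1<s$ (as $s<1$), the first term dominates and $\|\mathbf u\|_2\leq C_2|\Im\lambda|^s\|\mathbf f\|_2$, which is \eqref{eq:102}.

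The only real obstacle is establishing $\|K_\lambda\|<\tfrac12$, and this reduces to exploiting the gain of a full power of $|\Im\lambda|$ coming from $(-\Delta-\lambda)^{-1}$ to absorb the (possibly) slowly decaying bound on $R_\pm(\lambda)$; the condition $s<1$ in Assumption \ref{ass2.11} is precisely what makes the perturbation argument close.
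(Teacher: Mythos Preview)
Your argument is essentially identical to the paper's: reduce to the scalar equation for $u_3$ via \eqref{eq:101}, invert $I+K_\lambda$ (your $I-2K_\lambda$, up to sign and constant conventions) by Neumann series using $\|K_\lambda\|\leq C|\Im\lambda|^{s-1}$, then recover $u_1,u_2$ from (\ref{eq:2.15}a,b). One small slip: in your displayed bound for $\|u_3\|_2$ the term $2|\Im\lambda|^{-1}\|f_3\|_2$ is not dominated by $|\Im\lambda|^{s-1}$ when $s<0$, so the correct intermediate estimate is $\|u_3\|_2\leq C|\Im\lambda|^{\max(s,0)-1}\|\mathbf f\|_2$ (as the paper writes); this does not affect the final bound \eqref{eq:102}, since the $u_1,u_2$ terms still dominate.
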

\begin{proof} 
Given the fact that $\|(-\Delta-\lambda)^{-1}\|\leq|\Im\lambda|^{-1}$ we obtain from
\eqref{eq:100}, Assumption \ref{ass2.11} and the boundedness of $c$ that
\begin{displaymath}
    \|K_\lambda\| \leq C_3 |\Im \lambda|^{s-1}\,.
\end{displaymath}
Hence, for sufficiently large $|\Im \lambda|$, $I + K_\lambda$ is
invertible. Applying $(-\Delta-\lambda)^{-1}$ to \eqref{eq:101}, however,
yields
\begin{equation}
\label{eq:51}
  (I + K_\lambda)u_3=(-\Delta-\lambda)^{-1}(f_3 -  c R_-(\lambda)  f_1 -  \bar c  R_+(\lambda)   f_2)\,.
\end{equation}
Hence, we get
\begin{displaymath}
   \| u_3\| \leq C |\Im \lambda|^{\max(s,0)-1} \|{\bf f}\| \,.
\end{displaymath}
From the first two lines of \eqref{eq:2.15} we then conclude
\eqref{eq:102} for $u_1$ and $u_2$.
   \end{proof}
  \subsubsection{Point spectrum}
   \begin{proposition}\label{prop.pointsp} 
     Let $\Omega=\C\setminus( \mathbb R_+ \cup \sigma (-\Delta \pm i b_3))$.  Under the
     assumptions of Proposition \ref{prop:bounded-caract-doma}, it
     holds, for any $\lambda\in \Omega$, that $\lambda \in \sigma (\mathcal B_1)$ if and
     only if $-1 \in \sigma (K_\lambda)$.  Moreover, if $-\Delta \pm i b_3$ has a
     compact resolvent, then $\lambda$ is an eigenvalue of $\B_1$. Finally,
     $\lambda$ is an isolated eigenvalue of $\B_1$ of finite multiplicity.
    \end{proposition}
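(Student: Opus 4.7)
The plan is to exploit the reduction from the resolvent equation of $\B_1$ to the scalar equation $(I+K_\lambda)u_3=\mathrm{RHS}$ already obtained in (\ref{eq:101})--(\ref{eq:51}), then to invoke compactness of $K_\lambda$ under the compact-resolvent hypothesis and apply an analytic Fredholm argument on $\Omega$. Throughout I work in the basis $\{\mathbf v,\bar{\mathbf v},\hat{\bf i}_3\}$ used in the proof of Proposition \ref{prop:bounded-caract-doma}, where $(\B_1-\lambda)\mathbf u=\mathbf f$ takes the form (\ref{eq:2.15}). For every $\lambda\in\Omega$ the operators $(-\Delta-\lambda)^{-1}$ and $R_\pm(\lambda)$ are bounded, so the reduction leading to (\ref{eq:51}) is a bijective transformation between (\ref{eq:2.15}) and the scalar equation on $u_3$ augmented by the explicit expressions for $u_1,u_2$ in terms of $u_3,f_1,f_2$. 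In particular if $\mathbf f=0$ and $u_3=0$ then $u_1=u_2=0$, so $\ker(\B_1-\lambda)$ is in bijection with $\ker(I+K_\lambda)$ and, analogously, $\B_1-\lambda$ is surjective iff $I+K_\lambda$ is. This establishes the first equivalence.

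Assume now that $-\Delta\pm ib_3$ have compact resolvent. Then $R_\pm(\lambda)$ are compact, and since $c=(b_1+ib_2)/\sqrt2$ is bounded (by the assumption $b_1,b_2\in L^\infty$), the operator $cR_-(\lambda)\bar c+\bar cR_+(\lambda)c$ is compact; composing with the bounded $(-\Delta-\lambda)^{-1}$ shows $K_\lambda$ is compact. The Fredholm alternative then yields that $-1\in\sigma(K_\lambda)$ forces $\ker(I+K_\lambda)$ to be nonzero and finite-dimensional; lifting such a nontrivial $u_3$ through the explicit formulas produces a nonzero eigenvector of $\B_1-\lambda$, so $\lambda$ is an eigenvalue of $\B_1$.

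For isolation and finite algebraic multiplicity I invoke the analytic Fredholm theorem: since $\sigma(-\Delta\pm ib_3)$ is discrete in $\C$ under the compact-resolvent hypothesis, $\Omega$ is the complement of a ray and a discrete set, hence open and connected. The map $\lambda\mapsto K_\lambda$ is operator-norm analytic on $\Omega$ with compact values, and by the maximal accretivity established in Proposition \ref{prop2.1} (applied with $M_s=0$) we have $\rho(\B_1)\supset\{\Re\lambda<0\}$, so by the first equivalence there is at least one $\lambda_0\in\Omega$ at which $I+K_{\lambda_0}$ is invertible. The analytic Fredholm theorem then gives that $\{\mu\in\Omega:\,-1\in\sigma(K_\mu)\}$ is discrete in $\Omega$ and that $\mu\mapsto(I+K_\mu)^{-1}$ is meromorphic with finite-rank principal parts; transferring through the analytic reconstruction of $(u_1,u_2)$ from $u_3$ yields meromorphicity of $(\B_1-\mu)^{-1}$ with finite-rank residues. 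The main technical point will be the clean transfer of algebraic (as opposed to merely geometric) multiplicity from $I+K_\lambda$ to $\B_1-\lambda$, which amounts to checking that the reconstruction $u_3\mapsto(u_1,u_2,u_3)$ is compatible with the Laurent expansion furnished by the analytic Fredholm theorem and does not inflate the order of the pole — a routine but delicate verification.
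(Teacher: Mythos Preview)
Your argument for the first two assertions (the equivalence $\lambda\in\sigma(\B_1)\Leftrightarrow-1\in\sigma(K_\lambda)$, and that under compact resolvent $\lambda$ is an eigenvalue) is essentially identical to the paper's.

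For the isolation and finite multiplicity, the paper takes a different route. Rather than applying the analytic Fredholm theorem to the family $\lambda\mapsto K_\lambda$ and then transferring the conclusion to $(\B_1-\lambda)^{-1}$, the paper works directly with the family $\lambda\mapsto\B_1-\lambda$: it uses the compactness of $K_\lambda$ to show $\dim\ker(\B_1-\lambda)<\infty$, applies the same reasoning to $\B_1^*$ (which has the same structure with $\tilde M$ replaced by $\tilde M^*$) to get $\dim\text{coker}(\B_1-\lambda)<\infty$, concludes that $\B_1-\lambda$ is Fredholm of index zero on $\Omega$ (the index being fixed by invertibility for $\Re\lambda$ large negative or $|\Im\lambda|$ large), and then invokes the Gohberg--Sigal / Sj\"ostrand--Zworski theory of analytic Fredholm families. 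The payoff of the paper's approach is that the meromorphic structure with finite-rank principal parts is obtained directly for $(\B_1-\lambda)^{-1}$, so the ``routine but delicate'' transfer of algebraic multiplicity that you flag never arises. Your approach is also correct and arguably more elementary (no adjoint, no closed-range check); the transfer step you worry about is genuinely routine here, since the reconstruction $u_3\mapsto(R_-(\lambda)(f_1+\tfrac{c}{\sqrt2}u_3),\,R_+(\lambda)(f_2+\tfrac{\bar c}{\sqrt2}u_3),\,u_3)$ is bounded and analytic in $\lambda$ on $\Omega$ and hence carries the Laurent expansion of $(I+K_\lambda)^{-1}$ to one for $(\B_1-\lambda)^{-1}$ without raising the order of the pole or the rank of the residues.
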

    \begin{proof}
      Let $\lambda \in \Omega$. We begin by the trivial observation that if
      $-1\in \sigma (K_\lambda)$ then by \eqref{eq:101} $(\mathcal B_1-\lambda)^{-1}$
      must be unbounded. If $-1\not\in \sigma (K_\lambda)$, we may conclude from
      (\ref{eq:2.15}a,b) and the boundedness of $R_\pm(\lambda)$ that
      $\lambda\in\rho(\B_1)$.

Suppose now that $R_\pm(\lambda)$ is compact. Then so is $K_\lambda$, and hence,
if $-1 \in \sigma (K_\lambda)$, then $-1$ is an eigenvalue of $K_\lambda$ and $\lambda$ is
an eigenvalue of $\B_1$. Consider the family $\Omega \ni \lambda \mapsto (\tilde{\mathcal
B}_1-\lambda) \in \mathcal L ( D(\tilde{\B}_1,\mathcal L^2(\mathbb
R^k,\mathbb C^3))$ .  By the foregoing discussion,
\begin{displaymath}
  {\rm dim} \, {\rm ker}(\tilde{\B}_1-\lambda)={\rm dim} \, {\rm ker} (K_\lambda
  -1)<\infty \,.
\end{displaymath}
As
\begin{displaymath}
    \tilde{\B}_1^*= -\Delta \otimes I_3 + \tilde M^* \,.
\end{displaymath}
we may apply the same arguments to obtain that $ {\rm dim} \, {\rm
  ker} (\tilde{\mathcal B}_1^*- \bar \lambda)<\infty$, and hence ${\rm dim} \,
{\rm coker} (\tilde{\mathcal B}_1- \lambda)<\infty$. We can then conclude that
$(\tilde{\mathcal B}_1- \lambda)$ is a Fredholm operator for each $\lambda\in\Omega$.
Clearly, $(\tilde{\mathcal B}_1- \lambda)$ is invertible for sufficiently
large $|\Im \lambda|$ or negative $\Re\lambda$. Hence the index, which is constant
in $\Omega$ is zero and we can use either \cite[Proposition 2.3]{SZ} or
\cite[Theorem 2.1]{ka12} (relying on \cite{gosi71}), to obtain that $
(\tilde{\mathcal B}_1-\lambda)^{-1}$ is, (see \cite{ka12}) a
finite-meromorphic family in $\Omega$. The
proposition is proved.
  \end{proof}

    \begin{remark}\label{rem.pointsp}
    In the next sections we consider the case $k=1$ and $b_3(x)=x$. In
    this case, $\sigma (-\Delta \pm i b_3)=\emptyset$ and $\R_+\subset\sigma(\tilde{B}_1)$. By
    Proposition \ref{prop.pointsp} it follows that
    $\sigma(\tilde{B}_1)\cap(\C\setminus R_+)$ is discrete
    \end{remark}

\begin{remark}\label{rem.pointsp.extension}
  We note that one can establish Proposition \ref{prop.pointsp}
  whenever $\bar c R_+(\lambda) c$ is compact. For example, if we consider
  the above case where $k=1$ and $b_3(x)=x$, we may allow for
  $|c|=\OO(|x|^\gamma)$ as $|x|\to\infty$ for some $0\leq\gamma < 1/2$.
    \end{remark}
  
   In the next sections, assuming $k=1$, constant $b_1$, $b_2=0$, and
   $b_3(x) =x$, we obtain much more precise results for $\B_\epsilon$ in the asymptotic
   regime $\epsilon \to 0$.
     \subsection{$\tilde{B}_1$ in the presence of boundary}
\label{sec:case-with-boundary}
     Consider a bounded smooth open set $\mathcal O \subset \mathbb R^k$
     and the Dirichlet realization of $\mathcal P$ in $\mathcal O$
     denoted by $\bf P^\mathcal O$. For $ M\in C^2(\R^k,M_d(\R))$, $\bf
     P^\mathcal O$ is a bounded perturbation of $-\Delta$ and hence $D(\bf
     P^{\mathcal O})=H^2(\mathcal O,\mathbb R^d)\cap H_0^1(\mathcal
     O,\mathbb R^d)$, and $\bf P^{\mathcal O}$ has a compact
     resolvent.  Nevertheless, in the presence of a small
     parameter $\epsilon$, i.e. when
     \begin{displaymath}
      {\bf  P}^{\mathcal O}=-\epsilon^2\Delta+M
     \end{displaymath}
     the behaviour of spectrum and the resolvent in the limit $\epsilon\to0$
     could probably be understood from the analysis of linearized
     operators acting on $\R^k$. This is precisely the case in the two
     dimensional setting \eqref{eq:93} when $\bf P^\mathcal O$ is
     equivalent to the Dirichlet realization of $-\epsilon^2 \Delta + i V$ (see
     for instance \cite{hen15,AGH,Ahen}).

\section{The (1D)-model in $\mathbb R$.}\label{s3}
In this section we consider the operator \eqref{eq:1} acting  on
$\R$. 
\subsection{Problem setting}
In the standard basis of $\mathbb C^3$, the system \eqref{eq:1} reads 
 for ${\bf u} = (u_1,u_2,u_3)$ 
 \begin{equation}
 \B_\epsilon{\bf u}:=\left(
 \begin{array}{ccc}
 -\epsilon^2 \frac{d^2}{dx^2} & x  & 0 \\
 - x & -\epsilon^2 \frac{d^2}{dx^2} & 1 \\
0&  - 1 & -\epsilon^2 \frac{d^2}{dx^2} 
 \end{array}
 \right) 
 \left(\begin{array}{c} u_1\\u_2\\u_3
 \end{array}
 \right)
 \end{equation}
 It has been established in a more general context in either
 Proposition \ref{prop:maximal} or Proposition \ref{prop:bounded} that
 \begin{proposition}
$  \B_\epsilon$ is a closed operator in $L^2(\mathbb R,\mathbb R^3)$ whose domain  is
\begin{equation}\label{eq:1b}
  D(\B_\epsilon)= \{ {\bf u}\in H^2(\R,\mathbb R^3) \,| \, [x{\mathbf b}_1+{\mathbf b}_0]\times
  {\bf u}\in L^2(\R,\mathbb R^3) \,\} \,. 
\end{equation}
 \end{proposition}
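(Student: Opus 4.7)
The plan is to recognize that the final proposition is a direct corollary of Proposition \ref{prop:maximal} (or, alternatively, of Proposition \ref{prop:bounded-caract-doma}), so the proof reduces to verifying the relevant hypotheses for the concrete one-dimensional operator \eqref{eq:1}. Closedness of $\B_\epsilon$ and the inclusion $D(\B_\epsilon)\subset H^1(\R,\R^3)$ are already given by Proposition \ref{prop2.1} (specialized to $k=1$, $d=3$, $M_s=0$), so the substantive content is the identification \eqref{eq:1b}.

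First, I would write down explicitly the skew-symmetric matrix $M(x)$ associated with $-[x{\bf b}_1+{\bf b}_0]\times$. Since ${\bf b}(x)=b_0\hat{\bf i}_1+x\hat{\bf i}_3$, the entries of $M(x)$ are affine in $x$, so
$$|\partial_x^\alpha M_{ij}(x)|\leq 1,\qquad 1\leq|\alpha|\leq 2,$$
uniformly in $x$. Next, I would diagonalize $M(x)$: being skew-symmetric on $\R^3$ with $|{\bf b}(x)|^2=x^2+b_0^2$, its eigenvalues are $\{0,\pm i\sqrt{x^2+b_0^2}\}$, and hence
$$\Sg(x)=\sqrt{x^2+b_0^2}.$$
Since $\Sg(x)\to\infty$ as $|x|\to\infty$, there exists $R>0$ such that $\Sg(x)\geq 1$ for all $|x|\geq R$, and therefore the bound \eqref{eq:90a} of Proposition \ref{prop:maximal} is satisfied on $\{|x|\geq R\}$.

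It remains to verify constancy of $\dim\,{\rm Ker}\,M(x)$ for $|x|\geq R$: enlarging $R$ if necessary we may assume $M(x)\neq 0$ there, and since $M$ is skew-symmetric of odd dimension $3$ its kernel is then one-dimensional. All hypotheses of Proposition \ref{prop:maximal} being verified in the case $k=1$, $d=3$, the identification \eqref{eq:94a} becomes precisely \eqref{eq:1b}, concluding the proof.

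No genuine difficulty arises in this verification; it is essentially bookkeeping against the assumptions of the earlier proposition. Should one prefer to invoke Proposition \ref{prop:bounded-caract-doma} instead, the check is even shorter: the first two components of ${\bf b}$, namely $b_0$ and $0$, lie in $L^\infty(\R)$, while $b_3(x)=x$ trivially meets \eqref{eq:98} with $r=0$ since $|b_3'(x)|=1\leq\sqrt{|b_3(x)|^2+1}$.
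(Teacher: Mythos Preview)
Your proposal is correct and follows exactly the paper's approach: the paper simply states that the proposition has already been established in the more general context of Proposition~\ref{prop:maximal} (or Proposition~\ref{prop:bounded-caract-doma}), and you have carried out precisely the hypothesis verification that makes this invocation legitimate. Your remarks on both routes are accurate; nothing further is needed.
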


 Let ${\bf v}$ be given by \eqref{eq:defv}. We begin by rewriting
 $\B_\epsilon$ in the basis $( {\bf v},\bar{ {\bf v}}, {\mathbf b}_1)$ of
 $\C^3$.  We thus set
\begin{displaymath}  
  \tilde u_1={\bf u}\cdot {\bf v}=\frac{1}{\sqrt{2}}(-i u_1 +u_2) \quad ; \quad   \tilde u_2={\bf u}\cdot \bar{\bf v}=\frac{1}{\sqrt{2}}(iu_1+u_2)
  \quad ; \quad \tilde u_3 = {\bf u}\cdot {\bf b}_1=u_3 \,. 
\end{displaymath}
In this new basis the operator $\widetilde \B_\epsilon$  becomes 
\begin{equation}
\label{eq:124}
 \widetilde B_\epsilon:=\left(
 \begin{array}{ccc}
 -\epsilon^2 \frac{d^2}{dx^2}+ ix & 0  & \frac{1}{\sqrt 2}  \\
 0 & -\epsilon^2 \frac{d^2}{dx^2} - ix & \frac{1}{\sqrt 2}  \\
-\frac{1}{\sqrt 2}  &  -\frac{1}{\sqrt 2}  & -\epsilon^2 \frac{d^2}{dx^2} 
 \end{array}
 \right) \,.
 \end{equation}
We attempt to obtain resolvent estimates for the  problem
  \begin{equation}
\label{eq:65}
  (\widetilde \B_\epsilon - \Lambda) \tilde{\bf u} = \epsilon^\frac 23 \tilde{\bf f}\,.
  \end{equation}

Applying the transformation
\begin{displaymath}
  x\to\epsilon^{2/3}x 
\end{displaymath}
yields, 
\begin{equation}
\label{eq:60}
  (\check \B_\varepsilon-\lambda){\bf \check  u} = {\bf \check  f}\,,
\end{equation}
where, with 
\begin{equation}
\label{eq:49}
\varepsilon =\epsilon^\frac 43 \mbox{ and } \lambda = \epsilon^{-\frac 23} \Lambda\,
\end{equation}
  $\check \B_\varepsilon$ is given by
\begin{equation}
 \check  B_\varepsilon\Big(x,\frac{d}{dx}\Big) :=\left(
 \begin{array}{ccc}
 - \frac{d^2}{dx^2}+ ix & 0  & \frac{1}{\sqrt 2} \varepsilon^{-\frac 12} \\
 0 & - \frac{d^2}{dx^2} - ix & \frac{1}{\sqrt 2}\varepsilon^{-\frac 12}  \\
-\frac{1}{\sqrt 2} \varepsilon^{-\frac 12} &  -\frac{1}{\sqrt 2} \varepsilon^{-\frac 12} & -\frac{d^2}{dx^2} 
 \end{array}
 \right) \,.
 \end{equation}
Equivalently we may write the spectral equation in the form
 \begin{equation}  \label{eq:3} 
\left\{ \begin{array}{c}
    (-\frac{d^2}{dx^2} +i \, x -\lambda) \check u_1+  \frac{1}{\sqrt 2} \varepsilon^{-\frac 12} \check u_3 = \check  f_1 \,,\\
    (-\frac{d^2}{dx^2}- i \,x -\lambda) \check u_2+  \frac{1}{\sqrt 2}\varepsilon^{-\frac 12}  \check u_3 = \check f_2 \,, \\
   ( -\frac{d^2}{dx^2}-\lambda) \check u_3 -\frac{1}{\sqrt{2}}\varepsilon^{-\frac 12} (\check u_1+\check u_2) = \check  f_3 \,.
    \end{array}
    \right.
      \end{equation}
We omit the accents of $u_i$ and $f_i$ ($i=1,2,3$) in the sequel. 

Let the Fourier transform of ${\bf u}$ be defined in the following
manner
\begin{displaymath}
 \hat{u}_i(\omega) = \F(u_i)(\omega) =\frac{1}{\sqrt{2\pi}}\int_\Re e^{-i\omega x}u_i(x)\,dx
  \quad i=1,2,3\,.
\end{displaymath}
Applying $\F$ to \eqref{eq:3} yields
\begin{subequations}
\label{eq:5}
  \begin{empheq}[left={\empheqlbrace}]{alignat=2}
    &(\omega^2-\lambda)\hat{u}_1 - \frac{d\hat{u}_1}{d\omega}+  \frac{1}{\sqrt 2} \varepsilon^{-1/2}\hat{u}_3 = \hat{f}_1 & \text{ in } \R \\
    &(\omega^2-\lambda)\hat{u}_2 + \frac{d\hat{u}_2}{d\omega}+
    \frac{1}{\sqrt 2} \varepsilon^{-1/2}\hat{u}_3 = \hat{f}_2 & \text{ in } \R
    \\ 
    &(\omega^2-\lambda)\hat{u}_3 -\frac{\varepsilon^{-1/2}}{\sqrt{2}}(\hat{u}_1+\hat{u}_2) = \hat{f}_3 & \text{ in } \R \,.
  \end{empheq}
\end{subequations}
We search for solutions in $X=X_1\times X_1\times X_3$, where
\begin{equation}\label{eq:defX}
  X_1=\{u\in H^1(\R)\,|\, \omega^2u\in L^2(\R)\}\quad ; \quad  X_3=\{u\in L^2(\R)\,|\, \omega^2u\in L^2(\R)\}\,.
\end{equation}
We now set
\begin{equation}\label{eq:defhatu}
  \hat{u}_d=\hat{u}_1 -\hat{u}_2 \quad ; \quad  \hat{u}_s =\hat{u}_1 +\hat{u}_2 \quad ;\quad 
    \hat{f}_d=\hat{f}_1 -\hat{f}_2 \quad ; \quad  \hat{f}_s =\hat{f}_1 +\hat{f}_2 \;.
\end{equation}
Subtracting (\ref{eq:5}a) from  (\ref{eq:5}b) yields
\begin{equation}
  \label{eq:6}
- \frac{d\hat{u}_s}{d\omega} + (\omega^2-\lambda)\hat{u}_d= \hat{f}_d \,.
\end{equation}
Summing up (\ref{eq:5}b) and  (\ref{eq:5}a) yields with the aid of
(\ref{eq:5}c), assuming $\lambda \notin \mathbb R_+$,
\begin{equation}
  \label{eq:7}
\frac{d\hat u_d}{d\omega}- \Big[(\omega^2-\lambda)+\frac{\varepsilon ^{-1}}{\omega^2-\lambda}\Big]\hat{u}_s =
-\hat{f}_s + \sqrt{2} \varepsilon^{-1/2}\frac{\hat{f}_3}{\omega^2-\lambda} \,.
\end{equation}
Extracting $\hat{u}_d$ from \eqref{eq:6} and then substituting into
\eqref{eq:7} we obtain
\begin{equation}
\label{eq:8}
- \frac{d}{d\omega}\Big(\frac{1}{\omega^2-\lambda}\frac{d\hat{u}_s}{d\omega}\Big) +
 \Big[(\omega^2-\lambda) +  \frac{\varepsilon^{-1}}{\omega^2-\lambda}\Big]\hat{u}_s =  g \,,
 \end{equation}
 with
 \begin{equation}\label{eq:8b} 
 g:=  \hat{f}_s + \varepsilon^{-1/2}\frac{\hat{f}_3}{\omega^2-\lambda}
 +  \frac{d}{d\omega}\Big(\frac{\hat{f}_d}{\omega^2-\lambda}\Big)\,.  
\end{equation}

\subsection{The case $ 0 < | \Im \lambda |<\varepsilon^{-1/2}$}
 We write
\begin{displaymath}
\lambda_r =\Re \lambda \mbox{ and } \lambda_i = \Im \lambda\,.
\end{displaymath}
\begin{proposition} 
\label{prop:strip-estimate}
For any  $0<\delta\leq1/2$, there exist $\varepsilon_0>0$ and
  $C>0$ such that for all $0 < \varepsilon<\varepsilon_0$ and any triple $({\bf u},{\bf
    f},\lambda)$ satisfying $ 0 < | \lambda_i|\leq (1-2\delta^4)^{1/2}\varepsilon^{-1/2}$ and
  \eqref{eq:3}, it holds that
  \begin{equation}
    \label{eq:9} 
\|u_1+u_2\|_2 \leq C\varepsilon^{1/2}\Big(1+\frac{[1+\varepsilon^{1/2}(\lambda_r)_+^{1/2}]}{| \lambda_i|}\Big)\|{\bf f}\|_2 \,.
  \end{equation}
\end{proposition}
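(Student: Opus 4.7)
My plan is to reduce \eqref{eq:9} via Parseval to a bound on $\|\hat u_s\|_2$ and to work directly with the scalar equation \eqref{eq:8}. Setting $\eta=\omega^2-\lambda$, the proof rests on two integrated identities obtained from \eqref{eq:8} by testing against different multipliers; the hypothesis $|\lambda_i|^2\leq(1-2\delta^4)\varepsilon^{-1}$ is precisely what is needed to keep a certain coefficient strictly positive.

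First, I test \eqref{eq:8} against $\overline{\hat u_s}$; the integration by parts is legitimate since the solution lies in $X$ defined in \eqref{eq:defX}. Taking imaginary parts and dividing by $\lambda_i$ produces the identity
\begin{equation*}
  \int\frac{|\hat u_s'|^2}{|\eta|^2}\,d\omega + \varepsilon^{-1}\!\int\frac{|\hat u_s|^2}{|\eta|^2}\,d\omega = \|\hat u_s\|_2^2 + \frac{1}{\lambda_i}\,\Im\!\int g\overline{\hat u_s}\,d\omega.
\end{equation*}
Next, I test \eqref{eq:8} against $\bar\eta\,\overline{\hat u_s}$, integrate by parts, and take the real part. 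Using $\Re(\bar\eta/\eta)=1-2\lambda_i^2/|\eta|^2$ and substituting the previous identity to eliminate the weighted $L^2$ quantities, one arrives at
\begin{equation*}
  (\varepsilon^{-1}-\lambda_i^2)\|\hat u_s\|_2^2 + \|\hat u_s'\|_2^2 + \int(\omega^2-\lambda_r)^2|\hat u_s|^2\,d\omega + 2\Re\!\int\!\frac{\omega\hat u_s'\overline{\hat u_s}}{\eta}\,d\omega = \Re\!\int\! g\bar\eta\overline{\hat u_s}\,d\omega + 2\lambda_i\,\Im\!\int\! g\overline{\hat u_s}\,d\omega.
\end{equation*}
Under the hypothesis, $\varepsilon^{-1}-\lambda_i^2\geq 2\delta^4\varepsilon^{-1}$, so the left hand side provides coercive control on a multiple of $\varepsilon^{-1}\|\hat u_s\|_2^2$.

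The cross term is controlled by Young's inequality: $|2\Re\int\omega\hat u_s'\overline{\hat u_s}/\eta|\leq\tfrac{1}{2}\|\hat u_s'\|_2^2+2\int\omega^2|\hat u_s|^2/|\eta|^2$, with the first piece absorbed into the LHS. For the second, I decompose $\omega^2=(\omega^2-\lambda_r)+\lambda_r$ and use the AM--GM bound $|\omega^2-\lambda_r|/|\eta|^2\leq1/(2|\lambda_i|)$ together with $1/|\eta|^2\leq1/\lambda_i^2$ to obtain
\begin{equation*}
  \int\frac{\omega^2|\hat u_s|^2}{|\eta|^2}\,d\omega\leq\Big(\frac{1}{2|\lambda_i|}+\frac{(\lambda_r)_+}{\lambda_i^2}\Big)\|\hat u_s\|_2^2.
\end{equation*}
Using the explicit form \eqref{eq:8b} of $g$, the right hand side is handled piece by piece via Cauchy--Schwarz; for the $(\hat f_d/\eta)'$ contribution I integrate by parts so that the derivative falls on the multiplier $\bar\eta\overline{\hat u_s}$, and the resulting $\|\eta\hat u_s\|_2$ factor is absorbed through $\int|\eta|^2|\hat u_s|^2=\int(\omega^2-\lambda_r)^2|\hat u_s|^2+\lambda_i^2\|\hat u_s\|_2^2$, which is already present on the LHS.

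Combining everything and dividing by $2\delta^4\varepsilon^{-1}$ yields $\|\hat u_s\|_2^2\leq C\varepsilon(1+(1+\varepsilon(\lambda_r)_+)/\lambda_i^2)\|{\bf f}\|_2^2$, from which \eqref{eq:9} follows by taking a square root, using $\sqrt{a+b}\leq\sqrt{a}+\sqrt{b}$, and invoking Parseval. The principal obstacle I anticipate is bookkeeping in the cross term: the contribution $(\lambda_r)_+/\lambda_i^2\|\hat u_s\|_2^2$ cannot be absorbed into the coercive $\varepsilon^{-1}\|\hat u_s\|_2^2$ when $\lambda_r$ is large, and it is precisely what generates the factor $\varepsilon^{1/2}(\lambda_r)_+^{1/2}/|\lambda_i|$ in the target inequality. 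One must also verify that the various integrations by parts are justified by the $X_1$-regularity of $\hat u_s$ so that the boundary terms at infinity vanish.
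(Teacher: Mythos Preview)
Your two integrated identities (testing \eqref{eq:8} against $\hat u_s$ and against $(\omega^2-\bar\lambda)\hat u_s$) are exactly the identities \eqref{eq:12} and \eqref{eq:22} in the paper, and your estimates of the $g$--terms are close to \eqref{eq:109}--\eqref{eq:110}. The approach is on the right track for bounded or negative $\lambda_r$, but there is a genuine gap in the large--$\lambda_r$ regime.

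Your control of the cross term gives, after absorption,
\[
\Big(2\delta^4\varepsilon^{-1}-\frac{2(\lambda_r)_+}{\lambda_i^2}-\frac{1}{|\lambda_i|}\Big)\|\hat u_s\|_2^2
+\tfrac12\|\hat u_s'\|_2^2+\|(\omega^2-\lambda_r)\hat u_s\|_2^2\leq\text{RHS}.
\]
When $(\lambda_r)_+\gg\varepsilon^{-1}\lambda_i^2$ (and nothing in the hypotheses forbids $\lambda_r$ of any size), the coefficient in front of $\|\hat u_s\|_2^2$ is large and \emph{negative}, so the inequality gives no information whatsoever. A negative term on the left cannot ``generate a factor'' on the right; your final sentence identifies the obstacle correctly but the resolution you suggest does not exist. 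The factor $\varepsilon^{1/2}(\lambda_r)_+^{1/2}/|\lambda_i|$ in \eqref{eq:9} actually arises from the right--hand--side term $(\lambda_r)_+^{1/2}\|\hat u_s/(\omega^2-\lambda)\|_2\|f_d\|_2$ in \eqref{eq:110}, combined with \eqref{eq:117}, not from the cross term.

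The paper closes this gap with two additional ingredients you are missing. First, it splits the cross term differently, via
\[
2\|\omega\hat u_s\|_2\Big\|\frac{\hat u_s'}{\omega^2-\lambda}\Big\|_2
\le\big(\lambda_r^{-1/2}\|(\omega^2-\lambda_r)\hat u_s\|_2+2\lambda_r^{1/2}\|\hat u_s\|_2\big)
\Big(\|\hat u_s\|_2+\frac{2}{\lambda_i}\|{\bf f}\|_2\Big),
\]
using \eqref{eq:Cauchy} and \eqref{eq:112}; this produces a dangerous term of order $2\lambda_r^{1/2}\|\hat u_s\|_2^2$ rather than your $2(\lambda_r)_+\lambda_i^{-2}\|\hat u_s\|_2^2$. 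Second, and crucially, for $\lambda_r\gg1$ the paper invokes the asymptotics of the ground state of the anharmonic oscillator $-d^2/d\omega^2+(\omega^2-\lambda_r)^2$ (which behaves like $2\lambda_r^{1/2}$, cf.~\cite{aletal10}) to obtain
\[
\|\hat u_s'\|_2^2+(1-\lambda_r^{-3/4})\|(\omega^2-\lambda_r)\hat u_s\|_2^2
\geq 2(\lambda_r^{1/2}-\lambda_r^{-1/4})\|\hat u_s\|_2^2,
\]
which exactly absorbs the $2\lambda_r^{1/2}\|\hat u_s\|_2^2$ term and restores a coercive coefficient $\varepsilon^{-1}-\lambda_i^2-O(\lambda_r^{-1/4})$ in front of $\|\hat u_s\|_2^2$. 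This spectral input is not merely bookkeeping; without it (or an equivalent device) the argument does not close for large $\lambda_r$, and your proposal as it stands does not yield \eqref{eq:9} in that regime.
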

\begin{proof}
 Without any loss of generality we assume $\Im\lambda>0$, as the
transformation
\begin{displaymath}
  \lambda\to\bar{\lambda} \quad ; \quad \hat{u}_s\to\bar{\hat u}_s \quad ; \quad {\mathbf f}
  \to \bar{\mathbf f}
\end{displaymath}
leaves \eqref{eq:8} unaltered. \\

We split the discussion into three different cases depending on the
value of $\lambda_r$. We begin, however, by obtaining some
identities and inequalities that are valid in all cases. \\
 
 \paragraph{Preliminary inequalities}~\\
 Taking the
inner product, in $L^2(\R)$, of \eqref{eq:8} with $\hat{u}_s$ yields,
for the imaginary part 
\begin{displaymath}
  \Im\Big\langle \hat{u}_s^\prime,\frac{\hat{u}_s^\prime}{\omega^2-\lambda}\Big\rangle -  \lambda_i\|\hat{u}_s\|_2^2 +
  \varepsilon^{-1}\Im\Big\langle \hat{u}_s,\frac{\hat{u}_s}{\omega^2-\lambda}\Big\rangle =- \Im\langle \hat{u}_s,g\rangle \,,
\end{displaymath}
Consequently, it holds that
\begin{equation}
\label{eq:12}
  \Big\|\frac{\hat{u}_s^\prime}{\omega^2-\lambda}\Big\|_2^2 - \|\hat{u}_s\|_2^2 +
  \varepsilon^{-1}\Big\|\frac{\hat{u}_s}{\omega^2-\lambda}\Big\|_2^2 =- \frac{1}{ \lambda_i}\Im\langle \hat{u}_s,g\rangle \,.
\end{equation}
Next, we estimate the inner product $\langle \hat{u}_s,g\rangle$ with $g$ defined in \eqref{eq:8b}. Clearly,
\begin{displaymath}
  \Big|\Big \langle \hat{u}_s,\frac{\hat{f}_3}{\omega^2-\lambda}\Big\rangle\Big|\leq
  \Big\|\frac{\hat{u}_s}{\omega^2-\lambda}\Big\|_2 \|f_3\|_2 \,. 
\end{displaymath}
Furthermore, integration by parts yields that
\begin{displaymath}
  \Big|\Big
  \langle \hat{u}_s,\Big(\frac{\hat{f}_d}{\omega^2-\lambda}\Big)^\prime\Big\rangle\Big|\leq
  \Big\|\frac{\hat{u}_s^\prime}{\omega^2-\lambda}\Big\|_2 \|f_d\|_2 \,. 
\end{displaymath}
Hence,
\begin{equation}
\label{eq:109}
  |\langle \hat{u}_s,g\rangle|\leq
\|\hat{u}_s\|_2\|f_s \|_2+\Big\|\frac{(\hat{u}_s)^\prime}{\omega^2-\lambda}\Big\|_2
\|f_d\|_2 + \varepsilon^{-1/2}\Big\|\frac{\hat{u}_s}{\omega^2-\lambda}\Big\|_2 \|f_3\|_2 \,. 
\end{equation}
By \eqref{eq:109} we have
\begin{equation*}
\begin{array}{ll}
  |\langle \hat{u}_s,g\rangle| & \leq \sqrt{    
\|\hat{u}_s\|_2^2+\Big\|\frac{(\hat{u}_s)^\prime}{\omega^2-\lambda}\Big\|_2^2 + \varepsilon^{-1}\Big\|\frac{\hat{u}_s}{\omega^2-\lambda}\Big\|_2^2}  \sqrt{ \|f_3\|^2 + \|f_s\|^2 + \|f_d\|^2} \\ & \leq \sqrt{    
\|\hat{u}_s\|_2^2+\Big\|\frac{(\hat{u}_s)^\prime}{\omega^2-\lambda}\Big\|_2^2 + \varepsilon^{-1}\Big\|\frac{\hat{u}_s}{\omega^2-\lambda}\Big\|_2^2} \,  \sqrt{2} \,  \|{\bf  f} \|_2 \,. 
\end{array}
\end{equation*}
With the aid of \eqref{eq:12} we can conclude that
\begin{equation*}
  |\langle \hat{u}_s,g\rangle|\leq \sqrt{2}\, \sqrt{    
2 \|\hat{u}_s\|_2^2+ \frac{1}{|\lambda_i|}|\langle \hat{u}_s,g\rangle| } \,   \| {\bf f} \|_2 \,. 
\end{equation*}
Taking the square then yields
\begin{equation*}
  |\langle \hat{u}_s,g\rangle|^2 \leq    2 
\Big(2 \|\hat{u}_s\|_2^2+ \frac{1}{|\lambda_i|}|\langle \hat{u}_s,g\rangle| \Big) \,  \| {\bf f} \|_2^2 \,, 
\end{equation*}
which can be rewritten in the form
\begin{equation*}
 ( |\langle \hat{u}_s,g\rangle| - \frac{1}{\lambda_i} \|{\bf f}  \|_2^2) ^2 \leq (4 \|\hat{u}_s\|_2^2 + \frac{\|{\bf f} \|^2}{\lambda_i^2} ) \|{\bf f} \|_2^2 \,.
\end{equation*}
Consequently,
\begin{equation*}
 |\langle \hat{u}_s,g\rangle|  \leq  \left(\frac{\|{\bf f}\|_2}{ \lambda_i}   + \sqrt{4
     \|\hat{u}_s\|_2^2 + \frac{\|{\bf f}\|_2^2}{ \lambda_i^2}} \right)\,
 \|{\bf f}\|_2   \,, 
\end{equation*}
and hence
\begin{equation} \label{eq:15}
 |\langle \hat{u}_s,g\rangle|  \leq 2  \left(\frac{\|{\bf f}\|_2}{\lambda_i}   +  \|\hat{u}_s\|_2 \right) \,  \|{\bf f}\|_2  \,.
\end{equation}

Next, taking the inner product of \eqref{eq:8} from the left with
$(\omega^2-\bar{\lambda})\hat{u}_s$ yields for the real part,
\begin{multline}
\label{eq:22}
 \Re\langle(\omega^2-\bar{\lambda})\hat{u}_s,g\rangle =   \|(\hat{u}_s)^\prime\|_2^2 +
 \Re \Big\langle2\omega \hat{u}_s,\frac{(\hat{u}_s)^\prime}{\omega^2-\lambda}\Big\rangle\\  +(\varepsilon^{-1}- \lambda_i^2)\|\hat{u}_s\|_2^2 +
  \|(\omega^2-\lambda_r)\hat{u}_s\|_2^2 \,.
\end{multline}
Finally, by \eqref{eq:8b} we have that
\begin{displaymath} 
  |\langle(\omega^2-\bar{\lambda})\hat{u}_s,g\rangle| \leq \|(\omega^2-\lambda)\hat{u}_s\|_2\|f_s
  \|_2+\Big(\|\hat{u}_s^\prime\|_2+ \Big\|\frac{2\omega}{\omega^2-\lambda}\hat{u}_s\Big\|_2\Big)
\|f_d\|_2 + \varepsilon^{-1/2}\|\hat{u}_s\|_2 \|f_3\|_2 \,,
\end{displaymath}
and since 
\begin{equation}
\label{eq:17} 
\Big\|\frac{\omega\,  \hat{u}_s}{\omega^2-\lambda}\Big\|_2\leq
(\lambda_r)_+^{1/2}\Big\|\frac{\hat{u}_s}{\omega^2-\lambda}\Big\|_2+\Big\|\frac{\hat{u}_s}{|\omega^2-\lambda|^{1/2}}\Big\|_2
\,, 
\end{equation}
we obtain that
\begin{multline}
\label{eq:110}
  |\langle(\omega^2-\bar{\lambda})\hat{u}_s,g\rangle| \leq \varepsilon^{-1/2}\|\hat{u}_s\|_2\|f_3\|_2 +
  \|(\omega^2-\lambda)\hat{u}_s\|_2\|f_s\|_2\\ +\Big(\|\hat{u}_s^\prime\|_2+ 
  (\lambda_r)_+^{1/2}\Big\|\frac{\hat{u}_s}{\omega^2-\lambda}\Big\|_2+
  \Big\|\frac{\hat{u}_s}{|\omega^2-\lambda|^{1/2}}\Big\|_2 \Big) \|f_d\|_2\,. 
\end{multline}
We now observe that
\begin{equation}
\label{eq:116}
  \Big|\Big\langle2\omega \hat{u}_s,\frac{\hat{u}_s^\prime}{\omega^2-\lambda}\Big\rangle\Big|\leq
  2\|\omega \hat{u}_s\|_2\Big\|\frac{\hat{u}_s^\prime}{\omega^2-\lambda}\Big\|_2 \,.
\end{equation}
Set,
\begin{equation}
\label{eq:115}
  \lambda_{r,m}=\max(\lambda_r,1)\,.
\end{equation}
Next we write, with the aid of Cauchy's inequality,
\begin{equation}
\label{eq:Cauchy}
\begin{array}{ll}
    2\|\omega \hat{u}_s\|_2 & \leq2\|\omega^2\hat{u}_s\|^{1/2}\|\hat{u}_s\|_2^{1/2}\\
    & \leq\lambda_{r,m}^{-1/2}\|\omega^2\hat{u}_s\|_2+\lambda_{r,m}^{1/2}\|\hat{u}_s\|_2\\& \leq 
    \lambda_{r,m}^{-1/2}[\|(\omega^2-(\lambda_r)_+\hat{u}_s\|_2+\lambda_{r,m}^{-1/2}(\lambda_r)_+ \|\hat{u}_s\|_2
    ]+\lambda_{r,m}^{1/2}\|\hat{u}_s\|_2 \\ &\leq 
    \lambda_{r,m}^{-1/2}\|(\omega^2-\lambda_r)\hat{u}_s\|_2+ 2\lambda_{r,m}^{1/2}\|\hat{u}_s\|_2 \,.
    \end{array}
\end{equation}
Substituting \eqref{eq:110} into \eqref{eq:22} yields, with the aid of
\eqref{eq:116} and \eqref{eq:Cauchy} that
\begin{equation}
\label{eq:111}
\begin{array}{l}
    \|(\hat{u}_s)^\prime\|_2^2 +(\varepsilon^{-1}- \lambda_i^2)\|\hat{u}_s\|_2^2 +
    \|(\omega^2-\lambda_r)\hat{u}_s\|_2^2 \\
   \qquad \quad   \leq \varepsilon^{-1/2}\|\hat{u}_s\|_2\|f_3\|_2 + \|(\omega^2-\lambda)\hat{u}_s\|_2\|f_s\|_2 \\
    \qquad    \qquad   + (\lambda_{r,m}^{-1/2}\|(\omega^2- \lambda_r )\hat{u}_s\|_2+
  2\lambda_{r,m}^{1/2}\|\hat{u}_s\|_2)\Big\|\frac{\hat{u}_s^\prime}{\omega^2-\lambda}\Big\|_2 \\
     \qquad    \qquad  +\Big(\|\hat{u}_s^\prime\|_2+ 
  \lambda_{r,m}^{1/2}\Big\|\frac{\hat{u}_s}{\omega^2-\lambda}\Big\|_2+\Big\|\frac{\hat{u}_s}{|\omega^2-\lambda|^{1/2}}\Big\|_2 \Big) \|f_d\|_2\,. 
\end{array}
\end{equation}
By \eqref{eq:12} and \eqref{eq:15} we have
\begin{displaymath}
   \Big\|\frac{\hat{u}_s^\prime}{\omega^2-\lambda}\Big\|_2^2 \leq
   \|\hat{u}_s\|_2^2+\frac{1}{\lambda_i}  |\langle \hat{u}_s,g\rangle|\leq
   \|\hat{u}_s\|_2^2+\frac{2}{\lambda_i} \left(\frac{\|{\bf f}\|_2}{\lambda_i}
     +  \|\hat{u}_s\|_2 \right) \,  \|{\bf f}\|_2  \,. 
\end{displaymath}
Consequently,
\begin{displaymath}
  \Big\|\frac{\hat{u}_s^\prime}{\omega^2-\lambda}\Big\|_2^2\leq
  \Big(\|\hat{u}_s\|_2+\frac{\|{\bf f}\|_2}{\lambda_i}\Big)^2+\frac{\|{\bf
  f}\|_2^2}{\lambda_i^2} \,,
\end{displaymath}
and hence
\begin{equation}
\label{eq:112}
  \Big\|\frac{\hat{u}_s^\prime}{\omega^2-\lambda}\Big\|_2 \leq \|\hat{u}_s\|_2 +
  \frac{2}{ \lambda_i}\|\bf f\|_2 \,. 
\end{equation}
By Cauchy inequality  we have that for any $\alpha >0$ 
\begin{displaymath}
  \Big\|\frac{\hat{u}_s}{|\omega^2-\lambda|^{1/2}}\Big\|_2\leq \frac{1}{\sqrt{2}} \left(\alpha\Big\|\frac{\hat{u}_s}{\omega^2-\lambda}\Big\|_2+
  \frac 1 \alpha \|\hat{u}_s\|_2\right)\,,
\end{displaymath}
which in particular implies
\begin{equation}\label{eq:3.23}
  \Big\|\frac{\hat{u}_s}{|\omega^2-\lambda|^{1/2}}\Big\|_2\leq 
 \lambda_{r,m}^{1/2} \, \Big\|\frac{\hat{u}_s}{\omega^2-\lambda}\Big\|_2+
  \lambda_{r,m}^{-1/2}\|\hat{u}_s\|_2
\end{equation}
Using \eqref{eq:12} and \eqref{eq:15} once again we obtain that
\begin{equation}
\label{eq:117}
   \Big\|\frac{\hat{u}_s}{\omega^2-\lambda}\Big\|_2 \leq \varepsilon^{1/2}\|\hat{u}_s\|_2 +
  \frac{2\varepsilon^{1/2}}{ \lambda_i}\|{\bf f}\|_2 \,. 
\end{equation}
Substituting \eqref{eq:117}
into \eqref{eq:3.23} yields
\begin{equation}
\label{eq:118}
  \Big\|\frac{\hat{u}_s}{|\omega^2-\lambda|^{1/2}}\Big\|_2\leq
  (\varepsilon^{1/2}\lambda_{r,m}^{1/2}+\lambda_{r,m}^{-1/2})\|\hat{u}_s\|_2 +
  \frac{2\varepsilon^{1/2}\lambda_{r,m}^{1/2}}{ \lambda_i}\|{\bf f}\|_2    \,.
\end{equation}
Substituting \eqref{eq:118}, 
 together with \eqref{eq:117} and
\eqref{eq:112} into \eqref{eq:111},  we obtain that
\begin{equation*}
\begin{array}{l}
   \|(\hat{u}_s)^\prime\|_2^2 +(\varepsilon^{-1}-\lambda_i^2)\|\hat{u}_s\|_2^2+
    \|(\omega^2-\lambda_r)\hat{u}_s\|_2^2\\
    \qquad \leq  \varepsilon^{-1/2}\|\hat{u}_s\|_2\|f_3\|_2
      +\|(\omega^2-\lambda_r)\hat{u}_s\|_2\|f_s\|_2\\
     \quad  \qquad +\lambda_i\|\hat{u}_s\|_2\|f_s\|_2+ [\lambda_{r,m}^{-1/2}\|(\omega^2-\lambda_r)\hat{u}_s\|_2+
  2\lambda_{r,m}^{1/2}\|\hat{u}_s\|_2]\Big(\|\hat{u}_s\|_2 +
  \frac{2}{ \lambda_i}\|{\bf f}\|_2\Big)\\ 
 \quad  \qquad +\Big(\|\hat{u}_s^\prime\|_2+(2\varepsilon^{1/2}\lambda_{r,m}^{1/2}+\lambda_{r,m}^{-1/2})\|\hat{u}_s\|_2 +
  \frac{4\varepsilon^{1/2}\lambda_{r,m}^{1/2}}{ \lambda_i}\|{\bf f}\|_2  \Big) \|f_d\|_2\,.  
\end{array}
\end{equation*}
From the above  and using the condition on $\lambda_i$, we can conclude that there exists $C>0$ such that
\begin{displaymath}
  \|(\hat{u}_s)^\prime\|_2+ \|(\omega^2-\lambda_r)\hat{u}_s\|_2\leq
  C\Big(\lambda_{r,m}^{1/4}\|\hat{u}_s\|_2+ \Big[1+
  \frac{ \varepsilon^{1/2}\lambda_{r,m}^{1/2}+ \lambda_{r,m}^{-\frac 12} }{ \lambda_i} \Big]\|{\bf f}\|_2\Big) 
\end{displaymath}
which, when substituted, together with \eqref{eq:117} and
\eqref{eq:118}, into \eqref{eq:110}  yields 
\begin{equation*}
\begin{array}{l}
  |\langle(\omega^2-\bar{\lambda})\hat{u}_s,g\rangle| \\
  \qquad \leq C\Big(\Big[1+ \frac{ \varepsilon^{1/2}\lambda_{r,m}^{1/2}+
    \lambda_{r,m}^{-\frac 12} }{ \lambda_i}   \Big]\|{\bf
    f}\|_2+\lambda_{r,m}^{1/4}\|\hat{u}_s\|_2\Big)(\|f_s\|_2 + \|f_d\|_2)\\  \quad \qquad   +
   \varepsilon^{-1/2}\|\hat{u}_s\|_2\|f_3\|_2 +
  \Big[(2\varepsilon^{1/2}\lambda_{r,m}^{1/2}+\lambda_{r,m}^{-1/2})\|\hat{u}_s\|_2 +
  \frac{4\varepsilon^{1/2}\lambda_{r,m}^{1/2}}{ \lambda_i}\|{\bf f}\|_2  \Big]\|f_d\|_2 \,. 
  \end{array}
\end{equation*}
Hence,
\begin{multline}
\label{eq:24}
  |\langle(\omega^2-\bar{\lambda})\hat{u}_s,g\rangle|  \leq C\Big(\Big[1+\frac{\varepsilon^{1/2}\lambda_{r,m}^{1/2} +\lambda_{r,m}^{-\frac 12} }{\lambda_i}   \Big]\|{\bf
    f}\|_2+  \big(  \lambda_{r,m}^{1/4}+\varepsilon^{-1/2}+ \varepsilon^{1/2}\lambda_{r,m}^{1/2} +  \lambda_{r,m}^{-\frac 12}  \big)\|\hat{u}_s\|_2\Big)\|{\bf f}\|_2
\end{multline}

\paragraph{Case 1: $ 0\leq \lambda_r\leq \delta^2 \varepsilon^{-\frac 12 }/8$.}~\\
Let
$\eta\in C_0^\infty(\R,[0,1])$,  satisfy
$\tilde{\eta}=\sqrt{1-\eta^2}\in C^\infty(\R,[0,1])$ and 
\begin{equation}
\label{eq:11}
  \eta(x)=
  \begin{cases}
    0 & x<\frac{1}{2} \\
    1 & x>1 \,.
  \end{cases}
\end{equation}
Let further 
\begin{displaymath}
\eta_\varepsilon(\omega)=\eta(\varepsilon^{\frac{1}{4}} |\omega|/\delta) \mbox{  and }  \tilde{\eta}_\varepsilon(\omega)=\tilde{\eta}(\varepsilon^\frac 14 |\omega|/\delta)\,.
\end{displaymath}
  Taking the
inner product, in $L^2(\R)$, of \eqref{eq:8} with $\eta^2_\varepsilon \, \hat{u}_s$
yields, for the real part
\begin{multline}
\label{eq:20}
  \Big\|\frac{\eta_\varepsilon \, \hat{u}_s^\prime}{\omega^2-\lambda}[\omega^2-\lambda_r]^{1/2}\Big\|_2^2 + 2\Re
  \Big\langle\eta_\varepsilon^\prime  \hat{u}_s,\frac{\eta_\varepsilon \hat{u}_s^\prime}{\omega^2-\lambda}\Big\rangle + \\
 + \varepsilon^{-1} \Big\|\frac{\eta_\varepsilon \hat{u}_s}{\omega^2-\lambda}[\omega^2-\lambda_r]^{1/2}\Big\|_2^2 +
  \|[\omega^2-\lambda_r]^{1/2}\eta_\varepsilon \hat{u}_s\|_2^2= \Re\langle\eta_\varepsilon \hat{u}_s,\eta_\varepsilon g\rangle \,.
\end{multline}
As $  \lambda_r\leq\delta^2\varepsilon^{-\frac 12}/8$, it follows that $ (\omega^2-\lambda_r)
\geq \delta^2\varepsilon^{-\frac 12}/8$ on the support of $\eta_\epsilon$, and hence there exists
$C>0$ such that
\begin{equation}
\label{eq:10}
  \|\eta_\varepsilon \hat{u}_s\|_2^2 \leq C\varepsilon^{\frac 12}(\varepsilon \|\hat{u}_s\|_2^2 +
  |\langle\eta_\varepsilon \hat{u}_s,\eta_\varepsilon g\rangle| ) \,,
\end{equation}
From \eqref{eq:12} we can thus conclude that, for sufficiently small $\varepsilon$, 
\begin{equation}\label{eq:12a}
\begin{array}{ll}
  \|\tilde{\eta}_\varepsilon \hat{u}_s\|_2^2 & \leq
  (1-\delta^4)\varepsilon^{-1}\Big\|\tilde{\eta}_\varepsilon\frac{\hat{u}_s}{\omega^2-\lambda}\Big\|_2^2 \\ &\leq 
  (1-\delta^4)\varepsilon^{-1}\Big\|\frac{\hat{u}_s}{\omega^2-\lambda}\Big\|_2^2\\ &  \leq
  (1-\delta^4)\|\hat{u}_s\|_2^2 + \frac{1}{\lambda_i}|\langle \hat{u}_s,g\rangle| 
\end{array}
\end{equation}
To obtain the inequality we need to use the fact that on the support
of $\tilde \eta_{\varepsilon}$ we have, for $\varepsilon$ sufficiently small,
  \begin{displaymath}
    |\omega^2-\lambda|^2 \leq \delta^4\varepsilon^{-1}+(1-2\delta^4)\varepsilon^{-1}  
 = (1-\delta^4)\varepsilon^{-1}\,.
  \end{displaymath}
We use \eqref{eq:12} to obtain the above inequality.\\

Combining \eqref{eq:12a}  with \eqref{eq:10} yields, for any $\delta \in (0,\frac 12]$,  the existence of
$\varepsilon_0>0$ and $C>0$ such that, for $\varepsilon \in (0,\varepsilon_0)$,  
\begin{equation}
\label{eq:13}
  \|\hat{u}_s\|_2^2 \leq C(\varepsilon^{\frac 12}|\langle\eta_\varepsilon \hat{u}_s,\eta_\varepsilon g\rangle|+
  \frac{1}{ \lambda_i}|\langle \hat{u}_s,g\rangle|) \,.
\end{equation}
Combining the above with \eqref{eq:12} yields
\begin{equation}
  \label{eq:14}
  \Big\|\frac{(\hat{u}_s)^\prime}{\omega^2-\lambda}\Big\|_2^2 +
  \varepsilon^{-1}\Big\|\frac{\hat{u}_s}{\omega^2-\lambda}\Big\|_2^2 \leq C\Big(\epsilon^{\frac 12}|\langle\eta_\varepsilon \hat{u}_s,\eta_\varepsilon g\rangle|+\frac{1}{ \lambda_i}|\langle \hat{u}_s,g\rangle|\Big) \,.
\end{equation}
Next, we estimate $\langle\eta_\varepsilon \hat{u}_s,\eta_\varepsilon g\rangle$. As in the proof of
\eqref{eq:15} we may write that
\begin{displaymath}
   \Big|\Big \langle\eta_\varepsilon \hat{u}_s,\eta_\epsilon\frac{\hat{f}_3}{\omega^2-\lambda}\Big\rangle\Big|\leq
   \Big\|\frac{\hat{u}_s}{\omega^2-\lambda}\Big\|_2 \|f_3\|_2 \,.  
\end{displaymath}
Furthermore, we have that 
\begin{displaymath}
   \Big|\Big
  \langle\eta_\varepsilon^2\hat{u}_s,\Big(\frac{\hat{f}_d}{\omega^2-\lambda}\Big)^\prime\Big\rangle\Big|\leq
  \Big\|\frac{\hat{u}_s^\prime}{\omega^2-\lambda}\Big\|_2 \|f_d\|_2+C\varepsilon^\frac 14  \Big\|\frac{\eta'(\epsilon^{1/4} \,\cdot\,)\hat{u}_s}{\omega^2-\lambda}\Big\|_2 \|f_d\|_2 \,. 
\end{displaymath}
Consequently we may write, for sufficiently small $\varepsilon$
\begin{equation}
\label{eq:36}
  |\langle\eta_\varepsilon \hat{u}_s,\eta_\varepsilon g\rangle| \leq (1 + C \varepsilon^\frac 14) \|\hat{u}_s\|_2\|f_s\|_2+\Big\|\frac{\hat{u}_s^\prime}{\omega^2-\lambda}\Big\|_2
\|f_d\|_2 + 2\varepsilon^{-1/2}\Big\|\frac{\hat{u}_s}{\omega^2-\lambda}\Big\|_2 \|f_3\|_2 \,. 
\end{equation}
Substituting the above  together with \eqref{eq:15} into
\eqref{eq:13} and \eqref{eq:14} then yields 
\begin{equation}
  \label{eq:16}
 \Big\|\frac{(\hat{u}_s)^\prime}{\omega^2-\lambda}\Big\|_2^2 +
  \varepsilon^{-1}\Big\|\frac{\hat{u}_s}{\omega^2-\lambda}\Big\|_2^2 +   \|\hat{u}_s\|_2^2
  \leq C \left( \frac{1}{ \lambda_i^2} +\varepsilon\right) \|{\mathbf f}\|_2^2 \leq   \frac{\hat C}{ \lambda_i^2} \,  \|{\mathbf f}\|_2^2\,,
\end{equation}
where for the last inequality we have used the assumption on $|\lambda_i|$.\\
Consequently, by \eqref{eq:22}, as $ | \lambda_i
|<(1-2\delta^4)^{1/2}\varepsilon^{-1/2}$, and by \eqref{eq:17} it holds that
\begin{equation}\label{eq:17a}
\begin{array}{l}
  \|\hat{u}_s^\prime\|_2^2 +   \| (\omega^2-\lambda_r) \hat{u}_s\|_2^2 +\delta^4\varepsilon^{-1}\|\hat{u}_s\|_2^2\\[1.5ex]
  \qquad  \leq  2
  \|\hat{u}_s^\prime\|_2\Big(\lambda_r^{1/2}\Big\|\frac{\hat{u}_s}{\omega^2-\lambda}\Big\|_2 +
  \Big\|\frac{\hat{u}_s}{|\omega^2-\lambda|^{1/2}}\Big\|_2 \Big)+
  2|\langle(\omega^2-\bar{\lambda})\hat{u}_s,g\rangle|  \,.
  \end{array}
\end{equation} 
From \eqref{eq:16} we get
\begin{displaymath}
  \Big\|\frac{\hat{u}_s}{|\omega^2-\lambda|^{1/2}}\Big\|_2^2 \leq
  \Big\|\frac{\hat{u}_s}{\omega^2-\lambda }\Big\|_2\|\hat{u}_s\|_2\leq {C\, \frac{\varepsilon^{\frac 12}}{ \lambda_i^2} \, \|{\mathbf f}\|_2^2\,}
\end{displaymath}
and, for  $0 \leq \lambda_r \leq  \delta^2 \varepsilon^{-\frac 12 }/8\,$,
\begin{equation}\label{eq:3.33}
\lambda_r \Big\|\frac{\hat{u}_s}{\omega^2-\lambda}\Big\|_2^2 \leq \hat C \lambda_r \frac{\varepsilon}{\lambda_i^2} \| f\|^2 \leq \check C \frac{\varepsilon^{\frac 12}}{ \lambda_i^2} \, \|{\mathbf f}\|_2^2\,.
\end{equation}
By  \eqref{eq:17a}  we now obtain that
\begin{equation}
\label{eq:18}
  \|\hat{u}_s^\prime\|_2^2 +   \|(\omega^2-\lambda)\hat{u}_s\|_2^2+\varepsilon^{-1}\|\hat{u}_s\|_2^2  \leq
  C\left(|\langle(\omega^2-\bar \lambda)\hat{u}_s,g\rangle| +\frac{\varepsilon^{\frac 12}}{ \lambda_i^2}\|{\mathbf f}\|_2^2\right)\,.
\end{equation}
We note that \eqref{eq:24} implies in Case 1
\begin{multline}
\label{eq:24case1}
  |\langle(\omega^2-\bar{\lambda})\hat{u}_s,g\rangle|  \leq C\Big(\Big[1+\frac{\varepsilon^{1/2}\lambda_{r,m}^{1/2} +\lambda_{r,m}^{-\frac 12}}{\lambda_i}  \Big]\|{\bf
    f}\|_2+ \varepsilon^{-1/2} \|\hat{u}_s\|_2\Big)\|{\bf
    f}\|_2
\end{multline}
Substituting \eqref{eq:24case1} into \eqref{eq:18} yields,  for any
  $\delta>0$, the existence of positive $C$, and $\varepsilon_0$  such
that for $\varepsilon \in (0,\varepsilon_0)$, it holds that 
\begin{equation}
\label{eq:39}
  \|\hat{u}_s^\prime\|_2^2 +
  \|(\omega^2-\lambda)\hat{u}_s\|_2^2+\varepsilon^{-1}\|\hat{u}_s\|_2^2 \leq C\Big(\Big[1+
  \frac{\varepsilon^{\frac 12}}{ \lambda_i^2} +  \frac{1}{\lambda_i} \Big]\|{\mathbf f}\|_2^2+
  \varepsilon^{-1/2} \|\hat{u}_s\|_2\|{\bf f}\|_2\Big)\,, 
\end{equation}
and hence
\begin{equation}
\label{eq:19}
  \|\hat{u}_s\|_2^2 \leq C\varepsilon\Big[1+
  \frac{\varepsilon^{\frac 12}}{ \lambda_i^2} +\frac{1}{\lambda_i} \Big]\|{\mathbf f}\|_2^2\,, 
\end{equation}
establishing, thereby, \eqref{eq:9} in this case. \\

\paragraph{Case 2: $\lambda_r>\delta^2 \varepsilon^{-\frac 12 }/8 \,.$ }~\\

We begin by the simple observation that in this case $\lambda_{r,m}=\lambda_r$.
Then we use \eqref{eq:116}, \eqref{eq:112}, and \eqref{eq:Cauchy} to
conclude that
\begin{displaymath}
    \Big|\Big\langle2\omega
    \hat{u}_s,\frac{\hat{u}_s^\prime}{\omega^2-\lambda}\Big\rangle\Big|\leq
    \left(\lambda_r^{-1/2}\|(\omega^2-\lambda_r)\hat{u}_s\|_2+  2 \lambda_r^{1/2}\|\hat{u}_s\|_2
    \right)\, \Big(\|\hat{u}_s\|_2 +   
  \frac{C}{ \lambda_i}\|{\bf f}\|_2\Big) \,,
\end{displaymath}
from which we conclude that
\begin{multline*}
  \Big|\Big\langle2\omega
  \hat{u}_s,\frac{\hat{u}_s^\prime}{\omega^2-\lambda}\Big\rangle\Big|\leq
  \lambda_r^{-3/4}\|(\omega^2-\lambda_r)\hat{u}_s\|_2^2\\
  + (2\lambda_r^{1/2}+ \lambda_r^{-1/4})\|\hat{u}_s\|_2^2 +
  \frac{C}{ \lambda_i}\|{\bf f}\|_2\left( 2 \lambda_r^{1/2}\|\hat{u}_s\|_2 + \lambda_r^{-1/2}\|(\omega^2-\lambda_r)\hat{u}_s\|_2\right)\,.   
\end{multline*}
Substituting the above into \eqref{eq:22} then yields
\begin{multline}
\label{eq:23}
  \|\hat{u}_s^\prime\|_2^2 +(\varepsilon^{-1}-
  \lambda_i^2- 2\lambda_r^{1/2}-\lambda_r^{-1/4})\|\hat{u}_s\|_2^2  +
  (1-\lambda_r^{-3/4})\|(\omega^2-\lambda_r)\hat{u}_s\|_2^2 \\
 \leq  |\langle(\omega^2-\bar{\lambda})\hat{u}_s,g\rangle|+   \frac{C}{ \lambda_i}\|{\bf
    f}\|_2\left(2 \lambda_r^{1/2}\|\hat{u}_s\|_2
  + \lambda_r^{-1/2}\|(\omega^2-\lambda_r)\hat{u}_s\|_2\right)\,.    
\end{multline}
 From \eqref{eq:23}, observing that $\epsilon^{-1} -\lambda_i^2\geq 0$,  we can conclude   that
\begin{displaymath}
  \|(\omega^2-\lambda_r)\hat{u}_s\|_2 \leq C\Big(|\langle(\omega^2-\bar{\lambda})\hat{u}_s,g\rangle|^\frac 12 + \lambda_r^{1/4}\|\hat{u}_s\|_2 +
  \frac{\lambda_r^{-1/2}}{ \lambda_i}\|{\bf
    f}\|_2+\frac{\lambda_r^{1/4}}{ \lambda_i^{1/2}}\|{\bf f}\|_2^{1/2}\|\hat{u}_s\|_2^{1/2} \Big)\,.
\end{displaymath}

Substituting the above into \eqref{eq:23} yields
\begin{multline}
\label{eq:106}
  \|\hat{u}_s^\prime\|_2^2 +(\varepsilon^{-1}-
  \lambda_i^2- 2\lambda_r^{1/2}- \lambda_r^{-1/4})\|\hat{u}_s\|_2^2  +
  (1-\lambda_r^{-3/4})\|(\omega^2-\lambda_r)\hat{u}_s\|_2^2 \leq\\
\leq   C|\langle(\omega^2-\bar{\lambda})\hat{u}_s,g\rangle|+   \frac{C}{ \lambda_i}\|{\bf
    f}\|_2\Big(2 \lambda_r^{1/2}\|\hat{u}_s\|_2+ \frac{\lambda_r^{-1}}{\lambda_i}\|{\bf
    f}\|_2  +\frac{\lambda_r^{-1/4}}{ \lambda_i^{1/2}}\|{\bf f}\|_2^{1/2}\|\hat{u}_s\|_2^{1/2}\Big)\,.    
\end{multline}
We recall from \cite[Proposition 3.1]{aletal10} that the
ground state energy of the anharmonic  oscillator 
\begin{displaymath}
  -\frac{d^2}{dx^2} + (\frac 12 x^2-\beta)^2
\end{displaymath}
acting on $\R$, behaves as $\beta \to + \infty$ as $\sqrt{2 \beta}$. Hence, we
can conclude, after dilation, that for sufficiently small $\varepsilon$,
\begin{displaymath}
   \|\hat{u}_s^\prime\|_2^2 +  (1-\lambda_r^{-3/4})\|(\omega^2-\lambda_r)\hat{u}_s\|_2^2 \geq
   2[1-\lambda_r^{-3/4}]^{1/2}\lambda_r^{1/2} (1- C \lambda_r^{-1}) \|\hat{u}_s\|_2^2
   \geq2(\lambda_r^{1/2}-\lambda_r^{-1/4}) \|\hat{u}_s\|_2^2\,.
\end{displaymath}
Substituting the above into \eqref{eq:106} yields
\begin{displaymath}
  (\varepsilon^{-1}- \lambda_i^2 -3\lambda_r^{-1/4})\|\hat{u}_s\|_2^2 \leq C |\langle(\omega^2-\bar{\lambda})\hat{u}_s,g\rangle|+\frac{C}{ \lambda_i}\|{\bf
    f}\|_2\Big(\lambda_r^{1/2}\|\hat{u}_s\|_2+ \frac{\lambda_r^{-1}}{\lambda_i}\|{\bf
    f}\|_2  \Big)\,.    
\end{displaymath}
from which we conclude (recall that $ |\lambda_i|<(1-2\delta^4)^{1/2}\varepsilon^{-1/2}$) that for sufficiently small $\varepsilon$
\begin{equation}
\label{eq:108}
  \|\hat{u}_s\|_2^2 \leq C\varepsilon  \Big((
  |\langle(\omega^2-\bar{\lambda})\hat{u}_s,g\rangle|+ \frac{\varepsilon\lambda_r+\lambda_r^{-1}}{ \lambda_i^2}\|{\bf
  f}\|_2^2   \Big) \,.
\end{equation}
We note that \eqref{eq:24} reads in this case
\begin{multline}
\label{eq:24case2}
  |\langle(\omega^2-\bar{\lambda})\hat{u}_s,g\rangle|  \leq C\Big(\Big[1+\frac{\varepsilon^{1/2}\lambda_{r}^{1/2}}{\lambda_i}   \Big]\|{\bf
    f}\|_2+  \big( \varepsilon^{-1/2}+ \varepsilon^{1/2}\lambda_{r}^{1/2} \big)\|\hat{u}_s\|_2\Big)\|{\bf f}\|_2
\end{multline}
We now use \eqref{eq:108} and \eqref{eq:24case2}  to obtain that
\begin{displaymath}
   \|\hat{u}_s\|_2^2 \leq C\varepsilon  \Big(
  \big(\varepsilon^{-1/2}+\varepsilon^{1/2}\lambda_r^{1/2}\big)
  \|\hat{u}_s\|_2 \|{\bf f}\|_2   + \Big[1+\frac{\varepsilon^{1/2}\lambda_r^{1/2}}{\lambda_i}+
  \frac{\varepsilon\lambda_r+\lambda_r^{-1}}{ \lambda_i^2}\Big]\|{\bf f}\|_2^2   \Big)    \,.
\end{displaymath}
Consequently,
\begin{equation}
\label{eq:113}
  \|\hat{u}_s\|_2 \leq C\varepsilon^{1/2}\Big[1+\frac{\varepsilon^{1/2}\lambda_r^{1/2}}{ \lambda_i}\Big]\|{\bf
  f}\|_2 \,.
\end{equation}

\paragraph{Case 3: $\lambda_r<0$. }\strut \\
Here we write, as in \eqref{eq:20} (with $\eta_\epsilon=1$),
\begin{multline}
\label{eq:97}
  \Big\|\frac{\hat{u}_s^\prime}{\omega^2-\lambda}[\omega^2-\lambda_r]^{1/2}\Big\|_2^2 
 + \varepsilon^{-1} \Big\|\frac{ \hat{u}_s}{\omega^2-\lambda}[\omega^2-\lambda_r]^{1/2}\Big\|_2^2 +
  \|\omega\hat{u}_s\|_2^2 -\lambda_r \|\hat{u}_s\|_2^2= \Re\langle\hat{u}_s,g\rangle \,,
\end{multline}
from which we conclude that
\begin{displaymath}
   \|\omega\hat{u}_s\|_2^2 \leq |\langle\hat{u}_s,g\rangle |\,.
\end{displaymath}
From the above and \eqref{eq:112}
 we can conclude that
\begin{displaymath}
  \Big|\Big\langle2\omega \hat{u}_s,\frac{(\hat{u}_s)^\prime}{\omega^2-\lambda}\Big\rangle\Big|\leq
   |\langle\hat{u}_s,g\rangle |^{1/2}\Big(\|\hat{u}_s\|_2 +
  \frac{2}{ \lambda_i}\|\bf f\|_2\Big) \,. 
\end{displaymath}
Substituting the above into \eqref{eq:22} 
yields
\begin{multline}\label{eqadd}
   \|(\hat{u}_s)^\prime\|_2^2 +[\varepsilon^{-1}- \lambda_i^2)]\|\hat{u}_s\|_2^2
   + \|(\omega^2-\lambda_r)\hat{u}_s\|_2^2 \\ \leq|\langle\hat{u}_s,g\rangle |^{1/2}\Big(\|\hat{u}_s\|_2 +
  \frac{2}{ \lambda_i}\|{\bf f}\|_2\Big) 
  +|\langle(\omega^2-\bar{\lambda})\hat{u}_s,g\rangle| \,.
\end{multline}
Here \eqref{eq:24} reads (note that $\lambda_{r,m}=1$ in
the present case)
\begin{equation}
\label{eq:24Case3}
  |\langle(\omega^2-\bar{\lambda})\hat{u}_s,g\rangle|  \leq C\Big(\Big[1+\frac{\epsilon^\frac 12 +1}{\lambda_i}   \Big]\|{\bf
    f}\|_2+  \varepsilon^{-1/2}  \|\hat{u}_s\|_2\Big)\|{\bf f}\|_2
\end{equation}
We now use \eqref{eq:15}, 
\eqref{eqadd},  and \eqref{eq:24Case3}  to obtain that
\begin{displaymath}
  \|\hat{u}_s\|_2^2\leq C_\delta\varepsilon\Big(\Big[\|\hat{u}_s\|_2+ \frac{1}{ \lambda_i}\|{\bf
    f}\|_2\Big]^{3/2} \|{\bf   f}\|_2^{1/2}+\Big[1+\frac{\varepsilon^{1/2} +1}{\lambda_i}\Big]\|{\bf
    f}\|_2^2+ \varepsilon^{-1/2}\|\hat{u}_s\|_2\|{\bf
    f}\|_2 \Big)\,. 
\end{displaymath}
Hence,
\begin{displaymath}
  \|\hat{u}_s\|_2 \leq C_\delta\varepsilon^{1/2}\Big(1+\frac{1}{ \lambda_i^{3/4}} +\frac{1}{\lambda_i} \Big)\|{\bf f}\|_2\,.
\end{displaymath} 
This completes the proof of the proposition.\end{proof} 
\begin{remark} \label{Rem3.3} Notice that in this third case, since $\B_\epsilon$ is accretive, we have also 
  \begin{displaymath}
    \| {\bf u}\|_2 \leq \frac{C}{|\lambda_r|}\|{\bf f}\|_2
  \end{displaymath}
 which is correct without any limitation on the value of $\lambda_i$ as in
 the statement of Proposition  \ref{prop:strip-estimate}.
  \end{remark}
  
As a consequence of Proposition \ref{prop:strip-estimate}, we get under the same assumptions:
\begin{proposition}
\label{cor:spect}
  Let 
  \begin{displaymath}
    S(\varepsilon,\delta)=\{\lambda\in \C\setminus\overline{\R_+} \,| \, |\Im\lambda|<(1-\delta)\varepsilon^{-1/2}\,\}.
  \end{displaymath}
  Then, for any $\delta>0$ there exists $\varepsilon_0$ such that for all
  $0<\varepsilon<\varepsilon_0$ we have
\begin{displaymath} 
\sigma(\check{\B}_\varepsilon)\cap S(\varepsilon,\delta)=\emptyset \,.  
\end{displaymath}
\end{proposition}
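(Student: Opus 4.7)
The plan is to upgrade the a priori bound of Proposition~\ref{prop:strip-estimate} (which only controls $u_s=u_1+u_2$) to a full resolvent bound $\|{\bf u}\|_2\leq C_\lambda\|{\bf f}\|_2$ for any solution of $(\check\B_\varepsilon-\lambda){\bf u}={\bf f}$ with $\lambda\in S(\varepsilon,\delta)$, by using scalar resolvent estimates for $-d^2/dx^2-\lambda$ and for the complex Airy operators $-d^2/dx^2\pm ix-\lambda$; then, combining this with the maximal accretivity of $\check\B_\varepsilon$, an open-closed argument on the connected set $S(\varepsilon,\delta)$ yields the claim.

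First, given $\delta>0$ as in the statement, I would fix $\delta'\in(0,1/2]$ with $(1-2(\delta')^4)^{1/2}\geq 1-\delta$, so Proposition~\ref{prop:strip-estimate} applies at parameter $\delta'$ throughout $S(\varepsilon,\delta)\cap\{\Im\lambda\neq 0\}$ and provides the estimate $\|u_s\|_2\leq C\varepsilon^{1/2}\bigl(1+(1+\varepsilon^{1/2}(\lambda_r)_+^{1/2})/|\lambda_i|\bigr)\|{\bf f}\|_2$. Extracting $u_3$ from \eqref{eq:3}c as $u_3=(-d^2/dx^2-\lambda)^{-1}(f_3+\varepsilon^{-1/2}u_s/\sqrt{2})$ and using $\|(-d^2/dx^2-\lambda)^{-1}\|_{L^2\to L^2}\leq \dist(\lambda,\R_+)^{-1}$ (valid since $\lambda\notin\overline{\R_+}$), I obtain a bound on $\|u_3\|_2$ in terms of $\|u_s\|_2$ and $\|f_3\|_2$. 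Substituting into \eqref{eq:3}a,b and invoking the uniform resolvent bound $\|(-d^2/dx^2\pm ix-\lambda)^{-1}\|\leq D_2$ furnished by the remark following Assumption~\ref{ass2.11} (translation invariance in $x$ gives $s=0$ here), I conclude with a bound on $\|u_1\|_2$ and $\|u_2\|_2$. The composite constant $C_\lambda$ depends continuously, hence locally boundedly, on $\lambda$ throughout $S(\varepsilon,\delta)\cap\{\Im\lambda\neq 0\}$.

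The complementary case $\Im\lambda=0$, $\Re\lambda<0$ (also allowed by $S(\varepsilon,\delta)$) is covered directly by the maximal accretivity of $\check\B_\varepsilon$, inherited from Proposition~\ref{prop1.1} through the unitary transformation leading to \eqref{eq:124} and the rescaling \eqref{eq:49}: one has $\|(\check\B_\varepsilon-\lambda)^{-1}\|\leq|\Re\lambda|^{-1}$. The same accretivity also provides the starting set $\{\Re\lambda<0\}\cap S(\varepsilon,\delta)\subset\rho(\check\B_\varepsilon)$, which is non-empty.

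To close the argument I would use standard continuation. The set $S(\varepsilon,\delta)$ is connected, since removing the half-line $\overline{\R_+}$ from the strip $\{|\Im\lambda|<(1-\delta)\varepsilon^{-1/2}\}$ still leaves a path-connected open set through the left half-plane. Its intersection with $\rho(\check\B_\varepsilon)$ is non-empty and open. Local boundedness of $\lambda\mapsto C_\lambda$ prevents $\|(\check\B_\varepsilon-\mu)^{-1}\|$ from blowing up as $\mu$ approaches any point of $S(\varepsilon,\delta)$ from inside $\rho(\check\B_\varepsilon)$, which by the standard characterization of $\partial\rho$ means this intersection is also closed in $S(\varepsilon,\delta)$. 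Connectedness then forces $S(\varepsilon,\delta)\subseteq\rho(\check\B_\varepsilon)$. The main bookkeeping obstacle is ensuring that the constant $C_\lambda$ from the bootstrap stays locally bounded near the boundary at $\Im\lambda=0$ with $\Re\lambda<0$: Proposition~\ref{prop:strip-estimate} degenerates there (its bound contains a factor $1/|\lambda_i|$), but the accretivity bound takes over smoothly in that regime, so the two regimes patch together without gap.
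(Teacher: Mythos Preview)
Your argument is correct and proves the proposition as stated, but it follows a route genuinely different from the paper's.

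The paper stays in the Fourier variable throughout: after bounding $\hat u_s$ via Proposition~\ref{prop:strip-estimate}, it controls $\hat u_3$ from (\ref{eq:5}c) and then $\hat u_d$ from \eqref{eq:6}, which requires an estimate on $\|\hat u_s'\|_2$. That estimate is not supplied by Proposition~\ref{prop:strip-estimate} itself; the paper extracts it from the intermediate inequalities \eqref{eq:39}, \eqref{eq:106} and \eqref{eqadd} established in the three cases of that proof. This yields not only $\sigma(\check\B_\varepsilon)\cap S(\varepsilon,\delta)=\emptyset$ but the explicit quantitative resolvent bound \eqref{eq:21a}, which is then rescaled to \eqref{eq:21b} and fed into the proof of Theorem~\ref{thm:unbounded} (see \eqref{eq:61}).

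Your approach instead works in the $x$-variable: you recover $u_3$ through the resolvent of $-d^2/dx^2$ on $\C\setminus\overline{\R_+}$, then $u_1,u_2$ through the complex Airy resolvents, and close with an open-closed argument on the connected set $S(\varepsilon,\delta)$. This is cleaner conceptually and avoids revisiting the internals of the proof of Proposition~\ref{prop:strip-estimate}. Two small comments: the constant $D_2$ for the Airy resolvent depends on an upper bound for $\Re\lambda$ (the interval $I$ in Assumption~\ref{ass2.11}), so it is only locally, not globally, uniform on $S(\varepsilon,\delta)$; this is still enough for your connectedness argument. Second, your composite constant carries an extra factor $\varepsilon^{-1/2}$ coming from feeding $\varepsilon^{-1/2}u_3$ into the Airy resolvent, so you do not recover the sharper bound \eqref{eq:21a}. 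For the proposition as stated this is immaterial, but if you later need the quantitative estimate \eqref{eq:21b} in the proof of Theorem~\ref{thm:unbounded}, you would have to return to the paper's method or sharpen the bootstrap.
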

\begin{proof}
  Proposition \ref{prop:strip-estimate} establishes boundedness of $u_s$ only. To
  prove boundedness of ${\bf u}$ one needs, therefore to prove, in
  addition, the boundedness of $\hat{u}_d$ and $\hat{u}_3$ for all $\lambda\in
  S(\varepsilon,\delta)$. To this end we first observe that by (\ref{eq:5}c) it
  holds that
\begin{equation}
 \label{eq:114}
\|\hat{u}_3\|_2 \leq C |\lambda_i|^{-1}\Big(1+\frac{[1+\varepsilon^{1/2}(\lambda_r)_+^{1/2}]}{| \lambda_i|}\Big)\|{\bf f}\|_2 \,.
  \end{equation}
To prove boundedness  of $\hat{u}_d$ we first observe that  by (\ref{eq:5}a)-(\ref{eq:5}b) 
\begin{displaymath}
(\omega^2-\lambda) \hat u_d -\hat u'_s = \hat f_1-\hat f_2\,,
\end{displaymath}
from which we conclude that
 \begin{displaymath}
    \|\hat u_d\| \leq  \frac{1}{|\lambda_i|} (\|\hat u'_s\| + \|{\bf f}\|)\,.
 \end{displaymath}
We now observe that by \eqref{eq:39}, \eqref{eq:106}, and
\eqref{eqadd} we have that
 \begin{displaymath}
   \|\hat u'_s\|_2  \leq C \Big(1+\frac{[1+\varepsilon^{1/2}(\lambda_r)_+^{1/2}]}{| \lambda_i|}\Big)\|{\bf f}\|_2 \,.
 \end{displaymath}
Combining the above yields
  \begin{equation}
\label{eq:21}
  \|{\bf u} \| \leq C \Big(1 +\frac{1}{|\lambda_i|} \Big)
  \Big(1+\frac{[1+\varepsilon^{1/2}(\lambda_r)_+^{1/2}]}{| \lambda_i|}\Big)\|{\bf
    f}\|_2\,,
  \end{equation}
 or, equivalently, that for any $\lambda\in\C$ satisfying $ 0 < | \lambda_i|\leq
  (1-2\delta^4)^{1/2}\varepsilon^{-1/2}$ it holds that
   \begin{equation}
\label{eq:21a}
  \| (\check B_\varepsilon -\lambda)^{-1}\| \leq C \Big(1 +\frac{1}{|\lambda_i|} \Big)
  \Big(1+\frac{[1+\varepsilon^{1/2}(\lambda_r)_+^{1/2}]}{| \lambda_i|}\Big)\,. 
  \end{equation}
 \end{proof}
  We can deduce from \eqref{eq:21a} the following bound on $ \| \B_\epsilon
-\Lambda)^{-1}\|$.
 \begin{corollary}
For any $\Lambda\in\C$ satisfying $ 0 < | \Lambda_i |\leq (1-2\delta^4)^{1/2} $ it
holds that
    \begin{equation}
\label{eq:21b}
  \| \B_\epsilon -\Lambda)^{-1}\| \leq C \epsilon^{-\frac 23}  \Big(1 + \epsilon^{\frac 23} \frac{1}{|\Lambda_i|} \Big)
  \Big(1+\epsilon^{ \frac 23}\frac{[1+\epsilon^{1/3}(\Lambda_r)_+^{1/2}]}{| \Lambda_i|}\Big)\,. 
  \end{equation}
 \end{corollary}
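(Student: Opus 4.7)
The plan is to obtain the corollary as a direct transcription of Proposition \ref{cor:spect}/estimate \eqref{eq:21a} back to the original scales. The only substantive content is keeping track of the scaling factors introduced by the change of variables that produced $\check{\B}_\varepsilon$ from $\widetilde{\B}_\epsilon$, and in turn of the unitary rotation of $\C^3$ that produced $\widetilde{\B}_\epsilon$ from $\B_\epsilon$.

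First, since $\widetilde{\B}_\epsilon$ (see \eqref{eq:124}) differs from $\B_\epsilon$ only by the unitary change of basis $\{\hat{\bf i}_1,\hat{\bf i}_2,\hat{\bf i}_3\}\to\{{\bf v},\bar{\bf v},\hat{\bf i}_3\}$ on the target $\C^3$, we have $\|(\B_\epsilon-\Lambda)^{-1}\|=\|(\widetilde{\B}_\epsilon-\Lambda)^{-1}\|$ as operators on $L^2(\R,\C^3)$. So it suffices to estimate the resolvent of $\widetilde{\B}_\epsilon$.

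Next, set up the dilation $U_\epsilon:L^2(\R,\C^3)\to L^2(\R,\C^3)$ defined by $(U_\epsilon g)(y)=\epsilon^{1/3}g(\epsilon^{2/3}y)$, which is a unitary isomorphism. Recalling \eqref{eq:49}, the passage from \eqref{eq:65} to \eqref{eq:60} is precisely the statement that if $\tilde{\bf u}=U_\epsilon\check{\bf u}$, $\tilde{\bf f}=U_\epsilon\check{\bf f}$, $\lambda=\epsilon^{-2/3}\Lambda$ and $\varepsilon=\epsilon^{4/3}$, then $(\widetilde{\B}_\epsilon-\Lambda)\tilde{\bf u}=\epsilon^{2/3}\tilde{\bf f}$ is equivalent to $(\check{\B}_\varepsilon-\lambda)\check{\bf u}=\check{\bf f}$. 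Taking norms, $\tilde{\bf u}=\epsilon^{2/3}(\widetilde{\B}_\epsilon-\Lambda)^{-1}\tilde{\bf f}$, so via $U_\epsilon$ we obtain the intertwining identity
\begin{equation*}
(\widetilde{\B}_\epsilon-\Lambda)^{-1}=\epsilon^{-2/3}\,U_\epsilon(\check{\B}_\varepsilon-\lambda)^{-1}U_\epsilon^{-1},
\end{equation*}
and hence $\|(\widetilde{\B}_\epsilon-\Lambda)^{-1}\|=\epsilon^{-2/3}\|(\check{\B}_\varepsilon-\lambda)^{-1}\|$. Under this correspondence the admissibility region also transcribes correctly: $|\lambda_i|\le(1-2\delta^4)^{1/2}\varepsilon^{-1/2}$ is exactly the hypothesis $|\Lambda_i|\le(1-2\delta^4)^{1/2}$ of the corollary.

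Finally, I insert \eqref{eq:21a} and convert every quantity back to the $\Lambda,\epsilon$ variables using $|\lambda_i|=\epsilon^{-2/3}|\Lambda_i|$, $(\lambda_r)_+=\epsilon^{-2/3}(\Lambda_r)_+$ and $\varepsilon^{1/2}=\epsilon^{2/3}$. These yield $1/|\lambda_i|=\epsilon^{2/3}/|\Lambda_i|$ and $\varepsilon^{1/2}(\lambda_r)_+^{1/2}=\epsilon^{1/3}(\Lambda_r)_+^{1/2}$, so \eqref{eq:21a} becomes exactly \eqref{eq:21b} after multiplying by the prefactor $\epsilon^{-2/3}$. There is no real obstacle here: the only thing that could go wrong is arithmetic slippage in the exponents of $\epsilon$, so the main care is in verifying $U_\epsilon$ is unitary (a Jacobian computation giving the $\epsilon^{1/3}$ prefactor) and that the admissibility interval rescales as claimed, after which the corollary follows by substitution.
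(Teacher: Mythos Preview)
Your proposal is correct and follows exactly the route the paper implicitly takes: the corollary is stated immediately after \eqref{eq:21a} with the remark ``We can deduce from \eqref{eq:21a} the following bound,'' and no further proof is given, because the content is precisely the unitary rotation $\B_\epsilon\to\widetilde{\B}_\epsilon$ together with the dilation $x\to\epsilon^{2/3}x$ encoded in \eqref{eq:49}. Your bookkeeping of the exponents and of the admissibility window is accurate.
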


\subsection{Point spectrum}\label{sss331}
We now establish the existence of a point spectrum for $\check{\B}_\varepsilon$
for $|\lambda_i| >\varepsilon^{-1}/2$.  As before, we can assume $ \lambda_i>\varepsilon^{-1}/2$.
Our main result, which is included in the statement of Theorem
\ref{thm:unbounded}, is:
\begin{proposition}
  \label{prop:point-exist}
For every $k\in\N$ there exist  $\varepsilon_k>0$ and $R_k>0$, such that
$B(\lambda_k,R_k\varepsilon^{\frac 3 4})\cap \sigma(\check{\B}_\varepsilon)\neq\emptyset$ for all $0<\varepsilon<\varepsilon_k$,
where $\lambda_k = i\varepsilon^{-\frac 12} + \frac{2k-1}{2} (1+i) \varepsilon^\frac 14$. 
\end{proposition}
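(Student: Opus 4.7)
Since $\Im\lambda_k>0$ we have $\lambda_k\notin\R_+$, so a Fourier-side eigenfunction of $\check\B_\varepsilon$ is nothing but a non-trivial solution of the scalar equation \eqref{eq:8} with $g=0$, together with the recovery relations $\hat u_d=(\omega^2-\lambda)^{-1}\partial_\omega\hat u_s$ and $\hat u_3=\varepsilon^{-1/2}[\sqrt{2}(\omega^2-\lambda)]^{-1}\hat u_s$ obtained from \eqref{eq:6} and (\ref{eq:5}c) with $\hat f_d=\hat f_3=0$. The task therefore reduces to constructing a good quasimode for \eqref{eq:8} at $\lambda=\lambda_k$ and converting it into true spectrum.

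The key algebraic identity is
\begin{displaymath}
(\omega^2-\lambda)^2+\varepsilon^{-1}=(\omega^2-\mu)(\omega^2-2i\varepsilon^{-1/2}-\mu),\qquad \mu:=\lambda-i\varepsilon^{-1/2}.
\end{displaymath}
For $\mu=\OO(\varepsilon^{1/4})$ and $\omega=\OO(\varepsilon^{1/8})$ one has $\omega^2-\lambda\sim -i\varepsilon^{-1/2}$ and
\begin{displaymath}
\frac{1}{\omega^2-\lambda}=i\varepsilon^{1/2}+\OO(\varepsilon^{5/4}),\qquad (\omega^2-\lambda)+\frac{\varepsilon^{-1}}{\omega^2-\lambda}=2(\omega^2-\mu)+\OO(\varepsilon).
\end{displaymath}
Under the dilation $\omega=\varepsilon^{1/8}\tilde\omega$ with $\mu=\varepsilon^{1/4}\tilde\mu$, equation \eqref{eq:8} (with $g=0$) reduces at leading order to the anharmonic-oscillator spectral problem $H_0\tilde u=2\tilde\mu\,\tilde u$ with $H_0:=-i\partial_{\tilde\omega}^{\,2}+2\tilde\omega^{2}$. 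The complex dilation $\tilde\omega=(i/2)^{1/4}y$ conjugates $H_0$ to $(1+i)(-\partial_y^{2}+y^{2})$; since $\Re\bigl((i/2)^{-1/2}\tilde\omega^{2}\bigr)=\tilde\omega^{2}>0$, the eigenfunctions $\psi_k(\tilde\omega)=h_{k-1}\bigl((i/2)^{-1/4}\tilde\omega\bigr)$ (Hermite functions in the rotated variable) belong to $\mathcal S(\R)$, and $\sigma(H_0)=\{(1+i)(2k-1):k\geq 1\}$. Identifying $2\tilde\mu_k=(1+i)(2k-1)$ yields $\mu_k=\tfrac{2k-1}{2}(1+i)\varepsilon^{1/4}$, matching the definition of $\lambda_k$.

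I would then set $\hat u_s^{qm}(\omega):=\chi(\omega)\psi_k(\varepsilon^{-1/8}\omega)$ with $\chi\in C_0^\infty(\R)$ equal to $1$ near $0$, complete it via the recovery relations to $\hat{\mathbf u}^{qm}\in X$, and apply $\F^{-1}$ to obtain $\mathbf u^{qm}\in D(\check\B_\varepsilon)$. Direct book-keeping of the $\OO(\varepsilon)$ corrections in the expansions above, acting on a function of $L^2$-norm of order $\varepsilon^{1/16}$, together with an exponentially small commutator with $\chi$ coming from the Gaussian decay of $\psi_k$, yields
\begin{displaymath}
\|(\check\B_\varepsilon-\lambda_k)\mathbf u^{qm}\|_{L^2}\leq C_k\,\varepsilon\,\|\mathbf u^{qm}\|_{L^2},
\end{displaymath}
so that $\|(\check\B_\varepsilon-\lambda_k)^{-1}\|\geq c_k\varepsilon^{-1}$ whenever $\lambda_k\in\rho(\check\B_\varepsilon)$.

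The final step is to promote this resolvent lower bound at a single point into the existence of actual spectrum in $B(\lambda_k,R_k\varepsilon^{3/4})$. Since this range of $\Im\lambda$ lies outside the scope of Proposition \ref{cor:spect}, I would perform a Grushin reduction of \eqref{eq:8} for $\lambda\in\partial B(\lambda_k,R_k\varepsilon^{3/4})$ against the one-dimensional spectral subspace of $H_0$ spanned by $\psi_k$. The spectral gap $2\sqrt 2$ of $H_0$ yields uniform invertibility of the leading operator on the orthogonal complement, reducing the resolvent problem to a scalar equation of leading coefficient $-2(\mu-\mu_k)$ plus lower-order corrections. For $R_k$ large enough this delivers an upper bound $\|(\check\B_\varepsilon-\lambda)^{-1}\|\leq CR_k^{-1}\varepsilon^{-3/4}$ on $\partial B(\lambda_k,R_k\varepsilon^{3/4})$; combined with the quasimode lower bound, the operator-valued maximum principle applied to the holomorphic family $\lambda\mapsto(\check\B_\varepsilon-\lambda)^{-1}$ in the punctured ball forces a pole inside. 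The main obstacle is precisely this matching upper bound, which requires propagating the asymptotic analysis uniformly in $\lambda$ on the boundary circle rather than at the single point $\lambda_k$.
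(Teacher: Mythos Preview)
Your overall strategy---construct a quasimode at $\lambda_k$, obtain a matching resolvent upper bound on a surrounding circle, and invoke the maximum principle for the holomorphic family $\lambda\mapsto(\check\B_\varepsilon-\lambda)^{-1}$ to force spectrum inside---is sound, and your identification of the leading operator $H_0=-i\partial_{\tilde\omega}^{2}+2\tilde\omega^{2}$ with eigenvalues $(2k-1)(1+i)$ is correct. The quasimode error is indeed $\OO(\varepsilon)$ relative to $\|\mathbf u^{qm}\|$, since the Gaussian decay of $\psi_k$ absorbs the polynomial weights coming from the next-order terms.

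There is, however, a genuine gap in the Grushin step. The perturbation of $H_0$ in the rescaled equation contains the term $i\varepsilon^{3/4}(\tilde\omega^{2}-\tilde\mu)^{2}$ (from the next order in the expansion of the bracket in \eqref{eq:8}), which is \emph{not} relatively bounded with respect to $H_0$: the domain of $H_0$ only controls $\tilde\omega^{2}u$, not $\tilde\omega^{4}u$. Hence the sentence ``the spectral gap $2\sqrt{2}$ of $H_0$ yields uniform invertibility on the orthogonal complement'' does not suffice---even granting that the reduced resolvent of the non-self-adjoint $H_0$ at the simple eigenvalue $2\tilde\mu_k$ is bounded (which is a Riesz--Schauder statement, not a spectral-gap one), composing it with the quartic perturbation is unbounded on $L^{2}$, and the Neumann series for the Grushin inverse diverges. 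A related complication is that if one works directly with \eqref{eq:8}, the second-order coefficient $(\omega^{2}-\lambda)^{-1}$ degenerates as $|\omega|\to\infty$, so the rescaled operator is not uniformly elliptic.

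The paper circumvents both issues. First, the substitution $\hat u_s=(\omega^{2}-\lambda)^{1/2}v$ converts \eqref{eq:8} into the genuine Schr\"odinger form \eqref{eq:38}, with non-degenerate principal part. Second, analytic dilation by $\theta=\pi/8$ (see \eqref{eq:119}--\eqref{eq:43}) turns the complex harmonic principal part into the \emph{real} harmonic oscillator, so that the spectral theorem applies cleanly. Third---and this is precisely the ingredient your sketch lacks---the weighted estimate of Lemma~\ref{lemwe1} exploits the quartic term $\epsilon\,e^{3i\pi/4}s^{4}$ in $\widehat\LL_{\mu,\epsilon}$ to propagate exponential decay from the right-hand side to the solution; this is what makes the unbounded perturbation harmless when it acts on functions in the range of the reduced resolvent applied to exponentially decaying data (Lemma~3.12). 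With this in hand the paper closes via a Rouch\'e-type contour integral on the scalar family $\mu\mapsto\widehat\LL_{\mu,\epsilon}$ (Lemma~\ref{lem:point-exist}), relying only on the qualitative equivalence \eqref{eq:123} and thus avoiding the passage from the scalar bound back to $\|(\check\B_\varepsilon-\lambda)^{-1}\|$---which in your scheme would additionally require controlling the derivative of $\hat f_d$ in \eqref{eq:8b}. Your Grushin/maximum-principle route can be completed, but only after supplying an analogue of the weighted estimate.
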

\subsubsection{Preliminary reduction}
We begin with the following  substitution
\begin{equation}
\label{eq:53}
  \hat{u}_s=(\omega^2-\lambda)^{1/2}v\,,
\end{equation}
in  \eqref{eq:8} (for ${\bf f}=0$) to obtain $\Mg_\lambda v =0$ with 
\begin{equation}
  \label{eq:38}
\Mg_\lambda  \overset{def}{=}-\frac{d^2}{d\omega^2} +\Big[(\omega^2-\lambda)^2+\varepsilon^{-1}+\frac{2\omega^2+\lambda}{(\omega^2-\lambda)^2}\Big]
\end{equation}
Note that the last term $\frac{2\omega^2+\lambda}{(\omega^2-\lambda)^2}$ is, as
$\lambda_i>\varepsilon^{-1/2}$, $C^\infty$ and bounded. Hence $\mathcal M_\lambda$ is a bounded
perturbation of the anharmonic oscillator:
\begin{equation}\label{eq:defmg0}
\Mg_\lambda^0= -\frac{d^2}{d\omega^2} + (\omega^2 -\lambda)^2 +\epsilon^{-1}\,.
\end{equation}
Note that  $\Mg_\lambda^0$, or the anharmonic
oscillator, has been intensively studied \cite{Sim} (in the form
$-d^2/d\omega^2 + \omega^2 +\beta \omega^4$) and later, in the above form,  
in   \cite{mo95,ha78} for real values of $\lambda$. \\
\\
 It has been established  (see \cite{HeRo} and
references therein) that $\Mg_\lambda$ is for all $\lambda\not\in\R_+$,  a closed operator 
whose domain satisfies 
\begin{equation}
\label{eq:41}
  D(\Mg_\lambda)= \{ u\in L^2(\R)\,|\, \Mg_\lambda u \in L^2(\R)\} =\{ u\in H^2(\R)\,|\,\omega^4u\in L^2(\R)\}\,, 
\end{equation}
and is maximally accretive.\\
We now observe the following:
\begin{lemma}
For any $\lambda\in\C\setminus\R_+$
  \begin{equation}
\label{eq:123}
    0\in\sigma(\Mg_\lambda)\Leftrightarrow\epsilon^{2/3}\lambda\in\sigma(\B_\epsilon) \,.
  \end{equation}
\end{lemma}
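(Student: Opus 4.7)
The plan is to prove \eqref{eq:123} by chaining several invertibility-preserving reductions, most of which are already implicit in the derivation of \eqref{eq:8}. First, $\B_\epsilon$ is unitarily equivalent to $\widetilde{\B}_\epsilon$ via a constant unitary change of basis on $\C^3$, and the $L^2$-isometric dilation $u(x)\mapsto\epsilon^{1/3}u(\epsilon^{2/3}x)$ intertwines $\widetilde{\B}_\epsilon$ with $\epsilon^{2/3}\check{\B}_\varepsilon$. Hence $\Lambda\in\sigma(\B_\epsilon)$ iff $\lambda=\epsilon^{-2/3}\Lambda\in\sigma(\check{\B}_\varepsilon)$. The Fourier transform being unitary on $L^2(\R)$, this is further equivalent to non-invertibility of the Fourier-transformed system \eqref{eq:5}, acting on the product space $X$ defined in \eqref{eq:defX}.

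Next, for $\lambda\notin\R_+$ the coefficient $(\omega^2-\lambda)^{-1}$ is smooth and bounded on $\R$, which allows one to \emph{triangularize} \eqref{eq:5}: equation (5c) determines $\hat u_3$ algebraically from $\hat u_s$ and $\hat f_3$, and \eqref{eq:6} determines $\hat u_d$ from $\hat u_s^\prime$ and $\hat f_d$. Eliminating $\hat u_3$ and $\hat u_d$ reduces \eqref{eq:5} to the scalar equation \eqref{eq:8} for $\hat u_s\in X_1$. This is a bounded invertible change of unknowns on $X$, so the system is invertible iff the scalar operator
\begin{equation*}
L_\lambda:= -\frac{d}{d\omega}\frac{1}{\omega^2-\lambda}\frac{d}{d\omega} + (\omega^2-\lambda) + \frac{\varepsilon^{-1}}{\omega^2-\lambda}
\end{equation*}
is invertible on $X_1$. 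Surjectivity of the map $(\hat f_s,\hat f_d,\hat f_3)\mapsto g$ from \eqref{eq:8b} is immediate by choosing $\hat f_d=\hat f_3=0$, $\hat f_s=g$.

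Finally, setting $\mu(\omega):=(\omega^2-\lambda)^{1/2}$ (well-defined, smooth and non-vanishing on $\R$ since $\lambda\notin\R_+$), a direct computation using $\mu^\prime=\omega/\mu$, together with $\mu^{\prime\prime}/\mu=(\omega^2-\lambda)^{-1}-\omega^2(\omega^2-\lambda)^{-2}$ and $(\mu^\prime)^2/\mu^2=\omega^2(\omega^2-\lambda)^{-2}$, yields the operator identity
\begin{equation*}
\mu\,L_\lambda\,\mu = \Mg_\lambda \,.
\end{equation*}
Since $|\mu(\omega)|\asymp 1+|\omega|$ at infinity, multiplication by $\mu^{\pm 1}$ provides a bijection between the natural domain of $L_\lambda$ inside $X_1$ and the domain $D(\Mg_\lambda)$ described in \eqref{eq:41}. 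Consequently $0\in\sigma(L_\lambda)$ iff $0\in\sigma(\Mg_\lambda)$, and combining with the preceding reductions establishes \eqref{eq:123}.

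The main obstacle is not the algebraic computations but the function-space bookkeeping in the last two steps: one must check that the triangularizing change of variables is a bounded isomorphism of $X$ preserving closedness, and that the \emph{unbounded} similarity by $\mu$ indeed matches the natural domain of $L_\lambda$ with $D(\Mg_\lambda)$ as given in \eqref{eq:41}. Once these verifications are carried out, the chain of equivalences gives the lemma.
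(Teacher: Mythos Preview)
Your algebraic reduction is correct, and the identity $\mu L_\lambda \mu = \Mg_\lambda$ is exactly the substitution \eqref{eq:53} of the paper. However, your strategy differs from the paper's in a way that makes your remaining ``bookkeeping'' harder than it needs to be.

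The paper does \emph{not} attempt to prove a full invertibility equivalence through the chain of reductions. Instead it proves only the \emph{point spectrum} equivalence $0\in\sigma_p(\Mg_\lambda)\Leftrightarrow\lambda\in\sigma_p(\check{\B}_\varepsilon)$ by explicitly transporting an eigenfunction from one side to the other and checking membership in the relevant domain. The only delicate point is verifying $\hat u_d\in L^2$ when reconstructing ${\bf u}$ from $v\in\ker\Mg_\lambda$, for which the paper invokes the energy identity \eqref{eq:12} with $g=0$. The upgrade from point spectrum to full spectrum is then obtained for free from Proposition~\ref{prop.pointsp} and Remark~\ref{rem.pointsp}: since $(-d^2/dx^2\pm ix)$ has compact resolvent, $\sigma(\check{\B}_\varepsilon)\setminus\R_+$ consists only of isolated eigenvalues, and $\Mg_\lambda$ has compact resolvent by \eqref{eq:41}.

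Your route, by contrast, tries to carry bounded invertibility through each step, which forces you to confront the fact that the right-hand side $g$ of \eqref{eq:8b} is \emph{not} in $L^2$ for general data (it contains $\frac{d}{d\omega}\big(\frac{\hat f_d}{\omega^2-\lambda}\big)$), so the phrase ``$L_\lambda$ is invertible on $X_1$'' has no obvious meaning: you would have to identify the correct codomain for $L_\lambda$ and match it, via multiplication by $\mu$, with $L^2$ on the $\Mg_\lambda$ side. This can presumably be done, but it is precisely the kind of work that the paper circumvents by passing through the point spectrum and invoking the Fredholm analysis already established in Section~\ref{s2}. Your observation that the surjectivity of ${\bf f}\mapsto g$ onto $L^2$ is immediate is correct but insufficient, since the relevant map is not $L^2$-valued on all of $(L^2)^3$.
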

\begin{proof}
  We first prove that $0\in
  \sigma_p(\Mg_\lambda)\Leftrightarrow\lambda\in\sigma_p(\check{\B}_\varepsilon)$ (the point spectrum).  Suppose that
  ${\bf u}\in D(\check{\B_\varepsilon})$ is an eigenfunction associated with
  $\lambda\in\sigma_p(\check{\B}_\varepsilon)$. Then, let 
  \begin{displaymath}
    v(\omega)=\frac{1}{(\omega^2-\lambda)^{1/2}}\FF(u_1+u_2)(\omega) \,.
  \end{displaymath}
Clearly, $v\in L^2(\R)$ and, by the construction of $\Mg_\lambda$ we have that
$\Mg_\lambda v=0$. Consequently, $v\in D(\Mg_\lambda)$ and hence
$0\in\sigma_p(\Mg_\lambda)$. (Note that for any $\lambda\in\C\setminus\R_+$ the spectrum of
$\Mg_\lambda$ is discrete.)

Conversely, suppose that $0\in \sigma_p(\Mg_\lambda)$ and let $v\in D(\Mg_\lambda)$ denote an
associated eigenfunction. Set
\begin{equation}\label{eq:vu}
  \hat{u}_s=(\omega^2-\lambda)^{1/2}v\,.
\end{equation}
As  $v\in D(\Mg_\lambda)$ it follows from \eqref{eq:41}  that
$\hat{u_s}\in L^2(\R)$.  We now define, as in (\ref{eq:5}c) and
\eqref{eq:6}
 \begin{equation}
\label{eq:revfor}
 \hat u_d =\frac{1}{\omega^2 -\lambda} \frac{d \hat u_s}{d \omega }\,,\, \hat u_3 =
 \frac{1}{\sqrt 2} \, \varepsilon^{-\frac 12} \frac{\hat u_s}{\omega^2-\lambda}\,. 
 \end{equation}
From \eqref{eq:12} (with $g=0$) we can conclude that $ \hat u_d
\in L^2(\R)$. As $\hat u_3\in L^2(\R)$ we may conclude that
\begin{displaymath}
  {\bf u}=
  \begin{bmatrix}
    \frac{1}{2}\FF^{-1}(\hat{u}_s+\hat{u}_d) \\
    \frac{1}{2}\FF^{-1}(\hat{u}_s-\hat{u}_d) \\
    \FF^{-1}(\hat{u}_3)
  \end{bmatrix}
\in L^2(\R)\,.
\end{displaymath}
Consequently, since $(\check{\B}_\varepsilon-\lambda) {\bf u}=0$ we obtain that $
{\bf u}\in D(\check{\B}_\varepsilon)$ and hence $\lambda\in\sigma(\check{\B}_\varepsilon)$. 

Since $\epsilon^{2/3}\check{\B}_\varepsilon$ is obtained from $\B_\epsilon$ via unitary
dilation and rotation we obtain that $\epsilon^{2/3}\lambda\in\sigma(\B_\epsilon )$. The
proof of \eqref{eq:123} is now completed by using Proposition
\ref{prop.pointsp} together with the fact that $(-\epsilon^2\Delta+ix-\lambda)^{-1}$
is compact.  
\end{proof}

\subsubsection{Heuristics and dilation}\label{sss332}
We begin by making some intuitive observations on $\mathcal M_\lambda^0$.
To this end we return to our initial spectral parameter $\Lambda$
(see \eqref{eq:49}):
\begin{equation}
\label{eq:47}
\lambda =  \varepsilon^{-\frac 12} \Lambda \,,
\end{equation}
which, when substituted into \eqref{eq:defmg0}, yields
\begin{displaymath}
\widetilde \Mg_{\Lambda}^0:= - \frac{d^2}{d\omega^2} + \varepsilon^{-1} (1+\Lambda^2)  - 2 \Lambda \varepsilon^{-\frac 12} \omega ^2 + \omega^4 \,.
\end{displaymath}
Here and in the following we use the term ``critical value'' for any
$\Lambda\in\C$ for which ${\rm ker}\,\Mg_{\Lambda}^0\neq\{0\}$.
The above form suggests it would be plausible to look for critical values
near $\Lambda=i$ (so that $\varepsilon^{-1}(1+\Lambda^2)$ is of the same order of
$2 \Lambda \varepsilon^{-\frac 12} \omega ^2 $). Hence, we set
\begin{subequations}
\label{eq:46}
  \begin{equation}
\Lambda = i + \varepsilon^{\frac{3}{4}} \mu
\end{equation}
 and 
 \begin{equation}
 \omega = \varepsilon^\frac
18 \tilde \omega 
\end{equation}
to obtain, after multiplication by $\varepsilon^\frac 14$,
\begin{equation}
\widehat \Mg_{\mu}^0:= - \frac{d^2}{d\tilde \omega^2}   - 2 i \tilde  \omega ^2 +  2i  \mu  +  \varepsilon^{\frac{3}{4}}
\mu^2  -  2 \mu\varepsilon^{3/4}\tilde \omega^2 +\varepsilon^{\frac 34} \tilde \omega^4 \,.
\end{equation} 
\end{subequations}
Neglecting small terms (in the limit $\varepsilon\to0$) we thus expect $- 2i \mu$
to be an eigenvalue of the complex harmonic oscillator $-d^2/d\tilde
\omega^2-2 i \tilde \omega^2$ (cf. \cite{hebook13}).  We now apply the
previous rescaling \eqref{eq:47} and \eqref{eq:46} to $\mathcal M_\lambda$
to obtain, (for convenience we return to the parameter $\epsilon =\varepsilon^{\frac 34}$)
\begin{equation} \label{rescprob}
\widehat \Mg_{\mu,\epsilon} := - \frac{d^2}{d\tilde \omega^2}  - 2 i \tilde  \omega ^2  +  2i  \mu  + \epsilon 
\mu^2  - 2 \mu\epsilon \tilde \omega^2 + \epsilon \tilde \omega^4+ \epsilon \Phi(\epsilon,\tilde \omega,\mu) \,,
\end{equation}
where
\begin{equation}\label{rescproba}
\Phi(\epsilon,\tilde \omega,\mu) :=\frac{(i+ 2 \epsilon  \tilde \omega^2 + 
  \epsilon \mu)}{(- i + \epsilon \tilde \omega^2 - \epsilon \mu)^{2}}  \,.
\end{equation}
Clearly,
\begin{lemma}
For any $\Lambda \in\C\setminus\R_+$ such that $\Lambda = i+\epsilon\mu $ 
  \begin{equation}
\label{eq:123bis}
    0\in\sigma(\widehat \Mg_{\mu,\epsilon})\Leftrightarrow\Lambda\in\sigma(\B_\epsilon) \,.
  \end{equation}
\end{lemma}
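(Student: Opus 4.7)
The plan is to chain together three equivalences, each of which is either already established or follows from a unitary change of variables. The starting point is the previous lemma, which tells us that $0\in\sigma(\Mg_\lambda)\Leftrightarrow \epsilon^{2/3}\lambda\in\sigma(\B_\epsilon)$ for $\lambda\in\C\setminus\R_+$.

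First I would express everything in terms of $\Lambda$ using \eqref{eq:47}, i.e.\ $\lambda=\varepsilon^{-1/2}\Lambda$ with $\varepsilon=\epsilon^{4/3}$. Since $\varepsilon^{-1/2}=\epsilon^{-2/3}$, we have $\epsilon^{2/3}\lambda=\Lambda$. Thus the previous lemma immediately yields
\begin{equation*}
0\in\sigma(\Mg_{\varepsilon^{-1/2}\Lambda})\iff\Lambda\in\sigma(\B_\epsilon).
\end{equation*}
It therefore remains to prove $0\in\sigma(\widehat\Mg_{\mu,\epsilon})\iff 0\in\sigma(\Mg_\lambda)$ when $\lambda=\varepsilon^{-1/2}(i+\epsilon\mu)$.

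Second, I would introduce the unitary dilation $U_\varepsilon:L^2(\R_\omega)\to L^2(\R_{\tilde\omega})$ defined by $(U_\varepsilon v)(\tilde\omega)=\varepsilon^{1/16}v(\varepsilon^{1/8}\tilde\omega)$, which implements the change of variable $\omega=\varepsilon^{1/8}\tilde\omega$ of \eqref{eq:46}. Since $U_\varepsilon$ is unitary, $U_\varepsilon^{-1}\Mg_\lambda U_\varepsilon$ is unitarily equivalent to $\Mg_\lambda$ and in particular has the same spectrum. Multiplying by the nonzero scalar $\varepsilon^{1/4}$ preserves the property that $0$ belongs to (or lies outside) the spectrum. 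The heuristic computation carried out in the derivation of \eqref{rescprob}--\eqref{rescproba} is in fact exact: a direct substitution (using $\omega^2=\varepsilon^{1/4}\tilde\omega^2$, $d^2/d\omega^2=\varepsilon^{-1/4}d^2/d\tilde\omega^2$, and $\omega^2-\lambda=\varepsilon^{-1/2}[-i+\epsilon(\tilde\omega^2-\mu)]$) shows that
\begin{equation*}
\varepsilon^{1/4}\,U_\varepsilon^{-1}\,\Mg_\lambda\,U_\varepsilon=\widehat\Mg_{\mu,\epsilon},
\end{equation*}
where the bounded rational perturbation $\tfrac{2\omega^2+\lambda}{(\omega^2-\lambda)^2}$ transforms, after multiplication by $\varepsilon^{1/4}$, exactly into $\epsilon\,\Phi(\epsilon,\tilde\omega,\mu)$. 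This is essentially the calculation already sketched in the paragraphs preceding the lemma; I would simply record it carefully and verify the exponent bookkeeping.

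Third, combining the two equivalences yields
\begin{equation*}
0\in\sigma(\widehat\Mg_{\mu,\epsilon})\iff 0\in\sigma(\Mg_{\varepsilon^{-1/2}\Lambda})\iff \Lambda\in\sigma(\B_\epsilon),
\end{equation*}
which is the stated claim. To make the argument fully rigorous one should also check that the domains are preserved by the transformation, but this follows from the domain characterization \eqref{eq:41}: $U_\varepsilon$ maps $\{v\in H^2(\R)\,|\,\tilde\omega^4 v\in L^2\}$ onto $\{v\in H^2(\R)\,|\,\omega^4 v\in L^2\}$, so the maximal realizations correspond. The only genuinely delicate point, and the one I would double-check carefully, is the algebraic identity $\varepsilon^{1/4}U_\varepsilon^{-1}\Mg_\lambda U_\varepsilon=\widehat\Mg_{\mu,\epsilon}$; everything else is formal bookkeeping.
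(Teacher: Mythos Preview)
Your proposal is correct and is exactly the argument the paper has in mind: it simply writes ``Clearly,'' before the lemma, meaning that \eqref{eq:123bis} follows immediately from \eqref{eq:123} via the unitary dilation \eqref{eq:46} and multiplication by the nonzero scalar $\varepsilon^{1/4}$, precisely as you spell out. Your careful verification of the algebraic identity $\varepsilon^{1/4}U_\varepsilon^{-1}\Mg_\lambda U_\varepsilon=\widehat\Mg_{\mu,\epsilon}$ and of the domain correspondence is the honest content behind that ``Clearly''.
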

Hence it remains to find critical values $\mu\in \mathbb C$ such that $\widehat
\Mg_{\mu,\epsilon}$ has a non trivial kernel.
 
\subsubsection{Formal asymptotics}
By \eqref{rescprob} $\widehat
{\mathcal M}_{\mu,\epsilon}$ is close to the complex harmonic
oscillator. In the following we present a formal expansion relying on
that intuition. 
 \begin{proposition}
 For any $k\in \mathbb N^*$, there exists sequences
 $\{\mu_{k,\ell}\}_{l=1}^\infty\subset\C$ and $\{f_{k,\ell}\}_{l=1}^\infty \subset\mathcal
 S(\mathbb R)$ such that, as $\epsilon\to0$,
 \begin{subequations}
\label{defmuk}
    \begin{equation}
  \mu_k \sim \sum_{\ell >0} \epsilon^\ell  \mu_{k,\ell}\,,
  \end{equation}
  \begin{equation}
   f_k = \sum_{\ell \geq 0} \epsilon^\ell  f_{k,\ell}\,,
   \end{equation}
and
  \begin{equation}
\label{eq:as}
   \widehat \Mg_{\mu_k,\epsilon} f_k \sim 0\,.
  \end{equation}
 \end{subequations}
   In particular,  we have
\begin{displaymath}
  \widehat  \Mg_{\mu_{k,\epsilon}} f_{k,0} = \epsilon \,  r_k^{(0)} (\tilde \omega,\epsilon)\,,
\end{displaymath}
where $\omega \mapsto \exp (e^{i \frac \pi 4}
\,\omega^2/\sqrt{2}) \, r_k^{(0)} (\omega,\epsilon)$ is polynomially bounded. \\
For $\epsilon=0$, we have $$\widehat \Mg_{\mu_{k,0}}f_{k,0}=0\,.$$
     \end{proposition}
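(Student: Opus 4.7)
The plan is to construct the formal series by a Rayleigh--Schr\"odinger-type perturbation argument, treating $\widehat{\Mg}_{\mu,\epsilon}$ as an analytic perturbation of the complex harmonic oscillator $\widehat{\Mg}_{\mu,0} = L + 2i\mu$, where $L := -d^2/d\tilde\omega^2 - 2i\tilde\omega^2$. The operator $L$ has compact resolvent on $L^2(\R)$, with simple eigenvalues $(2k-1)(1-i)$, $k\in\N^*$, and Hermite-type eigenfunctions $\phi_k(\tilde\omega) = H_{k-1}(\sqrt{1-i}\,\tilde\omega)\exp(-(1-i)\tilde\omega^2/2)\in\mathcal{S}(\R)$ (cf.\ \cite{hebook13}). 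Setting $\mu_{k,0}:=(2k-1)(1+i)/2$ and $f_{k,0}:=\phi_k$ yields $\widehat{\Mg}_{\mu_{k,0},0}f_{k,0}=0$ and recovers the values of $\kappa_n^0(\epsilon)$ in Theorem~\ref{thm:unbounded} under the rescaling $\Lambda = i+\epsilon\mu$.

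Next I would Taylor-expand $\Phi(\epsilon,\tilde\omega,\mu)$ around $\epsilon=0$ (noting that $\Phi(0,\tilde\omega,\mu)\equiv -i$ is constant), which, using $(-i+\epsilon\tilde\omega^2-\epsilon\mu)^2 = -(1-i\epsilon(\tilde\omega^2-\mu))^2$ and the binomial series, produces coefficients polynomial in $\tilde\omega$ and analytic in $\mu$. Substituting the formal Ans\"atze $\mu_k = \mu_{k,0}+\sum_{\ell\geq 1}\epsilon^\ell\mu_{k,\ell}$ and $f_k = \sum_{\ell\geq 0}\epsilon^\ell f_{k,\ell}$ into $\widehat{\Mg}_{\mu_k,\epsilon}f_k = 0$ and matching powers of $\epsilon$ produces, at each order $\ell\geq 1$, an equation
\[
L_0 f_{k,\ell} + 2i\mu_{k,\ell}\,f_{k,0} \;=\; G_\ell(\tilde\omega),
\]
with $L_0:=L+2i\mu_{k,0}$ and $G_\ell$ a known combination of $\{\mu_{k,j},f_{k,j}\}_{j<\ell}$ multiplied by polynomials in $\tilde\omega$.

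The Fredholm alternative then closes the induction. One has $\ker L_0 = \C\phi_k$ and $\ker L_0^* = \C\psi_k$, where $\psi_k\in\mathcal{S}(\R)$ is the corresponding eigenfunction of $L^* = -d^2/d\tilde\omega^2 + 2i\tilde\omega^2$ at eigenvalue $(2k-1)(1+i)$. The bi-orthogonality $\langle\phi_k,\psi_k\rangle\neq 0$ --- a classical property of the complex Hermite system, verified via an explicit Gaussian integral --- uniquely determines $\mu_{k,\ell}$ through the solvability condition $\langle G_\ell,\psi_k\rangle = 2i\mu_{k,\ell}\langle f_{k,0},\psi_k\rangle$, after which $f_{k,\ell}$ is fixed by the normalization $\langle f_{k,\ell},\psi_k\rangle=0$. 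Standard elliptic regularity for $L_0$ on weighted Schwartz spaces preserves Schwartz-class data, so $f_{k,\ell}\in\mathcal{S}(\R)$ at every step.

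Finally, for the residual identity, using $L_0 f_{k,0}=0$ directly gives
\[
\widehat{\Mg}_{\mu_k,\epsilon} f_{k,0} = \Bigl(2i(\mu_k-\mu_{k,0}) + \epsilon\bigl[\mu_k^2 - 2\mu_k\tilde\omega^2 + \tilde\omega^4 + \Phi(\epsilon,\tilde\omega,\mu_k)\bigr]\Bigr)f_{k,0} \;=:\; \epsilon\,r_k^{(0)}(\tilde\omega,\epsilon),
\]
so $r_k^{(0)}$ is a polynomial in $\tilde\omega$ with $\epsilon$-analytic coefficients times $\phi_k$. Since $\phi_k(\tilde\omega)\sim P(\tilde\omega)\exp(-(1-i)\tilde\omega^2/2)$ and $\exp(e^{i\pi/4}\tilde\omega^2/\sqrt{2}) = \exp((1+i)\tilde\omega^2/2)$, the product equals a polynomial multiplied by $\exp(i\tilde\omega^2)$, whose modulus is polynomial, giving the claimed bound. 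The main obstacle is verifying the non-degeneracy $\langle\phi_k,\psi_k\rangle\neq 0$ (without which no $\mu_{k,\ell}$ can be solved for) and controlling the polynomial degrees of $G_\ell$, which increase with $\ell$ because of the successive terms in the Taylor expansion of $\Phi$ and the $\tilde\omega^4,\,\tilde\omega^2$ factors; both points are classical for the complex harmonic oscillator but deserve explicit verification.
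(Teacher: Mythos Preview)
Your approach is essentially the same as the paper's: a Rayleigh--Schr\"odinger construction with the complex harmonic oscillator as the unperturbed operator, the solvability condition obtained by pairing with $\bar f_{k,0}$ (your $\psi_k$, since $L^*=\bar L$), and the residual bound deduced from the explicit Gaussian form of the Hermite-type eigenfunctions. The paper is slightly more terse---it does not name the Fredholm alternative or discuss the Taylor expansion of $\Phi$---but the mechanism is identical.

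One small correction: in your final paragraph you assert that $r_k^{(0)}$ is a \emph{polynomial} in $\tilde\omega$ times $\phi_k$. This is not quite right, because $\Phi(\epsilon,\tilde\omega,\mu_k)$ is rational in $\tilde\omega$, not polynomial. The conclusion survives, however, since $\Phi$ is uniformly bounded for $\epsilon$ small (the denominator $(-i+\epsilon\tilde\omega^2-\epsilon\mu)^2$ stays away from zero), so $r_k^{(0)}/\phi_k$ is the sum of a degree-four polynomial and a bounded function, hence polynomially bounded; your computation $\exp(e^{i\pi/4}\tilde\omega^2/\sqrt{2})\cdot\exp(-(1-i)\tilde\omega^2/2)=\exp(i\tilde\omega^2)$ then gives the claimed estimate.
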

  
     \begin{proof}
For the leading order we have by (\ref{defmuk}c)
 \begin{equation}
\label{eq:as0}
 \widehat  \Mg_{\mu_k,0} f_{k,0} =\Big(- \frac{d^2}{d\omega^2} - 2 i\omega^2 + 2 i \mu_{k,0}\Big) f_{k,0} =0\,.
 \end{equation}
The eigenvalues of the complex harmonic oscillator are well-known
(cf. for instance \cite[Proposition 14.12]{hebook13}). Hence,
\begin{equation}\label{defmuk0}
 \mu_{k,0} =\frac{ (2k-1)}{\sqrt{2}} e^{i\frac \pi 4}\,.
 \end{equation} 
  We can normalize the corresponding eigenmode by setting
\begin{displaymath}
\int |f_{k,0}|^2(\tilde \omega)  d\tilde \omega =1\,.
\end{displaymath}
 The coefficient of $\epsilon$  assumes the form
\begin{equation}\label{eq:as2}
 \Mg_{\mu_k,0} f_{k,1} = -2 i \mu_{k,1} f_{k,0} -
 (\mu_{k,0}^2-2\mu_{k,0}\omega^2+\omega^4-1)f_{k,0}\,. 
\end{equation}
A necessary and sufficient condition for the existence of solution for
\eqref{eq:as2} is obtained by taking the inner product with $\bar
f_{k,0}$, yielding
 \begin{equation}\label{defmuk2}
 \mu_{k,1} =
 \frac{\int_\R([\omega^2-\mu_{k,0}]^2-1)f_{k,0}^2\,d\omega}{2i\int_\R f_{k,0}^2\,d\omega}
 \end{equation}
 Under \eqref{defmuk2}, there exists a unique solution  $f_{k,1}$ of
 \eqref{eq:as2} if we add the condition
 \begin{displaymath}
   \int_\R f_{k,0} (\omega) f_{k,1} (\omega) d\omega =0 \,.
 \end{displaymath}
 We can then continue by recursion, to prove \eqref{defmuk0} for any
 order. To prove that $\exp (e^{i \frac \pi 4}
 \,\omega^2/\sqrt{2}) \, r_k^{(0)} (\omega,\epsilon)$ is polynomially bounded we use the
 well-known properties of Hermite functions to conclude that
\begin{displaymath}
   \Big|\exp \Big(\frac{1}{\sqrt{2}} \exp\Big(i \frac \pi 4\Big)
\,\omega^2\Big)f_{k,0}\Big|\leq C|\omega|^k \,.
\end{displaymath}
Then, by direct substitution we obtain that
\begin{displaymath}
     \Big|\exp \Big(\frac{1}{\sqrt{2}} \exp\Big(i \frac \pi 4\Big)
\,\omega^2\Big)r_k^0\Big|\leq C|\omega|^{k+4}\,. 
\end{displaymath}
  \end{proof}

  We have formally established that, for sufficiently small
  $\epsilon$, $\sigma(\B_\epsilon)$ should contain a sequence of points $\Lambda_k\sim i  +\epsilon  \mu_k$.\\
  In the following, we attempt to rigorously prove these formal
  estimates. Two of the difficulties we face in the forthcoming
  rigorous analysis are:
  \begin{enumerate}
  \item It involves a non-linear spectral problem.
  \item It involves the spectral analysis of a non- selfadjoint operator.
  \end{enumerate}
  Non-self-adjointness prohibits the use of the spectral theorem
  to estimate error terms. To mitigate this problem we use analytic
  dilation so that the leading order operator, converts from the
  complex harmonic oscillator into the real harmonic oscillator.

 \subsubsection{Analytic dilation.}
      We begin by recalling from \eqref{eq:38}
   \begin{equation*}
\Mg_\lambda  \overset{def}{=}-\frac{d^2}{d\omega^2} +\Big[(\omega^2-\lambda)^2+\varepsilon^{-1}+\frac{2\omega^2+\lambda}{(\omega^2-\lambda)^2}\Big]
\end{equation*}
  Let $\theta\in\mathbb R$. We introduce the unitary dilation operator
\begin{equation}
\label{eq:40}
u\longmapsto (U(\theta) u)(x)= e^{-\theta/2} \, u( e^{-\theta} x)\,.
\end{equation}
We then define
\begin{equation}
\label{eq:119}
\Mg_{\lambda,\theta}  := U(\theta)^{-1}\Mg_\lambda U(\theta) =-e^{-2 \theta}
\frac{d^2}{d\omega^2}+V_{\lambda,\theta} \,,
\end{equation}
where
\begin{displaymath}
V_{\lambda,\theta} =  (e^{2\theta}\omega^2-\lambda)^2+\varepsilon^{-1}
+\frac{2e^{2\theta}\omega^2+\lambda}{(e^{2\theta}\omega^2-\lambda)^2}\,.
\end{displaymath}
Similarly, we define 
\begin{equation}
\label{eq:119a}
\Mg_{\lambda,\theta}^0  := -e^{-2 \theta}
\frac{d^2}{d\omega^2}+V^0_{\lambda,\theta} \,,
\end{equation}
where
\begin{displaymath}
V^0_{\lambda,\theta} =  (e^{2\theta}\omega^2-\lambda)^2+\varepsilon^{-1}
\end{displaymath}
$\Mg^0_{\lambda,\theta}$ can be extended to the strip $|\Im \theta| < \frac \pi 6$,
using \eqref{eq:119a}, as a globally quasi-elliptic operator (see
\cite{HeRo0},\cite{HeSj}) whose principal term (in the sense of this theory) is $-e^{-2 \theta}
\frac{d^2}{d\omega^2} +  e^{4\theta}\omega^4$.\\

It is then a rather standard matter to show that its spectrum is
independent of $\theta$.  Hence it remains necessary to control the effect
of the "perturbation term"
\begin{displaymath}
 \phi (\omega,\lambda,\theta)= \frac{2e^{2\theta}\omega^2+\lambda}{(e^{2\theta}\omega^2-\lambda)^2}\,.
\end{displaymath}
We observe that if $|\Im \theta| < \frac \pi 8 + \frac {\epsilon_0}2 $ and $\pi/4
+\epsilon_0<\arg \lambda<7\pi/4 -\epsilon_0$ (for some $\epsilon_0 >0$) and $|\lambda| >0$, then
$(\omega,\theta) \mapsto \phi(\omega,\lambda,\theta)$ remains bounded. Consequently, $\Mg_{\lambda,\theta}$
is sectorial and possesses a compact resolvent. We now use standard
arguments: We observe that for $\theta\in\R$, since $U(\theta)$ is a unitary
operator, $\sigma(\Mg_{\lambda,\theta})$ is independent of $\theta$. By analytic
continuation it must also be constant for $-\frac \pi 8 -\frac{\epsilon_0}{2}
<\Im \theta<\frac \pi 8+ \frac{\epsilon_0}{2} $ (see \cite[Section VI.1.3]{ka80},
\cite[Section 12]{HeSj} or \cite{AHP2}). Hence we have obtained,
for  $\epsilon_0=\frac \pi 8$,
  \begin{proposition}
Let $\lambda\in\C\setminus \{ 0 \}$ satisfy $3\pi/8 <\arg \lambda<13 \pi/8$, and let $\theta$
be such that $|\Im \theta| < \frac {3\pi}{ 16}$. Then,
\begin{equation}\label{eq:120a}
\sigma (\Mg_{\lambda,\theta})=\sigma (\Mg_{\lambda})
  \end{equation} 
\end{proposition}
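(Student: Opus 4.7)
The plan is to apply the classical Aguilar--Balslev--Combes analytic dilation strategy, which is essentially the scheme sketched in the paragraphs preceding the statement. The task is to fill in the details in a structured way. First I would establish the statement for real $\theta$: since $U(\theta)$ is unitary on $L^2(\R)$ when $\theta\in\R$, the operator $\Mg_{\lambda,\theta}$ is unitarily equivalent to $\Mg_\lambda$ and hence $\sigma(\Mg_{\lambda,\theta})=\sigma(\Mg_\lambda)$. The content lies in extending this to the complex strip $|\Im\theta|<3\pi/16$.

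The second step is to check that $\Mg_{\lambda,\theta}$ makes sense as a closed operator, with the same domain as $\Mg^0_{\lambda,\theta}$, uniformly in $\theta$ in the strip and $\lambda$ in the angular sector $3\pi/8<\arg\lambda<13\pi/8$. For this one needs the denominator $e^{2\theta}\omega^2-\lambda$ in $\phi(\omega,\lambda,\theta)$ to stay away from zero for all $\omega\in\R$. Since $\omega^2\geq 0$ and $\arg(e^{2\theta}\omega^2)=2\Im\theta\in(-3\pi/8,3\pi/8)$, the excluded directions for $\lambda/e^{2\theta}$ to be a nonnegative real are exactly those in $(3\pi/8,13\pi/8)$, which is the hypothesis; hence $|e^{2\theta}\omega^2-\lambda|$ is bounded below by $c(\lambda,\theta)(1+\omega^2)$ and $\phi$ is bounded and goes to $0$ at infinity. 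Combined with the global quasi-ellipticity of $\Mg^0_{\lambda,\theta}$ (already invoked via \cite{HeRo0,HeSj}), this shows $\Mg_{\lambda,\theta}$ is sectorial with compact resolvent, so its spectrum is purely discrete with no finite accumulation point.

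Third, I would verify that $\theta\mapsto\Mg_{\lambda,\theta}$ is an analytic family of type (A) in Kato's sense on $\{|\Im\theta|<3\pi/16\}$: the domain is independent of $\theta$ (it coincides with the natural domain determined by the quasi-elliptic principal part), and the map $\theta\mapsto \Mg_{\lambda,\theta}u$ is holomorphic for each $u$ in that common domain because both $e^{-2\theta}\partial_\omega^2$ and $V_{\lambda,\theta}$ depend holomorphically on $\theta$ (the latter uses the nonvanishing just established). Standard analytic perturbation theory (see \cite[Section VII.1.3]{ka80}; compare the discussion in \cite[Section~12]{HeSj} and \cite{AHP2}) then says that discrete eigenvalues of $\Mg_{\lambda,\theta}$ depend holomorphically on $\theta$ (away from crossings which are removable punctures), and no new eigenvalue can appear or disappear within a connected region.

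Fourth, I would conclude by the identity principle: on the real axis $\Im\theta=0$ we have $\sigma(\Mg_{\lambda,\theta})=\sigma(\Mg_\lambda)$ by step one, and since both sides are discrete sets that vary continuously (in the sense above) along the connected strip, the set of eigenvalues must remain constant throughout the strip $|\Im\theta|<3\pi/16$. This yields \eqref{eq:120a}. The main technical obstacle is the careful verification of the second step, since the perturbation $\phi$ is only bounded once the angular restriction on $\arg\lambda$ is exploited; everything else is a standard packaging of Kato's analytic perturbation theory.
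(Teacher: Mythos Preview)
Your proposal is correct and follows essentially the same approach as the paper: establish unitary equivalence for real $\theta$, verify that the perturbation $\phi(\omega,\lambda,\theta)$ is bounded (hence $\Mg_{\lambda,\theta}$ is sectorial with compact resolvent) precisely under the stated angular restrictions on $\lambda$ and $\theta$, and then invoke the standard analytic-dilation argument (Kato's analytic perturbation theory, as in \cite{ka80,HeSj,AHP2}) to propagate the spectral identity from the real axis to the strip. Your write-up is in fact somewhat more explicit than the paper's sketch, particularly in the verification that $e^{2\theta}\omega^2-\lambda$ never vanishes and in naming the type~(A) framework, but the underlying argument is the same.
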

In particular, it holds that
\begin{equation}
    \label{eq:120}
 0\in   \sigma (\Mg_{\lambda,\theta})\Leftrightarrow 0 \in \sigma (\Mg_{\lambda})\,.
 \end{equation}
Setting $\theta=\frac \pi 8$, we obtain
\begin{corollary}
  For all $\lambda\in\C\setminus \{0\}$
  satisfying $3\pi/8<\arg \lambda<13\pi/8$, it holds that 
\begin{equation}
  \label{eq:121}
  0\in\sigma(\Mg_{\lambda,\frac \pi 8})\Leftrightarrow\varepsilon^{1/2}\lambda\in\sigma(\B_\epsilon) \,.
\end{equation}
\end{corollary}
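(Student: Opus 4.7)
The proof should be essentially an immediate consequence of two earlier results, so the plan is really about book-keeping and verifying that the hypotheses align.

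First I would invoke the equivalence \eqref{eq:120} supplied by the preceding proposition. Specializing to $\theta$ with $\Im\theta=\pi/8$ (so that $|\Im\theta|=\pi/8<3\pi/16$, fitting the admissible strip), the condition $0\in\sigma(\Mg_{\lambda,\pi/8})$ is equivalent to $0\in\sigma(\Mg_\lambda)$. The hypothesis $3\pi/8<\arg\lambda<13\pi/8$ on $\lambda$ places $\lambda$ in the sector required by that proposition and in particular guarantees $\lambda\notin\R_+$, so the operator $\Mg_\lambda$ is well-defined, closed and has compact resolvent (by the earlier discussion following \eqref{eq:41}).

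Next I would apply the lemma containing \eqref{eq:123}, which gives
\[
 0\in\sigma(\Mg_\lambda)\ \Longleftrightarrow\ \epsilon^{2/3}\lambda\in\sigma(\B_\epsilon),
\]
valid for any $\lambda\in\C\setminus\R_+$, a set that contains our sector. The scaling relation \eqref{eq:49} gives $\varepsilon=\epsilon^{4/3}$ and hence $\varepsilon^{1/2}=\epsilon^{2/3}$, so the right-hand side of the equivalence is exactly $\varepsilon^{1/2}\lambda\in\sigma(\B_\epsilon)$, which yields the claim.

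In short, the proof is a two-line chain: \eqref{eq:120} collapses the dilated spectrum onto the undilated one, and \eqref{eq:123} together with the identification $\varepsilon^{1/2}=\epsilon^{2/3}$ transports the $0$-eigenvalue of $\Mg_\lambda$ to an eigenvalue $\varepsilon^{1/2}\lambda$ of $\B_\epsilon$. The only real step is to check (as I did above) that the sectorial hypothesis on $\lambda$ in the corollary is compatible both with the admissible strip $|\Im\theta|<3\pi/16$ used by the preceding proposition and with the condition $\lambda\notin\R_+$ required by \eqref{eq:123}; there is no genuine obstacle since the sector $3\pi/8<\arg\lambda<13\pi/8$ stays well clear of $\R_+$.
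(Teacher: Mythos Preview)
Your proposal is correct and matches the paper's approach: the corollary is stated without proof in the paper, being an immediate combination of \eqref{eq:120} (specialized to $\theta=\pi/8$, which lies in the admissible strip) and \eqref{eq:123}, together with the identification $\varepsilon^{1/2}=\epsilon^{2/3}$ from \eqref{eq:49}. Your verification that the sector $3\pi/8<\arg\lambda<13\pi/8$ is compatible with both invoked results is exactly the routine check needed.
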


Finally, we apply \eqref{eq:47} and \eqref{eq:46} to $\Mg_{\lambda,\frac \pi
  8}$ to obtain (using the original parameter $\epsilon$) the operator
\begin{equation} 
\label{eq:43}
\widehat \LL_{\mu,\epsilon} := - \frac{d^2}{ds^2}  + 2 s^2  +  2 e^{3i\frac \pi 4}   \mu  + \epsilon
e^{i\frac \pi 4} \mu^2 - 2i \mu\epsilon s ^2 + e^{3i \frac \pi 4} \epsilon s^4 + \epsilon \, e^{i \frac \pi 4} \Phi (\epsilon, e^{i\frac \pi 8} s,\mu)   \,.
\end{equation}
We search for $\mu$ values for which $\widehat \LL_{\mu,\epsilon} $ has a non
trivial kernel.

\subsubsection{Weighted estimates}
\begin{lemma}\label{lemwe1}
  Let $r>0$, $\mu \in B(0,r)$ and $(g,w)$ a pair in $L^2(\mathbb R) \times D
  ( \widehat \LL_{\mu,\varepsilon}) $ such that $ \widehat \LL_{\mu,\epsilon} w= g$.
  Suppose that $\|e^{|\cdot|}g\|_2<\infty$. Then, there exist $\epsilon_0>0$ and
  $C>0$ such that for all $0<\epsilon<\epsilon_0$ it holds that
  \begin{equation}
  \label{eq:44}
  \|e^{|\cdot|}w\|_2\leq C(\|w\|_2+\|e^{|\cdot|}g\|_2)\,.
  \end{equation}
\end{lemma}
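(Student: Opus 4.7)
My plan is to establish an Agmon-type exponential decay estimate via a weighted energy identity, handling the exponential weight through a family of bounded approximants and passing to the limit. The structural observation driving the argument is that \eqref{eq:43} is a perturbation of the real harmonic oscillator $-\partial_s^2 + 2s^2$, whose ground state decays like $e^{-s^2/\sqrt 2}$, so $e^{|s|}$ is a subcritical weight.

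First, I would introduce $\phi(s) = \sqrt{1+s^2}$ (comparable to $|s|$ up to additive constants) together with a family $\phi_R \in C^\infty(\R)$ obtained by setting $\phi_R'(s) = \chi_R(|s|)\, s/\sqrt{1+s^2}$, where $\chi_R$ is a smooth cutoff equal to $1$ on $[0,R]$ and vanishing past $2R$. Then $|\phi_R'|\leq 1$ and $\phi_R''$ is uniformly bounded in $R$, while $\phi_R \nearrow \phi - 1$ pointwise. For each $R$, $v_R := e^{\phi_R} w$ lies in $L^2(\R)$ and satisfies the conjugated equation
\[
\bigl(-\partial_s^2 + 2\phi_R'\partial_s + (\phi_R''-(\phi_R')^2) + V(s,\mu,\epsilon)\bigr) v_R = e^{\phi_R} g,
\]
where $V$ denotes the potential in \eqref{eq:43}.

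The heart of the argument is a rotated energy identity. Because $\Re V$ contains $-\epsilon s^4/\sqrt 2$ with the wrong sign, the real part of the operator fails to be coercive at large $|s|$. To remedy this, I would pair the conjugated equation with $\bar v_R$ and take the real part of $e^{-i\alpha}\langle \widehat\LL_{\mu,\epsilon}^{\phi_R} v_R, v_R\rangle$ for some angle $\alpha \in (\pi/4, \pi/2)$, for instance $\alpha = 3\pi/8$. The rotated potential then satisfies
\[
\Re(e^{-i\alpha} V) \;\geq\; 2\cos\alpha\, s^2 + \tfrac{\sin\alpha - \cos\alpha}{\sqrt 2}\,\epsilon s^4 - C_r,
\]
with both coefficients strictly positive, and $C_r$ depending only on $r$; the perturbation $\epsilon\Phi(\epsilon,e^{i\pi/8}s,\mu)$ is uniformly bounded because the denominator in \eqref{rescproba} stays bounded below on $B(0,r)$ when $\epsilon$ is small. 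After integrating by parts in the Laplacian piece and using $|\phi_R'|\leq 1$ together with Cauchy--Schwarz to absorb the first-order drift $2\phi_R'\partial_s$, I expect a uniform coercive estimate of the form
\[
c\bigl(\|v_R'\|_2^2 + \|(1+|s|)v_R\|_2^2 + \epsilon\|s^2 v_R\|_2^2\bigr) \;\leq\; C\|v_R\|_2^2 + |\langle e^{\phi_R}g, v_R\rangle|.
\]

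Finally, to close the estimate I would split $\|v_R\|_2^2$ on the right-hand side: on $\{|s|\leq S_0\}$ the weight $\phi_R$ is bounded by $\sqrt{1+S_0^2}$, so $\int_{|s|\leq S_0}|v_R|^2 \leq K\|w\|_2^2$, whereas on $\{|s|\geq S_0\}$ one has $\int|v_R|^2 \leq S_0^{-2}\int s^2|v_R|^2$. Choosing $S_0$ large enough that $cS_0^2 > 2C$ absorbs the $C\|v_R\|_2^2$ term, and Cauchy--Schwarz on $\langle e^{\phi_R}g, v_R\rangle$ then gives $\|v_R\|_2^2 \leq C'(\|w\|_2^2 + \|e^{|\cdot|}g\|_2^2)$ uniformly in $R$. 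Monotone convergence as $R\to\infty$ produces \eqref{eq:44}. The principal obstacle throughout is the wrong sign of the $e^{3i\pi/4}\epsilon s^4$ term, and the rotation by $e^{-i\alpha}$ in the key step is exactly what neutralizes it by engaging the positive imaginary part of the quartic.
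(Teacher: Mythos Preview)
Your proposal is correct and follows essentially the same route as the paper: both use the Agmon weight $\sqrt{1+s^2}$ with bounded cutoff approximants, rotate the energy identity by $e^{-3i\pi/8}$ to turn the sign of the $e^{3i\pi/4}\epsilon s^4$ term, invoke uniform boundedness of $\Phi$, and pass to the limit. The only cosmetic difference is that the paper multiplies by $e^{2\varphi_n}$ inside the pairing rather than conjugating, and its absorption of the lower-order term $C\|e^{\varphi_n}w\|_2^2$ into $\|s e^{\varphi_n}w\|_2^2$ is left implicit where you spell it out.
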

\begin{proof}
 Let
$\varphi (s) =\sqrt{1 + s^2}$ and   $\varphi_n (s)
  =\tilde \eta (s/n)\sqrt{1 + s^2}$ (where $\tilde \eta$ is defined in
  \eqref{eq:11}). Integration by parts yields that for any
$\alpha>0$, there exists $C_\alpha(r)>0$ such that, for any $n\geq 1$
\begin{displaymath}
  \begin{array}{l}
\Re \left (  e^{-3i\pi/8} \langle e^{2 \varphi_n(s)} \widehat \LL_{\mu,\varepsilon} w,w\rangle\right) \\
\quad \geq \cos (\frac {3\pi}{8} ) [ \| (e^{\varphi_n} w)' \|^2 + 2 \| s e^{\varphi_n} w\|^2 +  \epsilon \|s^2 e^{\varphi_n} w\|^2  ] -\alpha  \| (e^{\varphi_n} w)' \|^2  \\
\qquad  \qquad  - C_\alpha  (  \epsilon \| s e^{\varphi_n}w\|^2 + \|  e^{\varphi_n} w\|^2)\\
\quad \geq \cos (\frac {3\pi}{8} ) \| s e^{\varphi_n} w\|^2  - C  \| e^{\varphi_n} w\|^2\,.
\end{array}
\end{displaymath}
To obtain the second inequality we use the uniform boundedness of $\Phi
(\epsilon, e^{i\frac \pi 8} s,\mu) $ (see \eqref{rescproba}).  To obtain the
last inequality we choose first $0<\alpha>\cos(3\pi/8)/2$ and then  $\epsilon_0$
small enough so that $C_\alpha\epsilon<\cos(3\pi/8)$ for all $0<\epsilon<\epsilon_0$. We
can now conclude that there exist $C>0$ and $C_1>0$ such that
\begin{displaymath}
\| e^{\varphi_n}g\| \,\| e^{\varphi_n} w \| \geq C\, \| e^{\varphi_n } w\|^2 - C_1\, \|w\|^2\,.
\end{displaymath}
  Finally, we can take the limit as $n\to +\infty$.
\end{proof}

\begin{lemma} 
  Let $k\in\N$ and $\mu_{k,0}$ be given by \eqref{defmuk}. Let further
  $g\in L^2(\R,e^{|\cdot|})$ and $(\mu, w)\in\partial B(\mu_{k,0},r)\times D ( \widehat
  \LL_{\mu,\epsilon})$ satisfy 
  \begin{equation}
    \label{eq:48}
\widehat \LL_{\mu,\epsilon} w= g\,. 
  \end{equation}
Then, there exist
  positive $\epsilon_0$, $r_0$, $C_0$ and $C$ such that for all $(\epsilon,r)$
  satisfying \break $0<\epsilon\leq \epsilon_0$, $C_0 \epsilon  \leq r \leq r_0$, it holds that
  \begin{equation}
    \label{eq:45}
\|w\|_2\leq \frac{C}{r} (\epsilon \|e^{|\cdot|}g\|_2+\|g\|_2) \,.
  \end{equation}
\end{lemma}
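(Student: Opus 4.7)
The plan is to view $\widehat\LL_{\mu,\epsilon}$ as an $O(\epsilon)$ perturbation of the leading-order operator
\[
\widehat\LL_{\mu,0}:=-\frac{d^2}{ds^2}+2s^2+2e^{3i\pi/4}\mu,
\]
whose spectrum is explicit, and to absorb the unbounded perturbation
\[
V_\mu(\epsilon,s):=e^{i\pi/4}\mu^2-2i\mu s^2+e^{3i\pi/4}s^4+e^{i\pi/4}\Phi(\epsilon,e^{i\pi/8}s,\mu)
\]
via Lemma \ref{lemwe1}. First, I would observe that $\widehat\LL_{\mu,0}$ is a scalar shift of the self-adjoint harmonic oscillator $A:=-d^2/ds^2+2s^2$, whose eigenvalues are $\sqrt{2}(2j-1)$, $j\geq 1$. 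Since $2e^{3i\pi/4}\mu_{k,0}=-\sqrt{2}(2k-1)$, the $k$-th eigenvalue of $\widehat\LL_{\mu,0}$ equals $2e^{3i\pi/4}(\mu-\mu_{k,0})$, of modulus $2r$ on $\partial B(\mu_{k,0},r)$. For $r$ smaller than some $r_0$, all other eigenvalues stay at distance $\geq c>0$ from $0$, and by normality
\[
\|\widehat\LL_{\mu,0}^{-1}\|_{L^2\to L^2}\leq \frac{C}{r}.
\]

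Next, I would rewrite \eqref{eq:48} as $\widehat\LL_{\mu,0}w=g-\epsilon V_\mu w$ and invert, yielding
\[
\|w\|_2\leq\frac{C}{r}\bigl(\|g\|_2+\epsilon\|V_\mu w\|_2\bigr).
\]
To bound $\|V_\mu w\|_2$, I note that for $\epsilon$ small and $|\mu|\leq|\mu_{k,0}|+r_0$ the denominator in \eqref{rescproba} stays bounded away from $0$, so $\Phi$ is uniformly bounded and $|V_\mu(\epsilon,s)|\leq C(1+s^4)$. Hence $\|V_\mu w\|_2\leq C\|(1+s^4)w\|_2\leq C'\|e^{|\cdot|}w\|_2$, and Lemma \ref{lemwe1} gives
\[
\|V_\mu w\|_2\leq C\bigl(\|w\|_2+\|e^{|\cdot|}g\|_2\bigr).
\]
Substituting back produces
\[
\|w\|_2\leq\frac{C}{r}\|g\|_2+\frac{C\epsilon}{r}\|w\|_2+\frac{C\epsilon}{r}\|e^{|\cdot|}g\|_2.
\]
Choosing $C_0:=2C$ ensures $C\epsilon/r\leq 1/2$ whenever $r\geq C_0\epsilon$, so the second term on the right is absorbed into the left-hand side, producing \eqref{eq:45}.

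The main obstacle is the unboundedness of $V_\mu$ on $L^2$ arising from the $\epsilon s^4$ term, which prevents a direct Neumann series treatment of $\widehat\LL_{\mu,0}^{-1}(\epsilon V_\mu)$. The exponential weight bound of Lemma \ref{lemwe1} is precisely the tool that converts the polynomial growth of $V_\mu$ into a controlled $L^2$ estimate on $V_\mu w$, at the cost of only a bounded constant and an additional $\epsilon\|e^{|\cdot|}g\|_2$ term on the right-hand side; the use of the \emph{real} harmonic oscillator as leading operator (obtained through the analytic dilation of Section \ref{sss332}) is what makes the sharp resolvent bound $C/r$ available in the first place.
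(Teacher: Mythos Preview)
Your proof is correct and follows essentially the same route as the paper: rewrite $\widehat\LL_{\mu,\epsilon}w=g$ as $\widehat\LL_{\mu,0}w=g-\epsilon R w$, use the spectral theorem for the (real, self-adjoint) harmonic oscillator to get the $C/r$ resolvent bound on $\partial B(\mu_{k,0},r)$, control the polynomially growing remainder $Rw$ by $\|e^{|\cdot|}w\|_2$, invoke Lemma~\ref{lemwe1} to pass this to $\|w\|_2+\|e^{|\cdot|}g\|_2$, and absorb the $C\epsilon/r\,\|w\|_2$ term by choosing $C_0$ large enough. Your write-up is in fact slightly more explicit than the paper's in justifying the $C/r$ bound and the polynomial control of $V_\mu$.
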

\begin{proof}
Using  \eqref{eq:43}, we rewrite \eqref{eq:48} in the form
\begin{equation} \label{eq:43bis}
\Big(  - \frac{d^2}{ds^2}  + 2 s^2  +  2 e^{3i\frac \pi 4}   \mu \Big) w = g  + \epsilon  R(s,\epsilon,\mu) w   \,,
\end{equation}
where
\begin{equation}\label{eq:43ter}
R(s,\varepsilon,\mu)  := - 2i \mu s ^2 + e^{3i \frac \pi 4} s^4 + e^{i\frac \pi 4}
\mu^2+ e^{i \frac \pi 4} \Phi (\epsilon, e^{i\frac \pi 8} s,\mu) ) 
\end{equation}

Using the spectral theorem for the harmonic oscillator, we immediately
obtain that for $\varepsilon$ and $r$ small enough we have 
\begin{displaymath}
\|w\|_2\leq \frac{C}{r} \left( \|g\|_2 + \epsilon  \|w\| + \epsilon \| e^\varphi w\|\right)\,.
\end{displaymath}
Using Lemma \ref{lemwe1} we then deduce
\begin{displaymath}
\|w\|_2\leq \frac{C}{r} \left( \|g\|_2 + \epsilon \|w\| + \epsilon \| e^\varphi g\|\right)\,.
\end{displaymath}
Assuming that $\epsilon /r$ is small enough (which is obtained via a suitable
choice of $C_0$) we obtain \eqref{eq:45}.
\end{proof}

\subsubsection{Proof of Proposition \ref{prop:point-exist}}
\begin{lemma}
  \label{lem:point-exist}
  For every $k\in\N$ there exist $\epsilon_k>0$ and $R_k>0$, such that for
  all $0<\epsilon<\epsilon_k$ there exists $\mu'_k(\epsilon) \in B(\mu_{k,0},R_k\epsilon) $ for
  which $\widehat{\mathcal L}_{\mu'_k(\epsilon)}$ has a non trivial 
  kernel.
\end{lemma}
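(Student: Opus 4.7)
The plan is to reduce the nonlinear spectral problem for $\widehat{\mathcal L}_{\mu,\epsilon}$ to a scalar holomorphic equation via a Grushin (Schur) reduction centred at $\mu_{k,0}$, and then locate its zero by Rouch\'e's theorem. Let $\phi_{k,0}$ denote the normalized eigenfunction of the real harmonic oscillator $H_0=-d^2/ds^2+2s^2$ associated with the eigenvalue $(2k-1)\sqrt{2}$. Since, by \eqref{defmuk0}, $2e^{3i\pi/4}\mu_{k,0}=-(2k-1)\sqrt{2}$, the function $\phi_{k,0}$ spans the one-dimensional kernel of $\widehat{\mathcal L}_{\mu_{k,0},0}$, and $0$ is a simple eigenvalue of this self-adjoint operator.

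First I would introduce the Grushin augmentation
$$\mathcal P(\mu,\epsilon)=\begin{pmatrix}\widehat{\mathcal L}_{\mu,\epsilon}&\phi_{k,0}\\ \langle\,\cdot\,,\phi_{k,0}\rangle&0\end{pmatrix}\colon D(\widehat{\mathcal L}_{\mu,\epsilon})\times\mathbb C\longrightarrow L^2(\mathbb R)\times\mathbb C,$$
which at $(\mu,\epsilon)=(\mu_{k,0},0)$ is bijective by standard Fredholm theory. Using the weighted a priori estimate of the preceding lemma, I would show that $\mathcal P(\mu,\epsilon)$ remains bijective on $\overline{B(\mu_{k,0},r_0)}\times[0,\epsilon_0]$ for sufficiently small $r_0,\epsilon_0$, and denote its inverse by
$$\mathcal E(\mu,\epsilon)=\begin{pmatrix}E(\mu,\epsilon)&E_+(\mu,\epsilon)\\ E_-(\mu,\epsilon)&E_{-+}(\mu,\epsilon)\end{pmatrix}.$$
The standard Grushin identity then yields $0\in\sigma(\widehat{\mathcal L}_{\mu,\epsilon})\Leftrightarrow E_{-+}(\mu,\epsilon)=0$, and $\mu\mapsto E_{-+}(\mu,\epsilon)$ is holomorphic on $B(\mu_{k,0},r_0)$ since the family $\widehat{\mathcal L}_{\mu,\epsilon}$ depends polynomially on $\mu$.

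A first-order computation at $\epsilon=0$ gives
$$E_{-+}(\mu,0)=-\langle\big(\widehat{\mathcal L}_{\mu,0}-\widehat{\mathcal L}_{\mu_{k,0},0}\big)\phi_{k,0},\phi_{k,0}\rangle+O(|\mu-\mu_{k,0}|^2)=-2e^{3i\pi/4}(\mu-\mu_{k,0})+O(|\mu-\mu_{k,0}|^2),$$
so that $|E_{-+}(\mu,0)|\geq c_0|\mu-\mu_{k,0}|$ on $B(\mu_{k,0},r_0)$ for some $c_0>0$ (after shrinking $r_0$). For the $\epsilon$-dependence, the resolvent identity reduces the matter to bounding $\epsilon\langle R(\cdot,\epsilon,\mu)\phi_{k,0},\phi_{k,0}\rangle+O(\epsilon^2)$; although $R$ from \eqref{eq:43ter} contains an unbounded $s^4$ term, testing it against the Gaussian-decaying $\phi_{k,0}$ yields an $O(1)$ contribution, while the preceding weighted estimate controls the remainder in the $L^2(\mathbb R,e^{|\cdot|}ds)$-scale. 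This produces
$$|E_{-+}(\mu,\epsilon)-E_{-+}(\mu,0)|\leq C\epsilon\qquad\text{for all }\mu\in B(\mu_{k,0},r_0).$$

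Setting $R_k:=2C/c_0$, on $\partial B(\mu_{k,0},R_k\epsilon)$ (for $\epsilon$ small enough that $R_k\epsilon\leq r_0$) we have
$$|E_{-+}(\mu,0)|\geq c_0 R_k\epsilon=2C\epsilon>C\epsilon\geq|E_{-+}(\mu,\epsilon)-E_{-+}(\mu,0)|,$$
and Rouch\'e's theorem then provides a unique (and simple) zero $\mu'_k(\epsilon)\in B(\mu_{k,0},R_k\epsilon)$ of $E_{-+}(\cdot,\epsilon)$, i.e.\ a value of $\mu$ for which $\ker\widehat{\mathcal L}_{\mu'_k(\epsilon),\epsilon}\neq\{0\}$. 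The main obstacle is the $O(\epsilon)$ bound on $E_{-+}(\mu,\epsilon)-E_{-+}(\mu,0)$: because $\epsilon R(s,\epsilon,\mu)$ is unbounded in $s$, ordinary operator-norm perturbation theory fails, and one must exploit the Gaussian decay of $\phi_{k,0}$ together with the weighted $L^2(\mathbb R,e^{|\cdot|}ds)$-estimate of the preceding lemma. This is precisely why that weighted analysis was carried out first.
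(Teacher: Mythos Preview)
Your Grushin--Rouch\'e approach is correct but takes a different route from the paper. The paper argues by contradiction: assuming $\widehat{\mathcal L}_{\mu,\epsilon}$ is invertible on all of $B(\mu_{k,0},r)$, it applies the operator to the Hermite function $w_k$, inverts, pairs with $w_k$, and integrates in $\mu$ around $\partial B(\mu_{k,0},r/2)$. By the assumed holomorphy the contour integral of $\langle w_k,\widehat{\mathcal L}_{\mu,\epsilon}^{-1}w_k\rangle$ must vanish, yet the leading term $1/\nu(\mu)$ with $\nu(\mu)=2e^{3i\pi/4}(\mu-\mu_{k,0})$ contributes a nonzero residue, and the weighted estimate \eqref{eq:45} bounds the $O(\epsilon/r)$ remainder; choosing $r=R_k\epsilon$ with $R_k$ large yields the contradiction. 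Your route is more systematic and delivers a bonus (uniqueness and simplicity of the zero in the disk), while the paper's argument is shorter because it never has to establish invertibility of an augmented operator on the full disk. Two technical points deserve to be made explicit in your write-up: (i) invertibility of $\mathcal P(\mu,\epsilon)$ on all of $\overline{B(\mu_{k,0},r_0)}$ does not follow from \eqref{eq:45}, which only concerns $\partial B(\mu_{k,0},r)$; you need instead Fredholm index zero together with an injectivity argument that combines \eqref{eq:44} (so that $\|\epsilon R u\|_2\leq C\epsilon\|e^{|\cdot|}u\|_2$) with the harmonic-oscillator Grushin inverse, exactly paralleling the proof of \eqref{eq:45} but with the Grushin replacing the resolvent; (ii) including $\epsilon=0$ in your parameter range is delicate, since the natural domain of $\widehat{\mathcal L}_{\mu,0}$ (harmonic, $s^2u\in L^2$) differs from that of $\widehat{\mathcal L}_{\mu,\epsilon}$ for $\epsilon>0$ (quartic, $s^4u\in L^2$); it is cleaner to work only with $\epsilon>0$ and compare $E_{-+}(\mu,\epsilon)$ directly to the exact expression $-2e^{3i\pi/4}(\mu-\mu_{k,0})$.
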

\begin{proof}
  Let $w_k$ denote the $k$-th normalized eigenfunction of the harmonic
  oscillator  $- d^2/ds^2  + 2 s^2$, corresponding to the
  eigenvalue $(2k-1)/\sqrt{2}$. By \eqref{eq:43} we have that
\begin{displaymath} 
g_k:=  \widehat {\mathcal L}_{\mu,\epsilon} w_k = [(2k-1) \sqrt{2} +  2 e^{3i\frac \pi 4}   \mu ] w_k  + \epsilon  R(s,\epsilon,\mu) w_k   \,.
\end{displaymath}
We now define
\begin{displaymath}
\mu =\mu_{k,0} + \rho  e^{i\theta}\,, \, \nu(\mu,k) =  [(2k-1) \sqrt{2} +
2e^{3i\frac \pi 4}   \mu] \,,   
\end{displaymath}
where $\mu_{k,0}$ is defined in \eqref{defmuk0}, to obtain
\begin{equation} \label{eq:43a}
  \widehat {\mathcal L}_{\mu,\epsilon} w_k = \nu (\mu,k,\theta) w_k  + \epsilon  R(s,\epsilon,\mu) w_k   \,.
\end{equation}
Note that
\begin{displaymath}
\nu(\mu,k,\theta) =  2 \rho e^{3i\frac \pi 4 +\theta}  +\mathcal O  (\epsilon) \,.
\end{displaymath}
Suppose, for a contradiction,  that $ \widehat {\mathcal
  L}_{\mu,\varepsilon}$ is invertible for all $\mu\in B(\mu_{k,0},r)$. Then in this ball, $\mu
\mapsto   \widehat {\mathcal L}_{\mu}^{\,-1}$ is an holomorphic family of
compact operators acting on $L^2(\mathbb R)$. 
By \eqref{eq:43a}, we may write
\begin{equation} \label{eq:43b}
\frac{1}{\nu(\mu,k,\theta) } w_k = (\widehat {\mathcal L}_{\mu,\epsilon})^{-1} w_k  +
\frac{\epsilon}{\nu (\mu,k,\theta) }  (\widehat {\mathcal L}_{\mu,\epsilon})^{-1}  R(s,\epsilon,\mu)
w_k   \,. 
\end{equation}
We now take the scalar product with $w_k$ and integrate along a circle of radius $r/2$ centered at $\mu_{k,0}$, assuming that 
\begin{displaymath}
r \geq C \epsilon
\end{displaymath}
 for some, sufficiently large, $C>0$, 
\begin{multline}
  \label{eq:52}
\Big|\int_{\partial B(\mu_{k,0} ,r/2)}\langle w_k, \widehat{\mathcal  L}_{\mu,\epsilon}^{-1}w_k \rangle\,d\mu
- \frac 12 e^{- i \frac{3\pi}{4}} \Big|\leq \hat C \frac{\epsilon}{r}
\Big|\int_{\partial B(\mu_{k,0},r)/2} \langle w_k, \widehat{\mathcal  L}_{\mu,\epsilon}^{-1} R w _k\rangle\,d\mu\Big|  \,.
\end{multline}

We now turn to estimate $\|\widehat{\LL}_{\mu,\varepsilon}^{-1}R w_k\|_2$. To this end we use
\eqref{eq:45} to establish that
\begin{equation}
\label{eq:55a}
\|\widehat{\LL}_{\mu,\epsilon}^{-1}R w_k\|_2 \leq \frac{C}{r}(\epsilon \|e^{|\cdot|} R w_k\|_2+\|R w_k\|_2)\,.
\end{equation}
Using the well-known exponential decay of $w_k(\omega)$ as $|\omega|\to\infty$
(an Hermite function \cite{abst72}) and the polynomial
  boundedness of $R$ we obtain
\begin{equation}
\label{eq:55}
\|\widehat{\LL}_{\mu,\epsilon }^{-1}R w_k\|_2 \leq \frac{C_k}{r}\,.
\end{equation}
Finally, by the assumed holomorphy of $\widehat{\LL}_{\mu,\epsilon}^{-1}$,
\begin{equation}
  \label{eq:52f}
\frac 12  = \Big|\int_{\partial B(\mu_k,r/2)}\langle w_k, \widehat{\mathcal  L}_{\mu,\epsilon}^{-1}w_k \rangle\,d\mu
- \frac 12 e^{- i \frac{3\pi}{4}} \Big|\leq \hat C  \frac{\epsilon}{r}
\end{equation}
Letting $r= R_k \epsilon$ leads to a contradiction for
sufficiently large $R_k$ and $0< \epsilon \leq \epsilon_k$.  This completes the proof
of the lemma.
\end{proof}

\begin{remark} 
\label{rem:original-spectrum}
  By \eqref{eq:123bis} and \eqref{eq:121} it follows that for
  $\Lambda=i + \epsilon \mu$ we have that
  \begin{displaymath}
     0\in\sigma(\widehat{\mathcal  L}_{\mu,\epsilon})\Leftrightarrow \Lambda \in\sigma(\B_\epsilon) \,.
  \end{displaymath}
By Lemma \ref{lem:point-exist} we then obtain that for each $k\in\N$
there exists $R_k>0$ and \break $\Lambda_k\in B(i+\mu_{k,0}\epsilon,R_k\epsilon^2)$ which belongs
to $\sigma(\B_\epsilon)$. We can now easily conclude \eqref{eq:105} from
\eqref{eq:49}.   
\end{remark}
\subsection{Reminder on the complex harmonic oscillator}
We recall that the complex harmonic
oscillator $\hg$, constitutes the principal part of $ \Mg_{\mu,\epsilon}$.  We recall that 
  \begin{equation}
    \label{eq:30}
\hg = -\frac{d^2}{d\xi^2} - 2 i\xi ^2 \,,
  \end{equation}
 is  defined on $D(\hg)=H^2(\R)\cap L^2(\R;\xi^2\,d\xi)$, and that its spectrum is
 \begin{equation}
   \label{eq:33}
 \sigma (\hg):=\N_{-\pi/4}=\{\sqrt{2} e^{-i\pi/4}(2k-1)\}_{k=1}^\infty\,.
 \end{equation}
Moreover, its numerical range is $\mathbb C_{+-} =\{z\in \mathbb C, \Re z\geq 0\,,\, \Im z \leq 0\}$.
\begin{proposition}\label{prop3.15}
The following estimates for the resolvent of $\hg$ hold:
\begin{enumerate}
\item If $z\notin \mathbb C_{+-}$, then $z\in \rho(\hg):=\mathbb C \setminus \sigma(\hg)$ and 
  \begin{subequations}
  \label{eq:31}
    \begin{equation}
\| (\hg -z)^{-1} \| \leq d(z, \mathbb C_{+-})^{-1} \,.
\end{equation}
\item For any compact set $K\in \mathbb C$, there exists a constant $C_K$ such that, for any $z\in K\cap \rho(\hg)$ we have
\begin{equation}
\| (\hg -z)^{-1} \| \leq  C_K \Big(\frac{1}{d(z,\sigma(\hg))}+1\Big)\,.
\end{equation}
\item There exist $\delta_0
  >0$ and $B_0 >0$ such that for all  $z\in \rho(\hg)$ such that 
  \begin{equation}
\label{eq:34}
    \Re z \geq 0 \mbox{  and } |\Im z| \leq \delta_0 (\Re z)^3 \,,
  \end{equation}
it holds that
    \begin{equation}
\|(\hg - z )^{-1}\|\leq B_0 \Big(\frac{1}{d(z,\sigma(\hg))}+
  \frac{1}{1+|\Re z |^{1/3}} \Big)\,.
\end{equation}
\end{subequations}
\end{enumerate}
\end{proposition}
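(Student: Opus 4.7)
Part (i) follows from a numerical range computation. For $u\in C_0^\infty(\R)$, integration by parts gives $\langle \hg u,u\rangle=\|u'\|_2^2-2i\|\xi u\|_2^2$, so the numerical range $W(\hg)$ is contained in $\C_{+-}$. For $z\notin\C_{+-}$, Cauchy--Schwarz then yields $\|(\hg-z)u\|_2\geq d(z,\C_{+-})\|u\|_2$, so $\hg-z$ is injective with closed range. The formal adjoint $\hg^*=-d^2/d\xi^2+2i\xi^2$ has numerical range in $\C_{++}$, and for $z\notin\C_{+-}$ one has $\bar z\notin \C_{++}$, hence $\hg^*-\bar z$ is injective and $\hg-z$ has dense range. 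Combined, this gives $z\in\rho(\hg)$ together with (3.31a).

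For part (ii), the embedding $D(\hg)\subset H^2(\R)\cap L^2(\R;\xi^4\,d\xi)\hookrightarrow L^2(\R)$ is compact, so $\hg$ has compact resolvent and $\sigma(\hg)$ is discrete. The analytic-dilation identity $U(i\pi/8)^{-1}\hg\,U(i\pi/8)=e^{-i\pi/4}\Hg$ with $\Hg=-d^2/ds^2+2s^2$ (a direct computation in the spirit of \S3.3.5) shows that each eigenvalue of $\hg$ is algebraically simple, so $(\hg-z)^{-1}$ has only simple poles. On the compact set $K$ the set $\sigma(\hg)\cap K$ is finite and
\begin{equation*}
(\hg-z)^{-1}=\sum_{\lambda_j\in\sigma(\hg)\cap K}\frac{P_j}{\lambda_j-z}+H_K(z),
\end{equation*}
with $H_K$ holomorphic on a neighbourhood of $K$; taking $C_K=\max_j\|P_j\|+\sup_{z\in K}\|H_K(z)\|$ yields (3.31b).

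Part (iii) is the crux, and the plan is to split into two regimes. For $\Re z\leq M_0$ (a fixed threshold) apply (ii) to the compact set $\{z:\,0\leq \Re z\leq M_0,\ |\Im z|\leq\delta_0 M_0^3\}$, which absorbs both terms of (3.31c). For $\Re z>M_0$, rescale $\xi=\tau^{1/2}\eta$ and $u(\xi)=\tau^{-1/4}\tilde u(\eta)$ with $\tau=\Re z$, so that $(\hg-z)u=f$ becomes the semiclassical equation
\begin{equation*}
\bigl(-h^2 \tfrac{d^2}{d\eta^2}-2i\eta^2-\zeta\bigr)\tilde u=h\tilde f,\qquad h=\tau^{-1},\ \zeta=1+i\,\Im z/\tau,
\end{equation*}
where the hypothesis $|\Im z|\leq\delta_0\tau^3$ forces $|\Im\zeta|\leq\delta_0\tau^2$. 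I would then construct a right parametrix for this non-self-adjoint semiclassical operator by gluing a WKB parametrix away from the two complex turning points $2i\eta_\pm^2=\zeta$ to Airy-model local parametrices near each turning point; the Airy rescaling is the source of the $h^{2/3}$ gain in the operator norm, which after undoing the rescaling becomes the $|\Re z|^{-1/3}$ factor in (3.31c).

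The main obstacle is controlling this parametrix uniformly in $\zeta$ over the rescaled spectral region. The two turning points are complex and migrate with $\zeta$, so one must verify that they remain well separated and that the matching between the Airy-type and WKB-type parametrices is uniform. The constant $\delta_0$ is tuned to keep the turning-point configuration in a regime where the Airy approximation is accurate and error terms stay bounded by $h^{2/3}$; the residual term $1/d(z,\sigma(\hg))$ then absorbs the (finitely many) true poles of the resolvent that the parametrix construction inevitably misses inside the rescaled model.
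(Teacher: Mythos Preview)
Your arguments for items (i) and (ii) coincide with the paper's: (i) is exactly the numerical-range bound (the paper merely cites \cite[Remark~14.14]{hebook13}), and (ii) is the same finite Laurent decomposition at the simple poles inside $K$ (the paper invokes \cite[Proposition~14.12]{hebook13} for the simplicity of the eigenvalues rather than your dilation identity, but the mechanism is identical).

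For item (iii) the paper takes a much shorter route than your semiclassical WKB/Airy parametrix. It applies the Fourier transform to $(\hg-z)v=g$: since the transform swaps $-d^2/d\xi^2$ and multiplication by $\xi^2$, the equation becomes
\[
\Bigl(-2\frac{d^2}{dx^2}+i(x^2-\Re z)\Bigr)\hat v=i\hat g+\Im z\,\hat v,
\]
and the operator on the left is precisely the model treated in \cite[Proposition~14.13]{hebook13}, which already supplies $\|\hat v\|_2\le C(\Re z)^{-1/3}\|\,\cdot\,\|_2$ on the right-hand side; the term $\Im z\,\hat v$ is then absorbed back into the left for $\delta_0$ small, and the small-$\Re z$ range is handled by item (ii) exactly as you propose. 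Your turning-point construction would ultimately reproduce the same $|\Re z|^{-1/3}$ gain, but at the cost of a full matched-asymptotics analysis that the Fourier trick bypasses in two lines. Incidentally, your own computation that the rescaled parameter satisfies $|\Im\zeta|\le\delta_0\tau^2$, hence is unbounded, is a symptom of a typo in \eqref{eq:34}: both the paper's absorption step and its subsequent application via \eqref{eq:25b} actually require $|\Im z|\le\delta_0(\Re z)^{1/3}$ (equivalently $|\Im z|^3\le\delta_0\Re z$), under which your rescaling gives $|\Im\zeta|\le\delta_0 h^{2/3}$ and your programme becomes viable---though still unnecessary given the Fourier shortcut.
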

\begin{proof}~\\
The first item is a consequence of the sectoriality of $\hg$  (See \cite[Remark 14.14]{hebook13}).\\
 For the second item, by  \cite[Proposition
14.12]{hebook13}, we may use Riesz-Schauder theory to conclude that
for sufficiently large $N$ we have
\begin{displaymath}
  (\hg-z)^{-1}= \sum_{n=1}^N
  \frac{\Pi_n}{ ( \sqrt{2} e^{-i\pi/4}(2n-1)-z)} + T_N(z)\,,
\end{displaymath}
where $\Pi_n$ denotes the projection on the $n$'th eigenfunction of
$\hg$ and $T_N(z)$ is holomorphic in $K$. \\
Finally, we prove the third item. To this end,
let $v\in D(\hg)$ and $g\in L^2(\R)$ satisfy
\begin{displaymath}
  (\hg-z)v=g\,.
\end{displaymath}
  Applying a  Fourier transform to \eqref{eq:30} yields
  \begin{equation}\label{eq:31aa} 
     \Big(- 2 \frac{d^2}{dx^2}  + i(x^2- \Re z )\Big)\hat{v}=
     i\hat{g}+\Im z \, \hat{v} \,.
  \end{equation}
  Using \cite[Proposition 14.13]{hebook13} yields that there exists
  $C>0$ such that for $\Re z  \geq C$ we have
  \begin{displaymath}
       \|v\|_2\leq\frac{C}{\Re z ^\frac 13}(\|g\|_2+ |\Im z  | \|v\|_2)\,.
  \end{displaymath}
By \eqref{eq:34} we can conclude that for sufficiently small $\delta_0$
\begin{equation}
\label{eq:32}
   \|v\|_2\leq\frac{C}{\Re z ^{1/3}}\|g\|_2\,.
\end{equation}
For $\Re z \leq C$ it holds that $|\Im z| \leq C^\frac 13 \delta_0$, and hence
$z$ belongs to a compact set in $\C$ and one can conclude the proof of
proposition  from (\ref{eq:31}b).
\end{proof}

\subsection{Application to resolvent estimates}

We continue the analysis of the spectral properties of $\B_\epsilon$ by
obtaining resolvent estimates for $\B_\epsilon$ in a set in the form
\begin{displaymath}
{\mathcal V} (\epsilon,\varrho):=\{\Lambda \in \mathbb C\,,\, \Lambda_i\geq 1 /2\,,\,  
  \Lambda_r\leq\varrho\epsilon \}\,.
\end{displaymath}
As $\Lambda=i+\epsilon\mu$ it holds that 
\begin{displaymath}
\Lambda_r=\epsilon \mu_r\,, \Lambda_i = 1+\epsilon \mu_i\,,
\end{displaymath}
Equivalently we may write, 
  \begin{displaymath}
    \Omega(\epsilon,\varrho) = \Big\{\mu \in \mathbb C\,,\, \mu_i\geq -\frac{1}{2\epsilon}\,,\, \mu_r\leq\varrho  \Big\}\,,
  \end{displaymath}
  where $\mu_r=\Re\mu$ and $\mu_i=\Im\mu$.\\
We begin by rewriting \eqref{rescprob} in the form
\begin{displaymath}
  \widehat \Mg_{\mu,\epsilon} := - \frac{d^2}{d\tilde \omega^2}  - 2i(1+\epsilon\mu_i)\tilde \omega^2+i[2\mu_r
  +i\mu_i(2+\epsilon\mu_i-2i\epsilon\mu_r)] + \epsilon(\tilde \omega^2-\mu_r)^2 + \epsilon \Phi(\epsilon,\tilde \omega,\mu) \,,
\end{displaymath}
 We further apply the rescaling
\begin{equation}
\label{eq:64}
  \xi =\tilde{\omega}\, [1+\epsilon\mu_i]^{1/4}
\end{equation}
to obtain, after division by $[1+\epsilon \mu_i]^\frac 12$, 
\begin{equation}\label{eq:25}
  \Mg_{\mu,\epsilon}= - \frac{d^2}{d\xi^2}  - 2i\xi^2-z_0+
  \frac{\epsilon}{[1+\epsilon\mu_i]^{1/2}}[( [1+\epsilon\mu_i]^{-1/2}\xi^2 -\mu_r)^2 +
  \Phi(\epsilon, [1+\epsilon\mu_i]^{-1/4}\xi,\mu) \,,
\end{equation}
where
\begin{equation}\label{eq:25a}
  z_0(\mu,\epsilon) : = \frac{\mu_i(2+\epsilon\mu_i)-2i\mu_r(1+\epsilon\mu_i) }{[1+\epsilon\mu_i]^{1/2}}
\end{equation}
Note that for $\mu \in \Omega(\epsilon,\varrho)\cap\{  \mu_i\geq 16\varrho^3\delta_0^{-1}\} \cap \{|\mu_r|\leq \varrho\}$,  we have, \\
\begin{equation}\label{eq:25b}
 |\Im z_0|^3   \leq 8 \varrho^3 \Big(\frac{1}{\mu_i}+\epsilon\Big) \Re z_0 \,.
\end{equation}
We then have 
\begin{lemma}
For any $\varrho >0$, there
  exist $\epsilon_0 >0$ and $B_1 >0$ such that if $\epsilon \in (0,\epsilon_0]$ and $\mu
  \in \Omega (\epsilon,\varrho)$, then
    \begin{equation}
  \label{eq:31a}
\|(\hg -z_0)^{-1}\|\leq B_1 \Big(\frac{1}{d(z_0, \sigma (\hg))}+
  \frac{1}{1+|\Re z_0|^{1/3}} \Big)\,.
\end{equation}
\end{lemma}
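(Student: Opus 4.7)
The plan is to apply Proposition \ref{prop3.15} to $z_0=z_0(\mu,\epsilon)$ after splitting $\Omega(\epsilon,\varrho)$ into subregions, chosen so that exactly one item of that proposition applies on each and yields an estimate in the claimed form.

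First, on the main subregion $\mathcal R_1:=\{\mu_i\geq 16\varrho^3/\delta_0\}\cap\{|\mu_r|\leq\varrho\}$, estimate \eqref{eq:25b} gives $|\Im z_0|^3\leq 8\varrho^3(1/\mu_i+\epsilon)\Re z_0$. On the one hand, the definition of $\mathcal R_1$ and a suitably small choice of $\epsilon_0=\epsilon_0(\varrho,\delta_0)$ make the factor $8\varrho^3(1/\mu_i+\epsilon)$ at most $\delta_0$, so $|\Im z_0|^3\leq\delta_0\Re z_0$ and hence $|\Im z_0|\leq\delta_0^{1/3}(\Re z_0)^{1/3}$. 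On the other hand, the explicit formula \eqref{eq:25a} yields $\Re z_0\geq c(\varrho,\delta_0)$, a constant that can be made arbitrarily large by increasing $16\varrho^3/\delta_0$ if needed, which upgrades the previous bound to the Proposition \ref{prop3.15}(3) hypothesis $|\Im z_0|\leq\delta_0(\Re z_0)^3$. Item (3) then delivers the claimed estimate directly on $\mathcal R_1$ with $B_1$ proportional to $B_0$.

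Second, on the complement $\mathcal R_2:=\Omega(\epsilon,\varrho)\setminus\mathcal R_1$, I would further split using $\Re z_0=\mu_i(2+\epsilon\mu_i)/\sqrt{1+\epsilon\mu_i}$ and $\Im z_0=-2\mu_r\sqrt{1+\epsilon\mu_i}$. If $\mu_i<0$ then $\Re z_0<0$, and if $\mu_r<-\varrho$ then $\Im z_0>\sqrt{2}\,\varrho$; in either case $z_0\notin\overline{\C_{+-}}$, and Proposition \ref{prop3.15}(1) gives $\|(\hg-z_0)^{-1}\|\leq d(z_0,\C_{+-})^{-1}$. Since $\sigma(\hg)\subset\partial\C_{+-}$ and its nearest element is $1-i$, an elementary geometric argument (comparing the distance from $z_0$ to the closed cone $\C_{+-}$ with the distance to the half-line $\sigma(\hg)$) yields $d(z_0,\C_{+-})^{-1}\leq C\bigl(d(z_0,\sigma(\hg))^{-1}+(1+|\Re z_0|^{1/3})^{-1}\bigr)$, as required. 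In the remaining case $\mu_i\in[0,16\varrho^3/\delta_0]$ and $\mu_r\in[-\varrho,\varrho]$, the point $z_0$ lies in a fixed compact set $K\subset\C$ uniformly for $\epsilon\in(0,\epsilon_0]$, and Proposition \ref{prop3.15}(2) immediately produces the claimed estimate with a constant $C_K$.

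The main obstacle is the quantitative verification of the Proposition \ref{prop3.15}(3) hypothesis on $\mathcal R_1$: one must carefully track constants in \eqref{eq:25b} and \eqref{eq:25a} to ensure the same $\delta_0$ fixed by Proposition \ref{prop3.15} is attained. A secondary technical point is the geometric comparison $d(z_0,\C_{+-})\sim d(z_0,\sigma(\hg))$ used in the outside-$\overline{\C_{+-}}$ sub-case of $\mathcal R_2$, which must remain uniform as $z_0$ approaches $\partial\C_{+-}$ near the origin (where $d(z_0,\sigma(\hg))$ stays bounded below by $\sqrt 2-|z_0|$ but $d(z_0,\C_{+-})$ may shrink); here the extra term $(1+|\Re z_0|^{1/3})^{-1}$ in the claimed estimate absorbs the discrepancy.
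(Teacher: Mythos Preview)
Your overall decomposition matches the paper's, and the use of item (3) on $\mathcal R_1$ and item (2) on the compact part of $\mathcal R_2$ is exactly right. However, there is a genuine gap in your treatment of the sub-case $\mu_i<0$ with $|\mu_r|\leq\varrho$.

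You invoke item (1) of Proposition \ref{prop3.15} for all $\mu_i<0$ and then assert the geometric inequality
\[
d(z_0,\C_{+-})^{-1}\leq C\Big(d(z_0,\sigma(\hg))^{-1}+\big(1+|\Re z_0|^{1/3}\big)^{-1}\Big).
\]
This inequality is \emph{false} when $z_0$ is close to $\partial\C_{+-}$ near the origin. Take for instance $\mu_r=0$ and $\mu_i=-\delta$ with $\delta>0$ small: then $z_0\approx -2\delta\in\R_-$, so $d(z_0,\C_{+-})=2\delta$ and the left-hand side equals $1/(2\delta)\to\infty$, whereas $d(z_0,\sigma(\hg))\to\sqrt{2}$ and $(1+|\Re z_0|^{1/3})^{-1}\to 1$, so the right-hand side stays bounded. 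The term $(1+|\Re z_0|^{1/3})^{-1}$ is always $\leq 1$ and therefore cannot ``absorb'' a quantity that blows up; your remark at the end does not repair this.

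The fix, which is what the paper does, is to enlarge the compact region: for $\mu_i\in[-C,\,16\varrho^3/\delta_0]$ and $|\mu_r|\leq\varrho$ (with $C>0$ a fixed constant), $z_0$ still ranges over a fixed compact set $K$ independent of $\epsilon$, so item (2) applies directly. Item (1) is then reserved for $\mu_i<-C$, where $|\Re z_0|\geq 3C/2$ is bounded \emph{below}, and in that regime your geometric comparison does go through, since $d(z_0,\C_{+-})\geq|\Re z_0|$ implies
\[
\frac{d(z_0,\C_{+-})}{1+|\Re z_0|^{1/3}}\geq\frac{|\Re z_0|}{1+|\Re z_0|^{1/3}}\geq c(C)>0.
\]
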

\begin{proof}
  The lemma easily follows Proposition \ref{prop3.15}. \\
  For  $\mu_i>16\varrho^3\delta_0^{-1}$ it follows that $\Re z_0>0$ and that there exists
  $\epsilon_0(\delta_0,\varrho)$ such that for $0 < \epsilon \leq \epsilon_0(\varrho,\delta_0)$, we have
\begin{displaymath}
  8 \varrho^3 \Big(\frac{1}{\mu_i}+\epsilon\Big) \leq \delta_0\,.
\end{displaymath}
Consequently, we may conclude \eqref{eq:31a} from (\ref{eq:31}c) in
the case $|\mu_r| \leq \varrho$.  In the case $\mu_r < -\varrho$ we use
(\ref{eq:31}a).  Hence, it remains to treat the case when $ \epsilon \mu_i \in
(-\frac 12,\epsilon16\varrho^3\delta_0^{-1})$. In this case $\Im z_0$ is bounded and $\Re z_0$ has the
sign of $\mu_i$. If $\mu_i\geq - C$, for some $C>0$, then we can apply
(\ref{eq:31}b).  Otherwise when $\mu_i < - C$ we may use (\ref{eq:31}a)
once again to complete the proof.
\end{proof}
\begin{lemma} 
\label{lem:resolv-estim}
Let for   $R>0$ and $\epsilon>0$, 
\begin{equation}
\label{eq:63}
\check \Omega(\epsilon,R)=\{z\in\C \,| \, d(z,\sigma(\hg) )\geq R\epsilon\}\,.
\end{equation}
Then, for all $\varrho >0$, there exist $R_0>0$ and $C >0$ such
that, for all  $R_0<R<\epsilon^{-1}$, 
$\mu \in \Omega(\epsilon,\varrho)$ for which $z_0(\mu,\epsilon)\in \check \Omega
(\epsilon,R)$ and for every pair $(\tilde v,\tilde g) \in D(\Mg_{\mu,\epsilon})\times 
L^2(\R;(\xi^2+1)^2\,d\xi)$ satisfying 
\begin{equation}\label{eq:relvg}
\Mg_{\mu,\epsilon}\tilde v=\tilde g\,,
\end{equation}
we have
\begin{equation}
\label{eq:26}  
  \|\tilde v \|_2\leq \frac{C}{R\epsilon}\,  \left( \|\tilde g\|_2 + \epsilon \|\xi^2 \tilde g\|_2 \right) \,. 
\end{equation}
\end{lemma}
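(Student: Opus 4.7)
The strategy is to view $\Mg_{\mu,\epsilon}$ as a perturbation of $\hg - z_0$ and to combine the resolvent estimate \eqref{eq:31a} from the previous lemma with a weighted energy estimate that controls the perturbation. Concretely, I would write
\begin{displaymath}
  \Mg_{\mu,\epsilon}=\hg-z_0+\epsilon Q_{\mu,\epsilon},\qquad
  Q_{\mu,\epsilon}:=\frac{1}{[1+\epsilon\mu_i]^{1/2}}\Bigl[\bigl([1+\epsilon\mu_i]^{-1/2}\xi^2-\mu_r\bigr)^2+\Phi\bigl(\epsilon,[1+\epsilon\mu_i]^{-1/4}\xi,\mu\bigr)\Bigr],
\end{displaymath}
noting that, for $\mu\in\Omega(\epsilon,\varrho)$, $1+\epsilon\mu_i\in[1/2,\infty)$, $|\mu_r|\leq\varrho$ (or $\mu_r<-\varrho$, in which case the resolvent estimate (\ref{eq:31}a) already applies and we are done), and $\Phi$ is uniformly bounded, so that $Q_{\mu,\epsilon}$ has leading part $\xi^4/[1+\epsilon\mu_i]^{3/2}$ plus terms of order $\xi^2+1$ with uniformly bounded coefficients.

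From $\Mg_{\mu,\epsilon}\tilde v=\tilde g$ we then read off $(\hg-z_0)\tilde v=\tilde g-\epsilon Q_{\mu,\epsilon}\tilde v$, and \eqref{eq:31a} together with the hypothesis $z_0\in\check\Omega(\epsilon,R)$ give, for $R\geq R_0$ sufficiently large (recall $R\epsilon<1$),
\begin{displaymath}
  \|\tilde v\|_2\leq\frac{C}{R\epsilon}\bigl(\|\tilde g\|_2+\epsilon\|Q_{\mu,\epsilon}\tilde v\|_2\bigr).
\end{displaymath}
The core of the argument is then to show that $\epsilon\|Q_{\mu,\epsilon}\tilde v\|_2$, hence effectively $\epsilon\|\xi^4\tilde v\|_2$, can be bounded by $\|\tilde g\|_2+\epsilon\|\xi^2\tilde g\|_2$ plus a fraction of $R\epsilon\|\tilde v\|_2$ that can be absorbed into the left-hand side.

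For this I would perform a two-step weighted energy estimate on \eqref{eq:relvg}. Taking the $L^2$ inner product with $\bar{\tilde v}$ and using the non-negativity of $\Re(\epsilon \xi^4/[1+\epsilon\mu_i]^{3/2})$ together with the imaginary-part identity involving $-2\|\xi\tilde v\|_2^2$ yields a first bound on $\|\xi^2\tilde v\|_2$ in terms of $\|\tilde v\|_2$ and $\|\tilde g\|_2$. Repeating the argument after pairing \eqref{eq:relvg} with $\xi^4\bar{\tilde v}$ (equivalently, applying $\xi^2$ to \eqref{eq:relvg} and using the commutator $[\xi^2,\hg]=2+4\xi\partial_\xi$) produces an analogous bound for $\epsilon\|\xi^4\tilde v\|_2$, with $\|\xi^2\tilde g\|_2$ appearing naturally on the right-hand side from the pairing $\langle \tilde g,\xi^4\bar{\tilde v}\rangle=\langle\xi^2\tilde g,\xi^2\bar{\tilde v}\rangle$. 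Combining these with the resolvent inequality above and choosing $R_0$ large enough to absorb a small constant multiple of $\|\tilde v\|_2$ yields \eqref{eq:26}.

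The main obstacle will be the bookkeeping in the weighted energy estimate: each multiplication by a power of $\xi^2$ creates commutator terms that mix derivatives and multiplication operators, and one must carefully apply Cauchy--Schwarz (with small parameters adapted to the factor $\epsilon^{-1}$ in front of $\|\xi^4\tilde v\|_2^2$) so that the borderline term in $\|\tilde v\|_2$ generated by $|z_0|\|\tilde v\|_2^2$ can be absorbed, via the hypothesis $d(z_0,\sigma(\hg))\geq R\epsilon$ and the bound on $|\Re z_0|^{1/3}$ in (\ref{eq:31}c), into the $C/(R\epsilon)$ prefactor without losing the desired $1/(R\epsilon)$ decay.
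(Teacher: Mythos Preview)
Your plan is essentially the paper's own approach: write $\Mg_{\mu,\epsilon}=\hg-z_0+\epsilon Q_{\mu,\epsilon}$, invoke the resolvent bound \eqref{eq:31a} on $\hg-z_0$, and control $\epsilon\|Q_{\mu,\epsilon}\tilde v\|_2$ by weighted energy identities obtained from pairing \eqref{eq:relvg} with $\tilde v$ and with a quartic weight. The paper pairs with $(\Lambda_i^{-1/2}\xi^2-\mu_r)^3\tilde v$ (imaginary part) and with $\xi^2\tilde v$ (real part) rather than with $\xi^4\tilde v$, but this is a cosmetic difference adapted to the exact form of the perturbation.

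One point deserves sharper attention than your outline gives it. You first write the resolvent bound as a flat $C/(R\epsilon)$ and only at the end mention the $|\Re z_0|^{-1/3}$ decay from (\ref{eq:31}c). In fact the absorption step cannot be done with $C/(R\epsilon)$ alone: for $\mu_i$ large one has $\Lambda_i=1+\epsilon\mu_i$ large and $\Re z_0\sim\Lambda_i^{3/2}/\epsilon$, and the weighted estimate produces a coefficient on $\|\tilde v\|_2$ of order $\epsilon\Lambda_i^{-1/2}[(\Re z_0)_+/\Lambda_i^{3/2}+1]^{1/2}\sim\epsilon^{1/2}\Lambda_i^{-1/2}$, which is \emph{not} small compared with $R\epsilon$ uniformly in $\Lambda_i$. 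What saves you is that in this regime the resolvent norm is $\sim|\Re z_0|^{-1/3}\sim\epsilon^{1/3}\Lambda_i^{-1/2}$ rather than $(R\epsilon)^{-1}$, and the product is $O(\epsilon^{5/6})$. The paper makes this explicit through a four-case analysis on the sign and size of $\mu_i$ (equivalently of $\Re z_0$); your proof will need the same case split, and your energy identities will have to carry the $\Lambda_i$-dependence carefully (e.g.\ the analogue of the paper's bound $\|\xi\tilde v\|_2^2\leq C(\Lambda_i^{1/2}\|\tilde v\|_2^2+\Lambda_i^{-1/2}\|\tilde g\|_2^2)$) rather than suppressing it into unnamed constants.
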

\begin{proof}
The proof is based on presenting $ \Mg_{\mu,\epsilon}$, given by
\eqref{eq:25}, as a perturbation of $\hg$.   We preliminary observe that by global estimates for the
  quartic oscillator  (see for example Helffer-Robert \cite{HeRo0,HeRo}) $\xi^6 \tilde v$, $\xi^4 \tilde v'$, and $\xi^2 \tilde v''$ belong to $L^2$, 
  if $\xi^2 \tilde g \in L^2$. \\
Using the fact that $1+\epsilon
\mu_i=\Lambda_i$ we may write
\begin{equation}\label{eq:99z}
 \Mg_{\mu,\epsilon} =\hg -z_0 + \epsilon \Lambda_i^{-\frac 12} ( \Lambda_i^{-\frac{1}{2}} \xi^2 -\mu_r)^2 +\epsilon \Lambda_i^{-\frac 12} \Phi\,.
\end{equation}
We first attempt to estimate the effect of the perturbation term $\epsilon
\Lambda_i ^{-1/2}\,(\Lambda_i^{-\frac{1}{2}} \xi^2 -\mu_r)^2\,.$ To this end we first
observe that
\begin{equation*}
 \hg - z_0 = -\frac{d^2}{d\xi^2} -\Re z_0  - 2i \Lambda_i^{\frac 12}   (\Lambda_i^{-\frac 12} \xi^2 -\mu_r)\,.
\end{equation*}
Thus, to estimate the perturbation term, we evaluate, having in mind
that \break $(\Lambda_i^{-1/2}\xi^2-\mu_r)^3 \tilde v \in L^2$, the quantity $
-\Lambda_i^{1/2}\Im\langle (\Lambda_i^{-1/2}\xi^2-\mu_r)^3\,\tilde v,\Mg_{\mu,\epsilon} \tilde
v\rangle$ to obtain
  \begin{multline*}
    -\Lambda_i^{1/2}\Im\langle ( \Lambda_i^{-1/2}\xi^2-\mu_r)^3 \tilde v,\tilde g \rangle= 2\Lambda_i\|(\Lambda_i^{-1/2}\xi^2-\mu_r)^2\,\tilde v\|_2^2 \\ - 6\Im\langle
     ( \Lambda_i^{-1/2}\xi^2-\mu_r)^2\,\tilde v, \xi \tilde v^\prime \rangle-\epsilon\Im\Big\langle
    ( \Lambda_i^{-1/2}\xi^2 - \mu_r)^3\,\tilde v,\Phi(\epsilon, \Lambda_i^{-1/4}\xi,\mu) \tilde v\Big\rangle \,,
  \end{multline*}
implying that
    \begin{multline*}
  2\Lambda_i\|(\Lambda_i^{-1/2}\xi^2-\mu_r)^2\,\tilde v\|_2^2 \leq  \Lambda_i^{1/2}\|(\Lambda_i^{-1/2}\xi^2-\mu_r)^2 \tilde v\|_2\,\| (\Lambda_i^{-1/2}\xi^2-\mu_r) \tilde g \|_2  \\ 
  +  6 \|(\Lambda_i^{-1/2}\xi^2-\mu_r)^2 \tilde v\|_2\,\| \xi \tilde v^\prime\|_2 \\ 
+ \epsilon \|(\Lambda_i^{-1/2}\xi^2- \mu_r)^2 \tilde v\|_2 \,\| \Phi(\epsilon,\Lambda_i^{-1/4}\xi,\mu) (\Lambda_i^{-1/2}\xi^2- \mu_r) \tilde v \|_2 \,.
  \end{multline*}
  Next we attempt to bound $ \Phi(\epsilon,\Lambda_i^{1/4}\xi,\mu)$
  under the assumptions of the lemma. Recall from \eqref{rescproba}
  that 
  \begin{equation}\label{rescprobaz}
\Phi(\epsilon,\omega,\mu) :=\frac{(i+ 2 \epsilon \omega^2 + 
  \epsilon \mu)}{(- i + \epsilon \omega^2 - \epsilon \mu)^{2}}  \,.
\end{equation}
For all $\Lambda\in \mathcal V (\epsilon,\varrho)$ it holds that
\begin{displaymath}
\mathbb R_+\ni \tau \mapsto \frac{( 2\tau + \Lambda)}{(\tau  - \Lambda)^{2} }\,,
\end{displaymath} 
is uniformly bounded. Consequently, uniform boundedness of
$\Phi(\epsilon,\omega,\mu)$ follows as well. 
Hence, there exists $C>0$ such that 
\begin{equation}
  \label{eq:27}
\|(\Lambda_i^{-1/2}\xi^2-\mu_r)^2\,\tilde v\|^2_2 \leq \frac{C}{\Lambda_i^2}(\|\xi \tilde v^\prime\|_2^2+ \Lambda_i
\|(\Lambda_i^{-1/2}\xi^2-\mu_r) \,\tilde g\|_2^2 + \epsilon^2\|\tilde v\|_2^2)\,.
\end{equation}
To bound $\|\xi \tilde v^\prime\|_2^2$  we use the identity
\begin{multline*}
 \Re\langle \xi^2 \tilde v, \tilde g \rangle=
 \|\xi \tilde v^\prime\|_2^2 - \|\tilde v\|_2^2 - \Re z_0\,\|\xi \tilde v\|_2^2 \\  
+\epsilon\Lambda_i^{-1/2}\|\xi(\Lambda_i^{-1/2}\xi^2-\mu_r) \tilde v\|_2^2 +\epsilon\, \Re \Big\langle
    \xi^2\tilde v,\Phi(\epsilon,\Lambda_i^{-1/4}\xi,\mu) \tilde v\Big\rangle \,,
\end{multline*}
from which we obtain
\begin{equation}
\label{eq:28}
  \|\xi \tilde v^\prime\|_2^2 \leq (\Re z_0)_+\, \|\xi \tilde v\|_2^2 + C(\|\tilde v\|_2^2 +
  \|\xi^2\tilde g\|_2^2)\,.
\end{equation}
We now use the identity
\begin{displaymath}
   -\Im\langle \tilde v, \tilde g \rangle= 2  \|\xi \tilde v\|_2^2 -2 \mu_r \Lambda_i^{1/2} \|\tilde v\|_2^2 -\epsilon\Lambda_i^{-1/2}\Im\Big\langle \tilde v,\Phi(\epsilon,\Lambda_i^{-1/4}\xi,\mu) \tilde v\Big\rangle \,,
\end{displaymath}
to obtain
\begin{displaymath}
 2 \|\xi \tilde v\|_2^2 \leq    \|\tilde v\|_2 \,  \| \tilde g\|_2   + [2 (\mu_r)_+   \Lambda_i^{1/2}+C  \epsilon\Lambda_i^{-1/2}]\|\tilde v\|_2^2 \,,
\end{displaymath} 
which implies, having in mind that $\Lambda_i>\frac 12$ and $\mu_r\leq \varrho\,$, 
\begin{equation}
\label{eq:50}
\| \xi \tilde  v\|_2^2   \leq      \hat C \, ( \Lambda_i^\frac 12  \|\tilde v\|_2^2 + \Lambda_i ^{-\frac 12} \| \tilde g\|_2^2)  \,.
\end{equation}
Substituting the above into \eqref{eq:28} yields
\begin{displaymath}
   \|\xi \tilde v^\prime\|_2^2 \leq C\,\Big([\Lambda_i^{1/2}(\Re z_0)_++1]\| \tilde v\|_2^2 + 
  \|(\xi^2  +\epsilon^{-1/2}\Lambda_i^{1/2})\tilde g\|_2^2\Big)\,,
\end{displaymath}
which yields, upon substitution into \eqref{eq:27},
\begin{equation}
  \label{eq:29}
\|(\Lambda_i^{-1/2}\xi^2-\mu_r)^2 \tilde v\|_2^2\leq C\,
\Big(\Lambda_i^{-1}\|(\Lambda_i^{-1/2}\xi^2+\epsilon^{-1/2}) \tilde g\|_2^2 + \Big[\frac{(\Re z_0)_+}{\Lambda_i^{3/2}}+1\Big] \| \tilde v\|_2^2 \Big)\,. 
\end{equation}
To complete the proof,  we observe from \eqref{eq:99z} applied to $\tilde v$,  that
\begin{displaymath}
  (\hg-z_0)\tilde v
  =\frac{\epsilon}{\Lambda_i^{1/2}}[(\Lambda_i^{-1/2}\xi^2-\mu_r )^2 +
  \Phi(\epsilon,\Lambda_i^{-1/4}\xi,\mu)] \tilde v +\tilde g\,.
\end{displaymath}
By \eqref{eq:31} and \eqref{eq:29}  it holds that
\begin{equation}\label{eq:29a}
  \| \tilde v\|_2 \leq C \Big(\frac{1}{d(z_0,\sigma(\hg))}+
  \frac{1}{1+ |\Re z_0|^{1/3}}
  \Big)\Big[ \epsilon\Lambda_i^{-1/2} \Big(\frac{(\Re
    z_0)_+}{\Lambda_i^{3/2}}+1\Big)^{1/2}\| \tilde v\|_2 +
  \frac{\epsilon}{\Lambda_i}\|\Lambda_i^{-1/2}\xi^2 \tilde g\|_2+\|\tilde g\|_2\Big]
  \,.  
\end{equation}
We now  show that for any $\eta >0$ there exist $\epsilon_0$ and $R_0$ such
that, for all $(\epsilon ,R) \in (0,\epsilon_0]\times  [R_0,+\infty)$, for $\mu \in \Omega
(\varrho,\epsilon,R)$ we have  
\begin{equation}\label{eq:313new}
\delta(\epsilon,\mu):= \Big(\frac{1}{d(z_0,\sigma(\hg))}+  \frac{1}{1+ |\Re
  z_0|^{1/3}}  \Big) \epsilon\Lambda_i^{-1/2} \Big(\frac{(\Re
  z_0)_+}{\Lambda_i^{3/2}}+1\Big)^{1/2}\leq \eta \,, 
\end{equation}
  where
\begin{displaymath}
  \Omega (\varrho,\epsilon, R):=\Omega (\varrho,\epsilon)\cap \check \Omega (\epsilon,R)\,.
\end{displaymath}
  We consider four different cases.\\
  \begin{enumerate}
  \item If $\mu_i \geq 1$, we observe that by \eqref{eq:25b} and the
    location of $\sigma (\hg)$,  there exists $C_0>0$ such that if $\Re z_0
    \geq C_0$, then $d(z_0,\sigma(\hg)) \geq \frac 1 C_0 \Re z_0$. Thus, if
    $d(z_0,\sigma(\hg))\leq1$,  we obtain, using the fact that $\Lambda_i \geq 1$ for
    $\mu_i\geq0$, that $\delta(\epsilon,\mu)\leq C\epsilon$. For $d(z_0,\sigma(\hg))\geq1$ and $\Re z_0
    \geq C_0$ it holds that  
  \begin{equation}
\label{eq:56}
 \delta(\epsilon,\mu) \leq \hat C \epsilon (1+ |\Re z_0|^\frac{1}{6} \Lambda_i^{-\frac 54})\,.
\end{equation}
As
 \begin{equation*}
0 < \Re  z_0= \frac{\mu_i(2+\epsilon\mu_i) }{[1+\epsilon\mu_i]^{1/2}}= \frac 1 \epsilon (\Lambda_i^2 -1) \Lambda_i^{-\frac 12}\leq \frac{1}{\epsilon} \Lambda_i ^{\frac 32} \,,
\end{equation*}
we obtain that
\begin{displaymath}
 \delta(\epsilon,\mu) \leq \check C \, \epsilon^\frac 56\,.
\end{displaymath}
 \item 
In the case $\Re z_0 \leq C_0$, we have
\begin{equation}
\label{eq:58}
  \delta(\epsilon,\mu) \leq \hat C  \epsilon (1 + d(z_0,\sigma(\hg)^{-1}) \leq C \Big(\epsilon + \frac 1R \Big) \,.
\end{equation}
 \item If $0 \leq \mu_i \leq 1$ and  $d(z_0,\sigma(\hg)\leq1/2$ there exists
  $C>0$ such that $\Re z_0\leq C\, |\Im z_0|$. Furthermore, it holds that
  $\Im z_0<0$ and hence $0\leq \mu_r\leq\varrho$. Consequently, 
  \begin{displaymath}
    |\Im z_0| \leq C\varrho \,,
  \end{displaymath}
and hence $\Re z_0\leq C$ and we may proceed as in item 2.\\
 Otherwise, if
$d(z_0,\sigma(\hg)\geq1/2$, we may invoke \eqref{eq:56} and proceed as in item
1.
 \item Finally, if $\mu_i <0$, we have $\Re z_0 <0$ and $\frac 12 \leq
   \Lambda_i \leq 1$ and hence we can obtain \eqref{eq:58} once again. 
 \end{enumerate}
  For sufficiently small $\eta >0$ we can conclude the existence of
 positive $C$, $\epsilon_0$ and $R_0$ such that for all $(\epsilon ,R) \in
 (0,\epsilon_0] \times   [R_0,+\infty)$ 
  and $\mu \in \Omega (\varrho,\epsilon,R)$  it holds that
 \begin{equation}
\label{eq:26c}
  \| \tilde v \|_2 \leq  C
  \left( d(z_0,\sigma (\hg))^{-1} + \frac{ 1}{| \Re z_0|^{\frac 13} +1}
  \right)  \left( \| \tilde g\|_2 + \epsilon \Lambda_i^{-\frac 32} \|\xi^2 \tilde g\|_2 \right) \,.  
\end{equation}
We can now easily verify \eqref{eq:26}. 
\end{proof}
Coming back to the resolvent of $\tilde {\B} _\epsilon$, we now prove:
\begin{proposition}
  There exists $R_0>0$ and $C>0$, such that, for any ${\bf f}\in
  H^2(\R,\C^3)$, any $(\epsilon,R)$ satisfying $R_0<R<\epsilon^{-1}$, and $\Lambda\in
  \Vg(\epsilon,\varrho)$ for which $\mu\in\check{\Omega}(\epsilon,R)$ it holds that
  \begin{equation}
\label{eq:35}
    \|(\tilde{\B}_\epsilon-\Lambda)^{-1} { \bf f}\|_2\leq \frac{C}{R\epsilon^{5/3}}(\|  {\bf
      f} \|_2+\epsilon^{2} \| {\bf f}_\perp^{\prime\prime}\|_2 + (1+\Lambda_i^2)^{-1/2}\|x  {\bf f}_\perp\|_2) \,,
  \end{equation}
 where ${\bf f}_\perp=(f_1, f_2,0)$.
\end{proposition}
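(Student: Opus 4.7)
The plan is to trace the inhomogeneous resolvent equation $(\widetilde{\B}_\epsilon-\Lambda)\tilde{\bf u}={\bf f}$ through the same chain of reductions used for the point spectrum in Sections \ref{sss331}--\ref{sss332}, and then apply Lemma \ref{lem:resolv-estim} at the end of the chain. First I would apply the rescaling $x\to\epsilon^{2/3}x$, $\varepsilon=\epsilon^{4/3}$, $\lambda=\epsilon^{-2/3}\Lambda$ to pass to $(\check{\B}_\varepsilon-\lambda)\check{\bf u}=\check{\bf f}$; this produces $\|\check{\bf u}\|_2=\epsilon^{-1/3}\|\tilde{\bf u}\|_2$ and $\|\check{\bf f}\|_2=\epsilon^{-1}\|{\bf f}\|_2$ after accounting for the $\epsilon^{2/3}$ convention in \eqref{eq:65}. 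Then I would Fourier-transform and eliminate $\hat u_d,\hat u_3$ exactly as in \eqref{eq:6}--\eqref{eq:8}, to obtain $L\hat u_s=g$ with $g$ given by \eqref{eq:8b}. The substitution $\hat u_s=(\omega^2-\lambda)^{1/2}v$, combined with the identity $\Mg_\lambda=(\omega^2-\lambda)^{1/2}L(\omega^2-\lambda)^{1/2}$ (verified directly by computing the conjugated second-order term), converts this into $\Mg_\lambda v=(\omega^2-\lambda)^{1/2}g$. The analytic dilation $U(\pi/8)$ together with the rescalings $\omega\to\varepsilon^{1/8}\tilde\omega$ and $\xi=(1+\epsilon\mu_i)^{1/4}\tilde\omega$ finally brings the equation into the form $\Mg_{\mu,\epsilon}\tilde v=\tilde G$ handled by Lemma \ref{lem:resolv-estim}.

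The next step is to apply the lemma to get $\|\tilde v\|_2\leq\frac{C}{R\epsilon}(\|\tilde G\|_2+\epsilon\|\xi^2\tilde G\|_2)$, and to identify the right-hand side with the three norms in \eqref{eq:35}. On the rescaled scale $|\omega^2-\lambda|\gtrsim\varepsilon^{-1/2}(1+\Lambda_i^2)^{1/2}$, so the component $\varepsilon^{-1/2}\hat f_3/(\omega^2-\lambda)$ of $g$, after multiplication by $(\omega^2-\lambda)^{1/2}$, becomes $\varepsilon^{-1/2}\hat f_3/(\omega^2-\lambda)^{1/2}$ and is absorbed into the main $\|{\bf f}\|_2$ term via Plancherel. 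The component $(\omega^2-\lambda)^{1/2}\hat f_s$ is controlled by splitting the integration region into $|\omega|\lesssim\Lambda_i^{1/2}$ (on which $(\omega^2-\lambda)^{1/2}\lesssim\Lambda_i^{1/2}$ and one uses $\|\hat f_s\|_2=\|f_s\|_2$) and $|\omega|\gtrsim\Lambda_i^{1/2}$ (on which one integrates by parts using the Fourier identity $\partial_\omega\leftrightarrow -ix$ to produce the contribution $(1+\Lambda_i^2)^{-1/2}\|x{\bf f}_\perp\|_2$); the derivative term $(\omega^2-\lambda)^{1/2}(\hat f_d/(\omega^2-\lambda))'$ is treated analogously. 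The extra factor $\xi^2$ in the second term of Lemma \ref{lem:resolv-estim} corresponds, after all rescalings are undone, to a factor $\omega^2$ on the Fourier side, which is precisely $-\epsilon^2\partial_x^2$ acting on ${\bf f}_\perp$ on the physical side, and produces the contribution $\epsilon^2\|{\bf f}_\perp''\|_2$.

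To then recover $\|\hat u_s\|_2=\|(\omega^2-\lambda)^{1/2}v\|_2$ a weighted estimate on $\tilde v$ is required, since Lemma \ref{lem:resolv-estim} gives only $\|\tilde v\|_2$. This is extracted from the intermediate inequality \eqref{eq:29a} in the proof of the lemma, which controls $\|(\Lambda_i^{-1/2}\xi^2-\mu_r)^2\tilde v\|_2$ and hence, after undoing the $\xi$-scaling, gives control of $\|(\omega^2-\lambda)v\|_2$; Cauchy--Schwarz interpolation with $\|v\|_2$ then yields $\|(\omega^2-\lambda)^{1/2}v\|_2$. From $\|\hat u_s\|_2$, the bounds on $\|\hat u_d\|_2$ and $\|\hat u_3\|_2$ follow at once from (\ref{eq:5}c) and \eqref{eq:6}, as in the derivation of \eqref{eq:114} and of the $\hat u_d$ bound in Proposition \ref{cor:spect}. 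The various $\epsilon$-powers combine at the end: the $\epsilon^{-1}$ from Lemma \ref{lem:resolv-estim} together with the $\epsilon^{-1}$ from $\|\check{\bf f}\|_2=\epsilon^{-1}\|{\bf f}\|_2$ and the $\epsilon^{1/3}$ from $\|\tilde{\bf u}\|_2=\epsilon^{1/3}\|\check{\bf u}\|_2$, with the Fourier and additional rescaling factors turning out to be of order one on the relevant regime, yield the prefactor $\epsilon^{-5/3}/R$.

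The main obstacle will be the weighted $L^2$ bookkeeping in the middle paragraph: the function $(\omega^2-\lambda)^{1/2}\hat f_s$ is not pointwise in $L^2$ (its modulus grows like $|\omega|$), so its norm must be interpreted only after splitting the frequency axis, using $|\omega^2-\lambda|^2=(\omega^2-\lambda_r)^2+\lambda_i^2$, and systematically invoking the Fourier identity $\partial_\omega\leftrightarrow -ix$ to move $\omega$-multipliers and $\omega$-derivatives of $\hat{\bf f}_\perp$ between physical-space derivatives and physical-space multiplications by $x$. The $\xi^2$ weight in the second term of Lemma \ref{lem:resolv-estim} must then be tracked all the way back through the dilation $U(\pi/8)$ and the two subsequent rescalings to produce the correct $\epsilon^2\|{\bf f}_\perp''\|_2$ rather than some weaker norm.
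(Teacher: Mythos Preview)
Your overall strategy—pass through the chain of reductions and invoke Lemma~\ref{lem:resolv-estim}—matches the paper's. But there is one concrete error and one place where your bookkeeping diverges unnecessarily from the simpler route.

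The error is the analytic dilation. Lemma~\ref{lem:resolv-estim} is about the operator $\Mg_{\mu,\epsilon}$ of \eqref{eq:25}, which is reached from $\Mg_\lambda$ by the two \emph{real} rescalings \eqref{eq:46} and \eqref{eq:64} alone; no $U(\pi/8)$ appears. The complex-dilation path $\Mg_\lambda\to\Mg_{\lambda,\theta}\to\widehat{\LL}_{\mu,\epsilon}$ of \eqref{eq:119}--\eqref{eq:43} was used only for the point-spectrum argument (Lemma~\ref{lem:point-exist}), where one cares just about kernels and not about norms. For a resolvent estimate you cannot insert a complex dilation: $U(\theta)$ is unitary only for real $\theta$, and for imaginary $\theta$ it does not even act on general $L^2$ functions, so there is no way to pull an $L^2$ bound back through it. Drop the dilation; the two real rescalings already land you on $\Mg_{\mu,\epsilon}$.

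On the right-hand side the paper takes a more direct route than your proposed splitting. After the real rescalings the paper identifies $\tilde g$ with (a constant multiple of) $g$ from \eqref{eq:8b} and then bounds $\|g\|_2$ and $\|\omega^2 g\|_2$ term by term, using only the elementary pointwise inequality $|\omega^2/(\omega^2-\lambda)|\le 2$ valid for $\lambda_r\in(-1,1)$, $\lambda_i\ge 1$. The weighted term $\|x{\bf f}_\perp\|_2$ does \emph{not} come from the $\hat f_s$ piece as you suggest; it comes from the $\hat f_d$ piece of $g$, via $\|\hat f_d'\|_2=\|xf_d\|_2$ in the expansion of $(\hat f_d/(\omega^2-\lambda))'$. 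The term $\epsilon^2\|{\bf f}_\perp''\|_2$ arises from $\|\omega^2\hat f_s\|_2=\|f_s''\|_2$ in the bound on $\|\omega^2 g\|_2$. To recover $\|\hat u_s\|_2$ from $\|\tilde v\|_2$ the paper uses the first-moment bound \eqref{eq:50} (not \eqref{eq:29}) together with \eqref{eq:26}, which gives control of $\|[(\omega^2+1)/(\omega^2-\lambda)]^{1/2}\hat u_s\|_2$; the lower bound $|(\omega^2+1)/(\omega^2-\lambda)|\ge(\sqrt 2\,\lambda_i)^{-1}$ then yields $\|\hat u_s\|_2$. For $\hat u_d$ the paper also needs a separate bound on $\|\hat u_s'\|_2$, extracted from \eqref{eq:22}, which your sketch omits.
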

\begin{proof}
Consider (see \eqref{eq:65}) a triple $(\tilde {\bf u}, {\bf f}, \Lambda)$
such that 
\begin{displaymath}
  (\widetilde \B_\epsilon - \Lambda) \tilde{\bf u} = {\bf f} =\epsilon^{\frac 23}  \tilde{\bf  f } \,.
\end{displaymath}
   We first recall  that for $\Lambda_r\leq-1$ we have (see Remark \ref{Rem3.3})
  \begin{displaymath}
   \epsilon^2 \|\nabla \tilde {\bf u}\|_2^2 -\Lambda_r\|\tilde {\bf u}\|_2^2= \Re \langle \tilde  {\bf u},{\bf f}\rangle \,,
  \end{displaymath}
and hence
\begin{displaymath}
  \| \tilde {\bf u}\|_2 \leq  |\Lambda_r|^{-1} \| {\bf f}\|_2\,.
\end{displaymath}
Consequently, we have for $\epsilon^{- \frac 23} \Lambda_r= \lambda_r < -1$
\begin{equation}\label{eq:62a}
   \|(\tilde{\B}_\epsilon-\Lambda)^{-1} { \bf f}\|_2 \leq  \epsilon^{-\frac 23} \| {\bf f}\|_2\,,
\end{equation}
which implies \eqref{eq:35} in this case.\\
Through the rest of this proof we assume in addition that $\lambda_r>-1\,.$  \\

From \eqref{eq:64} which reads $\xi =\tilde \omega \Lambda_i^\frac 14$, and
  \eqref{eq:46} which reads $\omega = \epsilon^{\frac 16} \tilde \omega$, we get $\xi
  = \epsilon^{-1/6} \Lambda_i ^{1/4}\,\omega $. From \eqref{eq:53} we then conclude that
  \begin{displaymath}
 \lambda =\epsilon^{-2/3} \Lambda\,,\,   \tilde v(\xi)= (\omega^2-\lambda)^{-\frac 12} \hat{u}_s(\omega) \mbox{ and } \tilde{g}(\xi)=\Lambda_i^{-1/2}\epsilon^{1/3} g(\omega)\,.
  \end{displaymath}

\paragraph{Estimation of $\hat u_s$ for $\lambda_r>-1$.}~\\
Recall that $\hat{u}_s$ is introduced in \eqref{eq:defhatu}, and that
the definition of $g$ is given in \eqref{eq:8b} which reads
\begin{equation}\label{eq:defgnew}
   g:=  \hat{f}_s + \epsilon^{-2/3}\frac{\hat{f}_3}{\omega^2-\lambda}
 +  \frac{d}{d\omega}\Big(\frac{\hat{f}_d}{\omega^2-\lambda}\Big)\,.
\end{equation}
 Thus, by using \eqref{eq:26} and \eqref{eq:50} we obtain, for $R_0 < R < \frac 1 \epsilon$,
 \begin{displaymath}
\| (\epsilon^{\frac 13} \Lambda_i ^{-\frac 12} \xi^2 +1)^\frac 12  \tilde v(\xi) \|_2
\leq  \frac{C}{R\epsilon}\,   \left( \| \tilde g\|_2 + 
    \epsilon\Lambda_i^{-3/2} \| \xi^2 \tilde g\|_2\right) + C \epsilon^\frac 16
\Lambda_i^{-\frac 12} \| \tilde g\|\,, 
\end{displaymath}
which implies for a new constant $C>0$
\begin{displaymath}
\| (\epsilon^{\frac 13} \Lambda_i ^{-\frac 12} \xi^2 +1)^\frac 12  \tilde v(\xi) \|_2
\leq  \frac{C}{R\epsilon} \,   \left( \| \tilde g\|_2 + 
    \epsilon\Lambda_i^{-3/2} \| \xi^2 \tilde g\|_2\right) \,,
\end{displaymath}
and hence
\begin{equation*}
  \Big\|\Big[\frac{\omega^2+1}{\omega^2-\lambda}\Big]^{1/2}\hat{u}_s\Big\|_2 \leq  \frac{C}{R} \epsilon^{-\frac 23}   
   \Lambda_i^{-1/2}  \left(  \| g\|_2 +
     \epsilon^{2/3}\Lambda_i^{-1}  \| \omega^2 g  \|_2 \right)  \,.   
\end{equation*}
As for all $\lambda\in\C$ satisfying $-1\leq\lambda_r \leq 1 $ and $\lambda_i \geq 1$
\begin{displaymath} 
  \Big|\frac{\omega^2+1}{\omega^2-\lambda}\Big|\geq \frac{1}{\sqrt{2} \, \lambda_i} \,,
\end{displaymath}
we obtain, for sufficiently large $R_0$ that, for $R_0 < R < \frac 1 \epsilon$,
\begin{equation}\label{eq:119b}
  \|\hat{u}_s\|_2 \leq   \frac{C}{R\epsilon} \left(  \| g\|_2 +
     \epsilon^{2/3}\Lambda_i^{-1}  \| \omega^2 g  \|_2 \right) \,.   
\end{equation}
We now estimate of the right hand side in \eqref{eq:119b}
using \eqref{eq:defgnew}. \\[1.5ex]

{\bf  Estimation of terms involving $\hat f_3$.}\\[1.5ex]
 We begin by observing that 
\begin{displaymath}
\epsilon^{-\frac 23} \Big\| \frac{\hat f_3}{\omega^2-\lambda}\Big\|_2 \leq \epsilon^{-\frac 23} \lambda_i^{-1} \|\hat f_3\|_2= \Lambda_i^{-1}  \|\hat f_3\|_2\,.
\end{displaymath}
Next, to estimate $\Lambda_i^{-1} \Big\| \frac{\omega^2 \hat
  f_3}{\omega^2-\lambda}\Big\|_2$ we observe that, for $\lambda_r \in (-1,+1)$ and
$\lambda_i \geq 1$, it holds that 
\begin{equation}
\label{ineqaba} 
\Big|\frac{\omega^2
}{\omega^2-\lambda}\Big| \leq 2\,.
  \end{equation}
Hence, we can conclude that
\begin{displaymath}
\Lambda_i^{-1}  \Big\| \frac{\omega^2 \hat f_3}{\omega^2-\lambda}\Big\|_2 \leq 2 \Lambda_i^{-1} \| \hat f_3\|_2\,.
\end{displaymath}

{\bf Estimation of terms involving $\hat f_d$.}\\[1.5ex]
 We first write
\begin{displaymath}
\frac{d}{d\omega}\Big(\frac{\hat{f}_d}{\omega^2-\lambda}\Big)= \frac{\hat{f}_d'}{\omega^2-\lambda} - \frac{2 \omega \hat{f}_d}{(\omega^2-\lambda)^2}\,.
\end{displaymath}
For the first term on the right-hand-side we conclude that
\begin{displaymath}
\| \frac{\hat{f}_d'}{\omega^2-\lambda}\|_2 \leq \epsilon^\frac 23 \Lambda_i^{-1}\| \hat{ f}'_d \|_2\,.
\end{displaymath}
Then, with the aid \eqref{ineqaba}, valid for $\lambda_r \in (-1,+1)$ and
$\lambda_i \geq 1$, we obtain that
\begin{displaymath}
\Big|\frac{2 \omega }{(\omega^2-\lambda)^2}\Big|= 2 \Big|\frac{ \omega^2 }{(\omega^2-\lambda)}\Big|^\frac 12 \, \Big|\frac{1 }{(\omega^2-\lambda)^{\frac 32} }\Big|\leq 2 ^{\frac 32} \epsilon\Lambda_i^{-3/2} \,.
\end{displaymath}
Consequently,
\begin{displaymath}
\Big\| \frac{2 \omega \hat{f}_d}{(\omega^2-\lambda)^2}\Big\|_2  \leq  2 ^{\frac 32} \epsilon\Lambda_i^{-3/2}\| \hat f_d\|_2\,.
\end{displaymath}
Summarizing the above yields
\begin{equation}\label{eq:ctrlg}
\|g\|_2 \leq C \big[\| \hat{f}_s\|_2 + \Lambda_i^{-1}  
   \|\hat{f}_3\|_2
 +  \epsilon^{2/3} \Lambda_i^{-1}  \|\hat{f}_d^\prime\|_2+\epsilon  \Lambda_i^{-\frac 32} \|\hat{f}_d\|_2\big]\,.
\end{equation}

Next, using \eqref{ineqaba} once again yields 
\begin{displaymath}
\epsilon^\frac 23 \Lambda_i^{-1}  \Big\| \frac{\omega^2 \hat{f}_d'}{\omega^2-\lambda}\Big\|_2
\leq 2  \epsilon^\frac 23 \Lambda_i^{-1} \| \hat{f}_d'\|_2\,. 
\end{displaymath}
Finally, it holds that
\begin{displaymath}
 2 \epsilon^\frac 23 \Lambda_i^{-1}  \Big\| \frac{\omega^3
   \hat{f}_d}{(\omega^2-\lambda)^2}\Big\|_2 \leq 2^\frac 52  \epsilon \Lambda_i^{-\frac 32}
 \| \hat f_d\|_2\,. 
\end{displaymath}
In conclusion, we have proved the existence of $C>0$ and $R_0>0$ such
that for $R_0 < R < \frac 1 \epsilon$, $\Lambda\in \Vg(\epsilon,\varrho)\cap\{\Lambda_r>-\epsilon^{2/3}\}$ for which
$\mu\in\check{\Omega}(\epsilon,R)$
\begin{equation}
\label{eq:62b}
  \|\hat{u}_s\|_2\leq   \frac{C}{R\epsilon  } \big[\|(\epsilon^{2/3}\Lambda_i^{-1}\omega^2+1)\hat{f}_s\|_2 +
   \|\hat{f}_3\|_2
 +  \epsilon^{2/3} \hat \Lambda_i^{-1} \|\hat{f}_d^\prime\|_2+\epsilon \|\hat{f}_d\|_2\big]\,.  
\end{equation}

\paragraph{Estimation of $\hat{u}_3$.}
 As in the proof of Proposition \ref{cor:spect}  we write (\ref{eq:5}c) in the form 
\begin{displaymath}
\hat{u}_3 = \frac{\epsilon^{- 2/3}}{(\omega^2-\lambda)\sqrt{2}}\hat{u}_s +  \frac{\hat{f}_3}{\omega^2 -\lambda}  \,,
\end{displaymath}
which implies
\begin{displaymath}
\|\hat{u}_3 \|_2 \leq  \frac{1}{|\Lambda_i| \sqrt{2}}\,\| \hat{u}_s\|_2   +
\epsilon^{2/3} \frac{1}{|\Lambda_i|} \, \|\hat{f}_3\|_2\leq C  \Lambda_i^{-1}   (\|
\hat{u}_s\|_2   +  \epsilon^{2/3}\, \|\hat{f}_3\|_2) \,. 
\end{displaymath}
Hence, using the fact that $R\epsilon \leq 1$, we obtain by \eqref{eq:62b}
  \begin{equation}
\label{eq:59}
  \|\hat{u}_3\|_2\leq   \frac{C}{R\epsilon  } \big[\|(\epsilon^{2/3}\Lambda_i^{-1}\omega^2+1)\hat{f}_s\|_2 +
   \|\hat{f}_3\|_2
 +  \epsilon^{2/3} \hat \Lambda_i^{-1} \|\hat{f}_d^\prime\|_2+\epsilon \|\hat{f}_d\|_2\big]\,.  
\end{equation}

  \paragraph{Estimation of $\hat{u}_d$}
We begin with an estimate of $\hat u'_s$.  By \eqref{eq:22} which reads
   \begin{multline*}
 \Re\langle(\omega^2-\bar{\lambda})\hat{u}_s,g\rangle =   \|(\hat{u}_s)^\prime\|_2^2 + 
 \Re \Big\langle2\omega \hat{u}_s,\frac{(\hat{u}_s)^\prime}{\omega^2-\lambda}\Big\rangle\\  +(\varepsilon^{-1}- \lambda_i^2)\|\hat{u}_s\|_2^2 +
  \|(\omega^2-\lambda_r)\hat{u}_s\|_2^2 \,,
\end{multline*}
we may conclude that
  \begin{multline*}
 \|(\omega^2-\bar{\lambda})\hat{u}_s\|_2\, \| g \|_2 \geq   \|(\hat{u}_s)^\prime\|_2^2 
 -2 \| \hat{u}_s\|_2 \Big\| \frac{ \omega (\hat{u}_s)^\prime}{\omega^2-\lambda}\Big\| _2 +(\varepsilon^{-1}- \lambda_i^2)\|\hat{u}_s\|_2^2  +  \|(\omega^2-\lambda_r)\hat{u}_s\|_2^2 \,.
\end{multline*}
Hence,
 \begin{equation*}
 \|(\hat{u}_s)^\prime\|_2^2  \leq C \left(   \| g \|^2_2  +  \| \frac{ \omega (\hat{u}_s)^\prime}{\omega^2-\lambda}\|^2_2  + \lambda_i^2 \|\hat{u}_s\|_2^2\right) \,.
\end{equation*}
Using \eqref{ineqaba}, we get 
\begin{displaymath}
 \Big\| \frac{ \omega (\hat{u}_s)^\prime}{\omega^2-\lambda}\Big\|_2 \leq \sqrt{2} \lambda_i^{-\frac 12}  \| \hat{u}_s^\prime\|_2 \leq C \epsilon^{\frac 13} \| \hat{u}_s^\prime\|_2\,.
\end{displaymath}
Consequently, we can conclude that under the assumed conditions on
$\Lambda,\epsilon, R$ it holds that
\begin{equation}\label{3.115}
  \|\hat{u}_s^\prime\|_2^2\leq C(\|g\|_2^2+ \lambda_i^2\|\hat{u}_s\|_2^2) \,.
\end{equation}
We can, thus, deduce  from  \eqref{eq:6} that
\begin{equation*}
\|\hat u_d\| \leq C  \epsilon^{2/3} \Lambda_i^{-1}  \left( \| \hat f_d\| +\| \hat u_s'\|\right) \,,
\end{equation*}
which leads to
\begin{equation}\label{3.115a}
\|\hat u_d\| \leq C  \epsilon^{2/3} \Lambda_i^{-1}  \left( \| \hat f_d\|  + \|g\|_2\right)   + C \|\hat{u}_s\|_2  \,.
\end{equation}

Substituting  \eqref{eq:ctrlg}   into \eqref{3.115a} yields
\begin{equation}\label{3.115b}
\|\hat u_d\| \leq C  \epsilon^{2/3}   \left( \| \hat f_d\|  + \| \hat{f}_s\|_2 +   
   \|\hat{f}_3\|_2
 +  \epsilon^{2/3}  \Lambda_i^{-1}   \|\hat{f}_d^\prime\|_2 \right) + C \|\hat u_s\|_2  \,.
\end{equation}
With the aid of \eqref{eq:62b} we then conclude
\begin{equation}
\label{eq:62bc}
  \|\hat{u}_d\|_2\leq   \frac{C}{R\epsilon} \big[\|(\epsilon^\frac 23 \omega^2+1)\hat{f}_s\|_2 +
  \|\hat{f}_3\|_2
 +  \epsilon^{2/3}\Lambda_i^{-1}\|\hat{f}_d^\prime\|_2+ \|\hat{f}_d\|_2\big]\,.  
\end{equation}
Combining the above with \eqref{eq:62b} and  \eqref{eq:59} then yields
for $R_0< R < \epsilon^{-1}$ 
  \begin{equation}
\label{eq:59f} 
  \|\hat{\bf u}\|_2\leq   \frac{C}{R\epsilon  } \big[\|(\epsilon^{2/3}\omega^2+1)\hat{f}_s\|_2 +
   \|\hat{f}_3\|_2
 +  \epsilon^{2/3} \Lambda_i^{-1} \|\hat{f}_d^\prime\|_2+ \|\hat{f}_d\|_2\big]\,.  
\end{equation}
In term of the original variables $({\bf u},{\bf f})$ \eqref{eq:59f}
reads 
  \begin{displaymath}
 \| {\bf u}\|_2 \leq     \frac{C}{R \epsilon^\frac 53} \left( \epsilon^2\|  {\bf
     f}_\perp'' \|_2 + \|{\bf f}\|_2
       + \Lambda_i^{-1}  \| x {\bf f}_\perp\|_2\, \right)\,.
  \end{displaymath}
\end{proof}

\subsection{Proof of Theorem \ref{thm:unbounded}} 
We now complete the proof of  Theorem \ref{thm:unbounded}. Let ${\bf u}\in
D(\B_\epsilon)$ and ${\bf f}\in L^2(\R,\C^3)$ satisfy $(\B_\epsilon-\Lambda){\bf u}={\bf
  f}$ for some $\Lambda=\Lambda_r+i \Lambda_i\in\C$. As above we consider
only the case $ \Lambda_i>0$.\\
{\bf The case $0< \Lambda_i<1/2$.}\\ Here we have by
 \eqref{eq:21b} for $\Lambda_r\leq \varrho \epsilon$ 
\begin{equation}
  \label{eq:61}
\|(\B_{\epsilon}-\Lambda)^{-1}\| =\|(\tilde{\B}_{\epsilon}-\Lambda)^{-1}\| \leq C\epsilon^{2/3}\Big(\epsilon^{-2/3}+ \frac{1}{\Lambda_i}\Big)^2 \,.
\end{equation}
\\
{\bf The case $ \Lambda_i>1/2$}  \\ Here we attempt to use \eqref{eq:35}, to which end we first
observe that
\begin{displaymath}
  \Re \langle{\bf u},(\tilde{\B}_\epsilon+1-i\Lambda_i){\bf u}\rangle=\epsilon^2\|\frac{d {\bf
    u}}{dx} \|_2^2 + \|{\bf u}\|_2^2\,.
\end{displaymath}
It follows that
\begin{equation}
\label{eq:81}
  \|(\tilde{\B}_\epsilon+1-i \Lambda_i)^{-1}\|+\epsilon \|(\tilde{\B}_\epsilon+1-i \Lambda_i)^{-1}\|_{\LL(L^2,H^1)}\leq 3 \,.
\end{equation}
 Integration by parts yields that, for any ${\bf w} \in D(\tilde{\B}_\epsilon)$, 
\begin{displaymath}
  - \Re\Big\langle\frac{d^2{\bf w} }{dx^2},(\tilde{\B}_\epsilon+1-i\Lambda_i){\bf w} \Big\rangle =
  \epsilon^2\Big\|\frac{d^2{\bf w}}{dx^2} \Big\|_2^2 -
\Big\|\frac{d{\bf w}}{dx} \Big\|_2^2   - \Im\langle \frac{d w_1}{dx},w_1\rangle+ \Im\langle\frac{dw_2}{dx},w_2\rangle
\end{displaymath}
Consequently, we obtain that
\begin{equation}
\label{eq:37}
  \|(\tilde{\B}_\varepsilon+1-i \Lambda_i)^{-1}\|_{\LL(L^2,H^2)}\leq \frac{C}{\epsilon^2 }\,.
\end{equation}
Finally, recall that by \eqref{eq:99}
\begin{displaymath}
  D(\tilde{\B}_\epsilon) = \{ {\bf u}\in H^2(\R,\C^3)\,|\,x{\bf u}_\perp\in
  L^2(\R,\C^3)\} \,,
\end{displaymath}
where ${\bf u}_\perp=(u_1,u_2,0)$.\\
For $s>0$,   we equip $D(\tilde{\B}_\epsilon)$ with the norm
\begin{displaymath} 
  \|{\bf u}\|_{(\B,s,\epsilon)}= \epsilon^2 \|{\bf u}''\|_{2}+ \epsilon\|{\bf u}'\|_{2} + \|{\bf u}\|_{2} +  s^{-1}\|x{\bf u}_\perp\|_2\,,
\end{displaymath}
and denote this normed space by $D_{s,\epsilon}$. \\
An integration by parts yields (see \eqref{eq:124})
\begin{multline*}
  \Im\langle(-xu_1,xu_2,0)\,,\, (\tilde{\B}_\varepsilon+1-i \Lambda_i){\bf u}\rangle\\ = \|x{\bf
    u}_\perp\|_2^2  +\Lambda_i(\langle xu_1,u_1\rangle-\langle xu_2,u_2\rangle)+ \frac{1}{\sqrt{2}}(\Im\langle x(u_2-u_1)\,, \, u_3\rangle)
  \\ +\epsilon^2(\Im\langle u_2,u_2^\prime\rangle-\Im\langle u_1,u_1^\prime\rangle)\,.
\end{multline*}
 Consequently,  combining the above with \eqref{eq:81}, we get 
\begin{equation}\label{eq:3.33}
\Lambda_i^{-1}\,   \|x \, \pi_\perp (\tilde{\B}_\varepsilon+1-i \Lambda_i)^{-1}\|\leq C\,,
\end{equation}
where $\pi_\perp$ denotes the projection on the two first components.

We now choose $s=\Lambda_i$.
Combining \eqref{eq:3.33} with \eqref{eq:37} yields
\begin{equation}
\label{eq:122} 
  \|(\tilde{\B}_\varepsilon+1-i \Lambda_i)^{-1}\|_{\LL(L^2,D_{\Lambda_i,\epsilon})}\leq C \,.
\end{equation}

Let $\Lambda\in\Vg(\epsilon,\varrho)$ satisfy $z_0(\mu,\epsilon)\in \check \Omega(\epsilon,R)$ where
$z_0$ is given by \eqref{eq:25a}, $\check \Omega$ by \eqref{eq:63}, and
$R_0$ is sufficiently large, so that \eqref{eq:35} holds true for $R_0 \leq R <\frac 1 \epsilon$.  We
begin by observing that  \eqref{eq:35} implies
\begin{displaymath}
  \|(\tilde{\B}_\varepsilon-\Lambda)^{-1}\|_{\LL(D_{\Lambda_i,\epsilon}\,,\,L^2)}\leq\frac{C}{R\epsilon^{5/3}}\,.
\end{displaymath}
We now
use the resolvent identity
\begin{displaymath}
  (\tilde{\B}_\epsilon-\Lambda)^{-1}=  (\tilde{\B}_\epsilon+1-i \Lambda_i)^{-1}+(\Lambda_r+1)(\tilde{\B}_\epsilon-\Lambda)^{-1} (\tilde{\B}_\epsilon+1-i \Lambda_i)^{-1}\,,
\end{displaymath}
to establish that
\begin{displaymath}
  \|(\tilde{\B}_\epsilon-\Lambda)^{-1}\|\leq  \|(\tilde{\B}_\epsilon+1-i
  \Lambda_i)^{-1}\|+C\|(\tilde{\B}_\epsilon-\Lambda)^{-1}\|_{\LL(D_{\Lambda_i,\epsilon},L^2)}
  \|(\tilde{\B}_\epsilon+1-i \Lambda_i)^{-1}\|_{\LL(L^2,D_{\Lambda_i,\epsilon})} \,.
\end{displaymath}
We may conclude from the above  that:
\begin{proposition}\label{prop3.20}
 Let $\rho >0$. There exist $R_0>0 $  and $\epsilon_0>0$ such that for any
 $0<\epsilon<\epsilon_0$, $\Lambda \in \mathcal V(\epsilon,\rho)$,  $ R_0  < R < \frac{1}{ \epsilon}$,
 $\Lambda_i > \frac 12$, and   $z_0(\mu,\epsilon)\in \check \Omega(\epsilon,R)$ it holds that
   \begin{equation}
\label{eq:62}
 \|(\tilde{\B}_\epsilon-\Lambda)^{-1}\| \leq  C \Big(1 + \frac{1}{R\epsilon^\frac 53}\Big)\,.
\end{equation}
\end{proposition}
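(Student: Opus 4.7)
The plan is simply to combine the estimates established in the preceding discussion via the resolvent identity already displayed before the statement of the proposition. All of the analytical work has been done: the hypothesis $\Lambda_i > 1/2$ together with $\Lambda_r \leq \varrho\epsilon$ places $\Lambda$ in a region where the shifted operator $\tilde{\B}_\epsilon + 1 - i\Lambda_i$ is easy to invert, while the conditions $z_0(\mu,\epsilon)\in\check\Omega(\epsilon,R)$ and $R_0 < R < 1/\epsilon$ are exactly what Proposition preceding this one (cf. \eqref{eq:35}) needs in order to furnish a resolvent estimate in the norm of $D_{\Lambda_i,\epsilon}$.

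First I would record the three ingredients explicitly: from \eqref{eq:81} one has $\|(\tilde{\B}_\epsilon + 1 - i\Lambda_i)^{-1}\|_{\LL(L^2,L^2)} \leq 3$; from \eqref{eq:122} one has the stronger bound $\|(\tilde{\B}_\epsilon + 1 - i\Lambda_i)^{-1}\|_{\LL(L^2,D_{\Lambda_i,\epsilon})} \leq C$; and \eqref{eq:35}, when re-expressed in terms of the norm of $D_{\Lambda_i,\epsilon}$ (with the choice $s=\Lambda_i$), yields
\begin{displaymath}
\|(\tilde{\B}_\epsilon - \Lambda)^{-1}\|_{\LL(D_{\Lambda_i,\epsilon},L^2)} \leq \frac{C}{R\epsilon^{5/3}}.
\end{displaymath}
Next I would invoke the resolvent identity
\begin{displaymath}
(\tilde{\B}_\epsilon - \Lambda)^{-1} = (\tilde{\B}_\epsilon + 1 - i\Lambda_i)^{-1} + (\Lambda_r + 1)(\tilde{\B}_\epsilon - \Lambda)^{-1}(\tilde{\B}_\epsilon + 1 - i\Lambda_i)^{-1},
\end{displaymath}
and take operator norms on $L^2$, estimating the second summand by factoring through $D_{\Lambda_i,\epsilon}$.

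Since $\Lambda \in \Vg(\epsilon,\varrho)$ implies $|\Lambda_r + 1| \leq 1 + \varrho\epsilon \leq 2$ for $\epsilon$ small enough, the composed estimate gives
\begin{displaymath}
\|(\tilde{\B}_\epsilon - \Lambda)^{-1}\| \leq 3 + 2\cdot \frac{C}{R\epsilon^{5/3}}\cdot C,
\end{displaymath}
which is \eqref{eq:62} after absorbing constants. There is no genuine obstacle here: the only mildly delicate point is verifying that the norm $\|\cdot\|_{(\B,\Lambda_i,\epsilon)}$ in which \eqref{eq:35} was essentially formulated and the norm appearing on the right-hand side of \eqref{eq:122} agree with the choice $s=\Lambda_i$, so that the composition of the two bounded maps through $D_{\Lambda_i,\epsilon}$ is legitimate. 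Once this bookkeeping is carried out, the proposition follows.
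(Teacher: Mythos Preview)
Your proposal is correct and follows essentially the same argument as the paper: combine \eqref{eq:81}, \eqref{eq:122}, the reformulation of \eqref{eq:35} as a bound in $\LL(D_{\Lambda_i,\epsilon},L^2)$, and the resolvent identity, then use $|\Lambda_r+1|\leq C$ on $\Vg(\epsilon,\varrho)$ to absorb the factor. The paper's proof is exactly this, with the ingredients derived immediately before the proposition is stated.
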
 

We finally provide a more explicit condition in guaranteeing the
validity of the assumptions of proposition \ref{prop3.20}. We
introduce for $\hat R>0$, $\epsilon >0$
\begin{equation}
\hat D^+(\epsilon,\hat R) =\{ \Im \Lambda > 0\,,\, d( \Lambda - i ,   \epsilon \sigma(\hg^*) > \hat R \epsilon^2\}\,,
\end{equation}
Note that for all $0<\epsilon $ the  set $\{ \Lambda \in \mathbb C
  \,,\, d( \Lambda,   \epsilon  \sigma(\hg^*)) < \frac 12 \epsilon\} $ is a union of
disjoint disks. 
\begin{lemma}\label{lem3.21}
  Let $\rho >0$. There exist $\hat R_0>1$ and $\epsilon_0>0$, such that for
  any $ \hat R_0 < \hat R < 1/(\sqrt{2}\epsilon)$, and $\Lambda \in\mathcal
  V(\epsilon,\rho)\cap \mathcal D^+(\epsilon,\rho, \hat R)$, we have that
  $z_0(\mu,\epsilon)\in \check \Omega(\epsilon,R)$ for all $R\geq \sqrt{2}\hat R$.
\end{lemma}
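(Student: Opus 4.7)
The plan is to write $z_0$ as an explicit almost-linear function of $\Lambda-i$ and then transfer the distance bound from the $\Lambda$-plane to the $z_0$-plane through that linearization. Setting $\mu=(\Lambda-i)/\epsilon$ so that $\Lambda_r=\epsilon\mu_r$, $\Lambda_i=1+\epsilon\mu_i$, and substituting into \eqref{eq:25a} via $\mu_i(2+\epsilon\mu_i)=(\Lambda_i^2-1)/\epsilon$ and $\mu_r(1+\epsilon\mu_i)=\Lambda_r\Lambda_i/\epsilon$, I would first record the closed form
\begin{equation*}
\epsilon\,\Lambda_i^{1/2}\, z_0 \;=\; \Lambda_i^2 - 1 - 2i\Lambda_r\Lambda_i \;=\; -(\Lambda-i)(\Lambda+i) + \Lambda_r^2\,.
\end{equation*}
Writing $\Lambda+i=2i+(\Lambda-i)$ and $\Lambda_i^{-1/2}=1+O(\epsilon\mu_i)$, this decomposes $z_0=-2i\mu+\epsilon E(\mu,\epsilon)$ with an explicit rational remainder $E$.

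Next I would split according to the size of $\mu$. When $|\mu_i|$ exceeds a constant depending only on $\rho$, the identity above forces $z_0$ off the ray $\mathbb R_+(1-i)$ on which $\sigma(\hg)$ lives: for $\mu_i\to+\infty$, $\Re z_0\to+\infty$ while $|\Im z_0|\leq 2|\mu_r|\sqrt{1+\epsilon\mu_i}\leq 2\rho\sqrt{1+\epsilon\mu_i}$ stays comparatively small, so $z_0$ drifts off the ray; for $\mu_i$ sufficiently negative (with $1+\epsilon\mu_i\geq 1/2$ enforced by $\mathcal V(\epsilon,\rho)$), one has $\Re z_0<0$, whereas $\sigma(\hg)\subset\{\Re z>0\}$. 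In both sub-cases $d(z_0,\sigma(\hg))\gtrsim 1\gg R\epsilon$, and it suffices to treat $\mu$ in a bounded set $K_\rho$, on which $E$ is bounded uniformly by a constant $C_\rho$.

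The key linear-algebraic observation is that $T:\mu\mapsto -2i\mu$ is a similitude of modulus $2$ carrying the resonance values $\mu_{k,0}=(2k-1)(1+i)/2$ exactly onto $\sigma(\hg)=\{(2k-1)(1-i)\}_{k\geq 1}$, since $-2i\cdot(1+i)/2=1-i$. The hypothesis $\Lambda\in\hat D^+(\epsilon,\hat R)$, rescaled by $\epsilon$, amounts to $d(\mu,\{\mu_{k,0}\})\geq \hat R\epsilon$ (reading $\epsilon\sigma(\hg^*)$ in the definition as the set of the actual resonances $\kappa_n^0-i$ of $\B_\epsilon$), and therefore $d(T(\mu),\sigma(\hg))\geq 2\hat R\epsilon$. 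Combined with $|z_0-T(\mu)|\leq C_\rho\epsilon$, this gives
\begin{equation*}
d(z_0,\sigma(\hg))\;\geq\;2\hat R\epsilon-C_\rho\epsilon\;\geq\;\sqrt{2}\,\hat R\epsilon\;\geq\;R\epsilon
\end{equation*}
as soon as $\hat R_0$ satisfies $(2-\sqrt{2})\hat R_0\geq C_\rho$ and $\epsilon\leq\epsilon_0$. The cap $\hat R\epsilon<1/\sqrt{2}$ built into the hypothesis is used precisely to keep $\mu$ inside $K_\rho$, so that the linearization of the previous paragraph is uniform.

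The main technical obstacle is the uniform control of the remainder $E$ on $K_\rho$ when $\mu_r$ is large negative: there the $\Lambda_r^2$ correction in the identity above is comparable to $-2i(\Lambda-i)/\epsilon$, so one must either refine the linearization or extend the case split of the second paragraph by including large negative $\mu_r$ in the ``easy'' region where $\Re z_0<0$ is exploited directly. Once this is handled, the rest of the argument is a clean transfer of distances through the explicit similitude $T$.
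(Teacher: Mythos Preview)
Your approach is correct and is, at its core, the same idea as the paper's: both arguments exploit the fact that $z_0$ is, up to an $O(\epsilon)$ error on a bounded $\mu$-region, the image of $\mu$ under the similitude $T:\mu\mapsto -2i\mu$, which sends the quasi-eigenvalue set $\{\mu_{k,0}\}$ exactly onto $\sigma(\hg)$. The paper argues by contradiction, assuming $|z_0-(2n-1)(1-i)|\le R\epsilon$ for some $n\le N_\varrho$ and then pushing through separate estimates for $\Re z_0$ and $\Im z_0$ (via \eqref{eq:70}, \eqref{eq:128}, \eqref{eq:132}, \eqref{eq:140z}) to force $\Lambda$ into $B(\kappa_n^0,\hat R\epsilon^2)$; you run the same transfer directly and get the factor $\sqrt{2}$ from the modulus of $T$. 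Your presentation is cleaner and makes the origin of the constant $\sqrt{2}$ transparent, whereas the paper's component-wise argument is more explicit about how $N_\varrho$ enters (only $n\le N_\varrho$ spectral points can be close to $z_0$ once $|\mu_r|\le\varrho$ and $\mu_i$ is bounded).

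Two small corrections to your sketch. First, the ``obstacle'' you flag for large negative $\mu_r$ dissolves immediately: when $\mu_r<0$ one has $\Im z_0=-2\mu_r\Lambda_i^{1/2}>0$, while $\sigma(\hg)\subset\{\Im z\le -1\}$, so $d(z_0,\sigma(\hg))\ge 1$; the correct easy argument there is via $\Im z_0>0$, not $\Re z_0<0$. Second, the cap $\hat R<1/(\sqrt{2}\epsilon)$ is not what confines $\mu$ to a bounded set (that comes from your case split on $\mu_i$ and from $\mu_r\le\varrho$ together with the previous remark); its role is rather that $R\epsilon<1$ downstream, so that the excluded disks around $\sigma(\hg)$ do not overlap and Proposition~\ref{prop3.20} applies. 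With these adjustments your argument goes through and matches the paper's conclusion.
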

\begin{proof} 
  Let $1<\hat{R}_0<\hat{R}<[\epsilon\sqrt{2}]^{-1}$ and $0<R\leq\sqrt{2}(\hat
  R-2N_\varrho^2)$, where $N_\varrho$ is given by \eqref{eq:131}. (Note that
  $R<\sqrt{2}\hat R<\epsilon^{-1}$.) Suppose for a contradiction that for
  some $n\leq N_\varrho$, it holds that $\Lambda \in\mathcal V(\epsilon,\rho)\cap \mathcal
  D^+(\epsilon,\rho, \hat R)$ but
  \begin{equation}
\label{eq:129}
     d(z_0,(2n-1)[1-i])\leq R\epsilon\,.
  \end{equation}
Then,   $ |\Re z_0- (2n-1)| \leq R\epsilon$ which can be rewritten in the form
\begin{equation}
\label{eq:70}
  \Big|\frac{\Lambda_i^2-1}{\Lambda_i^{1/2}}-(2n-1)\epsilon\Big|\leq R\epsilon^2\,.
\end{equation}
Clearly, $\Lambda_i\geq1$, otherwise we have
\begin{displaymath}
   \Big|\frac{\Lambda_i^2-1}{\Lambda_i^{1/2}}-(2n-1)\epsilon\Big|\geq(2n-1)\epsilon \,,
\end{displaymath}
contradicting \eqref{eq:70} as $R\epsilon < 1$.   \\
For $\Lambda_i\geq1$ we have, as $( \Lambda_i+1)\geq2\Lambda_i^{1/2}$,
\begin{displaymath}
   R\epsilon^2\geq\frac{\Lambda_i^2-1}{\Lambda_i^{1/2}}+ (2n-1)\epsilon\geq 2(\Lambda_i-1)-(2n-1)\epsilon \,,
\end{displaymath}
and hence
\begin{equation}
\label{eq:128}
1 \leq   \Lambda_i\leq 1+ \frac{2n-1}{2}\epsilon + \frac{R}{2}\epsilon^2 \,.
\end{equation}
Returning to \eqref{eq:70}, we write
\begin{displaymath}
-R\epsilon^2\leq\frac{\Lambda_i^2-1}{\Lambda_i^{1/2}}-(2n-1)\epsilon\leq \Big(2+
\frac{2n-1}{2}\epsilon + \frac{R}{2}\epsilon^2\Big)(\Lambda_i-1) -(2n-1)\epsilon\,,
\end{displaymath}
which leads to (as $n\leq N_\varrho$ where $N_\varrho$ is defined by \eqref{eq:131})
\begin{displaymath}
  \Lambda_i \geq 1+ \frac{(2n-1)\epsilon-R\epsilon^2}{2+
N_\rho \epsilon + \frac{R}{2}\epsilon^2}\geq 1+\frac{1}{2}
[(2n-1)\epsilon-R\epsilon^2]\Big[1-N_\rho \epsilon - \frac{R}{2}\epsilon^2\Big]  \,,
\end{displaymath}
and hence
\begin{displaymath}
  \Lambda_i - 1 \geq \frac{2n-1}{2}\epsilon - \left( \frac R2  +  N_\rho \Big(N_\rho  +
     \frac{R}{2}\epsilon\Big) \right) \epsilon^2\geq \frac{2n-1}{2}\epsilon -
  \Big(\frac{R}{2}+2N_\varrho^2\Big)\epsilon^2  \,.
\end{displaymath}
Combining the above with \eqref{eq:128} yields 
\begin{equation}
\label{eq:132}
    \Big|\Lambda_i-1- \frac{2n-1}{2}\epsilon\Big|\leq  \Big(\frac{R}{2}+2N_\varrho^2\Big) \epsilon^2 \,. 
\end{equation}
By \eqref{eq:129} we have, in addition to \eqref{eq:70},
\begin{equation}\label{eq:140z}
|\Im z_0 + (2n-1)| \leq R \epsilon\,.
\end{equation}
In view of \eqref{eq:132} we can rewrite 
\begin{equation*}
\Big| \frac 12(2n-1) -\mu_r \Big| \leq  \Big(\varrho N_\varrho + \frac{R}{2}\Big) \epsilon\,,
\end{equation*}
or equivalently
\begin{displaymath}
| \frac 12(2n-1)\epsilon  -\Lambda_r | \leq  \Big(N_\varrho^2  + \frac{R}{2}\Big) \epsilon^2\,.
\end{displaymath}
Combining the above with \eqref{eq:132} yields
\begin{equation}
\label{eq:130}
 d( \Lambda,  i - (1+i)  \frac{2n-1}{2}\epsilon)  \leq \sqrt{2}\Big(2N_\varrho^2  +
 \frac{R}{2}\Big) \epsilon^2<\hat{R}\epsilon^2\,.
\end{equation}
A contradiction.
\end{proof}

Combining Lemma \ref{lem3.21} with Proposition \ref{prop3.20} (applied
with $R=\sqrt{2} \hat R$) establishes \eqref{eq:107} for $\Lambda_i>1/2$.
We then use \eqref{eq:61} to obtain \eqref{eq:107} for all $\Lambda_i>0$,
$\Lambda_r<\varrho\epsilon$ and $ \Lambda\in \hat D^+(\hat R,\varrho, \epsilon)$.
Remark~\ref{rem:original-spectrum} and Lemma \ref{lem:point-exist}
together establish \eqref{eq:105}.  \newpage
\section{Finite intervals}
\label{sec:3}
As stated in Section \ref{sec:case-with-boundary} we expect that the
behaviour of $\B_\epsilon$ acting on $\R$, can intuitively explain the
behavioir of $\B_\epsilon$ on a bounded interval, in the limit $\epsilon\to0$.  In
the following, we focus on the part of $\sigma(\B_\epsilon)$ that tends to
$\R_+$ (which is in the spectrum of $\B_\epsilon$ on $\R$ by Remark
\ref{rem:essential-unit}).
\subsection{The problem}
We now define the operator $\B_\epsilon^I$ whose associated differential operator is given
by \eqref{eq:1} in the interval $I=(a,b)$, and its domain by
\begin{displaymath}
  D(\B_\epsilon^I)=H^2((a,b),\C^3)\cap H^1_0((a,b),\C^3) \,.
\end{displaymath}
Let $({\bf u},\Lambda)\in D(\B_\epsilon^I)\times\C$ denote an eigenpair of
$\B_\epsilon^I$. 
 Let further $\LL_\pm=-\epsilon^2\frac{d^2}{dx^2}\pm ix$ be defined on
$H^2(a,b)\cap H^1_0(a,b)$. Then by (\ref{eq:65}) (with ${\bf f}=0$)
and \eqref{eq:124} we may set
\begin{displaymath}
  u_1=-(\LL_--\Lambda)^{-1}u_3 \quad ; \quad
  u_2=-(\LL_+-\Lambda)^{-1}u_3  \,,
\end{displaymath}
and hence by (\ref{eq:65})
and \eqref{eq:124}  we obtain
\begin{subequations}
  \label{eq:67}
  \begin{equation}
(\PP_\Lambda -\Lambda)u_3=0 \,,
\end{equation}
where 
\begin{equation}
   \PP_\Lambda
   \overset{def}{=}-\epsilon^2\frac{d^2}{dx^2}+\frac{1}{2}[(\LL_--\Lambda)^{-1}+(\LL_+-\Lambda)^{-1}]
\end{equation}
\end{subequations}
is defined on $$D(\PP_\Lambda)=H^2((a,b),\C)\cap H^1_0((a,b),\C)\,.$$
 Note that
$(\LL_\pm-\Lambda)^{-1}$ is well defined, for sufficiently small $\epsilon$,
whenever $ \Re\Lambda<\epsilon^{2/3}|\nu_1|/2$  where $\nu_1$ denotes the leftmost zero of
Airy's function \cite{abst72}. \\
 Some intuition can be gained by considering the operator
 \begin{multline}\label{eq:67ter}
 \A_\Lambda =(\LL_- -\Lambda) (\PP_\Lambda-\Lambda)(\LL_+- \Lambda) =\\- \epsilon^2  \Big[(\LL_- -\Lambda)
 \frac{d^2}{dx^2} (\LL_+- \Lambda) +  \frac{d^2}{dx^2}\Big] -\Lambda[1+
 (\LL_--\Lambda)(\LL_+ - \Lambda)] \,,
\end{multline}
where $ D(\A_\lambda)=\{ w\in H^6((a,b)\,,\,
w(a)=w(b)=w''(a)=w''(b)=0\}$. Note that for $\Lambda\leq C\epsilon^2$ it holds that
$0\in\sigma(\A_\Lambda)\Leftrightarrow0\in\sigma(\PP_\Lambda-\Lambda)$ since  $(\LL_\pm -\Lambda)$ is invertible. 
Assuming $\Lambda=\Lambda_0 \epsilon^2$ where $\Lambda_0$ is independent of $\epsilon$ we get 
\begin{equation}\label{eq:68bis}
(\epsilon^{-2}A_\Lambda)\big|_{\epsilon=0}= - x \frac{d^2}{dx^2} x - \frac{d^2}{dx^2} -\Lambda_0(1+x^2)
\end{equation}
Let $\tilde{\A}_\Lambda=(1+x^2)^{-1}(\epsilon^{-2}A_\Lambda)\big|_{\epsilon=0}$which is
selfadjoint on $L^2((a,b), (1+x^2))$.  By the foregoing discussion, we
expect $\Lambda_0$ to be the ground state of $\tilde{\A}_\Lambda$. Denote the
principal eigenfunction by $w_0$. We expect that $u_3\approx(\LL_+-\Lambda)w_0$
and hence, by setting $\epsilon=0$ once again we can conclude that $u_3\approx
xw_0$. In fact, we shall rigorously corroborate this approximation in
the sequel.

The above heuristical argument suggests in addition that for any $C>0$
there exists $\epsilon_C$ such that for all $\epsilon \in (0,\epsilon_C]$, we have
$\sigma(\B_\epsilon^I) \cap \{ \Re \Lambda < C \epsilon^2\} \subset\R_+$.  Furthermore, while we do
not prove that in the following, one may expect to obtain for any
fixed $j\geq1$, that $\Lambda_j \epsilon^2 +o(\epsilon^2)\in\sigma(\B_\epsilon^I)$ where $\Lambda_j$ is
an eigenvalue of the operator $\tilde{\A}_\Lambda$.

\subsection{Upper bound for the bottom of the real spectrum} 
For $\Lambda\in\R$, $\PP_\Lambda$ with domain $D(\PP_\Lambda):=H^2(a,b)\cap H^1_0(a,b)$ is a selfadjoint
operator on $L^2(a,b)$ with compact resolvent.
Consequently,  we may
define for $\Lambda \in \mathbb R$, 
\begin{equation}
  \label{eq:42}
\nu(\Lambda)= \inf_{u\in H^1_0([0,1])\setminus\{0\} }\frac{\langle u,(\PP_\Lambda
  -\Lambda)u\rangle}{\|u\|_2^2}  \,.
\end{equation}
It can be easily verified that if $\nu(\Lambda)=0$ then $\Lambda\in\sigma(\B_\epsilon^I)$.\\

Furthermore, if $\Lambda\in\sigma(\B_\epsilon^I) \cap \mathbb R$ then there exists $u\in
D(\PP_\Lambda)$ such that $(\PP_\Lambda -\Lambda)u=0$, and hence, from the definition
of $\nu$ we also learn that $\nu(\Lambda)\leq0$ in this case.  It can be easily
verified that for any $u$ in $D(\PP_\Lambda)$ and $\Lambda \in \mathbb R$
\begin{equation}
\label{eq:82}
  \langle u,\PP_\Lambda u\rangle=
  \epsilon^2\Big[\|u^\prime\|_2^2+ \frac{1}{2}(\|w^\prime_+\|_2^2+\|w^\prime_-\|_2^2)\Big]-\frac{\Lambda}{2}(\|w_+\|_2^2+\|w_-\|_2^2)\,,
\end{equation}
where $w_\pm =(\mathcal L_\pm -\Lambda)^{-1} u$.\\

Before obtaining bounds on $\nu(\Lambda)$ we need the following auxiliary
lemma 
\begin{lemma}
\label{lem:auxiliary-schrodinger}
Let $K>0$. There exist positive $\epsilon_0$ and $C$, such that for all $0<\epsilon<\epsilon_0$,
for any real $\Lambda\leq K\epsilon^2$, for any triple $(w_0,w_-,w_+)$
s.t.   $w_0 \in H^2(a,b)\cap H^1_0(a,b)$ and $w_\pm =(\mathcal L_\pm -\Lambda)^{-1} (xw_0)$, it holds
that, 
\begin{equation}
  \label{eq:83}
\|\tilde{w}_\pm\|_2 +
\epsilon^{2/3}\|\tilde{w}^\prime_\pm\|_2\leq   C\epsilon^{4/3}\| w_0\|_{2,2} \,,
\end{equation}
where $\|\cdot \|_{2,2}$ denotes the norm in $H^2(a,b)$, and 
\begin{equation}
\label{eq:125}
    \tilde{w}_\pm=w_\pm \pm i  w_0 \,.
  \end{equation}
\end{lemma}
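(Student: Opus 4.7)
The strategy is to observe that $\tilde w_\pm$ satisfies a resolvent equation for $\LL_\pm-\Lambda$ with a source term of size $\OO(\epsilon^2\|w_0\|_{2,2})$, and then combine this with the known semiclassical resolvent estimate for the complex Airy operator.

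\emph{Step 1 (algebraic identity).} A direct computation using $\LL_\pm=-\epsilon^2\tfrac{d^2}{dx^2}\pm ix$ and $(\LL_\pm-\Lambda)w_\pm=xw_0$ gives
\begin{displaymath}
(\LL_\pm-\Lambda)(\pm i w_0)=\pm i\bigl(-\epsilon^2 w_0''\pm i xw_0-\Lambda w_0\bigr)=-xw_0\mp i(\epsilon^2 w_0''+\Lambda w_0),
\end{displaymath}
so that
\begin{displaymath}
(\LL_\pm-\Lambda)\tilde w_\pm=\mp i\bigl(\epsilon^2 w_0''+\Lambda w_0\bigr).
\end{displaymath}
Moreover, $\tilde w_\pm\in H^2(a,b)\cap H^1_0(a,b)=D(\LL_\pm)$ since both $w_\pm$ and $w_0$ do.

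\emph{Step 2 ($L^2$ estimate).} The classical semiclassical resolvent bound for the Dirichlet realization of $-\epsilon^2\tfrac{d^2}{dx^2}+ix$ on $(a,b)$ (cf.\ \cite{al08,hen15,AGH}) yields the existence of $\epsilon_0>0$ and $C>0$ such that
\begin{displaymath}
\bigl\|(\LL_\pm-\Lambda)^{-1}\bigr\|_{L^2\to L^2}\leq C\epsilon^{-2/3}\qquad\text{for all }0<\epsilon<\epsilon_0,\ \Lambda\leq K\epsilon^2,
\end{displaymath}
since the leftmost eigenvalues of $\LL_\pm$ have real parts of order $\epsilon^{2/3}$, which dominates $K\epsilon^2$ for small $\epsilon$. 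Using $|\Lambda|\leq K\epsilon^2$, the identity of Step~1 then gives
\begin{displaymath}
\|\tilde w_\pm\|_2 \leq C\epsilon^{-2/3}\bigl(\epsilon^2\|w_0''\|_2+|\Lambda|\,\|w_0\|_2\bigr)\leq C'\epsilon^{4/3}\|w_0\|_{2,2}.
\end{displaymath}

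\emph{Step 3 ($H^1$ estimate via energy identity).} Taking the $L^2$ inner product of the identity of Step~1 with $\tilde w_\pm$ and extracting the real part (the multiplication operator $\pm ix$ contributes a purely imaginary term), we obtain
\begin{displaymath}
\epsilon^2\|\tilde w_\pm'\|_2^2-\Lambda\|\tilde w_\pm\|_2^2=\Re\bigl\langle\tilde w_\pm,\mp i(\epsilon^2 w_0''+\Lambda w_0)\bigr\rangle.
\end{displaymath}
Bounding the right-hand side by $C\epsilon^2\|\tilde w_\pm\|_2\,\|w_0\|_{2,2}$ and using the $L^2$ estimate of Step~2 (and again $\Lambda\leq K\epsilon^2$) gives $\epsilon^2\|\tilde w_\pm'\|_2^2\leq C\epsilon^{10/3}\|w_0\|_{2,2}^2$, hence $\epsilon^{2/3}\|\tilde w_\pm'\|_2\leq C\epsilon^{4/3}\|w_0\|_{2,2}$, which is the claimed bound.

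The only non-routine input is the resolvent bound invoked in Step~2; the rest consists of an algebraic identity and a one-line integration-by-parts energy estimate. The reason the argument succeeds is that the a priori bound $\|w_\pm\|_2\lesssim\epsilon^{-2/3}\|xw_0\|_2$ on $w_\pm$ alone is $\OO(\epsilon^{-2/3})$, but subtracting the explicit leading term $\mp iw_0$ removes the singular part and leaves a remainder controlled by $\epsilon^2 w_0''+\Lambda w_0$, which is genuinely small.
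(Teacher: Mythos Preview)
Your proof is correct and follows essentially the same route as the paper: the algebraic identity in Step~1 is exactly the computation the paper records, and the $L^2$ bound in Step~2 is precisely what the paper obtains by citing \cite{almog2019stability} or \cite{hen15}. The only difference is that the paper invokes those same references to get the $H^1$ bound directly, whereas you derive it by the one-line energy identity in Step~3; this is a routine and arguably more self-contained variant.

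One small remark: you write ``using $|\Lambda|\leq K\epsilon^2$'', while the lemma as stated only assumes $\Lambda\leq K\epsilon^2$. In fact the conclusion \eqref{eq:83} fails if $\Lambda$ is allowed to be very negative (then $w_\pm\to 0$ and $\tilde w_\pm\to\pm i w_0$), so a lower bound on $\Lambda$ is implicitly needed. The paper's own proof has the same tacit assumption, and in the only application (Lemma~\ref{lem:upper-bound}) one has $\Lambda>\pi^2\epsilon^2/(b-a)^2>0$, so this is a harmless slip in the statement rather than a gap in either argument.
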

\begin{proof}
  It can be easily verified that  $ \tilde{w}_\pm\in H^1_0(a,b)$ and  that
  \begin{displaymath} 
    (\LL_\pm -\Lambda)\tilde{w}_\pm = -i(\pm \epsilon^2 w_0^{\prime\prime} \pm \Lambda\, w_0) \,.
  \end{displaymath}
We can now establish \eqref{eq:83} by using either \cite[Proposition 
5.2]{almog2019stability} or  \cite[Theorem 1.1]{hen15}. Here we use
the fact that $K\epsilon^2 < \epsilon^{\frac 23} |\nu_1|/2$ for sufficiently small $\epsilon$.\\
\end{proof}

Lemma \ref{lem:auxiliary-schrodinger} allows us to obtain an upper
bound for $\nu(\Lambda)$. 
\begin{lemma}
  \label{lem:upper-bound}
Let 
\begin{subequations}
\label{eq:84}
  \begin{equation}
\rho_0 = \inf_{w\in H^1_0(a,b)\setminus\{0\}} \frac{I(w)}{\|[x^2+1]^{1/2}w\|_2^2}\,,
\end{equation}
where
\begin{equation}
  I(w)= \|(xw)^\prime\|_2^2+\|w^\prime\|_2^2 \,.
\end{equation}
\end{subequations}
  There exist positive $\epsilon_0$ and $r_+$ such that for all $0<\epsilon<\epsilon_0$,
  one can find $\Lambda_1\in\sigma(\B_\epsilon^I)\cap\R$ satisfying
   \begin{equation}
\label{eq:85}
\Lambda_1< \rho_0 \epsilon^2(1+r_+\epsilon^{2/3})\,.
   \end{equation}
\end{lemma}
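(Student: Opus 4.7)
The plan is to bound $\nu(\Lambda)$ from above using a carefully chosen trial function $u=xw_0$, then use continuity of $\nu$ together with the characterization of $\sigma(\B_\epsilon^I)\cap\R$ recalled before the statement (namely $\nu(\Lambda_1)=0 \Leftrightarrow \Lambda_1\in\sigma(\B_\epsilon^I)\cap\R$). Let $w_0\in H^1_0(a,b)\cap H^2(a,b)$ be a minimizer of the Rayleigh quotient defining $\rho_0$ in \eqref{eq:84} (which exists as a smooth eigenfunction of the Sturm--Liouville problem $-(xw)'x - w'' = \rho(x^2+1)w$ with Dirichlet boundary conditions), normalized so that $\|[x^2+1]^{1/2}w_0\|_2 = 1$ and hence $I(w_0)=\rho_0$. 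Then $u=xw_0\in H^1_0(a,b)\cap H^2(a,b)$ is a legitimate test function in \eqref{eq:42}.

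Set $w_\pm = (\LL_\pm-\Lambda)^{-1}(xw_0)$ and $\tilde w_\pm = w_\pm \pm iw_0$. Lemma \ref{lem:auxiliary-schrodinger} yields $\|\tilde w_\pm\|_2 + \epsilon^{2/3}\|\tilde w_\pm'\|_2 \leq C\epsilon^{4/3}\|w_0\|_{2,2}$, so by Cauchy--Schwarz applied to $w_\pm = \mp iw_0 + \tilde w_\pm$,
\begin{equation*}
\tfrac12(\|w_+\|_2^2+\|w_-\|_2^2)=\|w_0\|_2^2+O(\epsilon^{4/3}),\quad \tfrac12(\|w_+'\|_2^2+\|w_-'\|_2^2)=\|w_0'\|_2^2+O(\epsilon^{2/3}).
\end{equation*}
Substituting into \eqref{eq:82} and subtracting $\Lambda\|u\|_2^2=\Lambda\|xw_0\|_2^2$, then using that $\Lambda=O(\epsilon^2)$ absorbs the $O(\Lambda\epsilon^{4/3})=O(\epsilon^{10/3})$ piece into the leading error, I obtain
\begin{equation*}
\langle u,(\PP_\Lambda-\Lambda)u\rangle = \epsilon^2\bigl[\|(xw_0)'\|_2^2+\|w_0'\|_2^2\bigr] - \Lambda\bigl[\|w_0\|_2^2+\|xw_0\|_2^2\bigr] + O(\epsilon^{8/3}) = \epsilon^2\rho_0-\Lambda+O(\epsilon^{8/3}).
\end{equation*}
Choosing $\Lambda = \rho_0\epsilon^2(1+r_+\epsilon^{2/3})$ with $r_+$ larger than the implicit constant divided by $\rho_0$ makes the right-hand side strictly negative, so $\nu(\Lambda)<0$.

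To conclude, I use that for real $\Lambda$ in the resolvent set of $\LL_+$ and $\LL_-$ (which contains $(-\infty,\epsilon^{2/3}|\nu_1|/2)$ for $\epsilon$ small), $\PP_\Lambda$ is self-adjoint with compact resolvent and depends analytically on $\Lambda$; thus $\nu(\Lambda)$, the smallest eigenvalue of $\PP_\Lambda-\Lambda$, is continuous. As $\Lambda\to-\infty$ one has $\|(\LL_\pm-\Lambda)^{-1}\|_{\LL(L^2)}=O(|\Lambda|^{-1})$, which combined with $-\Lambda\to+\infty$ gives $\nu(\Lambda)\to+\infty$. The intermediate value theorem then supplies $\Lambda_1<\rho_0\epsilon^2(1+r_+\epsilon^{2/3})$ with $\nu(\Lambda_1)=0$, and the associated ground state $u_3$ of $\PP_{\Lambda_1}-\Lambda_1$ lifts via \eqref{eq:67} to a non-zero eigenvector of $\B_\epsilon^I$ at eigenvalue $\Lambda_1$.

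The main obstacle is ensuring that the correction to $\tfrac12(\|w_+'\|_2^2+\|w_-'\|_2^2)$ is only $O(\epsilon^{2/3})$, so that after multiplication by $\epsilon^2$ it produces an $O(\epsilon^{8/3})$ error in the energy—matching exactly the scale $\rho_0 r_+\epsilon^{8/3}$ which dictates the claimed $\epsilon^{2/3}$ correction. This forces one to apply Lemma \ref{lem:auxiliary-schrodinger} rather than a cruder $L^2$ estimate, and to keep track of the cross-term $2\,\mathrm{Re}\langle\mp iw_0',\tilde w_\pm'\rangle$ via Cauchy--Schwarz using the sharp $\epsilon^{2/3}$-bound on $\|\tilde w_\pm'\|_2$ rather than a naive estimate that would only give $O(\epsilon^{4/3})$ after Sobolev embedding.
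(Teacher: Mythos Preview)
Your proof is correct and follows essentially the same approach as the paper: choose the trial function $u=xw_0$ with $w_0$ the minimizer of \eqref{eq:84}, invoke Lemma~\ref{lem:auxiliary-schrodinger} to control $\tilde w_\pm=w_\pm\pm iw_0$ in $L^2$ and $H^1$, substitute into \eqref{eq:82} to obtain $\langle u,(\PP_\Lambda-\Lambda)u\rangle=\epsilon^2\rho_0-\Lambda+O(\epsilon^{8/3})$, and conclude by continuity of $\nu$. The only cosmetic difference is the positive endpoint for the intermediate value argument: the paper observes directly from \eqref{eq:82} that $\nu(\pi^2\epsilon^2/(b-a)^2)>0$, whereas you let $\Lambda\to-\infty$; both are valid. (A minor quibble: the equivalence $\nu(\Lambda_1)=0\Leftrightarrow\Lambda_1\in\sigma(\B_\epsilon^I)\cap\R$ you cite is not quite what was established before the lemma---only the forward implication was stated---but that is all you actually use.)
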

\begin{proof} ~\\
 Note first  that
\begin{displaymath}
\rho_0>\frac{\pi^2}{(b-a)^2}
\end{displaymath} 
since for any $w\in H^1_0(a,b)\setminus\{0\}$
\begin{equation}
\label{eq:126}
    \frac{I(w)}{\|[x^2+1]^{1/2}w\|_2^2}\geq 
    \frac{\pi^2(\|xw\|_2^2+\|w\|_2^2)}{(b-a)^2\|[x^2+1]^{1/2}w\|_2^2}=\pi^2/(b-a)^2\,.
\end{equation}
Equality in \eqref{eq:126} is achieved when both
$w_0=Cxw_0=\hat{C}\sin\pi ((x-a)/(b-a))$ which is clearly impossible.\\
  Let $w_0$ denote a minimizer of \eqref{eq:84} satisfying
  $\|(x^2+1)^{1/2}w_0\|_2=1$. (The proof that $w_0$ exists is rather
  standard, and is therefore omitted.) Then, $w_0$ must satisfy
  \begin{displaymath}
    -(x^2+1)w_0^{\prime\prime}-2xw_0^\prime-\rho_0(x^2+1)w_0=0 \,.
  \end{displaymath}
  Notice that the above balance is identical with \eqref{eq:68bis}
  with $\Lambda_0=\rho_0$, $w_0$ being the ground state of the operator
  $\tilde{\A}_\Lambda$.  It can be readily verified from the above that
\begin{displaymath}
  \|w_0^{\prime\prime}\|_2\leq  \|w_0^\prime\|_2+ \rho_0 \,.
\end{displaymath}
From the fact that $w_0$ is a minimizer of \eqref{eq:84} we readily
conclude that $$\|w_0^\prime\|_2\leq\rho_0^{1/2}$$ and hence
\begin{equation}
  \label{eq:86}
 \|w_0^{\prime\prime}\| \leq  (\rho_0^{1/2}+\rho_0) \,.
\end{equation}

Next, we select $\rho_0 <K$ and $\Lambda\leq K\epsilon^2$. As in \eqref{eq:125} we
set  $\tilde w_\pm =w_\pm \pm i w_0$ and  we obtain by \eqref{eq:83} and
\eqref{eq:86} that there exist positive $\epsilon_0$ and $C$ such that for
all $0<\epsilon<\epsilon_0$
\begin{displaymath}
\big|\|w_\pm\|_2 -\|w_0\|_2\big|+ \big|\|w_\pm^\prime\|_2 -\|w_0^\prime\|_2\big|\leq
\|\tilde{w}_\pm\|_2 + \|\tilde{w}_\pm^\prime\|_2 \leq C\epsilon^{2/3}\,.
\end{displaymath}
Consequently, with $u=x w_0$, 
\begin{displaymath}
   \langle u,(\PP_\Lambda -\Lambda)u\rangle\leq 
  \epsilon^2I(w_0)-\Lambda +r_+\epsilon^{8/3}=\rho_0 \epsilon^2-\Lambda+r_+\epsilon^{8/3}\,.
\end{displaymath}
It follows that for all $K\epsilon^2 > \Lambda>\rho_0 \epsilon^2 +r_+\epsilon^{8/3}$ we have $\nu(\Lambda)<0$.
By \eqref{eq:82}  and since for any $w\in H^1_0(a,b)$ we have
$\|w^\prime\|_2^2\geq (\pi^2/(b-a)^2)\, \|w\|_2^2$, it follows that $\nu(\pi^2\epsilon^2/(b-a)^2)>0$, and
hence, by the continuity of $\nu(\Lambda)$ we have that
$$\sigma(\B_\epsilon^I)\cap(\pi^2\epsilon^2/(b-a)^2,\rho_0  \epsilon^2 +r_+\epsilon^{8/3})\neq\emptyset\,.$$ 
\end{proof}
\subsection{Lower bound for the bottom of the real spectrum}
To obtain a lower bound for $\Lambda_1$ we rewrite \eqref{eq:82} in the form, with  $w_\pm=(\LL_\pm-\Lambda)^{-1}u$, 
\begin{subequations}
  \label{eq:71}
  \begin{equation}
\langle u,(\PP_\Lambda -\Lambda)u\rangle = \frac{1}{2}\big([\epsilon^2\Jg_\Lambda^+(w_+)-\Lambda]\|w_+\|_+^2+[\epsilon^2\Jg_\Lambda^-(w_-)-\Lambda]\|w_-\|_-^2\big)\,,
\end{equation}
where
\begin{equation}
  \|w\|_\pm^2=\Big\|\Big(-\epsilon^2\frac{d^2}{dx^2}\pm ix-\Lambda\Big)w\Big\|_2^2+\|w\|_2^2\,,
\end{equation}
and
\begin{equation}
  \Jg_\Lambda^\pm(w)= \frac{\| \frac{d}{dx} (-\epsilon^2\frac{d^2}{dx^2}\pm ix-\Lambda)w\|_2^2+ \|w^\prime\|_2^2}{\|w\|_\pm^2}\,.
\end{equation}
\end{subequations}
Here we have used that
  \begin{displaymath}
    \|u^\prime\|_2^2=\frac{1}{2}\Big[\Big\| \frac{d}{dx}
    \Big(-\epsilon^2\frac{d^2}{dx^2}+ ix-\Lambda\Big)w_+\Big\|_2^2+ \Big\|
    \frac{d}{dx} \Big(-\epsilon^2\frac{d^2}{dx^2}- ix-\Lambda\Big)w_-\Big\|_2^2 \Big]
  \end{displaymath}
and
\begin{displaymath}
   \|u\|_2^2=\frac{1}{2}\Big[\Big\| 
    \Big(-\epsilon^2\frac{d^2}{dx^2}+ ix-\Lambda\Big)w_+\Big\|_2^2+ \Big\|
    \Big(-\epsilon^2\frac{d^2}{dx^2}- ix-\Lambda\Big)w_-\Big\|_2^2 \Big]\,.
\end{displaymath}
Note that since $u\in H^1_0(a,b)$ we obtain that $w^{\prime\prime}_\pm\in H^1_0(a,b)$
as well. Hence $w_\pm$ belongs to the domain of the form $\Jg_\Lambda^\pm$
which is  defined by 
\begin{displaymath}
  D(\Jg_\Lambda^\pm)= \{\, w\in H^3((a,b),\C)\cap H^1_0(a,b) \,| \,
  w^{\prime\prime}\in H^1_0(a,b)\,\} \,.
\end{displaymath}
We now define
\begin{displaymath}
 \mu(\Lambda) = \inf_{w\in D(\Jg^+_\Lambda)\setminus\{0\}} \Jg_\Lambda^+(w) \,.
\end{displaymath}
As $\Jg_\Lambda^+(w)=\Jg_\Lambda^-(\bar{w})$ there is no need  to define a
different minimization problem for $\Jg_-$. 
 As in \eqref{eq:126} one can  establish that
  $\Jg_\Lambda^\pm(w)>\pi^2/(b-a)^2$  for all $w\in D(\Jg_\Lambda^\pm)$.
Consider then a sequence $\{w^{(k)}\}_{k=1}^\infty\subset D(\Jg_\Lambda^\pm)$ of unity norm,
i.e., $\|w^{(k)}\|_\pm=1$, satisfying $\Jg_\Lambda^\pm(w^{(k)})\leq C$ for some $C>\pi$. By
considering an appropriate subsequence, we may assume that $w^{(k)}$ is
weakly convergent in $D(\Jg_\Lambda^\pm)$ and strongly convergent in
$H^2(a,b)\cap H^1_0(a,b)$ denote the weak limit by $w$. By the strong
convergence we must have $\|w\|_\pm=1$. Furthermore, by the weak
convergence we have
\begin{displaymath}
  \Big\langle\frac{d}{dx} \Big(-\epsilon^2\frac{d^2}{dx^2}\pm ix-\Lambda\Big)w^{(k)},\frac{d}{dx}
  \Big(-\epsilon^2\frac{d^2}{dx^2}\pm ix-\Lambda\Big)w\Big\rangle\to \Big\|\frac{d}{dx}
  \Big(-\epsilon^2\frac{d^2}{dx^2}\pm ix-\Lambda\Big)w\Big\|_2^2 \,,
\end{displaymath}
and hence
\begin{displaymath}
  \Big\|\frac{d}{dx}   \Big(-\epsilon^2\frac{d^2}{dx^2}\pm
  ix-\Lambda\Big)w\Big\|_2^2\leq\liminf \Big\|\frac{d}{dx}
  \Big(-\epsilon^2\frac{d^2}{dx^2}\pm ix-\Lambda\Big)w^{(k)}\Big\|_2^2\,.
\end{displaymath}
By the strong convergence
\begin{displaymath} 
  \|w^\prime\|_2^2 = \lim_{k\to +\infty}  \|(w^{(k)})^\prime\|_2^2 \,, 
\end{displaymath}
and hence
\begin{displaymath}
 \liminf \Jg_\Lambda^\pm(w^{(k)})\geq \Jg_\Lambda^\pm(w) \,.
\end{displaymath}
If $\{w^{(k)}\}_{k=1}^\infty$ is a minimizing sequence, we may immediately
conclude from the the above lower semicontinuity that $w$ is a
minimizer.\\

The foregoing discussion leads us to state the following:
 \begin{lemma}
 \label{lem:lower-bound}
   Let $\rho_0$ be defined by \eqref{eq:84}. There exists positive $\epsilon_0$
   and $r_-$ such that for all $0<\varepsilon<\varepsilon_0$, and every
   $\Lambda_1\in\sigma(\B_\epsilon^I)\cap\R$ we have
   \begin{equation}
     \label{eq:66}
 \rho_0 \epsilon^2(1-r_-\epsilon^2)<\Lambda_1\,.
   \end{equation}
 \end{lemma}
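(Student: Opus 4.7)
My strategy proceeds in three stages.

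\textit{Stage 1 (Reduction to a variational lower bound).} I would use that any real eigenvalue $\Lambda_1 \in \sigma(\B_\epsilon^I)$ forces $\epsilon^2 \mu(\Lambda_1) \leq \Lambda_1$, where $\mu(\Lambda) := \inf \Jg_\Lambda^\pm$. Indeed, $\PP_\Lambda$ is selfadjoint for real $\Lambda$, so a real eigenvalue $\Lambda_1$ gives a nonzero $u \in D(\PP_{\Lambda_1})$ with $\langle u,(\PP_{\Lambda_1}-\Lambda_1)u\rangle = 0$; the identity \eqref{eq:71} applied with $w_\pm = (\LL_\pm - \Lambda_1)^{-1}u$ then yields the bound. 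Combined with the upper bound $\Lambda_1 \leq 2\rho_0 \epsilon^2$ from Lemma~\ref{lem:upper-bound}, it suffices to prove $\mu(\Lambda) \geq \rho_0(1 - r_- \epsilon^2)$ uniformly for $\Lambda \in [0, 2\rho_0\epsilon^2]$ and $\epsilon$ small.

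\textit{Stage 2 (Clean formula for $\Jg_\Lambda^+$).} Since $\overline{\Jg_\Lambda^+(w)} = \Jg_\Lambda^-(\bar w)$ and the problem is invariant under conjugation, the infimum may be restricted to real-valued $w \in D(\Jg_\Lambda^+)$. For such $w$, the boundary conditions $w(a)=w(b)=w''(a)=w''(b)=0$ and integration by parts yield
\[ \Jg_\Lambda^+(w) = \frac{I(w) + \|\phi'\|_2^2}{\|(x^2+1)^{1/2} w\|_2^2 + \|\phi\|_2^2}, \qquad \phi := \epsilon^2 w'' + \Lambda w \in H^1_0(a,b),\]
because the real and imaginary parts of $Tw = -\phi + i\, xw$ (and of $(Tw)' = -\phi' + i(xw)'$) are $L^2$-orthogonal, so all mixed cross terms cancel. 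The variational characterization \eqref{eq:84} then gives $I(w) \geq \rho_0\|(x^2+1)^{1/2}w\|_2^2$, and the target inequality $\Jg_\Lambda^+(w) \geq \rho_0(1-r_-\epsilon^2)$ reduces to controlling the ``Poincar\'e deficit''
\[ \rho_0\|\phi\|_2^2 - \|\phi'\|_2^2 \leq \bigl(I(w) - \rho_0\|(x^2+1)^{1/2}w\|_2^2\bigr) + r_-\rho_0\epsilon^2\bigl(\|(x^2+1)^{1/2}w\|_2^2 + \|\phi\|_2^2\bigr). \]

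\textit{Stage 3 (Main inequality and main obstacle).} Since $\pi^2/(b-a)^2 < \rho_0$, the Poincar\'e inequality alone only gives $\|\phi'\|_2^2 \geq (\pi^2/(b-a)^2)\|\phi\|_2^2$, which is insufficient, and the above bound is not automatic. I would argue by dichotomy on the ratio $\|\phi\|_2^2 / \|(x^2+1)^{1/2}w\|_2^2$. When this ratio is $O(\epsilon^2)$, the estimate follows directly by choosing $r_-$ large enough relative to $\rho_0 - \pi^2/(b-a)^2$. In the complementary ``large-$\phi$'' regime, the relation $\phi = \epsilon^2 w'' + \Lambda w$ with $\Lambda = O(\epsilon^2)$ forces $\|w''\|_2 \gtrsim \epsilon^{-1}\|(x^2+1)^{1/2}w\|_2$, so $w$ cannot lie close to the one-dimensional span of $w_0$ in the weighted norm. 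I would then decompose $w = \alpha w_0 + w^\perp$ orthogonally in $L^2((a,b),(x^2+1)\,dx)$ and use the spectral gap $\rho_1 > \rho_0$ of the Sturm-Liouville operator $\tilde{\A} = -(x^2+1)^{-1}\frac{d}{dx}((x^2+1)\frac{d}{dx}\cdot)$ to obtain $I(w) - \rho_0\|(x^2+1)^{1/2}w\|_2^2 \geq (\rho_1 - \rho_0)\|(x^2+1)^{1/2}w^\perp\|_2^2$. The main technical obstacle is quantitatively matching the size of the Poincar\'e deficit against this excess within the required $O(\epsilon^2)$ precision; this is most naturally handled by an explicit modal analysis combining the Dirichlet sine basis $\{e_k\}$ of $-d^2/dx^2$ (in which $\phi$ and $w$ are linked by the simple relation $\phi_k = (\Lambda - \epsilon^2 k^2\pi^2/(b-a)^2)\,c_k$) with the eigenbasis $\{w_k\}$ of $\tilde{\A}$ (which diagonalizes $I$ and $\|(x^2+1)^{1/2}\cdot\|_2^2$), reducing the estimate to algebraic inequalities verifiable for $\epsilon$ small enough.
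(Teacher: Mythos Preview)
Your Stage~1 reduction is correct and matches the paper. The serious gap is in Stage~2: the claim that the infimum of $\Jg_\Lambda^+$ may be restricted to real-valued $w$ is not justified. The identity $\Jg_\Lambda^+(w)=\Jg_\Lambda^-(\bar w)$ only shows $\inf_w\Jg_\Lambda^+=\inf_w\Jg_\Lambda^-$, which was already known; it does \emph{not} imply that the infimum over complex $w$ equals the infimum over real $w$. Indeed, the Hermitian form $\|Tw\|_2^2+\|w\|_2^2$ is associated with the operator $T^*T+I=(-\epsilon^2\partial_x^2-\Lambda)^2+x^2-2i\epsilon^2\partial_x$, which does not commute with complex conjugation, so the minimizer $w_1$ is generically genuinely complex. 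For complex $w$ your clean decomposition fails: one picks up cross terms such as $2\epsilon^2\Im\langle w',w\rangle$ in the denominator and $\epsilon^2\Im\langle w'',w'\rangle$ in the numerator. The latter is not a~priori controllable without a uniform bound on $\|w''\|_2$ for the minimizer---and obtaining exactly that bound (via the Euler--Lagrange equation~\eqref{eq:75} paired with $-w_1''$) is the heart of the paper's proof.

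A secondary remark: your Stage~3 is more complicated than necessary even on its own terms. For real $w$, expanding $\phi=\epsilon^2 w''+\Lambda w$ in the Dirichlet sine basis gives directly
\[
\rho_0\|\phi\|_2^2-\|\phi'\|_2^2=\sum_{k\geq1}(\rho_0-\lambda_k)(\Lambda-\epsilon^2\lambda_k)^2|c_k|^2\leq\sum_{\lambda_k<\rho_0}(\rho_0-\lambda_k)(\Lambda-\epsilon^2\lambda_k)^2|c_k|^2\leq C\epsilon^4\|w\|_2^2,
\]
since only finitely many $\lambda_k$ lie below $\rho_0$ and for those $|\Lambda-\epsilon^2\lambda_k|=O(\epsilon^2)$. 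This already gives the required $O(\epsilon^2)$ bound without any dichotomy or spectral-gap argument for $\tilde{\A}$. So the difficulty you flag in Stage~3 is illusory; the real obstruction is Stage~2.
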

 \begin{proof}
   Let $w_1$ denote a
   unity norm (i.e., $\|w_1\|_+=1$) ground state of $\Jg_\Lambda^+$, associated
   with $\mu(\Lambda)$ for some $\Lambda$ in the interval $(\pi^2\epsilon^2/(b-a)^2,\rho_0 \epsilon^2+r_+\epsilon^{8/3})$.  Clearly,
\begin{equation}
\label{eq:75}
  -\epsilon^2(\LL_--\Lambda)\frac{d^2}{dx^2}(\LL_+-\Lambda)w_1-\epsilon^2w^{\prime\prime}_1-\mu(\Lambda)[1+(\LL_--\Lambda)(\LL_+-\Lambda)]w_1=0 \,.
\end{equation}
Taking the inner product of \eqref{eq:75} with $-w^{\prime\prime}_1$ yields for the real part
\begin{multline}
\label{eq:76}
\epsilon^2 \Re\langle(\LL_+-\Lambda)w_1^{\prime\prime},\frac{d^2}{dx^2}(\LL_+-\Lambda)w_1\rangle+\|w_1^{\prime\prime}\|_2^2 \ - \mu(\Lambda)[\|w_1^\prime\|_2^2- \Re\langle(\LL_+-\Lambda)w_1^{\prime\prime},(\LL_+-\Lambda)w_1\rangle]=0 \,.
\end{multline}
For the first term on the left-hand-side we write
\begin{displaymath}
  \Re\langle(\LL_+-\Lambda)w_1^{\prime\prime},\frac{d^2}{dx^2}(\LL_+-\Lambda)w_1\rangle=
  \|(\LL_+-\Lambda)w_1^{\prime\prime}\|_2^2 + \Re\langle(\LL_+-\Lambda)w_1^{\prime\prime},2iw_1^\prime\rangle\,.
\end{displaymath}
It follows that
\begin{displaymath}
  \Re\langle(\LL_+-\Lambda)w_1^{\prime\prime},\frac{d^2}{dx^2}(\LL_+-\Lambda)w_1\rangle \geq -\|w_1^\prime\|_2^2\geq-\mu(\Lambda) \,.
\end{displaymath}
For the last term on the left-hand-side of \eqref{eq:76} it holds that
\begin{displaymath}
   -\Re\langle(\LL_+-\Lambda)w_1^{\prime\prime},(\LL_+-\Lambda)w_1\rangle=
  \Big\|\frac {d}{dx} (\LL_+-\Lambda)w_1 \Big\|_2^2 +\Re\langle2iw_1^\prime,(\LL_+-\Lambda)w_1\rangle \,,
\end{displaymath}
and hence (recalling that $\|w_1\|_+=1$)
\begin{displaymath}
  \Re\langle(\LL_+-\Lambda)w_1^{\prime\prime},(\LL_+-\Lambda)w_1\rangle \leq \Jg_\Lambda^+(w_1) + 1=\mu(\Lambda)+1\,.
\end{displaymath}
Consequently, we obtain from \eqref{eq:76} that
\begin{displaymath}
  \|w_1^{\prime\prime}\|_2^2\leq2\mu(\Lambda)(\mu(\Lambda)+1) \,.
\end{displaymath}
Since for any $\epsilon$-independent function $\tilde{w}$ satisfying
$\|\tilde{w}\|_+=1$, $K>0$, and $\Lambda<K\epsilon^2$, $\Jg_\Lambda(\tilde{w})$ is
bounded, as $\epsilon\to0$, there must exist a positive constant $\hat C$ such that
\begin{displaymath}
\mu(\Lambda): =\Jg_\Lambda(w_1)\leq\hat{C}\,.
\end{displaymath}
Consequently, there
exists $C>0$ and $\epsilon_0$ such that for any $\epsilon \in (0,\epsilon_0]$ and $\Lambda \in
(\frac{\pi^2}{(b-a)^2} , \epsilon^2 \rho_0\epsilon^2+r_+\epsilon^{8/3})$,
\begin{equation}
\label{eq:77}
   \|w_1^{\prime\prime}\|_2\leq C \,.
\end{equation}
Next, we write
\begin{equation}
\label{eq:80}
  \Big\|\frac {d}{dx} (-\epsilon^2\frac{d^2}{dx^2}+ix-\Lambda)w_1\Big\|_2^2= \epsilon^4\|w_1^{(3)}\|_2^2 +
  \Big\|\frac {d}{dx} (ix-\Lambda)w_1\Big\|_2^2-2\epsilon^2\Re\langle\frac {d}{dx} (ix-\Lambda)w_1,w_1^{(3)}\rangle 
\end{equation}
For the last term on the right-hand-side we have
\begin{displaymath}
  - \Re\langle\frac {d}{dx} (ix-\Lambda)w_1,w_1^{(3)}\rangle =
  -\Lambda\|w_1^{\prime\prime}\|_2^2-2\Im\langle w_1^\prime,w_1^{\prime\prime}\rangle \,.
\end{displaymath}
Consequently, by \eqref{eq:77} we can conclude the existence of $C>0$,
independent of $\epsilon$, such that
\begin{displaymath}
  - 2\epsilon^2\Re\langle\frac {d}{dx} (ix-\Lambda)w_1,w_1^{(3)}\rangle \geq -C\epsilon^2 \,. 
\end{displaymath}
Substituting the above into \eqref{eq:80} yields, for some $C>0$
\begin{displaymath}
  \Big\|\frac {d}{dx} (-\epsilon^2\frac{d^2}{dx^2}+ix-\Lambda)w_1\Big\|_2^2 \geq  \Big\|\frac {d}{dx} (ix-\Lambda)w_1\Big\|_2^2
  -C\epsilon^2 \geq \|(xw_1)^\prime\|_2^2-\tilde C \epsilon^2 \,.
\end{displaymath}
Hence, by \eqref{eq:71} we have
\begin{equation}
\label{eq:69}
  \mu(\Lambda)=\Jg_\Lambda^+(w_1) \geq \|(xw_1)^\prime\|_2^2+\|w_1^\prime\|_2^2-
\tilde C \epsilon^2\geq\rho_0\|[x^2+1]^{1/2}w_1\|_2^2-C\epsilon^2 \,.
\end{equation}
By \eqref{eq:77} and the fact that $\Lambda\in(\pi^2\epsilon^2/(b-a)^2,\rho_0 \epsilon^2+r_+\epsilon^{8/3})$ it holds that
\begin{displaymath}
  1=\|w_1\|_+\leq \|[x^2+1]^{1/2}w_1\|_2 + C\epsilon^2\,.
\end{displaymath}
Substituting the above into \eqref{eq:69} yields that there exists
$r_->0$ such that
\begin{equation}
  \label{eq:72}
\mu(\Lambda)\geq \rho_0-r_-\epsilon^2 \,.
\end{equation}
We can now conclude from \eqref{eq:71} that
\begin{displaymath}
  \langle u,(\PP_\Lambda -\Lambda)u\rangle \geq \frac{1}{2}[\rho_0 \epsilon^2-r_-\epsilon^4 -\Lambda](\|w_+\|_+^2+\|w_-\|_-^2)\,,
\end{displaymath}
which immediately leads to \eqref{eq:66}.
\end{proof}

\subsection{Proof of Theorem \ref{prop:spectrum-bounded}}
  Let $\Lambda=\Lambda_r+i \Lambda_i\in\C$ satisfy $\Re\Lambda<\rho_0\epsilon^2-r_-\epsilon^4$. Let further $u\in D(\PP_\Lambda)$
  and $w_\pm=(\LL_\pm-\Lambda)^{-1}u$. It can be easily verified that
  \begin{displaymath}
    \Re\langle u,(\LL_\pm-\Lambda)^{-1}u\rangle = \epsilon^2\|w_\pm^\prime\|_2^2-\Lambda_r\|w_\pm\|_2^2 \,.
  \end{displaymath}
Consequently we may write that
\begin{multline*}
  \Re\langle u,(\PP_\Lambda -\Lambda)u\rangle=
  \frac{1}{2}\big([\epsilon^2\Jg_{\Lambda_r}^+(w_+)-\Lambda_r]\|w_+\|_+^2+
  [\epsilon^2\Jg_{\Lambda_r}^-(w_-)-\Lambda_r]\|w_-\|_-^2\big)\,,  
\end{multline*}
With the aid of \eqref{eq:72} we then conclude that, for sufficiently
small $\epsilon$,
\begin{equation}
\label{eq:54}
  \Re\langle u,(\PP_\Lambda -\Lambda) u\rangle \geq \frac{1}{2}(\rho_0 \epsilon^2-\Lambda_r -r_-\epsilon^4)(\|w_+\|_2^2+\|w_-\|_2^2) \,,
\end{equation}
yielding
\begin{equation}
\label{eq:73}
  \liminf_{\epsilon\to0}\epsilon^{-2}\Re \sigma(\B_\epsilon^I)\geq\rho_0\,. 
\end{equation}
By Lemma \ref{lem:upper-bound} it holds that
\begin{displaymath}
    \limsup_{\epsilon\to0}\epsilon^{-2}\Re \sigma(\B_\epsilon^I)\leq \rho_0\,,
\end{displaymath}
which together with \eqref{eq:73} achieves the proof of Theorem
\ref{prop:spectrum-bounded}.

\def\cprime{$'$}

\end{document}